\numberwithin{equation}{section}
\let\c@thm\undefined%
\theoremstyle{plain}
\declaretheorem[name=Theorem,numberwithin=section]{thm}
\newtheorem{lem}[thm]{Lemma}
\newtheorem{cor}[thm]{Corollary}
\theoremstyle{thmC}
\theoremstyle{definition}
\newtheorem{exa}[thm]{Example}
\newtheorem{defi}[thm]{Definition}
\crefname{prop}{Proposition}{Propositions}
\crefname{thm}{Theorem}{Theorems}
\crefname{clm}{Claim}{Claims}
\crefname{lem}{Lemma}{Lemmas}
\crefname{lemC}{Lemma}{Lemmas}
\crefname{cor}{Corollary}{Corollaries}
\crefname{exa}{Example}{Examples}
\crefname{defi}{Definition}{Definitions}
\crefname{fact}{Fact}{Facts}
\crefname{section}{Section}{Sections}
\crefname{figure}{Figure}{Figures}
\newenvironment{mathprooftree}
  {\varwidth{\textwidth}\centering\leavevmode}
  {\DisplayProof\endvarwidth}
\newif\ifcp%
\def\colorOp{\color{OliveGreen}}
\def\colorPtp{\color{blue}}
\def\colorNode{\color{cyan}}
\def\colorMsg{\color{BrickRed}}
\newcommand{\gname}[1][i]{\ifcp{\colorNode{\scriptstyle\textsf{#1}}}{}\fi}
\newcommand{\msg}[1][m]{\mathsf{\colorMsg{#1}}}
\tikzset{
  bblock/.style = {rectangle, draw=blue!50, opacity=.7, line width=.5pt, align=center, fill=white, rounded corners=0.1cm,
    minimum size=4mm, inner sep=1pt},
  ogate/.style = {
    diamond, draw, fill=orange!25,
    minimum size=4mm,
    inner sep=0pt,
    label={[red]center:$+$}
  },
  smallglobal/.style={
        node distance=1cm and 0.8cm, semithick, scale=0.8, every node/.style={transform shape}
  },
  line/.style = {draw,->, rounded corners=0.07cm,>=latex},
  source/.style={draw,circle,fill=white,
    minimum size=3mm,
    inner sep=0pt
  },
  sink/.style={draw,circle,double,fill=white,
    minimum size=3mm,
    inner sep=0pt
  },
}
\newcommandx{\mkint}[6][3=i,4=\p,5=\msg,6=\q,usedefault=@]{
  \node[bblock, #1] (#2) {$\gint[#3][#4][#5][#6]$};
}
\newcommandx{\mkbranchblock}[5][5=@]{
  \mkgateblock{ogate}{#1}{#2}{#3}{#4}[#5]
}
\newcommandx{\mkbranch}[4][2=gatenode,3=i,4=.6,usedefault=@]{
  \mkgatebegin{#1}[{\gname[{#3}]}][ogate][#4]{#2}
}
\newcommandx{\mkmerge}[4][2=gatenode,3=i,4=.5,usedefault=@]{
  \mkgateend{#1}[{\ifempty{#3}{}{\nmerge[#3]}}][ogate][#4]{#2}
}
\newcommandx{\mkgatebegin}[5][2={},3=ogate,4=.5,usedefault=@]{
  %
  \coordinate (gatecord) at (0,0);
  \coordinate (xmax) at (0,0);
  \coordinate (xmin) at (0,0);
  \pgfgetlastxy \xmin \xmax;
  \foreach \n [count=\i] in {#1}{
    \pgfgetlastxy \xc \yc;
    \path (\n);
    \pgfgetlastxy \xn \yn;
    \ifnum \i = 1
      \coordinate (xmin) at (\xn,0);
      \coordinate (xmax) at (\xn,0);
      \coordinate (max) at (0,\yn);
    \else
      \ifdim \xn < \xmin
        \coordinate (xmin) at (\xn,0);
      \fi
      \ifdim \xn > \xmax
        \coordinate (xmax) at (\xn,0);
      \fi
      \ifdim \yn < \yc
        \coordinate (max) at (0,\yc);
      \else
        \coordinate (max) at (0,\yn);
      \fi
    \fi
  }
  \coordinate (gatecord) at ($(xmin)!.5!(xmax) + (max) + (0,#4) + (max)$); 
  \node[#3,label={below:$#2$}] (#5) at (gatecord) {};
  \pgfgetlastxy{\xgate}{\ygate};
  \pgfmathtruncatemacro{\xgateround}{\xgate};
  \StrCount{#1,}{,}[\l] 
  \ifnum \l < 2 {\errmessage{#1 argument should be a comma-separated list of lenght >= 2}}
  \else{
    \foreach \n in {#1}{
      \path (\n);
      \pgfgetlastxy{\xnode}{\ynode};
      \pgfmathtruncatemacro{\xnround}{\xnode};
      \pgfmathsetmacro\tmpdiff{abs(\xnround - \xgateround)} 
      \ifdim \tmpdiff pt > 1 pt \path[line] (#5) -| (\n);
      \else
        \path[line] (#5) -- (\n); 
      \fi
    }
  }
  \fi
}
\newcommandx{\mkgateend}[5][2={},3=ogate,4=.5,usedefault=@]{
  %
  \coordinate (gatecord) at (0,0);
  \coordinate (xmax) at (0,0);
  \coordinate (xmin) at (0,0);
  \pgfgetlastxy \xmin \xmax;
  \foreach \n [count=\i] in {#1}{
    \pgfgetlastxy \xc \yc;
    \path (\n);
    \pgfgetlastxy \xn \yn;
    \ifnum \i = 1
      \coordinate (xmin) at (\xn,0);
      \coordinate (xmax) at (\xn,0);
      \coordinate (ymin) at (0,\yn);
    \else
      \ifdim \xn < \xmin
        \coordinate (xmin) at (\xn,0);
      \fi
      \ifdim \xn > \xmax
        \coordinate (xmax) at (\xn,0);
      \fi
      \ifdim \yn > \yc
        \coordinate (ymin) at (0,\yc);
      \else
        \coordinate (ymin) at (0,\yn);
      \fi
    \fi
  }
  \coordinate (gatecord) at ($(xmin)!.5!(xmax) + (ymin)$);
  \node[#3,label={above:$#2$}] (#5) at ($(gatecord) - (0,{#4})$) {};
  \pgfgetlastxy{\xgate}{\ygate};
  \pgfmathtruncatemacro{\xgateround}{\xgate};
  \StrCount{#1,}{,}[\l] 
  \ifnum \l < 2 {\errmessage{#1 argument should be a comma-separated list of lenght >= 2}}
  \else{
    \foreach \n in {#1}{
      \path (\n);
      \pgfgetlastxy{\xnode}{\ynode};
      \pgfmathtruncatemacro{\xnround}{\xnode};
      \pgfmathsetmacro\tmpdiff{abs(\xnround - \xgateround)} 
      \ifdim \tmpdiff pt > 1 pt \path[line] (\n) |- (#5);
      \else
        \path[line] (\n) -- (#5); 
      \fi
    }
  }
  \fi
}
\newcommand{\mkseq}[2]{\path[line] (#1) -- (#2);} 
\newcommandx{\mkgraph}[3][1=.5cm, usedefault=@]{
  \node[source,above = #1 of {#2}] (src#2) {};
  \node[sink,below  = #1 of {#3}] (sink#3) {};
  \path[line] (src#2) -- (#2); 
  \path[line] (#3) -- (sink#3); 
}
\newcommandx{\gint}[4][1=i,2=A,3=\msg,4=B,usedefault=@]{
  \hellptp[#2] {\colorOp \xrightarrow{\scriptscriptstyle\gname[#1]}} \hellptp[#4] \colon {\msg[{#3}]}
}
\newcommand{\hellptp}[1][a]{\ensuremath{\mathsf{\colorPtp{#1}}}}
\newcommand{\orgateG}{
\begin{tikzpicture}[smallglobal,baseline=-.5ex, scale=0.75, every node/.style={transform shape}]
  \node [ogate] (o) {};
\end{tikzpicture}
}
\definecolor{kwordcoloraux}{RGB}{128,0,128}
\definecolor{smvocolor}{RGB}{32,44,191}
\definecolor{shnamecolor}{RGB}{199,98,52}
\definecolor{dtcolor}{RGB}{15, 138, 125}
\colorlet{kwordcolor}{kwordcoloraux}
\colorlet{pstcolor}{dtcolor}
\colorlet{headLcolor}{kwordcolor}
\colorlet{sparopcolor}{black}
\colorlet{queuecolor}{black!70}
\definecolor{qmvcolor}{rgb}{.1,.2,.3}
\definecolor{psnamecolor}{rgb}{0.2,0,0.2}
\newcommand{\psnames}[2]{\textcolor{psnamecolor}{(}\tuple{#1}\textcolor{psnamecolor}{,}{\ptp{#2}}\textcolor{psnamecolor}{)}} 
\newcommand{\psname}[2]{\textcolor{psnamecolor}{(}{#1}\textcolor{psnamecolor}{,}{\ptp{#2}}\textcolor{psnamecolor}{)}} 
\colorlet{atcolor}{black!70}
\colorlet{poppnamecolor}{smvocolor}
\colorlet{figureblue}{kwordcoloraux}
\newcommand{\Pif}[3]{\kword{if}~{#1}~\kword{then}~#2~\kword{else}~{#3}}
\newcommand{\truek}{\mathtt{true}}
\newcommand{\falsek}{\mathtt{false}}
\newcommand{\G}{\mathcal{G}}
\newcommand{\imrunsu}[1]{\mathcal{R}#1}
\newcommand{\mysubseteq}{\Subset}
\newcommand{\has}{\triangleright}
\newcommand{\envmv}{\Delta}
\newcommand{\imruns}[1]{\mathcal{R}(#1)} 
\newcommand{\kword}[2][kwordcolor]{\textstyle{\sf\textcolor{#1}{#2}}}
\newcommand{\ptp}[1]{{\mathsf{\MakeLowercase{#1}}}}
\newcommand{\TT}{\mathsf{T}}
\newcommand{\smvo}[1]{\mathsf{\textcolor{smvocolor}{#1}}}
\newcommand{\smv}{\smvo{y}}
\newcommand{\typeruns}[2]{\mathcal{R}_{#1}(#2)} 
\newcommand{\shset}{\mbox{$\mathbb{U}$}}
\newcommandx{\shname}[1][1={},usedefault=@]{\textcolor{shnamecolor}{u}^{\ifempty{#1}{}{\ptp{#1}}}}
\newcommand{\chset}{\mbox{$\mathbb{Y}$}}
\newcommand{\ptpset}{\mbox{$\mathbb{P}$}}
\newcommand{\dt}[1]{\textcolor{dtcolor}{\mathsf{#1}}}
\newcommand{\sort}{\dt{d}}
\newcommand{\GT}{\mathsf{G}}
\newcommandx{\Gchoice}[7][1=i, 2=I, 3=p, 4=q, 5=\smv,6=\sort,7=\GT,usedefault=@]{\sum_{#1\in #2} \Gint[#3][#4_{#1}][#5_{#1}][#6_{\mathit{#1}}];#7_{#1}}
\newcommandx{\Gint}[4][1=p,2=q,3=\smv,4=\sort,usedefault=@]{\ptp{#1} \rightarrowtriangle \ptp{#2} \colon #3\ \dt{#4}}
\newcommand{\Gdef}[2]{#2^{*^{#1}}}
\newcommand{\f}{f}
\newcommand{\Gend}{\mathsf{end}}
\newcommand{\range}[1]{\chn{#1}}
\newcommand{\participants}[1]{\mathcal{P}(#1)} 
\newcommand{\chn}[1]{\mathtt{ch}(#1)} 
\newcommand{\ready}[1]{\mathtt{rdy}(#1)} 
\newcommandx{\Geq}[3][1=\G,2={\tuple \smv},3=\GT,usedefault=@]{{#1}({#2}) \mmdef {#3}} 
\newcommand{\subs}[2]{\{#2 / #1\}}
\newcommand{\Tssel}[3]{\TSsel{i\in I}{#1}{#2}{#3}}
\newcommand{\Tbbra}[3]{\displaystyle{\sum_{i \in I} \Tbra{#1}{#2}{#3}}}
\newcommand{\Tdef}[1]{\tloop{#1}}
\newcommand{\Tend}{\mathsf{end}}
\newcommand{\TSsel}[4]{\displaystyle{\bigoplus_{#1} \Tsel{#2}{#3}{#4}}}
\newcommand{\Tbra}[3]{\Treceive{#1}{#2}. #3}
\newcommand{\tloop}[1]{#1^{\star}}
\newcommand{\Tsel}[3]{\Tsend{#1}{#2}\ifempty{#3}{}{. #3}}
\newcommand{\Treceive}[2]{{#1}\ {#2}}
\newcommand{\Tsend}[2]{\overline{#1}\ {#2}}
\newcommandx{\Leq}[3][1=\T,2={\tuple \smv},3=\TT,usedefault=@]{{#1}({#2}) \mmdef {#3}} 
\newcommand{\T}{\mathcal{T}}
\newcommand{\proj}{\!\!\upharpoonright\!}
\newcommand{\TBbra}[5]{\displaystyle{\sum_{#1 \in #2} \Tbra{#3}{#4}{#5}}}
\newcommand{\GPOP}{\exG \GT {POP}}
\newcommand{\exG}[2]{{#1}_{\resizebox{!}{3pt}{$\mathtt{#2}$}}}
\newcommand{\varset}{\mbox{$\mathbb{X}$}}
\newcommand{\valset}{\mbox{$\mathbb{V}$}}
\newcommand{\val}[1]{\mathsf{#1}}
\newcommand{\bop}{\mathtt{~bop~}}
\newcommand{\uop}{\mathtt{~uop~}}
\newcommand{\emptylist}{\varepsilon}
\newcommand{\headL}[1]{\kword[headLcolor]{hd}(#1)} 
\newcommand{\tailL}[1]{\kword[headLcolor]{tl}(#1)} 
\newcommand{\var}[1]{\mathrm{var}({#1})} 
\newcommand{\Preq}[4]{\lreq{#1}{#2}{#3}.#4}
\newcommand{\Pacc}[4]{\lacc{#1}{#2}{#3}.#4}
\newcommand{\Psend}[2]{\overline{#1}\,{#2}}
\newcommand{\Pfor}[3]{\kword{for}~{#1}~\kword{in}~{#2}~\kword{do}~{#3}}
\newcommand{\PloopU}[2]{\kword{repeat}~ #1~\kword{until}~ #2} 
\newcommand{\Pechoice}[4]{\displaystyle{\sum_{#1}} \Preceive{#2}{#3}{#4}}
\newcommand{\sparop}{\textcolor{sparopcolor}{\mid}}
\newcommand{\lreq}[3]{\overline{#1}^{\ptp{#2}}(#3)} 
\newcommand{\lacc}[3]{{#1}^{\ptp{#2}}(#3)} 
\newcommand{\Preceive}[3]{{#1}\ifthenelse{\equal{#2}{}}{}{(#2)}.#3}
\newcommand{\queue}[2]{{#1}\textcolor{queuecolor}{[}#2\textcolor{queuecolor}{]}} 
\newcommand{\qmv}{\textcolor{qmvcolor}{\tuple{\val v}}}
\newcommand{\At}{\textcolor{atcolor}{@}}
\newcommand{\AT}[2]{#1{\textcolor{atcolor}{@}}\, {\ptp #2}}
\newcommand{\Pend}{\kword{0}}
\newcommand{\fn}[1]{\textcolor{black}{\mathtt{fn}(#1)}} 
\newcommandx{\fU}[1][1=\_,usedefault=@]{\mathtt{fu}(#1)} 
\newcommandx{\fY}[1][1=\_,usedefault=@]{\mathtt{fy}(#1)} 
\newcommandx{\fX}[1][1=\_,usedefault=@]{\mathtt{fx}(#1)} 
\newcommand{\bn}[1]{\mathtt{bn}(#1)} 
\newcommandx{\bY}[1][1=\_,usedefault=@]{\mathtt{by}(#1)} 
\newcommandx{\bX}[1][1=\_,usedefault=@]{\mathtt{bx}(#1)} 
\newcommandx{\bU}[1][1=\_,usedefault=@]{\mathtt{bu}(#1)} 
\newcommand{\upd}[2]{[{#2}\mapsto {#1}]}
\newcommand{\eval}[1]{{#1}\downarrow}
\newcommand{\lsend}[2]{\overline{#1}#2}
\newcommand{\lreceive}[2]{#1#2}
\newcommand{\lcond}[2]{#1\vdash #2}
\newcommand{\C}{e}
\newcommandx{\state}[2][1=\_,2=\sigma,usedefault=@]{\langle {#1} , {#2} \rangle} 
\newcommand{\Ptrans}[1]{\xrightarrow{#1}}
\def\mathrule #1#2#3{
  \begin{array}{l}%
    \ifempty{#3}{}{\hspace{0em}\mbox{\footnotesize$\mathsf{[#3]}$}\\}
    \irule{#1}{#2}
  \end{array}
}
\newcommand{\irule}[2]{\frac{\textstyle\rule[-1.3ex]{0cm}{3ex}#1}{\textstyle\rule[-.5ex]{0cm}{3ex}#2}}
\def\mathaxiom #1#2{
  \begin{array}{l}%
    \ifempty{#2}{}{\hspace{0em}\mbox{\footnotesize$\mathsf{[#2]}$}\\}
    { #1}
  \end{array}
}
\newcommand{\isemptyL}[1]{#1 = \emptylist}
\newcommand{\myrule}[1]{\mbox{\footnotesize$\mathsf{[#1]}$}}
\newcommand{\poppname}[1]{\mathit{\color{poppnamecolor}#1}}
\newcommand{\popcname}[1]{\smvo{\MakeLowercase{#1}}}
\newcommand{\ctx}{\mathtt{C}}
\newcommandx{\Gimp}[4][1={},2=\iota,3={\shname},4={\tuple \smv},usedefault=@]{{{#1}{#2}_{{#4}\At{#3}}}}
\newcommandx{\vjdg}[5][1=\envmv,2=u,3=\G,4=\vdash,5=\colon,usedefault=@]{{#1} \  {#4} \ {#2} {#5} {#3}}
\newcommand{\updi}[2]{[{#2}\mapsto {#1}]}
\newcommandx{\names}[1][1=\_,usedefault=@]{\mathtt{n}(#1)} 
\newcommand{\Wtrans}[1]{\xhookrightarrow{ #1 }}
\newcommand{\imrunsaux}[1]{\widetilde{\mathcal{R}}(#1)} 
\newcommand{\myequal}{\lessdot}
\newcommand{\pst}{\textcolor{pstcolor}{\mathbb{T}}}
\newcommand{\Ipst}[5][\C]{{\displaystyle{\bigoplus_{#2} \guard[{#1}]{\Tsel{#3}{#4}{#5}}}}}
\newcommand{\Epst}[5][\C]{{\displaystyle{\sum_{#2} \guard[{#1}]{\Tbra{#3}{#4}{#5}}}}}
\newcommand{\guard}[2][\C]{{#1} \Yleft {#2}} 
\newcommand{\rmg}[1]{\Game\big({#1}\big)}
\newcommand{\peace}[1][\C]{{\bowtie}}
\newcommand{\nform}[2][\C]{\mathsf{nf}({#1, #2})} 
\newcommand{\nfp}[1]{\mathsf{nf}({#1})} 
\newcommandx{\pjdg}[5][1=\C,2=\envmv,3=P,4={\envmv'},5=\Gamma,usedefault=@]{
  {#1}\ \text{\textvisiblespace}  \ {#5} \ \proves \ {#3} \ \has \ {#2}
}
\newcommand{\proves}{\vdash}
\newcommand{\vreqrule}{
  \mathrule
  {
    \envmv(\shname) \equiv \G(\tuple \smv)
    \qquad
    \pjdg[@][\envmv, \psnames \smv 0:  \pst]
    \qquad
    \rmg{ \nfp{\pst}} = \nfp{\G(\tuple \smv)\proj{\ptp 0}}
  }
  {
    \pjdg[@][@][\Preq{\shname}{n}{\tuple \smv}{P}]
  }{VReq}
}
\newcommand{\rcvrule}{
  \mathrule
  {
    \forall i \in I \qst \smv_i \in \tuple \smv \qquad
    \pjdg[@][\envmv,\psnames \smv p : {\pst_i}][P_i][@][\Gamma,x_i : \sort_i] 
  }
  {
    \pjdg[@][\envmv, \psnames \smv p : \displaystyle{\sum_{i\in I} \Tbra{\guard\smv_i}{\sort_{\mbox{\scriptsize$i$}}}{\pst_i}}][\Pechoice{i\in I}{\smv_i}{x_i}{P_i}] 
  }
  {VRcv}
}
\newcommand{\vsendrule}{
  \mathrule
  {
    \vjdg[\Gamma][e'][\sort]
    \qquad
    \smv \in \tuple \smv
    \qquad
    \envmv \text{ $\Tend$-only}
  }
  {
    \pjdg[@][\envmv, \psnames \smv p: \guard{\Tsend \smv \sort};\guard\Tend][\Psend{\smv}{e'}][\envmv \upd \TT {\psnames \smv p}]
  }
  {VSend}
}
\newcommand{\vendrule}{
  \mathrule{
    \envmv \text{ $\Tend$-only}
  }{
    \pjdg[@][@][\Pend][\envmv]
  }
  {VEnd}
}
\newcommand{\vseqrule}[1]{
  \mathrule{
    \pjdg[@][\envmv_1][P_1]
    \qquad
    \pjdg[@][\envmv_2][P_2]      }
  {
    \pjdg[@][\envmv_1;\envmv_2][P_1;P_2]
  }
  {#1}
}
\newcommand{\vifrule}[1]{
  \mathrule{
    \pjdg[\C \land  \C'][\envmv_1][P_1][@][\Gamma]
    \qquad
    \pjdg[\C\land\neg \C'][\envmv_2][P_2][@][\Gamma]
  }
  {
    \pjdg[@][ \envmv_1 \peace \envmv_2][\Pif{\C'}{P_1}{P_2}][\envmv_1'\restriction_{\dom \envmv}]
  }
  {#1}
}
\newcommand{\vfor}{
  \mathrule{
    \begin{array}{c}
      \consistent[\C][\ell \not = \emptylist]
      \quad
      \vjdg[\Gamma][\ell][[\sort]]
      \quad
      \pjdg[\C][@][@][@][\Gamma, x : \sort]
      \quad
      \envmv \mbox{ active}
      \quad
      x \not\in \var \envmv
    \end{array}
  }{
    \pjdg[@][{\Tdef\envmv}][\Pfor{x}{\ell}{P}]
  }{VFor}
}
\newcommand{\vforend}{
  \mathrule{
    \consistent[\C][\ell  = \emptylist]
    \qquad
    \vjdg[\Gamma][\ell][[\sort]]
    \qquad
    \envmv \text{ $\Tend$-only}
  }{
    \pjdg[@][\envmv][\Pfor x \ell P]
  }
  {VForEnd}
}
\newcommand{\vlooprule}[1]{
  \mathrule{
    \begin{array}{c}
      \pjdg[@][\envmv_1][N]
      \qquad
      \pjdg[@][\envmv_2][M]
      \qquad
      \envmv_1 \mbox{ and } \envmv_2 \mbox{ passively compatible}
    \end{array}
  }
  {
    \pjdg[@][{\Tdef{\envmv_1}};\envmv_2][\PloopU{N} M][\envmv']
  }
  {#1}
}
\newcommand{\vparrule}{
  \mathrule{
    \pjdg[\C_1][\envmv_1][S_1][\envmv_1'][\Gamma]
    \qquad
    \pjdg[\C_2][\envmv_2][S_2][\envmv_2'][\Gamma]
    \qquad
    \envmv_1\text{ and }\envmv_2\text{ independent}
  }{
    \pjdg[\C_1 \land \C_2][\envmv_1 \cup \envmv_2][S_1 \sparop S_2][@][\Gamma]
  }{VPar}
}
\newcommand{\vnewrule}[1]{
  \mathrule{
    \pjdg[@][\envmv][S][][\Gamma]
  }{
    \pjdg[@][\envmv|_{-\tuple \smv}][(\nu \tuple\smv\At\shname)S][@][\Gamma] 
  }{#1}
}
\newcommand{\vqueuerule}{
   \mathrule
   {
     \vjdg[][\val v][\sort]
     \qquad
     \pjdg[@][\smv : \queue{}{\tuple \sort}][\queue{\smv}{\tuple{\val v}}][@][\Gamma] 
   }
   {
     \pjdg[@][\smv : \queue{}{\tuple \sort \cdot \sort}][\queue{\smv}{\qmv \cdot \val v}][@][\Gamma] 
   }{VQueue}
}
\newcommand{\vemptyrule}{
  \mathaxiom{\pjdg[@][\smv : \queue{}{}][\smv : \queue{}{}][@][\Gamma]} 
  {VEmpty}
}
\newcommand{\vaccrule}{
  \mathrule
  {
    \envmv(\shname) \equiv \G(\tuple \smv)
    \qquad
    \pjdg[@][\envmv, \psnames \smv p :  \pst] 
    \qquad
    \rmg{\nfp \pst} = \nfp{\G(\tuple \smv) \proj {\ptp p}}
  }
  {
    \pjdg[@][@][\Pacc{\shname}{p}{\tuple \smv}{P}]
  }
  {VAcc}
}
\newcommand{\spec}[2]{#2}
\newcommandx{\consistent}[2][1=\C,2=\C',usedefault=@]{{#1}\land{#2} \ \not\proves\ \bot}
\newcommand{\Ttrans}[1]{\xrightarrow{#1}}
\newcommandx{\consExp}[2][2=\sigma,1=\C, usedefault=@]{#1\&#2}
\newcommandx{\consPst}[2][2=\sigma,1=\pst, usedefault=@]{{#2} \ltimes {#1} }
\newcommandx{\consSt}[5][1=\sigma,2=S, 5 = \Gamma, 4= \C, 3=\envmv, usedefault=@]{#1 \ltimes {(#4; #5; #2; #3)} }
\newcommand{\Strans}[1]{\, \stackrel{#1}{\Longrightarrow}\, }
\newcommand{\R}{\mathbb{R}}
\newcommand{\CWSRel}[2]{(#1,#2)\in\R}
\newcommand{\MAsend}[2]{\lsend {#1}  {#2}}
\newcommand{\varpst}[1]{\mathrm{var}({#1})} 
\newcommand{\lemref}[1]{\cref{#1}}
\newcommand{\CC}{E}
\newcommand{\seq}[1]{\widetilde#1}
\newcommand{\Ipstaux}[2]{\displaystyle{\bigoplus_{#1} #2}}
\newcommand{\eg}{\text{e.g.}}
\def\cf{\emph{cf.}\ }
\newcommand{\ifempty}[3]{%
  \ifthenelse{\isempty{#1}}{#2}{#3}%
}
\newcommand{\tuple}[1]{\vec{#1}}
\newcommand{\sep}{\;\bnfmid\;}
\newcommand{\sst}{\;\big|\;}
\newcommand{\bnfmid}{\;\ \big|\ \;}
\newcommand{\dom}[1]{\mathrm{dom} {(#1)}}
\def\finex{{\unskip\nobreak\hfil
\penalty50\hskip1em\null\nobreak\hfil$\diamond$
\parfillskip=0pt\finalhyphendemerits=0\endgraf}}
\newcommand{\mmdef}{\eqdef}
\newcommand{\eqdef}{\,\triangleq\,}
\newcommand{\bnfdef}{\ ::=\ } 
\newcommand{\qst}{\;\colon\;} 
\newcommand{\nat}{\mathbb{N}}
\newcommand{\defref}[1]{Definition~\ref{#1}}
\begin{document}

\title{On Resolving Non-determinism in Choreographies}

\author[L. Bocchi]{Laura Bocchi\rsuper{a}}
\address{\lsuper{a}School of Computing, University of Kent, UK}

\author[H. Melgratti]{Hern\'an Melgratti\rsuper{b}}
\address{\lsuper{b}Instituto de Ciencias de la Computaci\'on - Universidad de Buenos Aires - Conicet, Argentina} 

\author[E. Tuosto]{Emilio Tuosto\rsuper{c}}
\address{\lsuper{c}Gran Sasso Science Institute, IT \and Department of Computer Science, University of Leicester, UK}

\keywords{Choreography, multiparty session types, process algebras, whole-spectrum implementation, message-passing, non-determinism}
\subjclass{D.1.3: Concurrent Programming, D.2.2: Design Tools and Techniques, D.2.4: Software/Program Verification, D.3.1: Formal Definitions and Theory}

\thanks{
  Research partly supported by the European Unions Horizon 2020
  research and innovation programme under the Marie
  Sk\l{}odowska-Curie grant agreement No 778233, by the UBACyT
  projects 20020170100544BA and 20020170100086BA, and by the PIP
  project 11220130100148CO\!
}

\begin{abstract}
  Choreographies specify multiparty interactions via message passing.
  A \emph{realisation} of a choreography is a composition of
  independent processes that behave as specified by the choreography.
  Existing relations of correctness/completeness between
  choreographies and realisations are based on models where choices
  are non-deterministic.
  Resolving non-deterministic choices into deterministic choices
  (e.g., conditional statements) is necessary to correctly
  characterise the relationship between choreographies and their
  implementations with concrete programming languages.
  We introduce a notion of realisability for choreographies --called
  \emph{whole-spectrum implementation}-- where choices are still
  non-deterministic in choreographies, but are deterministic in their
  implementations.
  Our notion of whole spectrum implementation rules out deterministic
  implementations of roles that, no matter which context they are
  placed in, will never follow one of the branches of a
  non-deterministic choice.
  We give a type discipline for checking whole-spectrum
  implementations. As a case study, we analyse the
  POP protocol under the lens of whole-spectrum implementation.
\end{abstract}

\maketitle
\section{Introduction}\label{sec:intro}

\newcommand{\ATM}{\emph{ATM}}

\noindent\textbf{The context} A \emph{choreography} describes the
expected interactions of a system in terms of the messages exchanged
among its components (aka \emph{roles}):
\begingroup
\addtolength\leftmargini{-2ex}
\begin{quote}\sl
  \lq\lq Using the Web Services Choreography specification, a contract
  containing a global definition of the common ordering conditions and
  constraints under which messages are exchanged, is produced
  [...]. Each party can then use the global definition to build and 
  test solutions that conform to it. The global specification is in
  turn realised by combination of the resulting local systems
  [...]\rq\rq 
\end{quote}
The first part of the excerpt above taken from~\cite{w3c:cho}
envisages a choreography as a global contract regulating the exchange
of messages; the last part identifies a distinctive element of
choreographies: the global definition can be used to verify local
components (correctly) realise the global contract.
A choreography allows for the combination of independently developed
distributed components (\eg, services) while hiding implementation
details.

\begin{figure}
  \begin{tikzpicture}[node distance=.5cm and .6cm, scale =1.1, every node/.style={transform shape}]
    \mkint{}{cbauth}[][c][login][b];
    \mkint{below = 1cm of cbauth, xshift=-2cm}{cbdeposit}[][c][deposit][b];
    \mkint{below = 1cm of cbauth, xshift=2cm}{cboverdraft}[][c][overdraft][b];
    \mkint{below = 1cm of cboverdraft,xshift=-1cm}{bcko}[][b][ko][c];
    \mkint{below = 1cm of cboverdraft,xshift=1cm}{bcok}[][b][ok][c];
    \mkbranch{bcko,bcok}[ok+ko][][-2.3];
    \mkmerge{bcko,bcok}[-ok+ko][][.5];
    \mkseq{cboverdraft}{ok+ko};
    \mkbranch{cbdeposit,cboverdraft}[d+o][][-.7];
    \mkmerge{cbdeposit,-ok+ko}[-d+o][][.3];
    \mkseq{cbauth}{d+o};
    \mkgraph{cbauth}{-d+o};
  \end{tikzpicture}
\caption{\ATM\ Choreography}%
\label{fig:atm-chor}
\end{figure}
Moreover, the communication pattern specified in the choreography
yields sufficient information to be projected so to check each
component implementing one of the roles.
For illustration, take a simple choreography, hereafter called
\ATM\ and expressed as the global graph~\cite{lty15,gt17}
on~\cref{fig:atm-chor}, involving a customer $\ptp c$ and the cash
machine of a bank $\ptp b$.
The nodes labelled by \orgateG\ represent branching or merging points
of choices.
After successful authentication, $\ptp b$ offers a deposit and an
overdraft service to $\ptp c$.
In the global graph this choice corresponds to the topmost
\orgateG\ branching over the next interactions between the
participants.
The bank $\ptp b$ can either grant or deny overdrafts asked by the
customer $\ptp c$.

\smallskip\noindent\textbf{On realisations}
A set of processes is a \emph{realisation} of a choreography when the
behaviour emerging from their distributed execution matches the
behaviour specified by the choreography.
A choreography is \emph{realisable} when it has a realisation.

A realisation of \ATM\ can, for example, be given using two CCS-like
processes~\cite{mil89} (augmented with internal $\_\oplus\_$ and
external $\_+\_$ choice operators) for roles $\ptp b$ and $\ptp c$:
\[
\begin{array}{r@{\ = \ }l}
  \hspace{1.5cm}\TT_{\ptp b}  &  \smvo{login}.(\smvo{deposit}
  \ + \ \smvo{overdraft}.(\overline{\smvo{ok}}\
  \oplus  \ \overline{\smvo{ko}}))
\\[1em]
  \TT_{\ptp c}  &  \overline{\smvo{login}}.(\overline{\smvo{deposit}}
  \ \oplus\ \overline{\smvo{overdraft}}.(\smvo{ok}\
  +  \ \smvo{ko}))
  \end{array}
\]
In words, $\TT_{\ptp b}$ specifies that, after $\ptp c$ logs in, $\ptp
B$ waits to interact either on $\smvo{deposit}$ or on
$\smvo{overdraft}$; in the latter case, $\ptp b$ non-deterministically
decides whether to grant or deny the overdraft; $\TT_{\ptp c}$ is the
dual of $\TT_{\ptp b}$.
Note that \ATM\ uses non-determinism to avoid specifying the criteria
for $\ptp b$ to grant or deny an overdraft.
The use of non-determinism is also reflected in realisations, in fact
$\TT_{\ptp b}$ uses the internal choice operator $\_ \oplus \_$ to
model the reaction when $\ptp c$ requests an overdraft.

Choreographies can be interpreted either as \emph{constraints} or as
\emph{obligations} of distributed
interactions~\cite{DBLP:conf/icsoc/LohmannW11}.
The former interpretation
(aka~\emph{partial}~\cite{DBLP:conf/icsoc/LohmannW11} or
weak~\cite{DBLP:conf/wsfm/SuBFZ07}) admits a realisation if it
exhibits a subset of the behaviour.
We propose WSI as a criterion to assess if an implementation takes
into account all the execution cases prescribed by a choreography.
The theory can be used in practice, to implement a static verification
checker targeted at specific languages, and possibly embedded in a
(behavioural) type checker (e.g., for Java~\cite{event} or
Haskel~\cite{NeubauerThiemann04}), or an API generation tool (e.g.,
for Java~\cite{HuY16} or F\#~\cite{NeykovaHYA18}).  The aforementioned
tools are generally targeted at checking soundness of implementations,
and checking for WSI would add guarantees of `completeness'.  In this
paper, we use a process algebra as an abstraction of implementations,
instead of a full fledged programming language, to simplify the
presentation and the development of the theory.
For instance, take
\[
  \TT_{\ptp b}' =  \smvo{login}.(\smvo{deposit}
  \ + \ \smvo{overdraft}. \overline{\smvo{ko}})
\]
then $\TT_{\ptp b}'$ and $\TT_{\ptp c}$ form a partial realisation of
\ATM\ where overdraft requests are consistently denied.
On the contrary, when interpreting choreographies as obligations, a
realisation is admissible if it is able to exhibit \emph{all}
interaction sequences (hence such realisations are also referred to as
\emph{complete} realisations~\cite{DBLP:conf/icsoc/LohmannW11}).

For instance, $\TT_{\ptp b}$ and $\TT_{\ptp c}$ form a complete
realisation of \ATM\@.

\medskip\noindent\textbf{The problem}
Typically, realisations are non-deterministic specifications; here we
explore the problem of resolving their non-determinism.
In fact, despite being a valuable abstraction mechanism,
non-determinism has to be implemented using deterministic constructs
such as conditional statements.

Using again \ATM, we illustrate that traditional notions of complete
realisation are not fully satisfactory.
The non-deterministic choice in $\TT_{\ptp b}$ abstracts away from the
actual conditions used in implementations to resolve the choice.
This permits a bank to adopt different policies depending \eg, on the
type of the clients' accounts.
Consider the (deterministic) implementations $B_1$ and $B_2$ of
$\TT_{\ptp b}$ below written as value-passing CCS processes:
\[\begin{array}{lcll}
    B_i &::=& \smvo {login}(c).\left( \smvo {deposit} (x).Q + \smvo {overdraft}(x).P_i(c) \right)
    & \text{for } i = 1,2 \quad (Q \text{ is immaterial})
    \\[.1pc]
    P_1(c) &::=& \Pif{\mathit{check}(c)}{\overline{\smvo{ok}}}{\overline{\smvo{ko}}}
    \\[.1pc]
    P_2(c) &::=& \overline{\smvo{ko}}
\end{array}\]
Both $B_1$ and $B_2$ expect to receive the login credentials $c$ of a
client on channel $\smvo {login}$.
After that, they offer the services $\smvo {deposit}$ and $\smvo
{overdraft}$.
The implementations differ in the way they handle overdraft requests,
which are respectively defined by $P_1(c)$ and $P_2(c)$.
The expression $\mathit{check}(c)$ in $P_1(c)$ deterministically
discriminates if the overdraft should be granted depending on the
login credentials $c$ provided by the client. Differently, $P_2(c)$
refuses any overdraft request.
It is not hard to see that both $B_1$ and $B_2$ are suitable
implementations of $\TT_{\ptp b}$ in partial realisations of the
choreography\footnote{For instance, both $B_1$ and $B_2$ type-check
  against $\TT_{\ptp b}$ considered as a session type due to the fact
  that subtyping for session types~\cite{GH05} is contra-variant with
  respect to internal choices (and covariant with respect to external
  choices).}  (as
\eg\ in~\cite{DBLP:conf/wsfm/Dezani-Ciancaglinid09}).

Conversely, neither $B_1$ nor $B_2$ can be used in a complete
realisation.
This is straightforward for $B_2$ (unable to interact over
$\mathsf{ok}$ after receiving an overdraft request), but not so
evident for $B_1$.
Depending on the credentials $c$ sent by the customer at login time,
$\mathit{check}(c)$ will evaluate either to $\truek$ or to $\falsek$.
Therefore, $B_1$ will execute only one of the branches.
This will be the case for any possible deterministic implementation of
\ATM:\@ only one branch will be matched. Consequently, there is not a
complete, deterministic realisation for \ATM\@.

We prefer $B_1$ to $B_2$ arguing that they are not equally appealing
when interpreting choreographies as obligations.
In fact, $B_2$ consistently precludes one of the alternatives while
$B_1$ guarantees only one or the other alternative (provided that
$\mathit{check}$ is not a constant function) depending on the
deterministic implementation of the role $\TT_{\ptp c}$.

\medskip\noindent\textbf{Contributions and synopsis}
We introduce \emph{whole-spectrum implementation} (WSI), a new
interpretation of choreographies as interaction obligations.
A WSI of a role $\ptp R$ guarantees that, whenever the choreography
allows $\ptp R$ to make an internal choice, there is a context (i.e.,
an implementation of the remaining roles) for which (the
implementation of) $\ptp R$ chooses such alternative.
Through the paper, we illustrate the use of WSI to analyse the POP2
protocol (i.e., choreography \cref{sec:runGT}, implementation
\cref{sec:runGTp}, and verification \cref{sec:exproof}).

In the following, we use an elaboration of global types
from~\cite{lt12} to model choreographies. Implementations of
choreographies are abstracted as {systems}, which are parallel
compositions of processes in an asynchronous calculus.
\emph{Whole-spectrum implementation} (WSI) is defined in terms of the
denotational semantics of global types and systems.
A key point on the characterisation of WSI is the distinction between
mandatory and optional behaviour arising from the implementation of
loops.
The denotational semantics of a global type $\G$ is given by the set
$\imrunsu(\G)$ of traces describing mandatory and optional behaviour
(\cref{def:runGTT} in \cref{sec:hnice}).
The denotation of a system $S$ is also a set $\imrunsu(S)$ of traces
(\cref{def:gimp} in \cref{sec:hnice}), which however do not
discriminate optional behaviour.
WSI is defined as a \emph{covering} relation $\mysubseteq$ between the
traces of implementations $\imrunsu(S)$ and those of global types
$\imrunsu(\G)$ (\cref{def:wsr}, \cref{sec:hnice}).

We devise a behavioural typing framework for checking WSI\@.
Our global and local types are in \cref{sec:types}; the language for
implementations is in \cref{sec:systems}.
Our typing discipline is introduced in \cref{sec:typing}.
As usual, a type judgement $ \vdash S \has \envmv$ says that an
implementation $S$ has local types $\envmv$, obtained by projecting a
global type $\G$ (cf. \cref{sec:types}).
We show a theorem of subject reduction and one of conformance
(\cref{thm:sr,thm:conformance} in \cref{sec:conformance}) which
guarantee that well-typed systems do not deviate from the behaviour
specified by their type.
Both results rely on the operational semantics of systems
(\cref{sec:systems}) and of local types (\cref{sec:conformance}).
We show the adequacy of the denotational semantics of systems to their
operational semantics in \cref{prop:run-vs-red} (\cref{sec:hnice}).
Analogously, we establish the correspondence between the operational
and denotational semantics of local types
(\cref{lem:corr-semantics-local-types-back,lem:corr-semantics-local-types},
\cref{sec:wsi-by-typing}).

The proof that our type system ensures WSI is given in two steps: we
prove that
\begin{itemize}
\item $\imruns\G \mysubseteq \imruns\envmv$
  (\cref{lem:runs-types-covers-runs-global} in
  \cref{sec:wsi-by-typing}), namely that the traces of the local types
  projected from a global type $\G$ cover the traces of $\G$; and that
\item $\imruns\envmv \mysubseteq \imruns S$
  (\cref{thm:impl-covers-local} in \cref{sec:wsi-by-typing}), namely
  that the traces of well-typed systems cover the traces of their
  local types.
\end{itemize}
By transitivity of $\mysubseteq$, we obtain $\imruns\G \mysubseteq
\imruns S$, which entails WSI for well-typed implementations
(\cref{cor:impl-covers-local} in \cref{sec:wsi-by-typing}).

The main contributions in this paper can be diagrammatically presented
as follows:
\[
  \begin{tikzpicture}[node distance = 1.2cm and 3cm,scale=1.3,transform shape]
    \tikzset{legend/.style = {font=\fontsize{6}{6.5}\selectfont, opacity=.9, color = poppnamecolor, align=center, text width = 1.5cm}}
    \node (j') {$ \vdash$};
    \node[right = .1pt of j'] (S') {$S'$};
    \node[right = .1pt of S'] (x') {$\has$};
    \node[right = .1pt of x'] (Env') {$\envmv'$};
    \node[below = of j'] (j) {$ \vdash$};
    \node[below = of S'] (S) {$S$};
    \node[below = of S] (RG) {$\imruns{\G}$};
    \node[left = of RG,xshift=2.5cm,yshift=-.6cm] (RS) {$\imrunsu(S)$};
    \node[below = of x'] (x) {$\has$};
    \node[below = of Env'] (Env) {$\envmv$};
    \node[right = of RG,xshift=-2.5cm,yshift=-.6cm] (REnv) {$\typeruns{}{\envmv}$};
    \node[below = of Env,xshift = 2cm,yshift=1.2cm] (G) {$\G$};
    \draw[->,dotted] (G) -- (Env) node[legend,midway,sloped] (GEnv) {projection\\\S \ref{sec:types}};
    \draw[->,dashed] (G) -- (RG) node[legend,midway,sloped] (GRG) {Def. \ref{def:gimp}\\ \S \ref{sec:hnice}};
    \draw[->,poppnamecolor,dashed] (S) .. controls (-60pt,-70pt) and (-80pt,-100pt) .. (RS) node[legend,midway,sloped,xshift=0.5cm,yshift=-.1cm,rotate=-20] {Def. \ref{def:runGTT},\\ \S \ref{sec:hnice}};
    \path[->,poppnamecolor,draw,dashed] (Env) .. controls (150pt,-30pt) and (200pt,-90pt).. node[legend,midway,sloped,xshift=-.3cm,yshift=-.07cm,rotate=27] {Fig. \ref{fig:def-runs-local-types}\\ \S \ref{sec:hnice}} (REnv);
    \path[->,figureblue,draw] (S) -- (S') node[midway] (SS') {} node[legend,near end,sloped,below] {\S \ref{sec:systems}};
    \path[->,figureblue] (SS') -- coordinate[midway] (SS'G) (G);
    \draw (SS') edge[bend right, opacity = .2, double,-implies] node[legend,midway,sloped] (SS'SRS) {Thm. \ref{prop:run-vs-red},\\\S \ref{sec:hnice}} (-50pt,-70pt);
    \path[->,figureblue,draw] (Env) -- (Env') node[legend,sloped,above,near end] {\S \ref{sec:conformance}} node[midway] (EnvEnv') {};
    \path[->,red] (REnv) -- coordinate[midway] (REnvRG) (RG);
    \path[->,red] (RG) -- coordinate[midway] (RGRS) (RS);
    \node[rotate=-45] at (REnvRG) {\Large $\mysubseteq$};
    \node[rotate=225] (wsi) at (RGRS) {\Large $\mysubseteq$};
    \node[xshift=.6cm,yshift=-.25cm,rotate=180] (covers) at (RGRS) {\Large $\mysubseteq$};
    \node[legend, xshift=-.4cm, yshift=.2cm] at (wsi) {Def. \ref{def:wsr},\\\S \ref{sec:hnice}};
    \draw (EnvEnv') edge[bend left, opacity = .2, double] node[legend,midway,sloped] (EnvEnv'REnv) {Lem. \ref{lem:corr-semantics-local-types-back}-\ref{lem:corr-semantics-local-types}\\ \S \ref{sec:wsi-by-typing}} (135pt,-50pt);
    \node[legend, yshift=.35cm] at (j) {\S \ref{sec:typing}};
    \node[legend, yshift=.1cm,xshift=.5cm] at (REnvRG) {Thm. \ref{lem:runs-types-covers-runs-global},\\\S \ref{sec:wsi-by-typing}};
    \node[legend, below = of covers,yshift=.8cm] {Thm. \ref{thm:impl-covers-local}\\ \S \ref{sec:wsi-by-typing}};
  \end{tikzpicture}
\]

\bigskip

This is an extended version of~\cite{ESOP14}.
Besides giving detailed proofs, we simplified some definitions and
typing rules.
In particular, \cref{sec:pseudo} was not included in~\cite{ESOP14} and
was added here to remove communication effects from judgements.
We also improved the general presentation and refined the running
example.


\section{Global and Local Types}%
\label{sec:types}

Our types elaborate from~\cite{lt12} and use a form of iteration,
which is more tractable than recursion.
We fix three countably infinite and pairwise disjoint sets:
\begin{itemize}
\item $\shset$ of \emph{shared} names (ranged over by $\shname$),
\item $\chset$ of \emph{session} channels (ranged over by $\smv, \smvo
  s, \ldots$), and
\item $\ptpset$ of \emph{(participants) roles} ranged over by $\ptp p,
  \ptp q, \ptp r, \ldots$; it is convenient to assume that $\ptpset$
  is (isomorphic to) the set of natural numbers.
\end{itemize}
Basic data types, called \emph{sorts}, (\eg, booleans $\dt{Bool}$,
integers $\dt{Int}$, strings $\dt{Str}$, singleton $\dt{Unit}$, record
types, etc.) are assumed; $\sort$ ranges over sorts.
We use \emph{sorted channels} $\smv \ \sort$ to specify that channel
$\smv \in \chset$ is used to exchange data of sort $\sort$.
We write $\tuple z = (z_1,\ldots,z_n)$ for a finite sequence of
elements $z_1, \ldots, z_n \in Z$ for a given set $Z$; when
no confusion arises, $\tuple z$ may also denote the set
$\{z_1,\ldots,z_n\}$ (\eg, we write $z_2 \in \tuple z$).

A \emph{global type term} (GTT, for short) $\GT$ is derived by the
following grammar:
\begin{eqnarray*}
  \GT  & ::= & \Gchoice
               \qquad \sep \qquad
               \GT ; \GT
               \qquad \sep \qquad
               \Gdef{\f}{\GT}
               \qquad \sep \qquad
               \Gend
\end{eqnarray*}
A GTT $\sum_{i\in I} \Gint[p][q_i][\smv_i][\sort_{\mathit i}];\GT_i$
denotes the branching of interactions from a unique selector $\ptp p$
to participants $\ptp q_i$ for $i \in I \neq \emptyset$; we tacitly
assume that $\ptp p \neq \ptp q_i$ for all $i \in I$, and that $\ptp
q_i = \ptp q_j$ implies $\smv_i \neq \smv_j$ for all $i \neq j \in I$
(namely that channels of a same receiver in two different branches are
different).
For a singleton $I = \{n\}$,
$\Gint[@][q_n][\smv_n][\sort_{\mathit n}];\GT_n$ shortens
$\sum_{i \in I} \Gint[p][q_i][\smv_i][\sort_{\mathit i}];\GT_i$.
A communication over a channel of sort $\dt{Unit}$ is a pure
synchronisation and usually we write $\Gint[@][q][\smv][];\GT$ instead
of $\Gint[@][q][\smv][\dt{Unit}];\GT$.
A GTT can also be the sequential ($\_ ; \_$) composition of two GTTs,
the iteration of a GTT ($\Gdef \_ \_$), or the empty term $\Gend$.
We omit trailing occurrences of $\Gend$ and write $\Gint[@][q][\smv]$
instead of $\Gint[@][q][\smv];\Gend$.

As in~\cite{DBLP:journals/corr/abs-1203-0780}, we adopt iteration in
global types.
Function $\f$ in $\Gdef{\f}{\GT}$ is an injective map that assigns
sorted channels to roles.
The mapping is used to indicate how the termination of the iteration
is communicated to the roles. More precisely, if $\f(\ptp p) =
\smv\ \sort$ then the participant $\ptp p$ will receive a message of
type $\sort$ on channel $\smv$ when the iteration terminates.
We use $\range f$ for the set of channels in the image of $f$,
namely
\begin{eqnarray*}
  \range{\f} & = & \{\smv \sst \f(\ptp p) = \smv\ \sort \quad \text{and} \quad \ptp p\in\dom{\f}\}
\end{eqnarray*}

\begin{exa}[Iterative GTT]%
  \label{ex:iterativeGTT}
  We revisit the scenario introduced in \cref{sec:intro}, which
  involves a client $\ptp c$ and a bank $\ptp b$. The GTT $\GT$ below
  defines a protocol in which $\ptp c$, after being logged in, may
  perform zero or more deposits and overdrafts.
  \[
    \begin{array}{l@{\ }ll}
      \GT = \Gint[c][b][\smvo {login}][\dt{Str}];
      \Gdef{\ptp b \mapsto \smvo{quit}}{(
      \\
      & \Gint[c][b][\smvo {deposit}][\dt{Int}]
      \\
      &+
      \\
      &
        \Gint[c][b][\smvo {overdraft}][\dt{Int}];(\Gint[b][c][\smvo {ok}][] + \Gint[b][c][\smvo {ko}][])\quad )}
    \end{array}
  \]
  The protocol starts with the interaction
  $\Gint[c][b][\smvo {login}][\dt{Str}]$, i.e., $\ptp c$ sends to
  $\ptp b$ its login information (of sort $\dt{Str}$) over channel
  $\smvo{login}$.
  Then, the protocol continues as an iterative GTT in which $\ptp c$
  decides to either perform a deposit, ask for an overdraft, or
  finalise the protocol. The protocol can be finalised by $\ptp c$
  by sending to $\ptp b$ a message (of omitted sort $\dt {Unit}$) on
  channel $\smvo {quit}$, as indicated by the function
  $\ptp b \mapsto \smvo{quit}$ used to decorate the iterative type.
  In the interaction $\Gint[c][b][\smvo {deposit}][\dt{Int}]$,
  $\ptp c$ requests a deposit by sending the amount to be deposited
  over channel $\smvo {deposit}$.
  In the interaction
  $\Gint[c][b][\smvo {overdraft}][\dt{Int}]$, $\ptp c$ requests an overdraft.
  Note that in the case of overdraft request $\ptp c$ waits for a
  notification from $\ptp b$ on whether the request is granted or
  denied (a message over $\smvo {ok}$ or $\smvo {ko}$, respectively,
  of the omitted sort $\dt{Unit}$).
  Once the chosen branch has been completed, the iteration can be
  restarted.
   \finex
\end{exa}

We introduce some useful auxiliary notions.
For a GTT $\GT$
  \begin{itemize}
  \item the set of \emph{participants $\participants{\GT}$ of $\GT$}
    is
    \begin{eqnarray*}
      \participants{\sum_{i\in I} \Gint[p][q_i][\smv_i][\sort_{\mathit i}];\GT_i}
      & = &
      \{\ptp p\} \cup \bigcup_{i\in I} (\{\ptp q_i\} \cup \participants{\GT_i})
      \\
      \participants{\GT ; \GT'}
      & = &
      \participants{\GT} \cup \participants{\GT'}
      \\
      \participants{\Gdef{\f}{\GT}}
      & = & \participants{\GT}
      \\
      \participants{\Gend}
      & = & \emptyset
    \end{eqnarray*}

  \item the set of participants ready to send a message, or
    \emph{ready participants $\ready{\GT}$ of $\GT$} is
    \begin{eqnarray*}
      \ready{ \sum_{i\in I} \Gint[p][q_i][\smv_i][\sort_{\mathit i}];\GT_i}
      & = &
      \{\ptp p\}\\
      \\
      \ready{\GT ; \GT'}
      & = &
      \begin{cases}
        \ready{\GT} & \text{ if } \ready{\GT} \neq \emptyset \\
        \ready{\GT'} & \text{ if } \ready{\GT} = \emptyset
      \end{cases}
      \\
      \ready{\Gdef{\f}{\GT}}
      & = &  \ready{\GT}
      \\
      \ready{\Gend}
      & = &
      \emptyset
    \end{eqnarray*}

  \item the set of \emph{channel names $\chn{\GT} \subseteq \chset$ of
    $\GT$} is
    \begin{eqnarray*}
      \chn{\sum_{i\in I} \Gint[p][q_i][\smv_i][\sort_{\mathit i}];\GT_i}
      & = &
      \bigcup_{i\in I} \{\smv_i\}
      \\
      \chn{\GT ; \GT'}
      & = &
      \chn{\GT}\cup\chn{\GT'}
      \\
      \chn{\Gdef{\f}{\GT}}
      & = & \chn{\GT} \cup \range{\f}
      \\
      \chn{\Gend}
      & = &
      \emptyset
    \end{eqnarray*}
  \end{itemize}

\begin{exa}%
  \label{ex:Gf}
  The set
  $\chn{\GT} = \{\smvo {login}, \smvo {deposit}, \smvo{overdraft},
  \smvo{ok}, \smvo{ko}, \smvo{quit}\}$ is the set of channels used by
  the choreography $\GT$ in \cref{ex:iterativeGTT}.
  Also, we have the set of participants
  $\participants{\GT} = \{ \ptp b, \ptp c \}$, where $\ptp c$ is the
  one that can initiate the interaction, i.e.,
  $\ready{\GT} = \{\ptp c\}$.
  \finex
\end{exa}

A \emph{global type} is defined by an equation
\[
\Geq \qquad \text{where} \qquad
\chn \GT \subseteq \tuple \smv  \subseteq \chset
\quad \text{and} \quad
\tuple \smv \text{ are pairwise distinct names}
\]
We abbreviate $\Geq$ with $\G(\tuple \smv)$ when $\GT$ is immaterial;
we write $\G$ or $\GT$ instead of $\G(\tuple \smv)$ when parameters
are understood.

For technical reasons (\cf\ \cref{sec:typing}), our global types are
explicitly parameterised on session channels; however, they will be
considered equivalent up-to renaming of parameters.
More precisely, let $\equiv_\text{GTT}$ be the structural congruence
relation on GTTs such that
\begin{itemize}
\item $\_;\_$ and $\_+\_$ form monoids with identity $\Gend$
\item and $\_+\_$ is commutative;
\end{itemize}
we say that $\Geq[\G_1][\tuple \smv][\GT_1]$ and
$\Geq[\G_2][\tuple{\smvo z}][\GT_2]$ are structurally equivalent
(written $\G_1 \equiv \G_2$) when $\GT_1 \equiv_\text{GTT}
\GT_2\subs{\tuple \smv}{\tuple{\smvo z}}$ where, as usual,
$\subs{\tuple \smv}{\tuple{\smvo \smv}}$ is the capture avoiding
substitution replacing the $i$-th element of $\tuple \smv$ with the
$i$-th element of $\tuple{\smvo z}$ (for which we assume $\tuple \smv$
to be a tuple of pairwise distinct names of the same length as
$\tuple{\smvo z}$).

The extensions of $\participants\_$ and $\ready\_$ to $\Geq$ are
straightforward: $\participants\G = \participants{\GT}$ and $\ready\G
= \ready\GT$.

As customary in session types, we restrict our attention to
\emph{well-formed} global types in order to rule out specifications
that cannot be implemented distributively.
We borrow from~\cite{HondaYC08,DBLP:journals/corr/abs-1203-0780} the
standard wellformedness conditions of \emph{knowledge of choice} and
\emph{linearity}.

Knowledge of choice requires a unique-selector in a choice and that
any other participant can determine the chosen branch from the
received messages.
For instance, the global type
\[
  \Gint[p][q][\smv_1][];\GT_1 ~~ + ~~\Gint[p][r][\smv_2][];
  \Gint[s][q][\smv_1][]; \GT_2
\]
violates the knowledge of choice condition: although there is a
unique-selector $\ptp p$ in the choice (hence the first part of the
condition is satisfied), there is a participant $\ptp q$ that cannot
determine the chosen branch.
More precisely, $\ptp q$ is ready to receive a message on $\smv_1$
either from $\ptp p$ in the first branch or from $\ptp s$ on the
second one; hence $\ptp q$ cannot determine which branch has been
selected after the input on $\smv_1$.

Linearity requires absence of communication races on channels.
Races occur when causally unrelated interactions happen on a same
channel.
Consider
\[
  \Gint[p][q][\smv][];\Gint[r][q][\smv][]
\]
where the two sent actions yield a race on $\smv$ since they are
concurrent because performed by different senders.
On the contrary,
\[
  \Gint[p][q][\smv][];\Gint[q][r][\smv][]
\]
satisfies linearity because the two interactions on $\smv$ are
executed sequentially, hence causally related.

We introduce an additional wellformedness condition
(Definition~\ref{def:well-formed}) that is specific to our form of
iteration.
\begin{defi}[Well-formed iteration]\label{def:well-formed}
  A GTT of the form $\GT^\f$ is a well-formed iteration if:
  \begin{enumerate}
  \item\label{it:wfi} $\range{\f}\cap \chn{\GT} = \emptyset$,
  \item\label{it:wfii} $\ready{\GT}$ is a singleton (we call the participant in
    $\ready{\GT}$ the \emph{iteration-controller of $\GT$}) and
    $\dom{f} = \participants{\GT}\backslash\ready{\GT}$,
  \item\label{it:wfiii} for any proper subterm $\Gdef{\f_1}{\GT_1}$ of $\GT$,
    $\range{\f_1} \cap \range{\f} = \emptyset$.
  \end{enumerate}
\end{defi}

\noindent
By condition~\eqref{it:wfi}, the channels used to terminate an
iteration are disjoint from those in the body $\GT$.
This is fundamental to avoid confusion when implementing iterations
and resembles the condition about knowledge of choices.
Consider the global type
\[
  \Gdef{\ptp q \mapsto \smv}{ (\Gint[p][q][\smv][];\GT_1)};\GT_2
\]
that violates condition~\eqref{it:wfi}.
Note that after receiving a message from $\ptp p$ on $\smv$, $\ptp q$
is unable to determine whether it should behave as specified by
$\GT_1$ (i.e., to perform the body of the iteration) or $\GT_2$
(i.e., to exit the iteration).

Condition~\eqref{it:wfii} requires a unique role (the
iteration-controller) to be the one deciding whether to execute the
iteration body or to terminate by notifying all other participants in
global type.
This condition avoids situations in which a participant is unaware of
the fact that some iteration has been finalised.
For example, in the following GTT
\[
  \Gdef{\ptp q \mapsto \smv_3}{ (\Gint[p][q][\smv_1][];\Gint[q][r][\smv_2][])};\GT_2
\]
$\ptp r$ will not receive any message when $\ptp p$ decides to
conclude the iteration.

Finally, condition~\eqref{it:wfiii} prevents interference between
terminations of nested iterations. Consider
\[
 \Gdef{\ptp q \mapsto \smv_2, \ptp r \mapsto \smv_4}{ (
 \Gdef{\ptp q \mapsto \smv_2}{(\Gint[p][q][\smv_1][])};\Gint[q][r][\smv_3][])};\GT_2
\]
Note that after receiving a message on $\smv_2$, $\ptp q$ is unable to
determine if $\ptp p$ has concluded the inner or the outer iteration.

Hereafter, we will assume that for every $\Geq$, $\GT$ satisfies
knowledge of choice and linearity (form~\cite{HondaYC08,DBLP:journals/corr/abs-1203-0780}), and that every
iterative GTT appearing in $\GT$ is a well-formed iteration.

A \emph{local type term} (LTT for short) $\TT$ is derived from the
following grammar:
\begin{gather*}
  \TT \ ::=  \ \Tssel{\smv_i}{\sort_i}{\TT_i}
  \quad \sep \quad
  \Tbbra{\smv_i}{\sort_i}{\TT_i}
  \quad \sep \quad
  \TT_1 ; \TT_2
  \quad \sep \quad
  \Tdef \TT
  \quad \sep \quad
  \Tend
\end{gather*}
An LTT is either an internal ($\bigoplus$) or external ($\sum$)
guarded choice on non-empty index sets $I$, the sequential composition
$\_ ; \_$, an iteration $\Tdef \_$, or the empty term $\Tend$.
We usually omit trailing occurrences of $\Tend$ and write
$\Tsend{\smv_n}{\sort_n}; \TT_n$ (resp.
$\Treceive{\smv_n}{\sort_n};\TT_n$) for
$\Tssel{\smv_i}{\sort_i}{\TT_i}$
(resp. $\Tbbra{\smv_i}{\sort_i}{\TT_i}$) when $I = \{n\}$.

We assume that all channels appearing in the guards of an internal or
an external choice are pairwise different.  Similarly to GTT, the set
$\chn \TT$ of \emph{channels} of an LTT $\TT$ is defined as
\[
\begin{array}{l}
  \chn{\Tssel{\smv_i}{\sort_i}{\TT_i}}
  \quad = \quad
  \chn{ \Tbbra{\smv_i}{\sort_i}{\TT_i} }
  \quad = \quad \{\smv_i \sst i\in I \} \cup \bigcup_{i \in I} \chn{\TT_i}
  \\[2em]
  \chn{\TT_1 ; \TT_2}  =  \chn{\TT_1} \cup \chn{\TT_2}
  \qquad\qquad
  \chn{\Tdef{\TT}}  =  \chn{\TT}
  \qquad\qquad
  \chn{\Tend}  =  \emptyset
\end{array}
\]

A \emph{local type} is defined by an equation
\[
\Leq
\qquad \text{where} \qquad
\chn \GT \subseteq \tuple \smv  \subseteq \chset
\quad \text{and} \quad
\tuple \smv \text{ are pairwise distinct names}
\]
We abbreviate $\Leq$ with $\T(\tuple \smv)$ when $\TT$ is immaterial
and we write $\T$ or $\TT$ instead of $\T(\tuple \smv)$ when
parameters are understood.

The structural congruence on LTTs is defined as the smallest
congruence $\equiv_\text{LTT}$ such that
\begin{itemize}
\item internal and external choice operators are associative,
  commutative and have $\Tend$ as identity,
\item and $\_ ; \_$ is associative and has $\Tend$ as neutral element.
\end{itemize}
Two local types $\Leq[\T_1][\tuple \smv_1][\TT_1]$ and
$\Leq[\T_2][\tuple \smv_2][\TT_2]$ are structurally equivalent
(written $\T_1 \equiv \T_2$) when
$\TT_1 \equiv_\text{LTT} \TT_2\subs{\tuple \smv_2}{\tuple \smv_1}$.
In the following, we consider types up-to structural congruence.

The \emph{projection}\label{page:projection} operation extracts local
types from a global type; we restrict such operation on well-formed
global types.
Given a participant $\ptp r \in \ptpset$, the \emph{projection} of a
well-formed GTT $\GT$ on $\ptp r$, denoted as $\GT\proj{\ptp{r}}$, is
defined as follows:
{\small
  \begin{eqnarray*}
    \GT\proj\ptp r
    =
	 \begin{cases}
           \Tend  & \text{if } \ptp{r} \not\in \participants{\GT}
           \\
           \Tssel{\smv_i}{\sort_i}{\GT_i\proj\ptp r} & \text{if }\GT =  \displaystyle{\sum_{i\in I} \Gint[r][q_i][\smv_i][\sort_{\mathit i}];\GT_i }
            \\
            \TBbra{i}{I_{\ptp r}}{\smv_i}{\sort_i}{\GT_i\proj\ptp r}  + \sum_{i\in I \setminus I_{\ptp r}} {\GT_i\proj\ptp r} &
              \text{if }\GT = \displaystyle{\sum_{i\in I_{\ptp r}} \Gint[p][q_i][\smv_i][\sort_{\mathit i}];\GT_i + \sum_{i\in I \setminus I_{\ptp r}} \Gint[p][q_i][\smv_i][\sort_{\mathit i}];\GT_i } \\
              & \ptp r \neq \ptp p, \text{ and } I_{\ptp r} = \{i \in I \sst \ptp q_i = \ptp r\}
              \\
              \GT_1\proj\ptp{r}; \GT_2\proj\ptp{r} & \text{if } \GT = \GT_1; \GT_2
              \\
              \Tdef{(\GT_1\proj\ptp{r})};  \Tsend{b_1}{\sort_1};\ldots ; \Tsend{b_n}{\sort_n} &  \text{if }\GT = \Gdef{\f}{\GT_1},~\bigcup\{\f(\ptp p)\, | \, \ptp{p}\in \range{\f}\}=\{b_1\ \sort_1, \ldots, b_n \ \sort_n\}, \\
              & \text{and } \ptp{r} \in \ready{\GT_1}
              \\
              \Tdef{(\GT_1\proj\ptp{r})}; \Treceive{b}{\sort} &  \text{if }\GT =\Gdef{\f}{\GT_1},~\f(\ptp{r})= b\ \sort
            \end{cases}
  \end{eqnarray*}}
  When projecting a global type $\GT$ over a participant name $\ptp r$ that does not appear
  in it, produces the idle local type $\Tend$. The remaining cases implicitly assume that
  $\ptp r\in \participants \GT$.

  The projection of the (unique) selector of a branch results in the
  internal choice on the session channels.
  Dually, the projection on a receiver in a branch results in an
  external choice; observe that branches where the receiver is not
  $\ptp r$ (i.e., $i \in I \setminus I_{\ptp r}$) are treated
  differently from those where the receiver is $\ptp r$ (i.e.,
  $i \in I_{\ptp r}$).
  We remark that the projected local type is well-defined when $\GT$
  is well-formed: the condition about knowledge of choices ensures
  that $\ptp r\in\participants{\GT_i}$ for all
  $i\in I \setminus I_{\ptp r}$; moreover, $\ptp r$ is the receiver in
  the first interaction of each branch $\GT_i$.

  The projection of a sequential composition is self-explanatory.
  Iterative GTTs are projected depending on whether the role is a
  controller or not. For $\GT = \Gdef{\f}{\GT_1}$, recall that (by
  well-formedness) there is an iteration-controller
  $\ptp r \in \ready{\GT_1}$; the projection of $\GT$ on the
  controller generates an iterative local type, which corresponds to
  the projection of $\GT_1$, followed by the messages that signal the
  termination of the iteration to the remaining participants (i.e.,
  the messages $\Tsend{b_1}{\sort_1};\ldots ; \Tsend{b_n}{\sort_n}$).
  Dually, the projection on the other participants waits for the
  signal to exit the iteration (i.e., $\Treceive{b}{\sort}$).
  The \emph{projection} $\G(\tuple \smv) \proj \ptp r$ of a global type
  $\Geq$ with respect to $\ptp r$ is a local type $\Leq$ where $\TT =
  \GT \proj \ptp r$.
  \begin{exa}\label{ex:projection}
	 Consider the GTT $\GT$ introduced in \cref{ex:iterativeGTT}.
	 Since $\GT$ consists of a single branch where $\ptp c$ is the
	 selector, the projection on $\ptp c$ is an internal choice with a
	 single branch that sends a message on channel $\smvo{login}$ and
	 follows with the projection of the iterative type in the
	 continuation.
	 Since $\ptp c$ is the interation-controller of the continuation,
	 its projection consists of the iteration of the body followed by
	 the termination message $\smvo {quit}$, as shown below.
  \begin{align*}
    \GT\proj{\ptp{c}}
     = &
     \Tsend{\smvo{login}}{\dt{Str}}. \Tdef{(
    \Tsend{\smvo{deposit}}{\dt{Int}}
    +
    \Tsend{\smvo{overdraft}}{\dt{Int}}.
    (\Treceive{\smvo{ok}}{} + \Treceive{\smvo{ko}}{}))};\Tsend{\smvo{quit}}{}
    \\
    \GT\proj{\ptp{b}}
    = &
    \Treceive{\smvo{login}}{\dt{Str}}. \Tdef{(
    \Treceive{\smvo{deposit}}{\dt{Int}}
    +
    \Treceive{\smvo{overdraft}}{\dt{Int}}.
    (\Tsend{\smvo{ok}}{} + \Tsend{\smvo{ko}}{}))};\Treceive{\smvo{quit}}{}
  \end{align*}
  Note that $\GT\proj{\ptp{b}}$ and $\GT\proj{\ptp{b}}$ are each other's dual.
  \finex
\end{exa}


\section{Types for the POP2 Protocol}%
\label{sec:runGT}

We illustrate our approach on the Post Office Protocol Version 2
(POP2)~\cite{pop2} between a client $\ptp c$ and a mail server $\ptp
s$.
We describe POP2 with the following global type:
\[
\begin{array}{lll}
  \GPOP & \eqdef &
  \Gint[c][s][\smvo{quit}][];\exG \GT {EXIT} \
  +\  \Gint[c][s][\smvo{helo}][Str]; \exG \GT {MBOX}
  \\[0.4mm]
  \exG \GT {EXIT} & \eqdef &
  \Gint[s][c][\smvo{bye}][]
\end{array}
\]
The global type $\GPOP$ starts with $\ptp c$ sending a message to
$\ptp s$ either on channel $\smvo{quit}$ (of the omitted sort
$\dt{Unit}$) or on $\smvo{helo}$ to communicate its password (of sort
$\dt{Str}$).
In the first case, the interaction ends after $\ptp s$ sends a message
on $\smvo {bye}$ as per $\exG \GT {EXIT}$; in the latter case the
protocol follows as per $\exG \GT {MBOX}$ below.
\[
\begin{array}{lll}
  \exG \GT {MBOX} & \eqdef &
  \Gint[s][c][\smvo{e}][];\exG \GT {EXIT} \
  + \  \Gint[s][c][\smvo{r}][Int];{\exG \GT {NMBR}}
\end{array}
\]
In $\exG \GT {MBOX}$, $\ptp s$ either signals an error on $\smvo e$
before terminating or sends a message on $\smvo r$ containing the
number of messages in the default mailbox and then continues as $\exG
\GT {NMBR}$:
\[
\begin{array}{l@{}l@{}r@{\ }l}
  \exG \GT {NMBR}
  & \eqdef
  (&&\Gint[c][s][\smvo{fold}][Str]; \Gint[s][c][\smvo{r}][Int]
  \\
  &&+&
  \Gint[c][s][\smvo{read}][Int];\Gint[s][c][\smvo{r}][Int];\exG \GT {size} )^{*\ptp s\mapsto \smvo{quit}}; 
     {\exG \GT {EXIT}}
\end{array}
\]
where $\ptp c$ repeatedly requests either (a) the number of messages
available in a folder, or (b) the length of a particular message in
the current folder.
The iteration-controller is $\ptp c$ and it uses $\smvo {quit}$ to
communicate the termination of the loop to $\ptp s$.
In case (a), $\ptp c$ sends the folder's name over the channel
$\smvo{fold}$ and waits for the answer on $\smvo r$; after this, the
body of the iteration is completed and the loop can be repeated
again. In case (b), $\ptp c$ sends the index corresponding to the
selected message on channel $\smvo{read}$ and waits for the answer on
channel $\smvo r$; after this, the interaction continues as $\exG \GT
{size}$ specified below:
\[
\begin{array}{l@{}l@{}r@{\ }l}
  \exG \GT {size}
  & \eqdef
  (&& \Gint[c][s][\smvo{read}][Int]; \Gint[s][c][\smvo{r}][Int]
  \\
  &&+&
  \Gint[c][s][\smvo{retr}][];\Gint[s][c][\smvo{msg}][Data].
  \exG \GT {xfer})^{*\ptp s\mapsto \smvo{fold} \ \dt{Str}}; 
  \Gint[s][c][\smvo{r}][Int]
\end{array}
\]
In $\exG \GT {size}$, another loop controlled by $\ptp c$ lets the
client either (a) ask for another message by interacting again on
$\smvo{read}$ as described above or (b) retrieve a message.
In the latter case, $\ptp c$ signals on $\smvo{retr}$ that it is ready
to receive the message, which is then sent back on $\smvo{msg}$ by
$\ptp s$ (sort $\dt{Data}$ abstracts away the format of messages
specified in~\cite{rfc0822}).
Finally, $\ptp c$ acknowledges the reception of the requested message
as follows:
\[
\begin{array}{lll}
  \exG \GT {xfer}
  &\eqdef&
  \Gint[c][s][\smvo{acks}][]; \Gint[s][c][\smvo{r}][Int] \\
  &+&
  \Gint[c][s][\smvo{ackd}][]; \Gint[s][c][\smvo{r}][Int]
  \\
  &+&
  \Gint[c][s][\smvo{nack}][]; \Gint[s][c][\smvo{r}][Int]
\end{array}
\]
Basically $\ptp c$ may use one among three alternative channels:
$\smvo{acks}$ to acknowledge the reception of the message,
$\smvo{ackd}$ to keep the message or, $\smvo{nack}$ to notify that the
message has not been received properly (in which case the message is
kept in the mailbox).
In each case, $\ptp s$ sends back to $\ptp c$ the length of the next
message over channel $\smvo r$.

The local type $ \TT_{\ptp s}$ obtained by projecting $\GPOP$ on the
participant $\ptp s$, i.e., $\TT_{\ptp s} = {\GPOP}\proj{\ptp s}$, is
below.
\[
\begin{array}{l@{\ \eqdef\ } l}
  \TT_{\ptp s}
  &
  \Treceive{\smvo{quit}}{}.{\exG \TT {EXIT}}\
  +\ \Treceive{\smvo{helo}}{\dt{Str}}. \exG \TT {MBOX}
  \\[1mm]
  \exG \TT {EXIT}
  &
  \Tsend{\smvo{bye}}{}
  \\[1mm]
  \exG \TT {MBOX}
  &
  \Tsend{\smvo{e}}{}.{\exG \TT {EXIT}}\ \oplus\
  \Tsend{\smvo{r}}{\dt{Int}}.{\exG \TT {NMBR}}
  \\[1mm]
  \exG \TT {NMBR}
  &
  (  \Treceive{\smvo{fold}}{\dt{Str}}.\Tsend{\smvo{r}}{\dt{Int}}\
  +\  \Treceive{\smvo{read}}{\dt{Int}}.\Tsend{\smvo{r}}{\dt{Int}}.\exG \TT {size} )^{*}; 
  {\Treceive{\smvo{quit}}{};{\exG \TT {EXIT}}}
  \\[1mm]
  \exG \TT {size}
  &
  (\Treceive{\smvo{read}}{\dt{Int}}. \Tsend{\smvo{r}}{\dt{Int}}
  \ +\ \Treceive{\smvo{retr}}{}.\Tsend{\smvo{msg}}{\dt{Data}}.\exG \TT {xfer} )^{*}; 
  \Treceive{\smvo{fold}}{\dt{Str}};\Tsend{\smvo{r}}{\dt{int}}
  \\[1mm]
  \exG \TT {xfer}
  &
  \Treceive{\smvo{acks}}{}.\Tsend{\smvo{r}}{\dt{Int}}\
  +\ \Treceive{\smvo{ackd}}{}.\Tsend{\smvo{r}}{\dt{Int}}\
  +\  \Treceive{\smvo{nack}}{}.\Tsend{\smvo{r}}{\dt{Int}}
\end{array}
\]
Note that the messages in $\TT_{\ptp s}$ are as in $\GPOP$. We remark
that $\ptp s$ does not control any of the two iterations (i.e.,
$\exG \GT {NMBR}$ and $\exG\GT {size}$), hence the projections iterate
until $\ptp s$ receive a signal on the termination channels:
$\smvo{quit}$ in $\exG \TT {NMBR}$ and $\smvo{fold}$ in
$\exG \TT{size}$, respectively.

The projection ${\GPOP}\proj{\ptp c}$ of $\GPOP$ on $\ptp c$ is
obtained analogously; the resulting local type is the dual of
$\TT_{\ptp s}$, i.e., the one obtained by substituting internal
choices by external ones and vice versa.

The projection ${\GPOP}\proj{\ptp c}$ of $\GPOP$ onto $\ptp c$ is
obtained analogously; the resulting local type is the dual of
$\TT_{\ptp s}$, i.e., the one obtained by substituting internal
choices by external ones and vice versa.

For illustrative purpose, in the next example we present a multiparty
variant of $\GPOP$, where the authentication is outsourced.

\begin{exa}\label{ex:1btypes}
A multiparty variant of POP2 is defined by the global type $\GPOP'$
below.
\[
\begin{array}{lll}
  \GPOP'
  & \eqdef &
  \Gint[c][s][\smvo{quit}][];{\exG \GT {EXIT}}~ +~
  \Gint[c][s][\smvo{helo}][Str]; \exG {\GT'}{MBOX}
  \\[0.4mm]
  \exG {\GT'}{MBOX}
  & \eqdef &
  \Gint[s][a][\smvo{req}][Str];\Gint[a][s][\smvo{res}][Bool];
  \big(\Gint[s][c][\smvo{e}][];{\exG \GT {EXIT}}~+~
  \Gint[s][c][\smvo{r}][Int];{\exG \GT {NMBR}} \big)
\end{array}
\]
In this version, $\ptp s$ uses a third-party authentication service
$\ptp a$, which is contacted immediately after the server receives a
$\smvo {helo}$ message from the client.
The server sends to $\ptp a$ an authentication request over
$\smvo{res}$ and waits for the authorisation on $\smvo{res}$ ($\exG
\GT {NMBR}$ and $\exG \GT {EXIT}$ remain unchanged).

The following equations
\[
\begin{array}{lll}
  \TT'_{\ptp s}
  & \eqdef &
  \Treceive{\smvo{quit}}{}.\exG \TT {EXIT} ~+ ~
  \Treceive{\smvo{helo}}{\dt{Str}}.\exG \TT {AUTH}
  \\
  \exG \TT {AUTH}
  & \eqdef &
  \Tsend{\smvo{req}}{\dt{Str}}.\Treceive{\smvo{res}}{\dt{Bool}}.\exG{\TT'}{MBOX}
  \\
  \exG{\TT'}{MBOX}
  & \eqdef &
  \Tsend{\smvo{e}}{}.\exG\TT{EXIT} ~\oplus~
  \Tsend{\smvo{r}}{\dt{Int}}.\exG \TT {NMBR}
\end{array}
\]
yield
the  projection $ \TT'_{\ptp s} {\GPOP'}\proj{\ptp s}$ of $\GPOP'$ on $\ptp s$.
  \finex
\end{exa}


\section{Processes and Systems}%
\label{sec:systems}

Choreographies specify distributed applications that we refer to as
\emph{systems}. Concretely, systems are the parallel composition of
\emph{processes} realising the roles in a choreography.

Processes manipulate and exchange values obtained by evaluating
expressions.
Let $\varset$ and $\valset$ be two infinite disjoint sets of
\emph{variables} and \emph{basic values} respectively which are both
disjoint from the sets of shared names $\shset$, session channels
$\chset$, and participants $\ptpset$.
Values are specified by \emph{expressions} having the following
syntax:
\[
e \ \bnfdef \  x \sep \val v \sep e_1 \bop e_2 \sep \uop e
\qquad\qquad
\ell \ \bnfdef \ [e_1, \ldots, e_n]  \sep e_1..e_2
\]
An expression $e$ is either a variable $x \in \varset$ or a value
$\val v \in \valset$, or else the composition $e_1 \bop e_2$ of two
expressions through a binary operator $\bop$\!, or the application of
a unary operator $\uop$ to an expression (operators are left
unspecified and can be thought of as the usual logical-arithmetic
operators of programming languages).
We assume that expressions are implicitly sorted and, for simplicity,
our expressions do not include binders of variables, names, or test
for definiteness.
Lists $[e_1, \ldots, e_n]$ and numerical ranges $e_1..e_2$ are used
for iteration; in the latter case, both expressions $e_1$ and $e_2$
are of sort integer.
The empty list is denoted as $\emptylist$ and the operations
$\headL{\ell}$ and $\tailL{\ell}$ respectively return the head and
tail of $\ell$ (defined as usual).
Given an expression $e$ or a list $\ell$, the sets $\var{e}$ and
$\var{\ell}$ of \emph{variables of $e$ and $\ell$} respectively, are
defined as
\[
  \begin{array}{c}
    \var{x} = \{x\}
    \quad
    \var{\val v} = \emptyset
    \quad
    \var{\uop e} = \var{e}
    \quad
    \var{e_1 \bop e_2} = \var{e_1} \cup \var{e_2}
  \\[1em]
    \var{[e_1, \ldots, e_n] } = \bigcup_{i=1}^n\var{e_i}
    \qquad
    \var{e_1..e_2} = \var{e_1} \cup \var{e_2}
  \end{array}
\]

\begin{figure}[t]
  \centering\small
  \begin{tabular}[c]{p{.42\linewidth}p{.5\linewidth}}
    $\begin{array}{l@{\ }c@{\ }l@{\quad}l}
      P,Q & \bnfdef & & \hspace{-.5cm}\textbf{processes}
      \\ & & \Preq \shname n {\tuple \smv} P & \text{request}
      \\ & \sep & \Pacc \shname p {\tuple \smv} P & \text{accept}
      \\[.2em] & \sep & N & \text{choice}
      \\[.2em] & \sep & \Psend \smv e  & \text{send}
      \\[.2em] & \sep & P;Q & \text{sequential}
      \\[.2em] & \sep & \Pif e P Q & \text{conditional}
      \\[.2em] & \sep & \Pfor x \ell P & \text{for}
      \\[.2em] & \sep & \PloopU N N & \text{repeat}
    \end{array}$
    &
    \begin{minipage}[c]{.9\linewidth}
    $\begin{array}{l@{\ }c@{\ }l@{\quad}l}
        N & \bnfdef && \hspace{-.5cm}\textbf{input-guarded processes}
        \\ & &\Pechoice{i \in I}{\smv_i}{x_i}{P_i}& \text{choices}\\[20pt]
        S & \bnfdef & &
        \hspace{-.5cm}\textbf{systems} \\ & & P \\ & \sep & S \sparop
        S & \text{parallel}\\ & \sep & \queue \smv \qmv & \text{queue}
        \\ & \sep & (\nu \tuple \smv \At \shname) S &
        \text{restriction} \\
      \end{array}$
    \end{minipage}
  \end{tabular}
  \caption{\label{fig:syssyn}Syntax of processes and systems.}
\end{figure}
The syntax of processes $P$, input-guarded non-deterministic
sequential processes $N$, and systems $S$ is given in
\cref{fig:syssyn}.
A process $\Preq \shname n {\tuple \smv} P$ requests a new session on
a shared name $\shname$ and then behaves as $P$; dually, process
$\Pacc \shname p {\tuple \smv} P$ accepts the request of a new session
from another process and then behaves as $P$.
A message $e$ is sent on a session channel $\smv$ by the process
$\Psend{\smv} e$.
Sequential composition and conditional are standard.
An input-guarded non-deterministic sequential processes $\Pechoice{i
  \in I}{\smv_i}{x_i}{P_i}$ (conventionally denoted as $\Pend$ when $I
= \emptyset$) can branch over $P_i$ when a message is received on the
session channel $\smv_i$; we assume $\smv_i \neq \smv_j$ when $i \neq
j \in I$.
Our language for processes provide two different constructs for
iterations
\[
\Pfor x \ell P
\qquad\text{and}\qquad
\PloopU N {\Pechoice{i\in I}{\smv_i}{x_i}{P_i}}
\]
Intuitively, the former realises the controllers of iterative global
types, while the latter is used for the remaining roles (\cf\
\cref{sec:typing}).
A for-loop iterates the body $P$ on the list $\ell$.
The body $N$ in a repeat-until-loop is a process of the form
$\Pechoice{i \in I}{\smv_i}{x_i}{P_i}$ and it is repeated until a
message on one of the channels $\smv_i$ on the until guard is
received.

We set the following precedence rules: $\Pif{\_}{\_}{\_}$,
$\Pfor{\_}{\_}{\_}$ and $\PloopU{\_}$ have the lowest precedence while
$\_.\_$ has precedence over $\_;\_$ so that, e.g., the term
$\Pif{e}{P}{ \Pacc \shname i {\tuple \smv} Q}$ reads $\Pif{e}{P}{
  (\Pacc \shname i {\tuple \smv} Q)}$ and $\Pacc \shname i {\tuple
  \smv} P;Q$ reads $(\Pacc \shname i {\tuple \smv} P);Q$.

Systems consist of a parallel composition of process together with the
queues $\queue \smv \qmv$ that store the values $\qmv$ sent over the
session channels $\smv$. Given $\smv_1, \ldots, \smv_h$ pairwise
distinct session channels, we write
$\queue{(\smv_1,\ldots,\smv_h)}{\qmv_1,\ldots,\qmv_h}$ to denote
$\queue{\smv_1}{\qmv_1} \mid \ldots \mid \queue{\smv_h}{\qmv_h}$.
Names $\tuple \smv$ are bound in $(\nu \tuple{\smv} \At \shname) S$
and related to the shared name $\shname$.

The definition of the set $\fn \_$ of \emph{free names} is standard
but for shared names $\shname$ which are also decorated to keep track
of roles; formally we define $\fn \_$ on systems as
\[
\begin{array}c
  \fn{\queue  \smv \qmv}  = \{\smv\}
  \qquad\qquad\quad
  \fn{(\nu \tuple \smv \At \shname) S } =  \fn{S} \setminus \tuple{\smv}
  \qquad\qquad\quad
  \fn{S \sparop S'} =   \fn{S} \cup  \fn{S'}
\end{array}
\]
while for processes we have
\[
\begin{array}{l@{\ }l}
  \fn{\Preq \shname n {\tuple \smv} P}
  =
  \{\shname, \shname[0]\} \cup \fn{P} \setminus \tuple \smv
  &
  \fn{\Pacc \shname p {\tuple \smv} P}
  =
  \{\shname, \shname[p]\} \cup \fn{P} \setminus \tuple \smv
  \\
  \fn{\Psend{\smv} e}
  =
  \{\smv\} \cup \var e
  &
  \fn{\Pechoice{i\in I}{\smv_i}{x_i}{P_i}}
  =
  \bigcup_{i\in I} ( \{\smv_i\} \cup \fn{P_i} \setminus \{x_i\})
  \\
  \fn{P;Q}
  =
  \fn{P}\cup\fn{Q}
  &
  \fn{\Pif e P Q}
  =
  \fn{P} \cup \fn{Q} \cup \var e
  \\
  \fn{\Pfor x \ell P}
  =
  \fn{P} \setminus \{ x \} \cup \fn{\ell}
  &
  \fn{\PloopU N {N'}}
  =
  \fn{N} \cup \fn{N'}
\end{array}
\]
where, in the first equation, a process requesting a new session on
$\shname$ plays the role $0$ and, in the second equation, a process
accepting on $\shname[p]$ plays role $\ptp p$.

The set $\fU[S]$ of \emph{free shared names} of $S$ is defined as
$\fn{S}\cap \shset$. Similarly the set of \emph{free session names}
$\fY[S]$ (resp. \emph{free variables} $\fX[S]$) of $S$ is defined as
$\fn{S}\cap \chset$ (resp. $\fn{S}\cap \varset$).  The set $\bn{\_}$
of \emph{bound names} is defined as
\[
\begin{array}{l@{\quad\qquad}l}
  \bn{\Preq \shname n {\tuple \smv} P} = \tuple \smv  \cup \bn{P}
  &
  \bn{\Pacc \shname p {\tuple \smv} P} = \tuple \smv \cup \bn{P}
  \\
  \bn{\Psend{\smv} e} = \emptyset
  &
  \bn{\Pechoice{i\in I}{\smv_i}{x_i}{P_i}} = \bigcup_{i\in I} (\{x_i\} \cup \bn{P_i})
  \\
  \bn{P;Q} = \bn{P} \cup \bn{Q}
  &
  \bn{\Pif e P Q} = \bn{P} \cup \bn{Q}
  \\
  \bn{\Pfor x \ell P} = \{x\} \cup \bn{P}
  &
  \bn{\PloopU N {N'}} = \bn{N} \cup \bn{N'}
  \\[10pt]
  \bn{\queue \smv \qmv} = \emptyset
  &
  \bn{(\nu \tuple \smv \At \shname) S } = \tuple{\smv} \cup \bn{S}
  \\
  \bn{S \sparop S'} = \bn{S} \cup  \bn{S'}
\end{array}
\]
The set $\bY[S]$ of \emph{bound session names} of $S$ is defined as
$\bn{S}\cap \chset$ and the set $\bX[S]$ of \emph{bound variables} of
$S$ is $\bn{S}\cap \varset$. Note that $\bn{S}\cap \shset=\emptyset$
for all $S$.

\begin{figure}[t]
  \begin{tabular}{lll}
    \multicolumn{3}{l} {\bf Sets}
    \\
    \toprule
    Set & Elements & Description
    \\
    \midrule
    $\varset$ & $x, x_1, \ldots$ & variables
    \\
    $\valset$ & $\val v, {\val v}_1, \ldots$ & basic values
    \\
    $\shset$ & $\shname, \shname_1, \ldots$ & shared names
    \\
    $\chset$ & $\smv, \smvo s, \ldots$ & session channels
    \\
    $\ptpset$ & $\ptp p, \ptp q, \ptp r, \ldots$ & participants
    \\
    $e$ & $e, e_1, \ldots$ & expressions
    \\
    $\ell$ & $\ell, \ell_1, \ldots$ & lists
    \\
    \bottomrule
    \multicolumn{3}{l} {}
    \\
    \multicolumn{3}{l} {\bf Functions}
    \\
    \toprule
    Function & \multicolumn{2}{l}{Description}
    \\
    \midrule
    $\var e, \var \ell$
    & \multicolumn{2}{l}{variables of a expression, a list}
    \\
    $\fn P, \fn S$ & \multicolumn{2}{l}{free names of a process, a system}
    \\
    $\bn P, \fn S$ & \multicolumn{2}{l}{bound names of a process, a system}
    \\
    $\bY[P], \bY[S]$
    & \multicolumn{2}{l}{bound session names of a process, a system}
    \\
    $\bX[P],\bX[S]$
    & \multicolumn{2}{l}{of bound variables of a process, a system}
    \\
    \bottomrule
  \end{tabular}
  \caption{Summary of notation for processes}%
  \label{fig:notation-processes}
\end{figure}

\cref{fig:notation-processes} summarises the notation introduced so
far for the syntax of processes.

As customary, we rely on a \emph{structural congruence relation}
$\equiv$ defined as the least congruence over systems closed with
respect to $\alpha$-conversion and the following axioms
\[
\begin{array}{l@{\hspace{1cm}}l}
  (\nu \tuple \smv \At  \shname)(\nu \tuple \smv ' \At  \shname{'}) S
  \equiv
  (\nu \tuple \smv' \At  \shname{'})(\nu \tuple{\smv} \At  \shname) S
  &
  (\nu \tuple \smv \At  \shname) \Pend \equiv \Pend
  \\
  (\nu \tuple \smv \At  \shname) (S \sparop S')
  \equiv
  S\ \sparop\ (\nu \tuple{\smv} \At  \shname) S',
  \text{ when } \tuple{\smv}\not\subseteq \fY[S]
\end{array}
\]
and such that $\_\sparop\_$ and $\_ ; \_$ form monoids with identity
$\Pend$ and the former is commutative.

The operational semantics of systems is given by the LTS inductively
defined by the rules in \cref{fig:LTSproc,fig:LTSsys} where
\begin{itemize}
\item a store $\sigma$ records both the values assigned to variables
  and the session channels created by a process,
\item  $\sigma\upd{\val v}{x}$ is the update of
  $\sigma$ at $x$ with $\val v$ (and likewise for
  $\sigma\upd \shname \smv$), and
\item $\eval e \sigma$ is the evaluation of $e$ (defined if
  $\var e \subseteq \dom\sigma$ and undefined otherwise).
  We assume that an expression $e$ depends only on its variables, that
  is, for all stores $\sigma$ and $\sigma'$:
  \begin{align*}
	 \sigma|_{\varset} = \sigma'|_{\varset}
	 \implies \eval e \sigma = \eval e \sigma'
  \end{align*}
  where $\cdot | \_$ is the standard restriction of a function on a
  subset of its domain.
\end{itemize}

\noindent
Labels are given by the following productions
\begin{align}
  \alpha & \bnfdef \lreq \shname n {\tuple \smv} \quad \sep \quad
  \lacc \shname p {\tuple \smv}  \quad \sep \quad
  \lsend \smv {\val v} \quad \sep \quad
  \lreceive \smv {\val v} \quad \sep \quad
  \tau \quad \sep \quad
  \lcond \C \alpha \label{eq:ltslabel}
\end{align}
that respectively represent the request of initialisation of a session
on $\shname$, the acceptance of joining a session on $\shname$ with
role $\ptp p$, the sending of a value on $\smv$, the reception of a
value on $\smv$, the silent step $\tau$, and \emph{conditional}
actions $\lcond \C \alpha$ where $\C$ is a boolean expression.
A conditional action labelling a transition $\state[S] \Ptrans{\lcond
  \C \alpha} \state[S'][\sigma']$ denotes that $\state[S]$ performs
the action $\alpha$ and moves to $\state[S'][\sigma']$ because $\eval
\C \sigma$ holds.
We may write $\alpha$ instead of $\lcond{\truek}{\alpha}$ and
$\lcond{\C \land \C'}{\alpha}$ instead of $\lcond \C {(\lcond{\C'}
  \alpha)}$.

Functions $\fY[\_]$ and $\bn \_$ extend to labels as follows:
\[
\begin{array}{l@{\quad\ }l@{\quad\ }l}
  \fY[\lreq \shname n {\tuple \smv} ]
  =
  \fY[\lacc \shname p {\tuple \smv} ] = \{\shname\}
  &
  \fY[\lsend \smv {\val v}]
  =
  \fY[\lreceive \smv {\val v}] = \{\smv\}
  \quad
  \fY[\tau]
  =
  \emptyset
  &
  \fY[\lcond \C \alpha]
  =
  \fY[\alpha]
  \\[1em]
  \bn{\lreq \shname p {\tuple \smv}} =
  \bn{\lacc \shname p {\tuple \smv}} = \tuple \smv
  &
  \bn{\lsend{\smv}{\val{v}}} =
  \bn{\lreceive{\smv}{\val v}} =
  \bn\tau = \emptyset
  &
  \bn{\lcond \C \alpha} = \bn \alpha
\end{array}
\]

\newcommand{\smartrule}[3]{ \irule{#1}{#2} \quad \myrule{#3}}

\begin{figure}[t]\small
  \[
  \begin{array}{l@{\hspace{1em}}r}
    \multicolumn 2 c{
      \mathrule{
        \tuple \smv \not\in \dom \sigma
      }{
        \state[\Preq{\shname}{n}{\tuple \smv}{P}]
        \Ptrans{\lreq{\shname}{n}{\tuple \smv}}
        \state[P][\sigma\upd{\shname}{\tuple \smv}]
      }{SReq}
      \hfill
      \mathrule{
        \tuple \smv \not\in \dom \sigma
      }{
        \state[\Pacc \shname p {\tuple \smv} P]
        \Ptrans{\lacc \shname p {\tuple \smv}}
        \state[P][\sigma\upd{\shname}{\tuple \smv}]
      }{SAcc}
      \hfill
      \mathrule{
        \eval \C \sigma = \val v
      }{
        \state[\Psend \smv \C {}] \Ptrans{\lsend \smv {\val{v}}} \state[\Pend]
      }{SSend}
    }
    \\[3em]
    \mathaxiom{
      \state[\Pechoice{i\in I}{\smv_i}{x_i}{P_i}]
      \Ptrans{\lreceive \smv_j {\val v}}
      \state[P_j][\sigma\upd{\val v}{x_j}]\, j \in I
    }{SRcv}
    &
    \mathrule{
      \state[P] \Ptrans{\lcond \C \alpha} \state[P'][\sigma']
    }{
      \state[P;Q] \Ptrans{\lcond \C \alpha} \state[P';Q][\sigma']
    }{SSeq}
    \\[3em]
    \mathrule{
      \eval \C \sigma = \truek
      \quad
      \state[P]
      \Ptrans{\lcond{\C'}{\alpha}}
      \state[P'][\sigma']
    }{
      \state[\Pif{\C}{P}{Q}]
      \Ptrans{\lcond{\C \land \C'}{\alpha}}
      \state[P'][\sigma']
    }{SThen}
    &
    \mathrule{
      \eval \C \sigma = \falsek
      \quad
      \state[Q] \Ptrans{\lcond{\C'} \alpha} \state[Q'][\sigma']
    }{
      \state[\Pif \C P Q] \Ptrans{\lcond{\neg \C \land \C'}{\alpha}} \state[Q'][\sigma']
    }{SElse}
    \\[3em]
    \mathrule{
      \eval \ell \sigma \neq \emptylist\
      \qquad
      \state[P][\sigma\upd{\headL{\eval \ell \sigma}}{x}]
      \Ptrans{\lcond \C \alpha}
      \state[P'][\sigma']
    }{
      \state[\Pfor{x}{\ell}{P}]
      \Ptrans{\lcond \C \alpha}
      \state[P';\Pfor{x}{\tailL{\ell}}{P}][\sigma']
    }{SFor}
    &
    \mathrule{
      \isemptyL{\eval \ell \sigma}
    }{
      \state[\Pfor{x}{\ell}{P}] \Ptrans{\tau} \state[\Pend]
    }{SForEnd}
    \\[3em]
    \multicolumn 2 c {
      \mathrule{
        \state[N] \Ptrans{\lcond \C \alpha} \state[P][\sigma']
        \quad
        M = \Pechoice{i \in I}{\smv_i}{x_i}{P_i}
        \quad
        \forall i\in I. \smv_i \not\in \fY[\alpha]
      }{
        \state[\PloopU{N}{M}]
        \Ptrans{\lcond \C \alpha}
        \state[P;\PloopU N M][\sigma']
      }{SLoop}
      \
      \mathrule{
        \state[M] \Ptrans{\lcond \C \alpha}\state[P][\sigma']
      }{
        \state[\PloopU{N} M]\Ptrans{\lcond \C \alpha} \state[P][\sigma']
      }{SLoopEnd}
    }
  \end{array}
  \]
  \caption{\label{fig:LTSproc}Labelled transitions for processes}
\end{figure}

\begin{figure}[t]
  \begin{center}\footnotesize
    \[
    \begin{array}{c}
      \mathaxiom{
        \state[\Preq \shname n {\tuple \smv}{P_0}
          \sparop \Pacc \shname 1 {\tuple \smv}{P_1}
          \sparop \ldots \sparop \Pacc \shname n {\tuple \smv}{P_n}]
        \Ptrans{\tau}
        \state[(\nu \tuple \smv \At \shname )(
          P_0
          \sparop \ldots \sparop P_n
          \sparop \tuple \smv : \emptyset
          )][\sigma\upd{\shname}{\tuple \smv}]
        \qquad
        \tuple \smv \not\in \dom \sigma
      }{SInit}
      \\[2em]
      \mathrule{
        \state[P]
        \Ptrans{\lcond e {\lsend{\smv}{\val v}}}
        \state[P'][\sigma']
      }{
        \state[P \ \sparop\  \queue \smv \qmv]
        \Ptrans{\lcond e \tau}
        \state[P' \ \sparop\ \queue \smv {\qmv \cdot \val v}][\sigma']
      }{SCom_1}
      \hfill
      \mathrule{
        \state[P]
        \Ptrans{\lcond e {\lreceive{\smv}{\val v}}}
        \state[P'][\sigma']
      }{
        \state[P\ \sparop\  \queue \smv {\val v \cdot \qmv}]
        \Ptrans{\lcond e \tau}
        \state[P'\ \sparop\ \queue \smv \qmv][\sigma']
      }{SCom_2}
      \\[3em]
      \mathrule{
        \state[S_1]
        \Ptrans{\lcond e \alpha}
        \state[S_1'][\sigma']
        \qquad
        \bn \alpha \cap \dom{\sigma} = \emptyset
        \qquad
        \fX[S_2] \cap (\dom{\sigma'} \setminus \dom{\sigma}) = \emptyset
      }{
        \state[S_1 \sparop S_2]
        \Ptrans{\lcond e \alpha}
        \state[S'_1 \sparop S_2][\sigma']
      }{SPar}
      \\[3em]
      \mathrule{
        \state[S]
        \Ptrans{\lcond e \alpha}
        \state[S'][\sigma']
        \qquad
        \tuple \smv \cap \fY[\alpha] = \emptyset
      }
      {
        \state[(\nu \tuple \smv \At \shname) S]
        \Ptrans{\lcond e \alpha}
        \state[(\nu \tuple \smv \At \shname) S'][\sigma']
      }{SNew}
      \qquad\hfill\qquad
      \mathrule{
        S \equiv S_1
        \qquad
        \state[S_1]
        \Ptrans{\lcond e \alpha}
        \state[S_2][\sigma']
        \qquad
        S_2 \equiv S'
      }
      {
        \state[S]
        \Ptrans{\lcond e \alpha}
        \state[S'][\sigma']
      }{SStr}
    \end{array}
    \]
  \end{center}
  \caption{\label{fig:LTSsys}Labelled transitions for systems}
\end{figure}

We comment on the rules in \cref{fig:LTSproc} and~\ref{fig:LTSsys}.
In~\cref{fig:LTSproc}, rules \myrule{SReq} and \myrule{SAcc} deal with
the initialisation of new sessions; the store is updated to keep track
of the fresh session channels $\tuple \smv$ used in the choreography
$\shname$ (implicitly $\alpha$-converting $\tuple \smv$ when $\tuple
\smv \in \dom \sigma$). Rule \myrule{SSend} is for sending values.
Rule \myrule{SRcv} is for receiving messages in an early style
approach (variables are assigned when firing an input prefix); the
store is updated by recording that $\val v$ is is assigned to $x$.
Rule \myrule{SSeq} is for sequential composition.
Rules \myrule{SThen} and \myrule{SElse} handle conditional statements
as expected; their only peculiarity is that the guard is recorded on
the label of the transition, which is instrumental for establishing
the correspondence between systems and their types
(\cf\ \cref{sec:properties}).
Rules \myrule{SFor}, \myrule{SForEnd}, \myrule{SLoop}, and
\myrule{SLoopEnd} unfold the corresponding iterative process as
expected.

We now comment on the rules in \cref{fig:LTSsys}.
Rule \myrule{SInit} synchronises $n$ roles with the process $\Preq
\shname n {\tuple \smv_0} {P_0}$ initialising the session; this
creates a new session with (initially empty) queues on fresh session
names $\tuple \smv$.
These queues are used to exchange values as prescribed by rules
\myrule{SCom_1} and \myrule{SCom_2}.
Communication actions of processes become silent at system level
capturing the fact that each action is performed over a session queue.
Rule \myrule{SPar} stands for those transitions involving just some of
the components in a system.
By the condition $\bn \alpha \cap \dom{\sigma} = \emptyset$ in the
premiss of \myrule{SPar}, $\alpha$ should contain fresh session names
when it corresponds to the creation of a new session (i.e., it is
either $\lreq \shname n {\tuple \smv}$ or
$\lacc \shname p {\tuple \smv}$).
The rightmost condition in the premiss of rule \myrule{SPar} ensures
that each process has its own local (logical) store (i.e., there is no
confusion between bound variables of different processes).
Rule \myrule{SNew} is standard and allows an action $\alpha$ to be
observed only if it does not involve restricted names.
Rule \myrule{SStr} is standard.


\section{Processes of the POP2 Protocol}%
\label{sec:runGTp}

We now present an implementation for the role $\ptp s$ of the global
type $\GPOP$ introduced in \cref{sec:runGT}.
To ease the presentation, we abstract away from the concrete
representation of folders and use the following auxiliary abstract
operations:
\begin{center}
\begin{tabular}{l@{\ :\ }lp{.7\columnwidth}} 
    $\dt{auth}$ & $ \dt{Str} \to \dt{Bool}$
    & the authentication predicate,
    \\
    $\dt{msgs}$ & $\dt{Str} \to \dt{Int}$
    & maps a  folder name into the number of messages  in that folder,
    \\
    $\dt{len}$ & $\dt{Int} \to \dt{Int}$
    & maps a message index into the length of the message,
    \\
    $\dt{data}$ & $\dt{Int} \to \dt{Data}$
    & maps a message index into  its content,
    \\
    $\dt{next}$ & $\dt{Int} \to \dt{Int}$
    & maps a message index to the next index  in the folder,
    \\
    $\dt{del}$ & $\dt{Int} \to \dt{Int}$
    & maps a message index to the next index in the folder after deletion.
\end{tabular}\end{center}
As specified for POP2~\cite{pop2}, the value $\dt{inbox}$ of sort
${\dt{Int}}$ denotes the default folder.

Process $\poppname{Init}$ below gives an implementation of the role
$\ptp s$ in the protocol POP2 described in \cref{sec:runGT}.
\begin{align}\label{ex:1proc}
  \poppname{Init} & \eqdef \Pacc{\shname}{ s}{\tuple \smv} \poppname{Srv}
\end{align}
$\poppname{Init}$ starts by joining a session on the shared channel
$\shname$; it plays role $\ptp s$ over the session channels $\tuple
\smv = (\popcname{quit},
\popcname{helo},
\popcname{bye},
\popcname{r},
\popcname{e},
\popcname{fold},
\popcname{read},
\popcname{retr},
\popcname{msg},
\popcname{acks},
\popcname{ackd},
\popcname{nack})$.
Once the session is initiated, the continuation $\poppname{Srv}$
implements the local type $\TT_{\ptp s}$ in \cref{sec:runGT}.
\begin{align}
  \poppname{Srv}
  &
  \eqdef
  \Preceive{\popcname{quit}}{}{\poppname{Exit}}~+~
  \Preceive{\popcname{helo}}{c}{\poppname{Mbox}(c)}
  \qquad \text{where} \qquad
  \poppname{Exit} \eqdef \Psend{\popcname{bye}}{}
\end{align}
As specified by $\TT_{\ptp s}$, $\poppname{Srv}$ waits for a message
on either $\popcname{quit}$ or $\popcname{helo}$. After receiving a
message on $\popcname{quit}$, it sends a message on $\popcname{bye}$
and terminates, as defined by $\poppname{Exit}$.  If the client sends
instead its credentials over channel $\popcname{helo}$, then the
implementation follows with $\poppname{Mbox}(c)$:
\begin{align}
\label{proc:mbox}
  \poppname{Mbox}(c) & \eqdef  \Pif{(\dt{auth}\ c)}
           {\Psend{\popcname{r}}{(\dt {msgs}\ \dt{inbox})};{\poppname{Nmbr}}}
           {(\Psend{{\popcname{e}}}{};{\poppname{exit}})}
\end{align}
$\poppname{Mbox}(c)$ resolves the non-deterministic choice in $\exG
\TT {MBOX} = \Tsend{\popcname{e}}{}.{\exG \TT
  {EXIT}}\ \oplus\ \Tsend{\smvo{r}}{\dt{Int}}.{\exG \TT {NMBR}}$ with
a conditional statement that evaluates the credentials provided by the
client. When they are valid, the implementation sends the number of
messages in the default folder $\dt{inbox}$ over $\popcname{r}$ and
proceeds as $\poppname{Nmbr}$ below.
On the contrary, the process closes the session after sending messages
over $\popcname{e}$ and $\popcname{bye}$.

Process $\poppname{Nmbr}$ implements the iterative behaviour defined
by the local type $\exG \TT {NMBR}$:
\begin{align}\label{pop:nmbr}
\poppname{Nmbr} & \eqdef
\PloopU{\Preceive{{\popcname{fold}}}{f}{}
  \Psend{{\popcname{r}}}{(\dt{msgs}\ f)}~+~\Preceive{{\popcname{read}}}{m}{}
  \Psend{{{\popcname{r}}}}{(\dt{len}\ m)};\poppname{Size}(m)}
{\Preceive{{\popcname{quit}}}{}{\poppname{Exit}}}
\end{align}
Since $\ptp s$ is not the iteration-controller, its implementation
uses a repeat-until-loop in which a client can repeatedly ask for the
length of a folder (by using $\popcname{fold}$) or retrieve messages
(by using $\popcname{read}$) until it terminates the interaction by
sending a message over $\popcname{quit}$.

Processes $\poppname{Size}(m)$ and $\poppname{Xfer}(m)$ are the
implementations of the local types $\exG \TT {Size}$ and $\exG \TT
{Xfer}$:
\begin{align*}
  \poppname{Size}(m)
  &
  \eqdef
  \PloopU{
    \Preceive{\popcname{retr}}{}{}
    \Psend{\popcname{msg}}{(\dt{data}\ m)};
    \poppname{Xfer}(m)~+~\Preceive{\popcname{read}}{m}{}
    \Psend{\popcname{r}}{(\dt{len}\ m)}
  }{
    \Preceive{\popcname{fold}}{f}{}
    \Psend{\popcname{r}}{(\dt{msgs}\ f)}
  }
  \\
  \poppname{Xfer}(m)
  &
  \eqdef
  \Preceive{\popcname{acks}}{}{}
  \Psend{\popcname{r}}{(\dt{len}\ (\dt{next}\ m))}
  ~+~\Preceive{\popcname{ackd}}{}{}
  \Psend{\popcname{r}}{(\dt{len}\ (\dt{del}\ m))
  }~+~\Preceive{\popcname{nack}}{}{}
  \Psend{\popcname{r}}{(\dt{len}\ m)}
\end{align*}

Let $\GPOP'$ be the multiparty variant of POP2 introduced in
\cref{ex:1btypes}.  The process $\poppname{Init}'$ below is a possible
implementation of $\ptp s$ in $\GPOP'$ (i.e., $\TT'_{\ptp s}$).

\begin{align}\label{ex:1bproc}
  \poppname{Init}'
  &
  \eqdef
  \Pacc{\shname}{ s}{\tuple \smv} \poppname{Srv}'
  \\\nonumber
  \poppname{Srv}'
  & \eqdef
  \Preceive{\popcname{quit}}{}{\poppname{Exit}}
  ~+~\Preceive{\popcname{helo}}{c}{\poppname{Auth}(c)}
  \\\nonumber
  \poppname{Auth}(c)
  &
  \eqdef
  \Psend{{\popcname{req}}}{c}; \Preceive{\popcname{res}}{a}{\poppname{Mbox'}}
  \\\nonumber
  \poppname{Mbox'}(c,a)\
  &
  \eqdef
  \Pif{
    (\dt{auth}\ c)\land a
  }{
    \Psend{\popcname{r}}{(\dt{msgs}\ \dt{inbox})};{\poppname{Nmbr}}
  }{
    \Psend{\popcname{E}}{}.{\poppname{Exit}}
  }
\end{align}
$\poppname{Init}'$ is analogous to $\poppname{Init}$ in~\eqref{ex:1proc}; we remark that $\tuple \smv$ now includes also the
session channels $\popcname{req}$ and $\popcname{res}$ used for
interacting with the authorisation authority.
After receiving the credentials of the client on session channel
$\popcname{helo}$, the server interacts with the authorisation
authority as defined in $\poppname{Auth}(c)$: it forwards the
credentials over session channel $\popcname{req}$ and awaits for the
authorisation outcome $a$ on session channel $\popcname{res}$.
Finally, $\poppname{Mbox}'(c,a)$ resolves the non-deterministic choice
in $\exG \TT {Mbox}$ by taking into account both the authorisation
outcome $a$ and the client's credentials $c$.  In this variant, a
client can access the inbox only if the credentials satisfy
\emph{both} the local authentication function \emph{and} the external
authentication service.


\section{Whole-Spectrum Implementation}%
\label{sec:hnice}
%
In this section we formally characterise the whole-spectrum
implementations of a role in a global type.
We start by introducing the notion of (candidate) implementation of a
global type, that is, a system in which each role of the global type
is implemented by a process.
The following definition syntactically characterises the processes
that can play a specific role $\ptp p$ in the implementation of a
global type, i.e., those processes that are able to open a session to
play role $\ptp p$.

\begin{defi}[Unique role]\label{def:uniquerole}
  A process $P$ \emph{uniquely plays role $\ptp p$ in $\shname$} if
  either of the following cases holds
\begin{itemize}
\item $P = \Preq \shname n {\tuple \smv} Q$, $\shname \not\in \fn Q$,
  and $\ptp p = 0$
\item $P = \Pacc \shname p {\tuple \smv} Q$ and
  $\shname \not\in \fn Q$
\item $P = \Pechoice{i\in I}{\smv_i}{x_i}{Q_i}$ and $Q_i$ uniquely
  plays role $\ptp p$ in $\shname$ for each $i \in I$
\item $P = \Pif e {Q_1} {Q_2}$ and both $Q_1$ and $Q_2$ uniquely play
  role $\ptp p$ in $\shname$
\item $P = Q_1 ; Q_2$ and either $Q_1$ uniquely plays role $\ptp p$ in
  $\shname$ and $\shname \not \in \fn{Q_2}$ or $Q_2$ uniquely plays
  role $\ptp p$ in $\shname$ and $\shname \not\in \fn{Q_1}$
\item $P = \PloopU {Q_1} {Q_2}$ and $Q_2$ uniquely plays role $\ptp p$
  in $\shname$ and $\shname \not \in \fn{Q_1}$.
\end{itemize}
\end{defi}
\noindent
For technical simplicity, we require a process playing role $\ptp p$
in $\shname$ to open just one session over the shared channel
$\shname$ (note the restriction $\shname \not\in \fn Q$ in the first
two items of the definition); a process playing different roles in
several instances of the same global type can be handled by using
different shared names associated to the same global type.
For branches and conditional forms we require the process to play role
$\ptp p$ in $\shname$ regardless of the chosen branch (e.g., in every
continuation $Q_j$ of a branching process).
The case for sequential composition is straightforward. We remark that
$\Psend \smv e$ does not play a role in a shared name because it
cannot open any session over any shared name.
We also exclude processes like $\Pfor x \ell Q$, which could
potentially open several sessions of a global type (once in any
iteration of the loop).  The condition for $\PloopU {Q_1} {Q_2}$ is
analogous when requiring $\shname \not \in \fn{Q_1}$.

To introduce the notion of implementations of a global type it is
convenient to use \emph{contexts}, that is terms derived from the
following productions:
\[
  \ctx[\_] ::= \_ \sep \ctx[\_] \sparop S \sep \sep S \sparop \ctx[\_]
  \sep (\nu \tuple \smv \At \shname) \ctx[\_]
\]
\begin{defi}[Implementation]\label{def:realisation}
  Let $\iota$ be a mapping assigning a process to each
  $\ptp p \in \{\ptp p_0, \ldots, \ptp p_n\} \subseteq \ptpset$,
  $P = \iota(\ptp p_0) \mid \ldots \mid \iota(\ptp p_n)$,
  $\tuple \smv$ a tuple of pairwise disjoint session channels in
  $\chset$, and $\shname \in \shset$.
  A system $\Gimp$ is an \emph{$\iota$-implementation of $\tuple \smv$
	 at $\shname$ for $\{\ptp p_0, \ldots, \ptp p_n\}$} if there is a
  context
  $\ctx[\_] = (\nu \tuple \smv_1 \At \shname_1) \cdots (\tuple \smv_h
  \At \shname_h)(\_ \mid S)$ such that
  \begin{enumerate}
  \item\label{iota:names}
    $\shname \not\in \{\shname_1,\ldots\shname_h\}$,
    $(\{\shname\} \cup \tuple \smv) \cap \fn S = \emptyset$, and
    $\tuple \smv \cap \big(\bigcup_{i=1,\ldots,h}{\tuple \smv_i}
    \big) = \emptyset$
  \item\label{iota:start} if
    $\Gimp \equiv \ctx[P]$ then, for all $0 \leq j \leq n$, $\iota(\ptp p_j)$
    uniquely plays role $\ptp p_j$ in $\shname$
  \item\label{iota:queue} if
    $\Gimp \equiv \ctx[(\nu \tuple \smv \At \shname)(P) \mid \queue{\tuple
      \smv}{\qmv_1,\ldots,\qmv_k}]$ for some $\qmv_1,\ldots,\qmv_k$
    then
    $\shname \not\in \fn{P}$.
  \end{enumerate}
  Given a global type $\G(\tuple \smv)$, an
  \emph{$\iota$-implementation of $\G(\tuple \smv)$ at $\shname$} is
  an $\iota$-implementation at $\shname$ for $\participants \G$.
\end{defi}
\noindent
Intuitively, a system $\Gimp$ is an implementation of $\G(\tuple
\smv)$ if $\Gimp$ is built-up from processes that implement all the
roles in $\G(\tuple \smv)$; the association between roles and
processes is given by the function $\iota$. In addition, $\Gimp$ may
contain other processes, possibly running different sessions.
Technically, we require $\Gimp$ to be written in terms of a context
$\ctx[\_] = (\nu \tuple \smv_1 \At \shname_1) \cdots (\tuple \smv_h
\At \shname_h)(\_ \mid S)$, which describes the part of the system
that does not directly implement $\G(\tuple \smv)$.
The conditions $\shname \not\in \{\shname_1,\ldots\shname_h\}$,
$\tuple \smv \cap \big(\bigcup_{i=1,\ldots,h}{\tuple \smv_i} \big) =
\emptyset$, and
$(\{\shname\} \cup \tuple \smv) \cap \fn S = \emptyset$ ensure that
the context does not interfere with the names used for implementing
$\G(\tuple \smv)$.
Then, an implementation has two different shapes depending on whether
the session for $\G(\tuple \smv)$ has been initiated or not.
Condition~\eqref{iota:start} stipulates that, before starting the
session, each process $\iota(\ptp p)$ uniquely plays the role $\ptp p$
in $\shname$ (i.e., $\iota(\ptp p)$ is able to open a session on
$\shname$ for the role $\ptp p$).
Condition~\eqref{iota:queue} characterises the case in which the
session has been initiated, and therefore the system contains the
message queues for the initiated session.

\begin{exa}
  Consider the global type $\GPOP$ in \cref{sec:runGT} and take $\Gimp
  = \iota(\ptp s)\ |\ \iota(\ptp c)$ where $\iota$ is such that
  $\iota(\ptp s) = \poppname{Init}$ and $\iota(\ptp c) = \poppname C$
  with $\poppname{Init}$ the process in~\eqref{ex:1proc}
  (\cf page~\pageref{ex:1proc}) and $\poppname{C} \eqdef
  \Preq{\shname}{1}{\tuple \smv} \Psend{\popcname{quit}};
  \popcname{bye}$.
  It is easy to check that $\poppname{Init}$ uniquely plays $\ptp s$
  in $\shname$ while $\poppname C$ uniquely plays $\ptp c$ (after
  assuming that $\ptp c$ is $\ptp 0$).
  Hence, it is straightforward that $\Gimp$ is a
  \emph{$\iota$-implementation} of $\G(\tuple \smv) = \GPOP$ at
  $\shname$ (it is enough to consider the identity context $\ctx[\_] =
  \_$).

  Consider now a more involved situation in which a process
  $\poppname{C'}$ that implements $\ptp c$ also interacts with another
  process $\poppname{D}$ over a different session, e.g.,
  \[
  \poppname{C'}
  \eqdef
  \Pacc{\shname_0}{a}{{\smvo z}}
  \Preq{\shname}{1}{\tuple \smv}
  \Psend{\popcname{quit}};
  \popcname{bye}.\poppname{C''}
  \qquad\qquad
  \poppname{D} \eqdef \Preq{\shname_0}{1}{{\smvo z}}{\poppname{D'}}
  \]
  In this case, $\iota'$ is such that $\iota'(\ptp s) =
  \poppname{Init}$ and $\iota'(\ptp c) = C'$.  Then, $\Gimp[@][\iota']
  = \poppname{Init}\ |\ \poppname{C}\ |\ \poppname{D}$ is an
  \emph{$\iota'$-implementation} of $\G(\tuple \smv)$ at $\shname$ (it
  suffices to consider the context $\ctx[\_] = \_ \ |\ D$). Note that
  \[
  \Gimp[@][\iota']
  \Ptrans{\tau}
  (\nu {\smvo z} \At \shname_0)
  (
  \poppname{Init}\ |\ \poppname{C'''}\ |\ \poppname{D'} \
  |\ \queue{\tuple {\smvo z}}{}
  )
  =
  \Gimp[@][\iota'']
  \]
  with $\poppname{C'''}\ \eqdef\ \Preq{\shname}{1}{\tuple \smv}
  \Psend{\popcname{quit}}; \popcname{bye}.\poppname{C''}$ and $\iota''
  = \iota'[\ptp c \mapsto \poppname{C'''}]$. Then, we can conclude
  that $\Gimp[@][\iota'']$ is an $\iota''$-implementation of
  $\G(\tuple \smv)$ at $\shname$ by considering the context $\ctx[\_]
  = (\nu {\smvo z} \At \shname_0)(\_ |\ \poppname{D'}
  \ |\ \queue{\tuple {\smvo z}}{})$.

  Consider now the transition
  \[
  \Gimp[@][\iota'']
  \Ptrans{\tau}
  (\nu {\smvo z} \At \shname_0)(
  (\nu {\tuple \smv} \At \shname)(
  \poppname{Srv}\ |\ \poppname{C''''}\ |\
  \queue{\tuple {\smv}}{[]\ldots[]})\ |\ \poppname{D'} \ |\
  \queue{ {\smvo z}}{})
  =
  \Gimp[@][\iota''']
  \]
  where $\iota'''(\ptp s) = \poppname{Srv}$ and $\iota'''(\ptp c) =
  C''''$. In this case $\Gimp[@][\iota''']$ is an
  $\iota'''$-implementation of $\G(\tuple \smv)$ at $\shname$; the
  sub-term
  $(\nu {\tuple \smv} \At \shname)
  (\poppname{Srv}\ |\ \poppname{C''''}\ |\ \queue{\tuple {\smv}}{[]\ldots[]})$
  stands for the session corresponding to the global type $\G(\tuple
  \smv)$, while the context represents the rest of the system.  \finex
\end{exa}

We characterise WSI as a relation between the execution traces of a
global type $\G$ and its implementations $\Gimp$.
An execution trace of a system $\Gimp$ is a sequence of events of the
form $\state[\ptp p][\Tsend \smv \sort ]$ and
$\state[\ptp p][\Treceive \smv \sort]$, which respectively represent
an output and an input action performed by $\ptp p$ over the channel
$\smv$.

\begin{figure*}[t]
  \begin{center}
    \[
    \begin{array}{c}
      \mathrule{
      \forall \ptp p \in \participants \G \qst \iota(\ptp p) = \Pend
      \qquad
      \forall \smv \in \tuple \smv \qst \text{ queue on } \smv
      \text{ is empty in } \Gimp
      }{
      \epsilon \in \imrunsu{\state[\Gimp]}
      }
      {REnd}
      \\[2em]
      \mathrule{
      \begin{array}{c}
        \state[\iota(\ptp p)]
        \Ptrans{\lcond{e}{\lsend \smv {\val v}}}
        \state[P][\sigma]
        \hfill
        \smv \in \tuple \smv
        \hfill
        \vjdg[{}][\val v][\sort]
        \\[5pt]
        \state[\Gimp]
        \Ptrans{\lcond{e} \tau}
        \state[{\Gimp[@][\iota\updi{P}{\ptp p}]}][\sigma]
	\qquad\qquad\qquad
        r \in \imrunsu{\state[{\Gimp[@][\iota\updi{P}{\ptp p}][@]}][\sigma] }
      \end{array}
      }{
        \state[\ptp p][\lsend \smv \sort] r \in \imrunsu{\state[\Gimp]}
      }{RSnd}
      \\[2em]
      \mathrule{
        \begin{array}{c}
          \state[\iota(\ptp p)]
          \Ptrans{\lcond e{\lreceive \smv {\val v}}}
          \state[P][\sigma']
          \hfill
          \smv \in \tuple \smv
          \hfill
          \vjdg[][\val v][\sort]
          \\[5pt]
          \state[\Gimp]
          \Ptrans{\lcond {e} \tau}
          \state[{\Gimp[@][\iota\updi{P}{\ptp p}]}][\sigma']
          \qquad\qquad\qquad
          r \in \imrunsu{\state[{\Gimp[@][\iota\updi{P}{\ptp p}][@]}][\sigma']
          }
        \end{array}
      }{
      \state[\ptp p][\lreceive \smv \sort] r \in \imrunsu{\state[\Gimp]}
      }{RRcv}
      \\\\
      \mathrule{
        \begin{array}{c}
          \state[\iota(\ptp p)]\Ptrans{\lcond e \alpha}\state[P] [\sigma']
          \qquad
          \names[\alpha] \cap \tuple \smv = \emptyset
          \qquad
          \state[\Gimp]\Ptrans{\lcond {e'} \beta}
          \state[{\Gimp[@][\iota\updi{P}{\ptp p}][@]}][\sigma']
          \\[5pt]
          \shname \not\in \fn{\beta}
          \qquad
          r \in \imrunsu{\state[{\Gimp[@][\iota\updi{P}{\ptp p}][@]}][\sigma']}
        \end{array}
      }{
        r \in \imrunsu{\state[\Gimp]}
      }{RExt_1}
      \\\\
      \mathrule{
      \begin{array}{c}
        \state[\Gimp]\Ptrans{\lcond {e'} \beta}
         \state[({\Gimp})'][\sigma']
        \qquad\qquad
        r \in \imrunsu{\state[({\Gimp})'][\sigma'] }
      \end{array}
      }{
      r \in \imrunsu{\state[\Gimp]}
      }
      {RExt_2}
      \\\\
      \mathrule{
        \begin{array}{c}
          \state[\iota(\ptp p_0)]
          \Ptrans{\lcond {e_0}{\lreq \shname n {\tuple \smv}}}
          \state[P_0][\sigma_0]
          \qquad\qquad
          \forall 1 \leq i \leq n \qst
          \state[\iota(\ptp p_i)]
          \Ptrans{\lcond {e_i}{\lacc \shname i {\tuple \smv}}}
          \state[P_i][\sigma_i]
          \\[5pt]
          \participants{\G(\tuple \smv)} =
          \{\ptp p_0, \ptp p_1, \ldots, \ptp p_n\}
          \hfill
          \iota' = \iota\updi{P_0}{\ptp p_0} \ldots \updi{P_n}{\ptp p_n}
          \\[5pt]
          \state[\Gimp]
          \Ptrans{\lcond e \tau}
          \state[{\Gimp[@][\iota'][@]}][\sigma\updi{\shname}{\tuple \smv}]
          \hfill
          r \in \imrunsu{\state[{\Gimp[@][\iota'][@]}][
              \sigma\updi{\shname}{\tuple \smv}]
          }
        \end{array}
      }{
      r \in \imrunsu{\state[\Gimp]}
      }{ROpen}
    \end{array}
    \]
    \caption{Runs of implementations\label{fig:runs-realizations}}
  \end{center}
\end{figure*}

\begin{defi}[Runs of implementations]\label{def:gimp}
  Let $\Gimp$ be an $\iota$-implementation of $\G(\tuple \smv)$.
  The set of \emph{runs of $\Gimp$ initiated on $\shname$ with store
    $\sigma$}, written $\imrunsu{\state[\Gimp]}$, is the set
  inductively defined by the rules in \cref{fig:runs-realizations}.
  We write $\imrunsu{\Gimp}$ for $\imrunsu{\state[\Gimp][\emptyset]}$
  and extend the notion to sets of implementations $\mathbb{I}$ as
  $\imrunsu{\mathbb{I}} = \cup_{I\in\mathbb{I}}\imrunsu{I}$.
\end{defi}

Rules in \cref{fig:runs-realizations} rely on the semantics of
\cref{fig:LTSproc} and \cref{fig:LTSsys}. Rule~\myrule{REnd}
establishes that a completed session, i.e., one in which all processes
are terminated and the session queues are empty, contains the empty
run $\epsilon$.
Non-empty runs of $\Gimp$ are defined in terms of the input and output
actions that processes $\iota(\ptp p)$ perform over the session
channels $\tuple \smv$, as described by the rules~\myrule{RSnd}
and~\myrule{RRcv}.
In rule~\myrule{RSnd}, $\iota(\ptp p)$ performs an output over a
session channel associated with $\shname$; which is formally captured
by the conditions
$\state[\iota(\ptp p)] \Ptrans{\lcond{e}{\lsend \smv {\val v}}}
\state[P][\sigma]$ and $\smv \in \tuple \smv$ in the premiss.
When $\iota(\ptp p)$ evolves to $P$ by performing
$\lsend \smv {\val v}$, $\Gimp$ evolves to $\Gimp[@][\iota']$ with
$\iota'= \iota\updi{P}{\ptp p}$, i.e., $\iota'$ coincides with $\iota$
in all roles but ${\ptp p}$. This is stated by the condition
$\state[\Gimp] \Ptrans{\lcond{e'} \tau}
\state[{\Gimp[@][\iota\updi{P}{\ptp p}]}][\sigma]$ in the premiss of
the rule.
Hence, $\Gimp$ contains a run $\state[\ptp p][\lsend \smv \sort] r$
(see the conclusion of the rule) when $r$ is a run of the state
$\state[{\Gimp[@][\iota\updi{P}{\ptp p}]}][\sigma']$ reached after
$\iota(\ptp p)$ performs $\lsend \smv {\val v}$.
We remark that runs abstract away from the particular values sent by
the processes and keep instead the sorts of sent value (i.e.,
condition $\val v : \sort$).
Input events are handled analogously in rule~\myrule{RRcv}; in this
case also $\sigma$ evolves to $\sigma'$ when $\iota(\ptp p)$ performs
an input.

Rule~\myrule{RExt_1} accounts for the computation steps of $\iota(\ptp
p)$ that do not involve session channels in $\tuple \smv$ (condition
$\names[\alpha] \cap \tuple \smv = \emptyset$), which can be an
internal transition $\tau$ in a role, a communication over a channel
not in $\tuple \smv$, or a session initiation.
This rule allows each process to freely initiate a session that does
not correspond to the global type $\GT$, i.e., over a shared name
different from $\shname$ (condition $\shname \not\in \fn{\beta}$).
Rule~\myrule{RExt_2} handles the cases in which the transition of
$\Gimp$ does not involve any process $\iota(\ptp p)$. This is captured
by the fact that the continuation $\Gimp[@][\iota']$ uses the same
mapping $\iota$. By the definition of $\Gimp$, the reduction does not
interfere with the names of the session, i.e., $\fn \beta \cap (\tuple
\smv \cup \{\shname\})$ holds.

Rule~\myrule{ROpen} allows for the initiation of a new session on
$\shname$ and requires all roles to participate in the synchronisation
(as stated by the three first premisses).
We assume that any role in the implementation will execute exactly one
action over the channel $\shname$ which also matches the role assigned
by $\iota$.
Nested sessions are handled by assuming that all sessions are created
over different channels that have the same type.
This is just a technical simplification analogous to the possibility
of having annotations to indicate the particular instance of the
session under analysis.

The runs of an implementation abstractly capture the traces of
communications of the processes in the system.
This can be easily formalised by using a more concrete relation on
systems.
More precisely, we define $\Wtrans{\_}$ as the relation induced by the
rules such as those in \cref{fig:LTSproc} once the $\tau$ in the
conclusion of rules \myrule{SCom_1} and \myrule{SCom_2} is replaced
with the output and input action respectively.
Then we can state the following proposition:
\begin{thm}\label{prop:run-vs-red}
  Given an implementation $\Gimp$ and a store $\sigma$, if $r \in
  \imrunsu{\state[\Gimp]}$ is a run of length $m$ then there is a
  sequence $\Gimp \Wtrans{\alpha_1} \state[S_1][\sigma_1] \cdots
  \Wtrans{\alpha_n} \state[S_n][\sigma_n]$ such that we can find an
  order preserving bijection $\chi$ between the sets $\{1,\ldots,m\}$
  and $\{1 \leq j \leq n \sst \alpha_i \text{ not a } \tau\}$ such
  that the $i$-th element in $r$ is an input of sort $\sort$ iff so is
  the $\alpha_{\chi(i)}$ and the value in $\alpha_{\chi(i)}$ has the
  sort $\sort$.
\end{thm}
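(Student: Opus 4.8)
The plan is to argue by induction on the (necessarily finite, since $\imrunsu\_$ is defined as a least fixpoint) derivation of $r \in \imrunsu{\state[\Gimp]}$, following the rules of \cref{fig:runs-realizations}. The crux is a precise correspondence between the silent system steps used in those rules and the relabelled steps of $\Wtrans\_$: by construction $\Wtrans\_$ agrees with $\Ptrans\_$ except that a $\lcond e \tau$ transition justified by $\myrule{SCom_1}$ (resp.\ $\myrule{SCom_2}$) over a channel $\smv$ is exposed as the output $\lcond e {\lsend \smv {\val v}}$ (resp.\ the input $\lcond e {\lreceive \smv {\val v}}$). Hence I would first record the following observation: whenever $\state[\iota(\ptp p)] \Ptrans{\lcond e {\lsend \smv {\val v}}} \state[P][\sigma]$ with $\smv \in \tuple \smv$, the accompanying system step $\state[\Gimp] \Ptrans{\lcond e \tau} \state[\Gimp[@][\iota\updi{P}{\ptp p}]][\sigma]$ of $\myrule{RSnd}$ is obtained by lifting $\myrule{SCom_1}$ through the context, so reading the same derivation under $\Wtrans\_$ yields exactly $\state[\Gimp] \Wtrans{\lcond e {\lsend \smv {\val v}}} \state[\Gimp[@][\iota\updi{P}{\ptp p}]][\sigma]$, a non-$\tau$ step of output polarity carrying a value of sort $\sort$ (using the sorting premise $\val v : \sort$); symmetrically for $\myrule{RRcv}$ and $\myrule{SCom_2}$.

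With this in hand the induction is largely bookkeeping. In the base case $\myrule{REnd}$ we have $m=0$ and take the empty $\Wtrans\_$ sequence with the empty bijection. For $\myrule{RSnd}$, whose conclusion is $\state[\ptp p][\lsend \smv \sort]\,r' \in \imrunsu{\state[\Gimp]}$, I would prepend the $\Wtrans\_$ output step above to the sequence obtained by applying the induction hypothesis to $r'$ from $\state[\Gimp[@][\iota\updi{P}{\ptp p}]][\sigma]$, and extend the bijection by sending position $1$ to this first step and shifting the rest; order-preservation, the input/output polarity, and the sort then hold immediately from the observation. The case $\myrule{RRcv}$ is symmetric. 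For $\myrule{ROpen}$ the system step is an instance of $\myrule{SInit}$, which is genuinely $\tau$ and is left untouched by $\Wtrans\_$, so I prepend it as a $\tau$-step that does not enter the codomain of $\chi$ and invoke the induction hypothesis; $\myrule{RExt_1}$ and $\myrule{RExt_2}$ are treated likewise, prepending the corresponding step and only re-indexing $\chi$, since they add no element to $r$.

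The delicate point, and the step I expect to be the real obstacle, is to guarantee that every step prepended for $\myrule{ROpen}$, $\myrule{RExt_1}$ and $\myrule{RExt_2}$ is silent under $\Wtrans\_$, so that the codomain $\{\,j \sst \alpha_j\ \text{not a}\ \tau\,\}$ consists \emph{exactly} of the $\tuple\smv$-communications matched to the elements of $r$, and dually that the communications of the analysed session do surface as labels rather than being trapped by the restriction $(\nu \tuple\smv \At \shname)$. This forces one to exploit the name-disjointness conditions of \cref{def:realisation} --- namely $(\{\shname\}\cup\tuple\smv)\cap\fn S = \emptyset$ and $\tuple\smv \cap \bigcup_i \tuple\smv_i = \emptyset$ --- together with the side conditions $\names[\alpha]\cap\tuple\smv=\emptyset$ and $\shname\notin\fn\beta$ of $\myrule{RExt_1}$ and the analogous freshness in $\myrule{RExt_2}$: any communication of a role or of the context outside the session under analysis occurs over channels bound by one of the inner restrictions $(\nu\tuple\smv_i\At\shname_i)$, so the side condition of $\myrule{SNew}$ keeps it confined and silent at top level, whereas foreign session initiations reduce via $\myrule{SInit}$ and are already $\tau$. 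Conversely, the analysed session's communications must be read off from the exposed $\myrule{SCom_1}$/$\myrule{SCom_2}$ step at the session scope, where the pairwise distinctness of $\tuple\smv$ rules out any name clash. Once these scope facts are in place, the order-preserving bijection assembles by concatenation along the induction, yielding the claim.
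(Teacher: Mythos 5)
Your proposal is correct and follows essentially the same route as the paper, which disposes of this statement with a one-line ``straightforward induction on the derivation of $r \in \imrunsu{\state[\Gimp]}$''; your write-up simply makes explicit the case analysis and the bookkeeping of the bijection $\chi$ that the paper leaves implicit. The point you flag as delicate --- that the steps contributed by $\myrule{ROpen}$, $\myrule{RExt_1}$ and $\myrule{RExt_2}$ stay silent under $\Wtrans{\ }$ thanks to the name-disjointness conditions of the implementation context --- is indeed the only place where the induction is not purely mechanical, and your treatment of it is sound.
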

\begin{proof}
  Straightforward induction on the derivation of $r \in
  \imrunsu{\state[\Gimp]}$.
\end{proof}

We now introduce the notion of runs associated to a global type.  Our
notion of WSI will allow us to implement an iterative type, which
accounts for an unbounded number of repetitions, with a process
exhibiting a bounded number of iterations. For this reason, we deviate
from the previous definition of traces of global
types~\cite{ChenH12,DBLP:journals/corr/abs-1203-0780,gt17,gt16} and
use \emph{annotated traces} to distinguish mandatory from optional
events.
Annotating optional events is instrumental to the comparison of traces
of iterative types (which is defined below).
Syntactically, an optional sequence $r$ of events is written $[r]$.
As usual, we consider an asynchronous communication model and a trace
implicitly denotes the equivalence class of all traces obtained by
permuting causally independent events, that is events executed by
different participants on different channels.

\begin{defi}[Runs of a global type]\label{def:runGTT}
  Given a global type term $\GT$, the set $\imruns{\GT}$ denotes the
  runs allowed by $\GT$ and is defined as the least set closed under
  the rules in \cref{fig:def-runs-global-type}.
\end{defi}
\begin{figure*}[t]
  \[
  \begin{array}{l}
    \mathaxiom{
      \epsilon \in \imruns{\Gend}
    }{RGEnd}
    \hfill
    \mathrule{
      j\in I \qquad r \in \imruns{\GT_j}
    }{
      \state[\ptp p][\lsend \smv_j \sort_j]
      \state[\ptp q_j][\lreceive \smv_j \sort_j] r  \in \imruns{\Gchoice}
    }{RGCom}
    \\[2em]
    \mathrule{
      r_1\in\imruns{\GT_1} \quad r_2\in\imruns{\GT_2}
    }{
      r_1 r_2 \in\imruns{\GT_1; \GT_2}
    }{RGSeq}
    \quad\ \
    \mathrule{
      r\in\imruns{\GT}
    }{
      r\in\imrunsaux{\Gdef{\f}{\GT}}
    }{RG^*_1}
    \hfill
    \mathrule{
      r_1\in\imruns{\GT}  \quad r_2\in\imrunsaux{\Gdef{\f}{\GT}}
    }{
      r_1 [r_2]\in \imrunsaux{\Gdef{\f}{\GT}}
    }{RG^*_2}
    \\[2em]
    \mathrule{
      r \in \imrunsaux{\Gdef \f \GT}
      \quad
      \ready \GT = \{\ptp p\}
      \quad
      \participants{\GT} = \{\ptp p,\ptp p_1,\ldots, \ptp p_n\}
      \quad
      \forall 1\leq i \leq n \qst f(\ptp p_i) = \smv_i \sort_i
    }{
      r\state[\ptp p][\lsend {\smv_1} {\sort_1}]
      \state[\ptp p_1][\lreceive {\smv_1} {\sort_1}]
      \ldots\state[\ptp p][\lsend {\smv_n} {\sort_n}]
      \state[\ptp p_n][\lreceive {\smv_n} {\sort_n}]
      \quad \in \quad \imruns{\Gdef{\f}{\GT}}
    }{RGIter}
  \end{array}
  \]
  \caption{Runs of a global type}%
  \label{fig:def-runs-global-type}
\end{figure*}
The first three rules of \cref{fig:def-runs-global-type} are
straightforward.
The runs of an iterative type $\Gdef{\f}{\GT}$ are given by the
rule~\myrule{RGIter}, whose premiss uses the set
$\imrunsaux{\Gdef{\f}{\GT}}$ to unfold $\Gdef{\f}{\GT}$ (as defined by
the rules~\myrule{RG^*_1} and \myrule{RG^*_2}).
Optional events are introduced when unfolding an iterative type (rule
\myrule{RG^*_2}).  The main motivation is that an iterative type
$\Gdef{\f}{\GT}$ denotes an unbounded number of repetitions of (traces
of) ${\GT}$ (i.e., an infinite number of traces).  Note that
${\imrunsaux{\Gdef{\f}{\GT}}} =\{ r_1, r_1[r_2],
r_1[r_2[r_3]],\ldots\}$ with $r_i\in\imruns{\GT}$.
Rule \myrule{RGIter} adds the events associated to the termination of
an iteration: $(i)$ the ready role $\ptp p$ sends the termination
signal to any other role by using the dedicated channels specified by
$f$ (i.e., $\state[\ptp p][\lsend {\smv_1} {\sort_1}],
\ldots,\state[\ptp p][\lsend {\smv_n} {\sort_n}] $), and $(ii)$ all
roles but the ready one receive the corresponding termination message
(i.e., $\state[\ptp p_1][\lreceive {\smv_1} {\sort_1}],
\ldots,\state[\ptp p_n][\lreceive {\smv_n} {\sort_n}]$).
We just consider one of the possible interleavings of termination
events because we consider traces up-to the permutation of causally
independent events.

\begin{defi}[Trace preorder]
  The \emph{trace preorder} $\myequal$ is the least preorder on
  annotated traces satisfying the following axioms and rules.
  \[
  \mathrule{-}{[r] \myequal \epsilon}{\myequal drop}
  \qquad\qquad
  \mathrule{-}{\epsilon \myequal r}{\myequal emp}
  \qquad\qquad
  \mathrule{
    r_1 \myequal r_1' \quad r_2 \myequal r_2'
  }{
    r_1r_2 \myequal r_1'r_2'
  }{\myequal cmp}
  \]
\end{defi}

We say $r'$ \emph{covers} $r$ when $r\myequal r'$, i.e., when $r'$
matches all mandatory actions of $r$.
Analogously, we say a set $R_2$ of annotated traces \emph{covers}
another set $R_1$, written $R_1\mysubseteq R_2$, if for all $r\in R_1$
there is $r' \in R_2$ such that $r\myequal r'$.

\begin{defi}[Whole-spectrum implementation]%
  \label{def:wsr}
  A set $\mathbb{I}$ of implementations \emph{covers} a global type
  $\Geq$ if $ \imruns{\GT} \mysubseteq \imrunsu{\mathbb{I}}$.
   A process $P$ is a \emph{whole-spectrum implementation of $\ptp p
    \in \participants{\GT}$} when there exists a set $\mathbb{I}$ of
  $\iota$-implementations of $\G(\tuple \smv)$ at $\shname$ that
  covers $\GT$ such that $\Gimp \in\mathbb{I}$ implies $\iota(\ptp p)
  = P$.
\end{defi}
A whole-spectrum implementation (WSI) of a role $\ptp p$ is a process
$P$ such that any expected behaviour of the global type can be
obtained by putting $P$ into a proper context.
For iteration types, the comparison of annotated traces implies that
the implementation has to be able to perform the iteration body at
least once, but it can arbitrarily choose the number of iterations.
\begin{rem}[Trace semantics for WSI]
WSI is based on a trace semantics. WSI implementations are
deterministic programs, hence their behaviour if faithfully captured
by the set of their execution traces.  A stronger notion like
bisimulation is not necessary. In fact, the branching specified in a
global type becomes deterministic in a process, therefore the set of
all possible executions of a process (in a given implementation)
results in a single trace.
\end{rem}


\section{Typing}%
\label{sec:typing}

In this section we introduce a typing discipline to guarantee that a
well-typed process is a WSI of the role it plays in a global type.
Technically we rely on an enriched version of local types, dubbed
\emph{pseudo-types}, that takes into account branching enabling
conditions.

\subsection{Pseudo-types \& Typing judgements}%
\label{sec:pseudo}
The scaffolding of our typing discipline is standard but for the need
of making the typing depending on the expressions the processes use to
render choices.
This requires to revisit the usual definition of mergeability
(cf. \cref{def:mergeability} below) that now relies on a notion of
\emph{normalisation} of local types.

The syntax of pseudo-types is given by the following grammar:
\begin{gather*}
  \pst ::=
  \Ipst[e_i]{i \in I}{\smv_i}{\sort_i}{\pst_i}
  \quad \sep \quad
  \Epst[e_i]{i \in I}{\smv_i}{\sort_i}{\pst_i}
  \quad \sep \quad
  \pst_1 ; \pst_2
  \quad \sep \quad
  \Tdef\pst \quad \sep \guard \Tend
\end{gather*}
We call \emph{guards} the expressions $e_i$ occurring in a
pseudo-type.
Guards keep track of the conditions that have to be satisfied in order
to enable a certain behaviour.
For instance, the pseudo-type
\[\pst_{\oplus} = \Ipst[\C_i]{i \in \{1,2\}}{\smv_i}{\dt{int}}{\Tend}\]
where the guards are $ \C_1 = x > 0 $ and $\C_2 = x \leq 0$.
By $\C_1$, $x$ needs to be strictly positive in order to choose the
first branch.
Local types from \cref{sec:types} can be thought of as pseudo-types
where all guards are $\truek$. Hereafter we may omit guards $\truek$
and have e.g.,
\[
  \guard [\truek] \Tend \text{ abbreviated as } \Tend
  \qquad \text{and} \qquad
  \Ipst[\truek]{i \in I}{\smv_i}{\sort_i}{\pst_i} \text{ abbreviated as }
  \Tssel{\smv_i}{\sort_i}{\pst_i}
\]

The notions of free and bound names straightforwardly extend to
pseudo-types.
We will write $\var \pst$ for the set of variables occurring in the
expressions of $\pst$ and $\fY[\pst]$ for the set of session channels
in $\pst$; for instance, $\var{\pst_{\oplus}} = \{x\}$.

\begin{figure}\small
  \begin{math}
    \begin{array}{lrcl@{\hspace{-1.3cm}}r}
      (1)
      &
      \nform{\guard[\C'] \Tend} & = & \guard[\C \land \C']\Tend
      \\[1em]
      \multicolumn 5 l {
        \text{let } J = \{i \in I \sst (\C \land \C_i) \iff \falsek\}
        \text{ in the clauses $(2)$ and $(3)$ below}
      }
      \\[1em]
      (2)
      & \nform{\Ipst[\C_i]{i \in I}{\smv_i}{\sort_i}{\pst_i}}
      & =
      &
      \begin{cases}
        \guard[\falsek]\Tend & \textit{if}\ \ I = J
        \\
        \Ipst[\C_i\land\C]{i \in I\setminus J}{\smv_i}{\sort_i}{\nform[\C_i\land\C]{\pst_i}}
        &
        \textit{if}\ \ I \neq J
      \end{cases}
      \\
      (3)
      &
      \nform{\Epst[\C_i]{i \in I}{\smv_i}{\sort_i}{\pst_i}}
      &
      =
      &
      \begin{cases}
        \guard[\falsek]\Tend & \textit{if}\ \ I = J
        \\
        \Epst[\C_i\land\C]
             {i \in I\setminus J}
             {\smv_i}{\sort_i}
             {\nform[\C_i\land\C]{\pst_i}}
             & \textit{if}\ \ I \neq J
      \end{cases}
      \\[2em]
      (4)
      &
      \nform{\guard[\C'] \Tend;\pst}
      & = &
      \nform[\C\land\C']{\pst}
      \\
      (5)
      &
      \nform{\big(\Ipst[\C_i]{i \in I}{\smv_i}{\sort_i}{\pst_i}\big);\pst}
      & = &
      \nform{\Ipst[\C_i]{i \in I}{\smv_i}{\sort_i}{(\pst_i;\pst)}}
      \\
      (6)
      & \nform{\big(\Epst[\C_i]{i \in I}{\smv_i}{\sort_i}{\pst_i}\big);\pst}
      & = &
      \nform{\Epst[\C_i]{i \in I}{\smv_i}{\sort_i}{(\pst_i;\pst)}}
      \\
      (7)
      &
      \nform{(\pst;\pst');\pst''}
      & = &
      \nform{\pst;{(\pst';\pst'')}}
      \\
      (8)
      &
      \nform{\Tdef{\pst};\pst'}
      & = &
      \begin{cases}
      	\guard[\C']\Tend
        &
        \textit{if}\ \nform{\pst} =  \guard[\C']\Tend
        \\
       	\nform{\Tdef{\pst}};\nform{\pst'}
        & \textit{if}\ \nform{\pst} \neq \guard[\C']\Tend
      \end{cases}
      \\
      (9)
      &
      \nform{\Tdef{\pst}}
      & = &
      \begin{cases}
        \guard[\C']\Tend
        & \textit{if}\ \nform{\pst} =  \guard[\C']\Tend
        \\
        \Tdef{\nform{\pst}}
        & \textit{if}\ \nform{\pst} \neq \guard[\C']\Tend
      \end{cases}
    \end{array}
  \end{math}
  \caption{Normalisation procedure for pseudo-types\label{fig:normalization-pseudo}}
\end{figure}

Given a pseudo-type $\pst$, the \emph{normal form of $\pst$}, written
$\nfp{\pst}$, is defined as $\nfp{\pst} =\nform[\truek]\pst$ where
$\nform[\_]\_$ given by the equations in
\cref{fig:normalization-pseudo}.
Intuitively, the normalisation of a pseudo-type propagates the guards
of branches to their continuations while removing those alternatives
with inconsistent guards.
We just remark that $\nfp{\pst}$ is defined for any $\pst$ (details
are in \cref{app:normal-form}).

The notion of normalisation is instrumental to adapt the standard
merge operation $\_ \peace \_$ of session
types~\cite{DBLP:journals/corr/abs-1203-0780} to pseudo-types.
Our definition of $\_ \peace \_$ requires the \emph{mergeability} of
pseudo-types, which amounts to have branches with the same
communication prefix guarded by mutually exclusive conditions.
\begin{defi}[Mergeable pseudo-types]\label{def:mergeability}
  Two pseudo-types $\pst_1$ and $\pst_2$ in normal form are
  \emph{mergeable}, if
  \begin{itemize}
  \item $\pst_1 = \guard\Tend$ and $\pst_2 =\guard[\C'] \Tend$
  \item $\pst_1 = \Epst[\C_i]{i \in I}{\smv_i}{\sort_i}{\pst_i}$,
    $\pst_2 = \Epst[\C_i']{i \in I}{\smv_i}{\sort_i}{\pst_i'}$ and for
    all $i \in I$,
    $\pst_i$ and  $\pst_i'$ are mergeable,
    and $e_i \land e_i' \iff
    \falsek$
  \item
    $\pst_1 = \Ipst[\C_i]{i \in I \cup J}{\smv_i}{\sort_i}{\pst_i}$
    and
    $\pst_2 = \Ipst[\C_i']{i \in I \cup K}{\smv_i}{\sort_i}{\pst_i'}
    $ with $I \cap J = I \cap K = \emptyset$ and sets
    ${\{\smv_i\}}_{i \in I}$, ${\{\smv_j\}}_{j \in J}$, and
    ${\{\smv_k\}}_{k \in K}$ pairwise disjoint, and for all $i \in I$,
    $\pst_i$ and $\pst_i'$ are mergeable,
    and $e_i \land e_i' \iff \falsek$
  \item $\pst_1 = \pst'_1;\pst''_1$, $\pst_2 = \pst'_2;\pst''_2$, and
    $\pst'_1$ and $\pst'_2$ as well as $\pst''_1$ and $\pst''_2$ mergeable
  \item $\pst_1 = \Tdef{(\pst'_1)}$, \ \ $\pst_2 =\Tdef{(\pst'_2)}$
    with  $\pst'_1$ and $\pst'_2$ mergeable.
  \end{itemize}
\end{defi}
\noindent
Basically, $\pst_1$ and $\pst_2$ are mergeable when they have the same
structure and at choice points branches either use different channels
or they use mutually exclusive guards.
When types are mergeable, operation $\_ \peace \_$ \lq\lq
glues\rq\rq\ branches that use the same channel.
\begin{defi}[Merge]\label{def:merge}
  The \emph{merge $\pst_1 \peace \pst_2$} of two mergeable
  pseudo-types $\pst_1$ and $\pst_2$ is defined as:
  \[
  \pst_1 \peace \pst_2
  =
  \begin{cases}
    \guard[\C\lor \C']  \Tend
    &
    \text{if } \pst_1 = \guard\Tend \text{ and } \pst_2 =\guard[\C'] \Tend
    \\[5pt]
    \Epst[\C_i \lor \C_i']{i \in I}{\smv_i}{\sort_i}{\pst_i' \peace \pst_i'}
    &
    \text{if } \pst_1 = \Epst[\C_i]{i \in I}{\smv_i}{\sort_i}{\pst_i}
    \text{ and } \pst_2 = \Epst[\C_i']{i \in I}{\smv_i}{\sort_i}{\pst_i'}
    \\[5pt]
    \Ipst[\C_i]{i \in I \cup J}{\smv_i}{\sort_i}{\pst_i}
    &
    \text{if } \pst_1 =   \Ipst[\C_i]{i \in I}{\smv_i}{\sort_i}{\pst_i}
    \text{ and } \pst_2 = \Ipst[\C_j]{j \in J}{\smv_j}{\sort_j}{\pst_j}
    \\[5pt]
    (\pst'_1\peace\pst'_2);(\pst''_1\peace\pst''_2)
    & \text{if } \pst_1 = \pst'_1;\pst''_1
    \text{ and } \pst_2 = \pst'_2;\pst''_2
    \\[5pt]
    \Tdef{(\pst'_1\peace\pst'_2)}
    &
    \text{if } \pst_1 = \Tdef{(\pst'_1)}
    \text{ and }\pst_2 =\Tdef{(\pst'_2)}
\end{cases}\]
\end{defi}

\bigskip

As we will see, our typing discipline keeps track of the assumptions
(i.e., the guards) necessary to reach a particular point in the
processes.
In fact, systems are typed by judgements of the form
\begin{gather}\label{eq:jdg}
  \pjdg[@][@][S][@][@]
\end{gather}
stipulating that, under the assumption $\C$ and the type assignment of
variables $\Gamma$, the system $S$ is typed as $\envmv$.
In~\eqref{eq:jdg}, $\Gamma$ and $\envmv$ are (possibly empty) partial
functions.
We adopt the usual syntactic notation for environments:
\begin{eqnarray*}
  \Gamma & ::= & \emptyset \quad \sep \quad \Gamma, x : \sort
  \\
  \envmv & ::= & \emptyset \quad \sep \quad
                 \envmv,\shname:\G \quad \sep \quad
                 \envmv, \psnames \smv p : \pst \quad \sep \quad
                 \envmv, \smv : \queue{}{\sort}
\end{eqnarray*}
Environments $\Gamma$ assign sorts $\sort$ to variables $x$.
Environments $\envmv$, called \emph{specifications}, map ($i$) shared
names $\shname$ to global types $\G$, ($ii$) participants' sessions
$\psnames \smv p$ to pseudo-types $\pst$, and ($iii$) session names
$\smv$ to queues of sorts $\sort$.
As usual, we implicitly assume that in a judgement of the form~\eqref{eq:jdg} the following holds:
\begin{itemize}
\item $x \not\in \dom \Gamma$ when writing $\Gamma, x : \sort$,
  and
\item $\shname \not\in \dom \envmv$ when writing
  $\envmv, \shname : \G$ (likewise for participants' sessions and for
  sessions' queues).
\end{itemize}
For judgements of the form~\eqref{eq:jdg} we also assume that
\begin{itemize}
\item $\fX[e] \subseteq \dom \Gamma$ and
  $\fn S \subseteq \dom \envmv$, and
\item in $\envmv, \psnames \smv p : \pst$ it is
  $\fY[\pst] \subseteq \tuple \smv$ and
  \begin{gather}\label{eq:sanity}
    \forall \psnames{\smv_1}{p_1}, \psnames{\smv_2}{p_2} \in
    \dom{\envmv}, \tuple \smv_1 \cap \tuple \smv_2 \neq \emptyset
    \implies \tuple \smv_1 = \tuple \smv_2
  \end{gather}
\end{itemize}
Condition~\eqref{eq:sanity} states that a session channel can be used
only in one session.
We sometimes write $\tuple \smv \in \dom\envmv$ when there exists
$\ptp p$ such that $\psnames \smv p \in \dom\envmv$.
Similarly, $\smv \in \dom\envmv$ stands for $\exists \tuple \smv \in
\dom \envmv \qst \smv \in \tuple \smv$.
The extension of $\var \_$ and $\fY $ to environments is
straightforward.

\begin{figure}[t]
\begin{tabular}{lll}
\multicolumn{3}{l} {\bf Predicates}
\\
\toprule
Notation & Arity & Description
\\
\midrule
$\envmv$ $\Tend$-\emph{only} & 1 & every session in $\envmv$ is terminated
\\
$\envmv$ \emph{active} & 1 &
every session in  $\envmv$ is an internal guarded-choice
\\
$\envmv_1$ and  $\envmv_2$ \emph{passively compatible} & 2 &
$\envmv_1$ and  $\envmv_2$  agree on the termination of an iteration
\\
$\envmv_1$ and $\envmv_2$  \emph{independent} & 2 &
disjoint sessions and agreement on shared names
\\
\bottomrule
\multicolumn{3}{l} {}
\\
\multicolumn{3}{l} {\bf Operations}
\\
\toprule
Notation & Arity & Description
\\
\midrule
$\envmv_1;\envmv_2$ & 2 & sequential composition
\\
$\envmv_1\cup\envmv_2$ & 2 & union
\\
$\Tdef\envmv$ & 1 & closure
\\
$\envmv|_{-\tuple \smv}$ & 2 &
restriction of $\envmv$ to names not in $\tuple \smv$
\\
\bottomrule
\end{tabular}
\caption{Summary of operations and predicates for environments}%
\label{fig:oper-pred-env}
\end{figure}

Our typing relies on
an operation that recovers types from pseudo-types by removing guards. Formally,
guard removal $\rmg \cdot$ is defined as
\[\begin{array}c
    \rmg{\guard \Tend} = \Tend
    \qquad\qquad
    \rmg{\pst;\pst'} = \rmg{\pst} ; \rmg{\pst'}
    \qquad\qquad
    \rmg{\Tdef \pst} = \Tdef{\rmg \pst}
    \\[10pt]
    \rmg {\Epst[\C_i]{i \in I}{\smv_i}{\sort_i}{\pst_i}}
    = \TBbra{\alpha}{I /_\sim}{\smv_\alpha}{\sort_\alpha}{\rmg{\peace_{i \in \alpha} \pst_i}}
    \\[10pt]
    \rmg {\Ipst[\C_i]{i \in I}{\smv_i}{\sort_i}{\pst_i}}
    = \TSsel{\alpha \in I /_\sim}{\smv_\alpha}{\sort_\alpha}{\rmg{\peace_{i \in \alpha} \pst_i}}
\end{array}\]
where in the last two equations $\sim$ is the equivalence relation on
$I$ defined as $i \sim j \iff \smv_i = \smv_j$ and
$\smv_\alpha \sort_\alpha = \smv_i \sort_i$ for $i \in \alpha$.
Other auxiliary operations and predicates on environments are listed
in~\cref{fig:oper-pred-env} and formally defined in the following
section.

\subsection{Typing rules}
The typing rules for processes and systems are grouped in
\cref{fig:typingproc} and \cref{fig:typingsys}.
For the sake of readability, we restate the typing rules as we comment
them so to introduce notation and concepts appearing in the rules as
we present them.

To type a request for a new session $\Preq \shname {n}{\tuple
  \smv}{P}$ we use the following rule
\begin{align*}
  \vreqrule
 \end{align*}
The premiss checks that the continuation $P$ can be typed with
$\envmv$ extended with an assignment of the pseudo-type $\pst$ to the
participant's session $\psname {\tuple \smv} p$, for some $\pst$
\emph{matching} the projection of the global type $\envmv(\shname)$ on
the corresponding role.
Intuitively, the type obtained by removing guards from $\pst$
coincides with the projection of the global type.

The rule $\myrule{VAcc}$ for typing the acceptance for the $\ptp p$-th
role $\Pacc{\shname}{p}{\tuple \smv}{P}$ is defined analogously.

An external choice is checked by
\[
  \rcvrule
\]
that types each branch $P_i$ against the respective continuation of
the type $\psnames \smv p : \pst_i$ (once $\Gamma$ is extended with
the type assignment on the bound name $x_i$); the first condition in
the premiss permits to branch only over a subset of the session
channels.

An output $\Psend \smv {e'}$ and the idle process are respectively
typed as follows
\[
  \vsendrule
  \qquad\qquad
  \vendrule
\]
The expression $e'$ in \myrule{VSend} has to be of the sort expected
on channel $\smv$; moreover, no further actions should occur on
session channels (rendered with the condition \emph{$\envmv$
  $\Tend$-only} abbreviating $\forall \psnames {\smv'} q
\in\dom{\envmv} \qst \nfp{\envmv \psnames {\smv'} q} = \guard\Tend$);
for the idle process we simply require $\envmv$ to map each session
channel to the $\Tend$ type.

The typing of sequential compositions is handled by the rule
\[
\vseqrule{VSeq}
\]
that requires to decompose the specification into $\envmv_1$ and
$\envmv_2$ to respectively type each part of the sequential
composition.
In the conclusion of the rule, the partial operation
$\_; \_$ on specifications requires that
$\dom {\envmv_2} \subseteq \dom {\envmv_1}$ and
$\envmv_1 |_{\shset \cup \chset} = \envmv_2 |_{\shset \cup \chset}$
and it is defined as follows\label{pag:envseq}
\[
\begin{array}{rcl}
  (\envmv_1;\envmv_2)|_{\shset \cup \chset}  & = & \envmv_1|_{\shset \cup \chset}
  \\
  (\envmv_1;\envmv_2) \psnames \smv p & = &
  \begin{cases}
    \envmv_1 \psnames \smv p ; \envmv_2 \psnames \smv p
    & \psnames \smv p \in \dom {\envmv_2}
    \\
    \envmv_1 \psnames \smv p
    & \psnames \smv p \in \dom{\envmv_1} \setminus \dom {\envmv_2}
    \\
    \text{undef} & \text{otherwise}
  \end{cases}
\end{array}
\]

The notion of mergeable pseudo-types is extended to specification
pairs in order to type conditionals.
Specifications $\envmv_1$ and $\envmv_2$ are \emph{mergeable} when the
local types they assign to sessions are mergeable.
Formally, $\envmv_1$ and $\envmv_2$ are mergeable iff
$\envmv_1|_{\shset \cup \chset} = \envmv_2|_{\shset \cup \chset}$,
$\dom{\envmv_1} = \dom{\envmv_2}$ and, for all $\psnames \smv p \in
\dom{\envmv_1}$, $\envmv_1 \psnames \smv p$ and $\envmv_2 \psnames
\smv p$ are mergeable.
When $\envmv_1$ and $\envmv_2$ are mergeable, $\envmv_1 \peace
\envmv_2$ merges the local types of sessions:
\[
(\envmv_1 \peace \envmv_2)|_{\shset \cup \chset} = \envmv_1|_{\shset \cup \chset}
\qquad \text{ and } \qquad
(\envmv_1 \peace \envmv_2) \psnames \smv p  =
\envmv_1 \psnames \smv p \peace \envmv_2 \psnames \smv p
\]
We remark that the merge operation on specifications is idempotent,
associative, and commutative.
For conditionals we have
\[
\vifrule{Vif}
\]
that requires to decompose the specification into two specifications
$\envmv_1$ and $\envmv_2$ so to type conditional processes with the
merge (\cf\ \cref{def:merge}) of $\envmv_1$ and $\envmv_2$.
The premiss checks that the then-branch is typed in $\envmv_1$ after
extending the assumption $e$ with the guard $e'$ of the conditional
while the else-branch is typed in $\envmv_2$ after extending $e$ with
the negation of $e'$.
Recall that judgements require consistency of their assumptions, hence
rule \myrule{Vif} is not applicable if $e \land e'$ or $e \land \neg
e'$ are inconsistent.

\begin{exa}
  Our typing distinguishes between $B_1$ and $B_2$ in \cref{sec:intro}
  because $B_1$ is validated while $B_2$ is not.
  This is due to the rule \myrule{VIf}.
  In fact, after a few verification steps on $B_1$ we can apply rule
  \myrule{VIf} and prove the following judgement:
  \[
  \pjdg[\truek][\envmv][
    \Pif{({check}\ c)}{\Psend{\popcname {ok}}{}}{\Psend{\popcname {ko}}{}}
  ]
  \]
  where $\Gamma$ assigns some sort to $\val c$ and
  $\envmv =
  ((\popcname {login}, \popcname {deposit}, \popcname {overdraft},
  \popcname {ok}, \popcname {ko}), \ptp b):
  \Tsend{\popcname {ok}}{} \oplus \Tsend{\popcname{ko}}{}$.
  Instead, for $B_2$ we would have to prove
  \[
    \pjdg[\truek][\envmv][\Psend{\popcname {ko}}{}]
  \]
  which makes the validation of $B_2$ fail; this is due to the fact
  that the only rule for typing a sending process is \myrule{VSend},
  which cannot be applied against the specification $\envmv$ that
  assigns $\Tsend{\popcname {ok}}{} \oplus \Tsend{\popcname{ko}}{}$ to
  the participant's session.
  \finex
\end{exa}

For-loops are typed with the following two rules
\[\begin{array}c
  \vforend
  \\[2em]
  \vfor
  \end{array}
\]
Rule \myrule{VForEnd} handles the case in which the expression $\ell$
denotes an empty list, that is when the for-loop should be skipped.
For this reason, the typing is similar to the typing of the idle
process (rule \myrule{VEnd}).
When the expression $\ell$ denotes a non-empty list under the
assumption $\C$ (i.e., $\vjdg[\Gamma][\ell][[\sort]]$ and
$\consistent[\C][\ell \not = \emptylist]$), we apply rule
\myrule{VFor} to validate for-loops.
The conclusion of the rule types the for-loop with $\Tdef \envmv$
which introduces iterative pseudo-types and is defined as follows:
\[
  \Tdef{\envmv}|_{\shset \cup \chset} = \envmv|_{\shset \cup \chset}
  \qquad\text{and}\qquad
  \Tdef{\envmv}\psnames \smv p  = \Tdef{(\envmv \psnames \smv p)}
\]
Note that $\envmv$ has to type the body $P$ under the context $\Gamma$
extended with $x : \sort$ because $x$ can occur free in $P$.
Moreover, $P$ has to inform all other peers that the iteration
continues.
This is checked by the condition $\envmv$ \emph{active}, namely that
for all $\psnames \smv p \in\dom\envmv$, $\nfp{\envmv \psnames \smv p}
= \Ipst[\C_i]{i \in I}{\smv_i}{\sort_i}{\pst_i}$.
Condition $x \not\in \var \envmv$ ensures that guards of $P$ do not
depend on the iteration variable $x$, making each choice available at
each iteration.

Rule \myrule{VLoop} below types passive processes of iterations.
\[
\vlooprule{VLoop}
\]
The premiss of the rule types the iteration body $N$ and the loop exit
$M$ with specifications $\envmv_1$ and $\envmv_2$ respectively.
Both specifications are required to be {\em passively compatible},
i.e., there is just one session that decides whether to continue or
terminate the iteration, and use different channels to communicate
such choice. Formally, specifications $\envmv_1 = \envmv, \psnames
\smv p :\pst$ and $\envmv_2 = \envmv, \psnames \smv p :\pst'$ are \emph{passively compatible} iff
\[
\dom\envmv \subseteq {\shset \cup \chset},
\qquad
\pst = \Epst[\C_i]{i \in I}{\smv_i}{\sort_i}{\pst_i}
\qquad \text{and} \qquad
\pst' = \Epst[\C_j]{j \in J}{\smv_j}{\sort_j}{\pst_j}
\]
with $y_i \neq y_j$ for all $i \in I, j \in J$.

We now consider the typing rules for systems in \cref{fig:typingsys},
which essentially deal with parallel composition, restriction of
shared names, and queues.
For parallel composition
\[
\vparrule
\]
requires to split the specification into two \emph{independent}
specifications $\envmv_1$ and $\envmv_2$ that respectively type each
side of the parallel.
Specifications are independent when they agree on shared names and are
disjoint on queues and participants' sessions; more precisely,
$\envmv_1$ and $\envmv_2$ are \emph{independent} when
\begin{itemize}
\item $\envmv_1|_{\shset} = \envmv_2|_{\shset}$ and
  $\dom{\envmv_1|_{\chset}} \cap \dom{\envmv_2|_{\chset}} = \emptyset$
\item for all $\psnames \smv p \in \dom{\envmv_1}$ and
  $\psnames{\smv'}{p'} \in \dom{\envmv_2}$, if $\tuple \smv \cap
  \tuple \smv' \neq \emptyset$ then $\tuple \smv = \tuple \smv'$ and
  $\ptp p \neq \ptp{p'}$
\end{itemize}
The union of independent specifications enjoys the sanity
condition~\eqref{eq:sanity}.

A restricted session is typed by
\[
\vnewrule{VNew}
\]
that removes participants' sessions and sessions' queues referring to
the restricted names $\tuple \smv$ from the specification $\envmv$
typing the scope $S$ (this restricted specification is denoted as
$\envmv|_{-\tuple \smv}$).

The typing of queues is straightforwardly handled by the following two
rules
\begin{align*}
  \vqueuerule
  &\hspace{2cm}
  \vemptyrule
\end{align*}
where $\_ \cdot \_$ denotes the concatenation operation on sequences
and \myrule{VEmpty} permits to type empty queues.
\begin{figure*}[tp]
  \begin{center}
    \[
    \begin{array}{l}
      \vreqrule
      \\
      \vaccrule
      \\[2em]
      \rcvrule
      \\[2em]
      \vsendrule
      \hfill
      \vendrule
      \\[2em]
      \vseqrule{VSeq}
      \\[2em]
      \vifrule{VIf}
      \\[2em]
      \vforend
      \\[2em]
      \vfor
      \\[2em]
      \vlooprule{VLoop}
  \end{array}\]
  \end{center}
  \caption{Typing rules for processes\label{fig:typingproc}}
\end{figure*}

\begin{figure}[t]
  \begin{center}
  \[
  \begin{array}{l}
      \vparrule
      \\
      \vqueuerule
      \hfill
      \vemptyrule
      \\[5pt]
      \vnewrule{VNew}
    \end{array}
    \]
  \end{center}
  \caption{Typing rules for systems\label{fig:typingsys}}
\end{figure}


\subsection{Typing the POP2 Protocol}%
\label{sec:exproof}
We now apply our type system to the implementations of our running
example.
We start by considering the process $\poppname{Init}$ in~\ref{ex:1proc} (page~\pageref{ex:1proc}) and the specification $\envmv
= \shname : \GPOP$ with $\GPOP$ from \cref{sec:runGT}. We recall that
its projection on $\ptp s$ is $\exG \TT {\ptp s} = {\GPOP}\proj{\ptp
  s}$ (also from~\ref{sec:runGT}).
Then, the typing judgement for $\poppname{Init}$ is obtained by using
rule \myrule{VAcc} as follows
\begin{prooftree}
\AxiomC{$\vdots$}
\UnaryInfC{
  $
  \envmv(u)\equiv\GPOP
  \quad
  \pjdg[\truek][\envmv, \psnames \smv s : \exG \pst {\ptp s}]
       [\poppname{Srv}][][\emptyset]
  \quad
  \rmg{\exG \pst {\ptp{s}}} = \exG \TT {\ptp s}
  $
}
\RightLabel{\myrule{VAcc}}
\UnaryInfC{
  $
  \pjdg[\truek][@][\Pacc{\shname}{s}{\tuple \smv}
    \poppname{Srv}][][\emptyset]
  $
}
\end{prooftree}
where the continuation $\poppname{Srv}$ is typed against the
specification $\envmv$ extended with a new participant's session
$ \psnames \smv s$ whose type $\exG \pst {\ptp s}$ matches $\exG \TT
{\ptp s}$, i.e., $\rmg{\exG \pst {\ptp{s}}} = \exG \TT {\ptp s}$.
Such $\exG \pst {\ptp s}$ is obtained from the judgement
$\pjdg[\truek][\envmv, \psnames \smv s : \exG \pst {\ptp
    s}][\poppname{Srv}][@][\emptyset]$.
Since
$\poppname{Srv} \eqdef
\Preceive{\popcname{quit}}{}{\poppname{Exit}}\ +\
\Preceive{\popcname{helo}}{c}{\poppname{Mbox}(c)}
$
is an input-guarded process, the judgement is obtained by applying
rule $\myrule{VRcv}$ as follows:
\begin{gather}
  \label{inf-typing-ex-1}
  \begin{mathprooftree}
    \AxiomC{
      $\vdots$
    }
    \UnaryInfC{
      $
      \pjdg[\truek][\envmv, \psnames \smv s : {\exG \pst {EXIT}}] 
      [{\poppname{Exit}}][@][\emptyset]
      $
    }
    \AxiomC{$\vdots$}
    \UnaryInfC{
      $
      \pjdg[\truek][\envmv, \psnames \smv s : \exG \pst {MBOX}] 
           [\poppname{Mbox}(c)][@][c : \dt{Str}] 
     $
    }
    \RightLabel{\myrule{VRcv}}
    \BinaryInfC{
      $
      \pjdg [\truek]
	    [\envmv, \psnames \smv s : \exG \pst {\ptp s}] 
	    [\Preceive{\popcname{quit}}{}{\poppname{Exit}}  +
              \Preceive{\popcname{helo}}{c}{\poppname{Mbox}(c)}][@][\emptyset] 
      $
    }
  \end{mathprooftree}
\end{gather}
where
  \begin{align*}
    \exG \pst {\ptp s}
    &
    \eqdef
    \guard[{\truek}]{\Treceive{\popcname{quit}}{}}.{\exG \pst {EXIT}}\
    +\
    \guard[{\truek}]{\Treceive{\popcname{helo}}{\dt{Str}}}.\exG \pst {MBOX}
   \end{align*}
	with $\rmg{\exG \pst {EXIT}} = \exG \TT {EXIT}$ and
	$\rmg{\exG \pst {MBOX}} = \exG \TT {MBOX}$ so to satisfy
	$\rmg{\exG \pst {\ptp{s}}} = \exG \TT {\ptp s}$.
	The first premiss in~\eqref{inf-typing-ex-1} is derived as follows
	by taking
	$\exG \pst {EXIT} =
	\guard[\truek]{\Tsend{\popcname{bye}}{}}.\guard[\truek]{\Tend}$
	(recall that $\poppname{Exit}\eqdef \Psend{\popcname{bye}}{}$ and
	$\Psend{\popcname{bye}}{}$ is a shorthand for
	$\Psend{\popcname{bye}}{()}$).
\begin{prooftree}
  \AxiomC{$
    \emptyset \vdash ():\dt{Unit}
    \qquad
    \popcname{bye} \in\tuple\smv
    \quad\qquad
    (\envmv, \psnames \smv s : \guard[\truek]{\Tend})$ end-only}
  \RightLabel{\myrule{VSend}}
  \UnaryInfC{$
    \pjdg[\truek][\envmv, \psnames \smv s : {\exG \pst {EXIT}}]
         [{\Psend{\popcname{bye}}{()}}][@][\emptyset]
    $}
\end{prooftree}

The second premiss in~\eqref{inf-typing-ex-1} follows by using rule
$\myrule{VIf}$ because $\poppname{Mbox}(c)$ in~\eqref{proc:mbox} is a
conditional process (hereafter, we write $\C$ as shorthand for
$\dt{auth}\ c$).
{\small\begin{gather}
\label{inf-typing-ex-2}
\begin{mathprooftree}
  \AxiomC{
    $
    \begin{array}{l}
      \begin{array}{c}
	\vdots
	\\
	\hline 
	\pjdg [\C]
	      [\envmv,\psnames \smv s: \exG \pst {then}]
 	      [\Psend{\popcname{r}}{(\dt {mn}\ \dt{inbox})};{\poppname{Nmbr}}]
	      [@]
	      [c:\dt{Str}]
      \end{array}
      \\[-15pt]
      \hspace{6cm}
      \begin{array}{c}
	\vdots
	\\
	\hline 
	\pjdg [\neg \C]
	      [\envmv, \psnames \smv s: \exG \pst {else}]
	      [\Psend{{\popcname{e}}}{};{\poppname{Exit}}]
	      [@]
	      [c:\dt{Str}]
      \end{array}
    \end{array}
    $
  }
  \RightLabel{\myrule{VIf}}
  \UnaryInfC{$
    \begin{array}{ll}
      \pjdg [\truek]
	    [\envmv, \psnames \smv s:
	      {\exG \pst {MBOX}}
	    ]
	    [\poppname{Mbox}(c)]
	    [@]
	    [c:\dt{Str}]
    \end{array}
    $
  }
\end{mathprooftree}
\end{gather}
}
with $\exG \pst {MBOX} = \exG \pst {then} \ \peace\ \exG \pst {else}$.
The second premiss above can be shown by using the rules
$\myrule{VSeq}$ and $\myrule{VSend}$ and by taking
\[\exG \pst {else} = \guard[\neg \C]{\Tsend{\popcname{e}}{}};\guard[\neg \C]{\Tend};\guard[\neg \C]{\Tsend{\popcname{bye}}{}};\guard[\neg \C]{\Tend}\]

For the first premiss  in~\eqref{inf-typing-ex-2}  we use rule $\myrule{Seq}$ as follows
{\small\begin{gather}
\label{inf-typing-ex-3}
\begin{mathprooftree}
\AxiomC{$\vdots$}
\UnaryInfC{$
	\pjdg [\C]
		[\envmv,\psnames \smv s:{\exG \pst {then_1}}]
 		[\Psend{\popcname{r}}{(\dt {mn}\ \dt{inbox})}]
		[@]
		[c:\dt{Str}]
$}
\AxiomC{$\vdots$}
\UnaryInfC{$
	\pjdg [\C]
		[\envmv,\psnames \smv s:{\exG \pst {then_2}}]
 		[\poppname{Nmbr}]
		[@]
		[c:\dt{Str}]
$}
\RightLabel{\myrule{VSeq}}
\BinaryInfC{$
\pjdg [\C]
		[\envmv,\psnames \smv s: \exG \pst {then}]
 		[\Psend{\popcname{r}}{(\dt {mn}\ \dt{inbox})};{\poppname{Nmbr}}]
		[@]
		[c:\dt{Str}]$}
\end{mathprooftree}
\end{gather}}
with $\exG \pst {then} = \exG \pst {then_1};\exG \pst {then_2}$.
By applying $\myrule{VSend}$, we conclude that
$\exG \pst {then_1} = \guard[\C]{\Tsend{\popcname{r}}{\dt
	 {Int}}};\guard[\C]\Tend$.
Since $\poppname{Nmbr}$ is a repeat-until loop, the second premiss in~\eqref{inf-typing-ex-3} is obtained by using rule $\myrule{Vloop}$ (we
write $\poppname{Nmbr_{body}}$ for the body of the iteration and
$\poppname{Nmbr_{until}}$ for the until guard).
\begin{prooftree}
\AxiomC{$\vdots$}
\UnaryInfC{$
	\pjdg [\C]
		[\envmv,\psnames \smv s:{\exG \pst {body}}]
 		[\poppname{Nmbr_{body}}]
		[@]
		[c:\dt{Str}]
$}
\AxiomC{$\vdots$}
\UnaryInfC{$
	\pjdg [\C]
		[\envmv,\psnames \smv s:{\exG \pst {until}}]
 		[\poppname{Nmbr_{until}}]
		[@]
		[c:\dt{Str}]
$}
\RightLabel{\myrule{VLoop}}
\BinaryInfC{$
\pjdg [\C]
		[\envmv,\psnames \smv s: \exG \pst {then_2}]
 		[{\poppname{Nmbr}}]
		[@]
		[c:\dt{Str}]$}
\end{prooftree}
with $\envmv,\psnames \smv s: \exG \pst {body}$ and $\envmv,\psnames
\smv s: \exG \pst {until}$ passively compatible and $\exG \pst then_2
= \Tdef{(\exG \pst {body})};\exG \pst {until}$.  By following the same
approach, it can be shown that both premisses are derivable by taking
\[
\begin{array}{lcl}
  \exG \pst {body}
  & = &
  \guard[\C]{\Treceive{\popcname{fold}}{\dt{Str}}}.
  \guard[\C]{\Tsend{\popcname{r}}{\dt{Int}}};\guard[\C]{\Tend}
  +
  \guard[\C]{\Treceive{\popcname{read}}{\dt{Int}}}.
  \guard[\C]{\Tsend{\popcname{r}}{\dt{Int}};\exG \pst {SIZE}}
  \\
  \exG \pst {until}
  & = &
  \guard[\C]{\Treceive{\popcname{quit}}{}}.
  \guard[\C]{\Tsend{\popcname{bye}}{}};\guard[\C]{\Tend}
\end{array}
\]
for a suitable $\exG \pst {SIZE}$ such that $\rmg{\exG \pst {SIZE}} =
\exG \TT {SIZE}$.
It is straightforward to check that
$\envmv,\psnames \smv s: \exG \pst {body}$ and
$\envmv,\psnames \smv s: \exG \pst {until}$ are passively compatible
because the channels in $\exG \pst {body}$ are different from the ones
appearing in $\exG \pst {until}$.  Moreover,
$\rmg{\exG \pst {then_2}} = \rmg{ \Tdef{(\exG \pst {body})};\exG \pst
  {until}} = \exG \TT {Nmbr}$.

It remains to show that $\exG \pst {MBOX} = \exG \pst {then}
\ \peace\ \exG \pst {else}$ is well-defined. We first compute the
normal form of the pseudo types.
\[
\begin{array}{lll}
  \nfp{\exG \pst {then}}
  &=&
  \guard[\C]{\Tsend{\popcname{r}}{\dt {Int}}}; \nfp{\exG \pst {then_2}}
  \\
  \nfp{\exG \pst {else}}
  &=&
  \guard[\neg \C]{\Tsend{\popcname{e}}{}};
  \guard[\neg \C]{\Tsend{\popcname{bye}}{}};
  \guard[\neg \C]{\Tend}
\end{array}
\]
It is immediate to notice that $\nfp{\exG \pst {then}}$ and $\nfp{\exG
  \pst {else}}$ are mergeable because they are internal choices on
disjoint set of session channels. Therefore,
\[
\exG \pst {MBOX}
=
\exG \pst {then} \ \peace\ \exG \pst {else}
=
\guard[\C]{\Tsend{\popcname{r}}{\dt {Int}}};
\nfp{\exG \pst {then_2}}\
\oplus\
\guard[\neg \C]{\Tsend{\popcname{e}}{}};
\guard[\neg \C]{\Tsend{\popcname{bye}}{}};
\guard[\neg \C]{\Tend}
\]
Finally, note that $\rmg{\exG \pst {MBOX}} = \exG \TT {MBOX}$.

We now give the main types for the multiparty variant given in~\eqref{ex:1bproc} (\cf page~\pageref{ex:1bproc}).
Assume $\envmv(u)\equiv\GPOP'$ from \cref{sec:runGT} and consider the
following pseudo-type:
\begin{align*}
  \exG{\pst'}{\ptp{s}}
  & \eqdef
  \guard[\truek]{
    \Treceive{\popcname{quit}}{}.
             {\exG \pst {EXIT}}
  }
  ~+ ~ \guard[\truek]{\Treceive{\popcname{helo}}{\dt{Str}}.\exG\pst {AUTH}}
  \\
  \exG \pst {AUTH}
  & \eqdef
  \guard[\truek]{
    \Tsend{\popcname{req}}{\dt{Str}};
    \guard[\truek]{ \Treceive{\popcname{res}}{\dt{Bool}}}.
    \exG{\pst'}{MBOX}
  }
  \\
  \exG{\pst'}{MBOX}
  & \eqdef
  \guard[{\C\land a }]{
    \Tsend{\popcname{r}}{\dt{Int}}; \exG {\pst'} {NMBR}
  }
  ~\oplus ~\guard[{\neg(\C \land a)}]{
    \Tsend{\popcname{e}}{}; \exG {\pst'} {EXIT}
  }
\end{align*}
such that $\rmg{\pst'_{\ptp{s}}} = \TT'_\ptp{s}$, $ \exG {\pst}
{NMBR}$ is as $ \exG {\pst} {then_2}$ above except that all enabling
conditions are $(\C\land a)$, and $ \exG {\pst'} {EXIT}$ is as $ \exG
\pst {EXIT}$ except that all enabling conditions are $\neg(\C \land
a)$.
The typing judgement $\pjdg[\truek][\envmv, \psnames \smv s: \exG
  {\pst'} {\ptp s}][\poppname{Init'}][@][\emptyset]$ can be obtained
as in the previous case.


\section{Properties of the Type System}%
\label{sec:properties}
In this section we show that a well-typed process (i) behaves as
specified by the global type (\cref{thm:conformance}) and (ii) is a
WSI of the role played in the global type
(\cref{thm:impl-covers-local}).

\subsection{Conformance}\label{sec:conformance}
In order to show that any well-typed process adheres to the behaviour
defined by a global type, we relate the semantics of the process with
the one of its specification through a subject reduction result
(\cref{thm:sr}).
The operational semantics of specifications is generated by the rules
in \cref{fig:LTStypes}, where it is implicitly assumed that we work
up-to normal forms, namely the pseudo-types are normalised before and
after transitions.
Notice that the labels are as in~\eqref{eq:ltslabel} on
page~\pageref{eq:ltslabel} but for the fact that they cannot be
conditional actions $\lcond e \alpha$.
\begin{figure*}[t]\small
  \[\begin{array}{lr}
    \mathrule{
      \envmv(\shname) \equiv \G(\tuple \smv)
      \quad
      \rmg \pst = \nfp{\G(\tuple \smv) \proj \ptp 0}
    }{
      \spec \Gamma \envmv
      \Ttrans{\lreq{\shname}{n}{\tuple \smv}}
      \spec \Gamma {\envmv,\psnames  \smv 0 : \pst}
    }{TReq}
    &
    \mathrule{
      j \in I
    }{
      \envmv, \psnames \smv p : \Epst[\C_i]{i \in I}{\smv_i}{\sort_i}{\pst_i}
      \Ttrans{\lreceive{\smv_j}{\sort_j}}
      \envmv, \psnames \smv p : \pst_j
    }{TRcv}
    \\[3em]
    \mathrule{
      \envmv(\shname) \equiv \G(\tuple \smv)
      \quad \rmg \pst = \nfp{\G(\tuple \smv) \proj \ptp p}
    }
    {\spec \Gamma \envmv
      \Ttrans{\lacc{\shname}{p}{\tuple \smv}}
      \spec \Gamma {\envmv,\psnames  \smv p : \pst}
    }{TAcc}
    &
    \mathrule{
      j \in I
    }{
      \envmv, \psnames \smv p : \Ipst[\C_i]{i \in I}{\smv_i}{\sort_i}{\pst_i}
      \Ttrans{\lsend{\smv_j}{\sort_j}}
      \envmv, \psnames \smv p : \pst_j
    }{TSend}
    \\[3em]
    \multicolumn{2}{l}{
      \mathrule{
        \envmv_1
        \Ttrans{\alpha}
        \envmv_1'
      }{
        \envmv_1;\envmv_2
        \Ttrans{\alpha}
        \envmv_1';\envmv_2
      }{TSeq}
      \hspace{1.5cm}
      \mathrule{
        \envmv_2
        \Ttrans{\lreceive \smv \sort}
        \envmv_2'
      }{
        \Tdef{\envmv_1};\envmv_2
        \Ttrans{\lreceive \smv \sort}
        \envmv_2'}{TLoop_0}
      \hspace{1.5cm}
      \mathrule{
        \envmv
        \Ttrans{\alpha}
        \envmv'
      }{
        \Tdef{\envmv}
        \Ttrans{\alpha}
        \envmv'}{TLoop_1}
      \hspace{1.5cm}
      \mathrule{
        \envmv
        \Ttrans{\alpha}
        \envmv'
      }{
        \Tdef{\envmv}
        \Ttrans{\alpha}
        \envmv';\Tdef{\envmv}}{TLoop_2}
      }
      \\[25pt]
      \multicolumn 2 l {
      \mathrule{
      \envmv(\shname) \equiv \Geq[\G][\tuple \smv][\GT]
        \qquad
        \participants{\GT} = \{\ptp p_0, \ldots, \ptp p_n\}
        \qquad
        \rmg{\pst_i} =  \nfp{\GT\proj{\ptp p_i}} \ \forall i \in \{0,\ldots,n\}
      }{
        \envmv \Ttrans{\tau} \envmv, \psnames \smv {p_0} : \pst_0, \ldots, \psnames \smv {p_n} : \pst_n,
        \tuple \smv : \queue{}{}
      }
      {TInit}
}
    \\[15pt]
    \multicolumn 2 l {
      \mathrule{
        j \in I
      }{
        \envmv, \psnames \smv p : \Ipst[\C_i]{i \in I}{\smv_i}{\sort_i}{\pst_i}, \smv_j : \queue{}{\tuple \sort}
        \Ttrans{\tau}
        \envmv, \psnames \smv p : \pst_j, \smv_j : \queue{}{\tuple \sort \cdot \sort_j}
      }{TCom_1}
    }
    \\[15pt]
    \multicolumn 2 l {
      \mathrule{
        j \in I
      }{
        \envmv, \psnames \smv p : \Epst[\C_i]{i \in I}{\smv_i}{\sort_i}{\pst_i}, \smv_j : \queue{}{\sort_j \cdot \tuple \sort}
        \Ttrans{\tau}
        \envmv, \psnames \smv p : \pst_j, \smv_j : \queue{}{\tuple \sort}
      }{TCom_2}
    }
  \end{array}\]
  \caption{Labelled transitions for specifications}%
  \label{fig:LTStypes}%
  \label{fig:LTSruntimetypes}
\end{figure*}
Intuitively, the rules in \cref{fig:LTStypes}, barred the last three,
state how the specification of a single participant behaves in a
session $\tuple \smv$ and are instrumental to establish subject
reduction.

Rules \myrule{TReq} and \myrule{TAcc} account for a specification that
initiates a new session by projecting (on $\ptp 0$ and $\ptp p$,
resp.) the global type associated with the shared name $\shname$ in
$\dom \envmv$.
Note that $\pst$ can use arbitrary guards in the projections
$\G(\tuple \smv) \proj \ptp p$ as long as the normal form of $\pst$
matches the one of the projection.
Rule \myrule{TRcv} accounts for the reception of a message.
Dually, rule \myrule{TSend} accounts for an endpoint that performs one
of its outputs.
Rule \myrule{TSeq} relies on the definition of sequential composition
of specifications (\cf page~\pageref{pag:envseq}); observe that the
case in which all pseudo-types in $\envmv_1$ are of the form $\guard
\Tend$ is precluded because we work up-to normal form of pseudo-types.
Finally, an iterative local type can be skipped (rule
\myrule{TLoop_0}), executed once (rule \myrule{TLoop_1}), or be
unfolded (rule \myrule{TLoop_2}).
The last three rules in \cref{fig:LTStypes} state how specifications
of systems behave.
Rule \myrule{TInit} initiates a new session $\tuple \smv$ by assigning
each participant with a type that matches the corresponding projection
of the global type.
Rules \myrule{TCom_1} and \myrule{TCom_2} establish how specifications
send and receive messages through queues.

\begin{figure*}
  \begin{math}
    \begin{array}{c@{\hspace{1cm}}c@{\hspace{1cm}}c}
      \mathrule{\consExp}{\consPst [ {\guard \Tend}]}{CEnd}
      &
      \mathrule{
        \consPst[\pst_1] \qquad \consPst[\pst_2]
      }{
        \consPst [ {\pst_1 ; \pst_2  }]
      }{CSeq}
      &
      \mathrule{\consPst[\pst]}{\consPst [{\Tdef{\pst}}]}{CLoop}
      \\[2em]
      \multicolumn 3 c {
        \mathrule{
          \exists i\in I \qst \consExp[\C_i] \wedge \consPst[\pst_i]
        }{
          \consPst[{\Ipst[\C_i]{i \in I}{\smv_i}{\sort_i}{\pst_i} }]
        }{CSend}
        \qquad\qquad
        \mathrule{
          \forall i\in I \qst \consExp[\C_i] \wedge \consPst[\pst_i]
        }{
          \consPst[{\Epst[\C_i]{i \in I}{\smv_i}{\sort_i}{\pst_i}}]
        }{CRcv}
      }
    \end{array}
  \end{math}
  \caption{\label{fig:consistency-store-pst}Consistency relation
    between stores and pseudo-types}
\end{figure*}

The behaviour of processes depends on the stores they run
on. Consequently, we compare process and specifications with respect
to stores; concretely, we only consider the behaviour of processes
running on stores that are consistent with the guards in the
pseudo-types of the specifications.
The consistency predicate $\consPst[\_][\_]$ relates stores with
pseudo-types and is inductively defined by the rules
in~\cref{fig:consistency-store-pst}.
Intuitively, $\consPst$ holds if $\sigma$ does not falsify any guard
in $\pst$; which is checked by rules $\myrule{CEnd}$, $\myrule{CSend}$
and $\myrule{CRcv}$, where $\consExp$ means $\eval \C \sigma = \truek$
or $\eval \C \sigma$ undefined.

The notion of consistency is then extended to type judgments as follows.

\begin{defi}[Consistency]\label{def:consistent}
  A store $\sigma$ is \emph{consistent} with a judgement $\pjdg[\C][@][S][@]$,
  written $\consSt$, if
  \begin{enumerate}
  \item\label{cons2:eq} $\dom{\envmv|_{\chset}} \cup \dom \Gamma
    \subseteq \dom \sigma$
  \item\label{cons3:eq} $\forall x \in \dom{\Gamma} \qst
    \vjdg[][\sigma(x)][\Gamma(x)]$
  \item\label{cons4:eq} $\eval {\C } \sigma = \truek$
  \item\label{cons5:eq} $\forall \psnames{\smv}{p}\in \dom{\envmv}
    \qst \consPst[\envmv \psnames{\smv}{p}]$.
  \end{enumerate}
\end{defi}
\noindent
A store $\sigma$ is consistent with a type judgement
$\pjdg[\C][@][S][@]$ if $\sigma$ contains an assignment for any free
name of $S$ (\cref{cons2:eq}); the values assigned to variables should
match the type assigned by the environment $\Gamma$
(\cref{cons3:eq}). Besides, the typing assumption $\C$ and the guards
in the pseudo-types in $\envmv$ should hold when evaluated over
$\sigma$ (\cref{cons4:eq} and \cref{cons5:eq}).

\begin{restatable}[Subject reduction]{thm}{subjred}%
\label{thm:sr}
If $\pjdg[\C][@][S][@]$, $\consSt$, and $\state[S][\sigma]
\Ptrans{\lcond{\C'}{\alpha}}\state[S'][\sigma']$ then there exist
$\Gamma'$ and $\envmv'$ such that
\begin{enumerate}

\item\label{sr:in}
  if $\alpha = \lreceive \smv {\val v}$ then $\envmv \Ttrans{\lreceive
    \smv \sort} \envmv'$ for a sort $\sort$; moreover, if
  $\vjdg[][\val v][\sort]$ then there is $x \in \varset$ such that
  \[
  \pjdg[\C \land \C'][\envmv'][S'][][\Gamma', x : \sort],
  \qquad
  \sigma'(x) = \val v,
  \quad\text{and}\quad
  \consSt[\sigma'][S'][\envmv'][\C\land\C'][\Gamma', x : \sort]
  \]
\item\label{sr:out}
  if $\alpha = \lsend \smv {\val v}$ then $\envmv \Ttrans{\lsend \smv
    \sort} \envmv'$ with $\vjdg[][\val v][\sort]$, $\pjdg[\C \land
    \C'][\envmv'][S'][][\Gamma']$, and
  $\consSt[\sigma'][S'][\envmv'][\C \land \C'][\Gamma']$

\item\label{sr:other}
  otherwise $\envmv \Ttrans \alpha \envmv'$, and $\pjdg[\C \land
    \C'][\envmv'][S'][][\Gamma']$,
  $\consSt[\sigma'][S'][\envmv'][\C\land\C'][\Gamma']$.
\end{enumerate}
\end{restatable}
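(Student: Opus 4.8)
The plan is to prove the statement by induction on the derivation of the transition $\state[S][\sigma] \Ptrans{\lcond{\C'}{\alpha}} \state[S'][\sigma']$, with a case analysis on the last rule of \cref{fig:LTSproc,fig:LTSsys} and, in each case, an inversion of the typing derivation of $\pjdg[\C][@][S][@]$ to recover how $S$ was typed. The three clauses of the conclusion are selected by the shape of $\alpha$: a send label falls under the output clause, a receive label under the input clause, and every other label (the silent step $\tau$ and the session-initialisation labels) under the last clause. In each case I must (i) exhibit a matching specification transition $\envmv \Ttrans{\beta} \envmv'$ using \cref{fig:LTStypes}, (ii) produce the refined environments $\Gamma'$ and $\envmv'$, (iii) re-type $S'$ under the accumulated assumption $\C \land \C'$, and (iv) check that $\sigma'$ remains consistent with the resulting judgement in the sense of \cref{def:consistent}. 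Two preliminary lemmas I would establish up front are that typing is preserved by the structural congruence $\equiv$ (needed for \myrule{SStr}) and that normalisation commutes suitably with the transitions of \cref{fig:LTStypes}, since those are taken up to normal form.

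First I would dispatch the communication axioms. For \myrule{SSend}, inverting \myrule{VSend} shows $\envmv = \envmv'', \psnames \smv p : \guard{\Tsend \smv \sort};\guard\Tend$ with $\envmv''$ being $\Tend$-only; the matching specification step is \myrule{TSend}, giving $\envmv' = \envmv \upd \TT {\psnames \smv p}$, and since the emitted expression is well-sorted its value has sort $\sort$, which yields the output clause. Symmetrically, \myrule{SRcv} is matched by inverting \myrule{VRcv} and applying \myrule{TRcv}: the fresh binding $x_j : \sort_j$ is added to $\Gamma'$, and the variable-sort clause of \cref{def:consistent} is re-established because $\sigma'(x_j) = \val v$ with $\val v : \sort_j$. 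The session-opening rules \myrule{SReq}, \myrule{SAcc} and the system rule \myrule{SInit} are matched by \myrule{TReq}, \myrule{TAcc} and \myrule{TInit}, all under the last clause; here the side condition of those typing rules, that $\rmg \pst$ equals the normal form of the corresponding projection, is exactly what licenses the specification step.

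Next come the inductive cases. Sequential composition \myrule{SSeq} uses \myrule{VSeq} to split $\envmv$ as $\envmv_1;\envmv_2$ and invokes the induction hypothesis on the first component together with \myrule{TSeq}. The queue rules \myrule{SCom_1}, \myrule{SCom_2} become the silent specification steps \myrule{TCom_1}, \myrule{TCom_2}; the system rules \myrule{SPar}, \myrule{SNew}, \myrule{SStr} are handled by the induction hypothesis after reassembling the independent, restricted, or congruent specifications, using that independence and the sanity condition \eqref{eq:sanity} are preserved. The delicate cases are the conditional and the loops. For \myrule{SThen} we have $\Pif{g}{P}{Q}$ with $\eval g \sigma = \truek$; inverting \myrule{VIf} gives $\envmv = \envmv_1 \peace \envmv_2$, with $P$ typed in $\envmv_1$ under $\C \land g$ and $Q$ in $\envmv_2$ under $\C \land \neg g$. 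I would apply the induction hypothesis to the sub-transition of $P$ (whose condition I call $\C''$, so the overall $\C'$ is $g \land \C''$), obtaining $\envmv_1 \Ttrans{\beta} \envmv_1'$, and then promote this to a step of the merge, taking $\envmv' = \envmv_1'$. The case \myrule{SElse} is symmetric, while \myrule{SFor}/\myrule{SForEnd} and \myrule{SLoop}/\myrule{SLoopEnd} are matched against \myrule{VFor}/\myrule{VForEnd} and \myrule{VLoop} using the closure $\Tdef\envmv$ and the unfolding rules \myrule{TLoop_0}, \myrule{TLoop_1}, \myrule{TLoop_2}.

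The main obstacle will be the interplay between guard accumulation, normalisation, and the merge operator $\_ \peace \_$ in precisely these last cases. Since \myrule{VIf} records the branch guards inside the pseudo-types before merging, I must argue that strengthening the assumption from $\C$ to $\C \land g$ and renormalising selects exactly the summands of $\envmv_1 \peace \envmv_2$ originating in $\envmv_1$, while discarding, via clauses (2)--(3) of \cref{fig:normalization-pseudo}, the summands from $\envmv_2$ (whose guards imply $\neg g$ and are hence falsified under $g$); dually, for the loops I must show passive compatibility survives an unfolding. Concretely, I expect to isolate two lemmas: that $\nfp{(\envmv_1 \peace \envmv_2)}$ under a branch assumption coincides with $\nfp{\envmv_1}$, so that the merge can simulate the branch's specification transition, and that all four clauses of \cref{def:consistent} are preserved across the store update from $\sigma$ to $\sigma'$ together with the strengthened assumption $\C \land \C'$. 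Once these are in hand, every case closes by a routine recombination and the induction goes through.
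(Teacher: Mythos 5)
Your proposal follows essentially the same route as the paper's proof: induction on the derivation of the transition with inversion of the typing judgement, matching each process/system rule to the corresponding specification rule of \cref{fig:LTStypes}, and the same key auxiliary facts (preservation of typing under $\equiv$, the merge-simulation property $\envmv_1 \Ttrans{\alpha}\envmv_1' \implies \envmv_1\peace\envmv_2 \Ttrans{\alpha}\envmv_1'$ for the conditional, and preservation of the consistency clauses of \cref{def:consistent}). The only place where the paper does noticeably more work than your sketch anticipates is the \myrule{SFor} case, where the unrolled continuation $P';\Pfor{x}{\tailL\ell}{P}$ must be re-typed by splitting on whether $\consistent[\C\land\C'][\tailL\ell = \emptylist]$ and by re-sequencing the residual specification with a guard-prefixed copy of $\Tdef{\envmv_1}$ (using the $\Tend$-absorption lemmas for normal forms), and where an auxiliary lemma is needed to rule out input labels from an active environment — but these are refinements of, not departures from, your plan.
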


\noindent
The typing rules in \cref{sec:typing} ensure the semantic conformance
of processes with the behaviour prescribed by their types.
Here, we define conformance in terms of \emph{conditional that}
simulation relates states and specifications.
Our definition is standard, except for input actions, for which
specifications have to simulate only inputs of messages with the
expected type (i.e., systems are not responsible when receiving
ill-typed messages).

Define $\Strans{\alpha} = \Ttrans\tau^*\Ttrans\alpha$.
Let $\envmv \Strans \smv$ mean that there are $\envmv'$ and $\val v$
such that $\envmv \Strans{\smv \val v} \envmv'$.

\begin{defi}[Conditional simulation]\label{def:simulation}
  A relation $\R$ between pairs state-specification is a
  \emph{conditional simulation} if for any
  $\CWSRel{\state[S][\sigma]}{\envmv}$, if
  $\state[S][\sigma]\Ptrans{\lcond{e}{\alpha}}\state[S'][\sigma']$
  then
  \begin{enumerate}
  \item\label{it:cs1} if $\alpha = \lreceive{\smv}{\val v}$ then
    there exists $\envmv'$ such that
    $\envmv \Strans{\lreceive{\smv}{\sort}} \envmv'$ and if
    $\vjdg[][\val v][\sort]$ then there exists $x$ such that
    $\sigma' = \sigma\upd {\val v}{x}$ and
    $\CWSRel{\state[S'][\sigma']}{\envmv'}$
  \item\label{it:cs2} otherwise, $\envmv \Strans{\alpha} \envmv'$ and
    $\CWSRel{\state[S'][\sigma']}{\envmv'}$.
  \end{enumerate}
  We write $\state[S][\sigma] \precsim \envmv$ if there is a
  conditional simulation $\R$ such that
  $\CWSRel{\state[S][\sigma]}{\envmv}$.
\end{defi}

\noindent
By~\eqref{it:cs1}, only inputs of $S$ with the expected type have to
be matched by $\envmv$ (recall rule \myrule{TRec} in
\cref{fig:LTStypes}), while it is no longer expected to conform to the
specification after an ill-typed input (i.e., not allowed by
$\envmv$).

Conformance follows by straightforward coinduction from \cref{thm:sr}.
\begin{restatable}[Conformance]{thm}{notrestated}\label{thm:conformance}
  If $\pjdg[@][@][S]$ and $\consSt$ then $\state[S][\sigma] \precsim
  \spec{\Gamma}{\envmv}$.
\end{restatable}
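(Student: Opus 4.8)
The plan is to prove \cref{thm:conformance} by exhibiting an explicit relation $\R$ and showing it is a conditional simulation, with \cref{thm:sr} (subject reduction) providing the inductive step. Concretely, I would define
\[
  \R = \{ (\state[S][\sigma], \spec{\Gamma}{\envmv}) \sst \pjdg[\C][@][S][@] \text{ for some } \C,\Gamma, \text{ and } \consSt \}.
\]
The hypothesis of the theorem, $\pjdg[@][@][S]$ and $\consSt$, immediately gives $\CWSRel{\state[S][\sigma]}{\spec{\Gamma}{\envmv}}$, so it suffices to verify that $\R$ satisfies the closure conditions of \cref{def:simulation}; the final claim $\state[S][\sigma]\precsim\spec{\Gamma}{\envmv}$ then follows by definition of $\precsim$.

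To check the closure conditions, I would take an arbitrary pair $\CWSRel{\state[S][\sigma]}{\envmv}$ in $\R$ together with a witnessing judgement $\pjdg[\C][@][S][@]$ and consistency $\consSt$, and suppose $\state[S][\sigma]\Ptrans{\lcond{e}{\alpha}}\state[S'][\sigma']$. The two cases of \cref{def:simulation} correspond exactly to the case split in \cref{thm:sr}. For an input $\alpha=\lreceive\smv{\val v}$ (\cref{it:cs1}), \cref{thm:sr}\eqref{sr:in} yields $\envmv\Ttrans{\lreceive\smv\sort}\envmv'$ for some sort $\sort$; this is a single $\Ttrans{}$ step, so $a\ fortiori\ \envmv\Strans{\lreceive\smv\sort}\envmv'$ since $\Strans{\alpha}=\Ttrans\tau^*\Ttrans\alpha$. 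If moreover $\vjdg[][\val v][\sort]$, the same clause of \cref{thm:sr} provides $x$ with $\sigma'(x)=\val v$ (hence $\sigma'=\sigma\upd{\val v}{x}$), a typing $\pjdg[\C\land\C'][\envmv'][S'][@][\Gamma',x:\sort]$, and the consistency $\consSt[\sigma'][S'][\envmv'][\C\land\C'][\Gamma',x:\sort]$. These last two facts are precisely what is needed to conclude $\CWSRel{\state[S'][\sigma']}{\envmv'}$, closing the clause. For every other label $\alpha$ (\cref{it:cs2}), I would combine \cref{thm:sr}\eqref{sr:out} and \eqref{sr:other}: each gives $\envmv\Ttrans{\alpha}\envmv'$ (whence $\envmv\Strans\alpha\envmv'$), together with a preserved typing and consistency, so again $\CWSRel{\state[S'][\sigma']}{\envmv'}$.

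Since the theorem statement itself describes this as following ``by straightforward coinduction from \cref{thm:sr},'' the genuinely substantive content has already been discharged in the subject reduction theorem, and the present proof is essentially a bookkeeping argument packaging it coinductively. The one point deserving care is the alignment of labels: \cref{thm:sr} produces single $\Ttrans{}$ transitions on specifications whereas \cref{def:simulation} asks for the weak transition $\Strans{\alpha}$; I would note explicitly that a single $\alpha$-step is a $\Strans{\alpha}$-step (taking the $\tau$-prefix empty). I also need to confirm the label $\alpha$ carried by the process transition matches the label expected by the specification transition, in particular that the conditional decoration $\lcond{e}{\alpha}$ on the process side strips to the bare $\alpha$ that the specification LTS uses, as remarked in \cref{sec:conformance} that specification labels cannot be conditional. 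The main (mild) obstacle is therefore purely formal: ensuring that the relation $\R$ is closed under the specification transitions produced by subject reduction and that the updated typing and consistency data returned by \cref{thm:sr} exactly re-establish membership in $\R$, so that the coinductive definition of conditional simulation applies without any further argument.
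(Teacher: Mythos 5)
Your proposal is correct and matches the paper's proof exactly: the paper defines the same relation $\R$ of pairs $(\state[S][\sigma],\spec{\Gamma}{\envmv})$ witnessed by a typing judgement and a consistent store, and appeals to \cref{thm:sr} to show it is a conditional simulation. Your additional bookkeeping (aligning the three cases of \cref{thm:sr} with the two clauses of \cref{def:simulation}, and noting that a single $\Ttrans{\alpha}$ step is a $\Strans{\alpha}$ step) is precisely the detail the paper leaves implicit as ``straightforward.''
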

\begin{proof}
  Using \cref{thm:sr} it is straightforward to show that
  \[
    \R = \{(\state[S][\sigma],\spec{\Gamma}{\envmv}) \sst
    \pjdg[@][@][S] \text{ and } \consSt\}
  \]
  is a conditional simulation.
\end{proof}


\subsection{WSI by Typing}%
\label{sec:wsi-by-typing}

We show that well-typed processes are WSIs (\cref{def:wsr} on
page~\pageref{def:wsr}).
First, we relate the runs of a global type with those of its
corresponding specifications.
Then, we state the correspondence between the runs of specifications
and well-typed implementations.
A set of implementations covering a global type $\G$ can exhibit more
behaviour than the runs of $\G$.
Nonetheless, we use WSI with our subject reduction property
(\cf\ \cref{def:simulation} and \cref{thmt@@notrestated}) to
characterise valid implementations.

Given a specification $\envmv$ such that $\envmv\psnames\smv p$ is in
normal form for all $\psnames\smv p \in\dom{\envmv}$, we let
$\typeruns{\tuple \smv}{\envmv}$, inductively defined by the rules in
\cref{fig:def-runs-local-types}, to be of the set of runs of session
$\tuple \smv$ generated by $\envmv$.
\begin{figure*}[t]
  \[
  \begin{array}{l}
    \mathrule{
      j \in I
      \qquad
      r\in\typeruns{\tuple \smv}{\envmv, \psnames \smv p : \pst_j,
        \smv_j : \queue{}{\tuple \sort \cdot \sort_j}}
    }{
      \state[\ptp p][\lsend {\smv_j} {\sort_j} ]
      r\in\typeruns{
        \tuple \smv
      }{
        \envmv, \psnames \smv p : \Ipst[\C_i]{i \in I}{\smv_i}{\sort_i}{\pst_i},
        \smv_j : \queue{}{\tuple \sort}}
    }{RTCom_1}
    \\[25pt]
    \mathrule{
      j \in I
      \qquad
      r\in\typeruns{\tuple \smv}
      {\envmv, \psnames \smv p : \pst_j, \smv_j : \queue{}{\tuple \sort}}
    }{
      \state[\ptp p][\lreceive {\smv_j} {\sort_j} ]
      r\in\typeruns{\tuple \smv}
      {\envmv, \psnames \smv p : \Epst[\C_i]{i \in I}{\smv_i}{\sort_i}{\pst_i},
        \smv_j : \queue{}{\sort_j \cdot \tuple \sort}}
    }{RTCom_2}
    \\[20pt]
    \mathrule{
      r \in \typeruns{\tuple \smv}{\envmv}
    }{
      r \in \typeruns{\tuple \smv}{\Tdef{\envmv}}
    }{RTIt_1}
    \hfill
    \mathrule{
      r_1 \in \typeruns{\tuple \smv}{\envmv_1}
      \qquad
      r_2 \in \typeruns{\tuple \smv}{\Tdef{\envmv_1}}
    }{
      r_1[r_2] \in \typeruns{\tuple \smv}{\Tdef{\envmv_1}}
    }{RTIt_2}
    \\[25pt]
    \mathrule{
      \envmv \ \text{$\Tend$-only}
    }{
      \epsilon\in\typeruns{\tuple \smv}{\envmv,\tuple \smv : \queue{}{}}
    }{RTEnd}
    \hspace{1cm}
    \mathrule{r_1 \in \typeruns{\tuple \smv}{\envmv_1}
      \qquad
      r_2 \in \typeruns{\tuple \smv}{\envmv_2}
    }{
      r_1r_2 \in \typeruns{\tuple \smv}{\envmv_1;\envmv_2}
    }{RTSeq}
  \end{array}
  \]
  \caption{Runs of specifications}%
  \label{fig:def-runs-local-types}
\end{figure*}
Rule~\myrule{RTCom_1} accounts for runs starting with output actions
$\state[\ptp p][\lsend {\smv_j} {\sort_j} ]$ performed by an endpoint
$\psnames \smv p$ and followed by a run $r$ of the continuation.
Rule~\myrule{RTCom_2} analogously deals with inputs.
Rules~\myrule{RTIt_1} and~\myrule{RTIt_2} unfold an iterative type;
one iteration is mandatory while the additional ones are optional.
The remaining rules are self-explanatory.
Since all types in $\envmv$ are in normal form (and therefore
inconsistent guards may appear only in sub terms $\guard\Tend$), the
rules in \cref{fig:def-runs-local-types} do not generate runs with
actions that cannot be fired.
The following two results establish the correspondence between the
denotational and operational semantics of specifications.

\begin{restatable}{lem}{speclem}\label{lem:corr-semantics-local-types}
  Let $\envmv$ be a specification such that for all $\psnames \smv
  p\in\dom\envmv$, $\envmv \psnames \smv p$ is in normal form. If
  $\spec{}{\envmv}\Ttrans{\tau}\spec{}{\envmv'}$, then for all
  $r\in\typeruns{\tuple \smv}{\envmv'}$ either:
  \begin{itemize}
  \item
    $r\in\typeruns{\tuple \smv}{\envmv}$, or
  \item
    $\state[\ptp p][\MAsend{\smv}{\sort}] r \in \typeruns{\tuple
    \smv}{\envmv}$, or else
  \item
    $\state[\ptp q][\Treceive{\smv}{\sort}] r \in \typeruns{\tuple
    \smv}{\envmv}$.
  \end{itemize}
\end{restatable}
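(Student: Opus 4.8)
The plan is to argue by induction on the derivation of $\spec{}{\envmv}\Ttrans{\tau}\spec{}{\envmv'}$, distinguishing cases on the last rule of \cref{fig:LTStypes}. A $\tau$-label is produced at the leaves only by \myrule{TCom_1}, \myrule{TCom_2} and \myrule{TInit}, and it is merely propagated by the congruence rules \myrule{TSeq}, \myrule{TLoop_1} and \myrule{TLoop_2}; so the induction hypothesis is needed exactly for these last three, while the leaves are discharged by a single application of a rule from \cref{fig:def-runs-local-types}. Throughout, I use that $\typeruns{\tuple \smv}{\cdot}$ is generated compositionally by \myrule{RTSeq}, \myrule{RTIt_1} and \myrule{RTIt_2}, mirroring the shape of the specification.

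For the leaf \myrule{TCom_1} we have $\envmv = \envmv_0, \psnames \smv p : \Ipst[\C_i]{i \in I}{\smv_i}{\sort_i}{\pst_i}, \smv_j : \queue{}{\tuple \sort}$ and $\envmv' = \envmv_0, \psnames \smv p : \pst_j, \smv_j : \queue{}{\tuple \sort \cdot \sort_j}$ for some $j\in I$. Since $\envmv'$ is literally the premise and $\envmv$ the conclusion of \myrule{RTCom_1}, every $r\in\typeruns{\tuple \smv}{\envmv'}$ gives $\state[\ptp p][\MAsend{\smv_j}{\sort_j}]\,r\in\typeruns{\tuple \smv}{\envmv}$ in one step, which is the send alternative; \myrule{TCom_2} is symmetric via \myrule{RTCom_2}, yielding the receive alternative. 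For \myrule{TInit} the transition only adds participant sessions and an empty queue for a session distinct from $\tuple \smv$, hence it neither creates nor removes any event over $\tuple \smv$ and $\typeruns{\tuple \smv}{\envmv'}\subseteq\typeruns{\tuple \smv}{\envmv}$, so the first alternative holds.

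The rule \myrule{TSeq} is handled by threading the induction hypothesis through \myrule{RTSeq}: a run $r\in\typeruns{\tuple \smv}{\envmv_1';\envmv_2}$ splits as $r=r_1r_2$ with $r_1\in\typeruns{\tuple \smv}{\envmv_1'}$ and $r_2\in\typeruns{\tuple \smv}{\envmv_2}$; the induction hypothesis returns one of the three alternatives for $r_1$ against $\envmv_1$, and re-composing with $r_2$ by \myrule{RTSeq} returns the same alternative for $r$ against $\envmv_1;\envmv_2=\envmv$ (any prepended communication event stays in front of $r_1r_2$). The rule \myrule{TLoop_1} is analogous, closing the induction hypothesis under \myrule{RTIt_1}, since $\typeruns{\tuple \smv}{\envmv_1}\subseteq\typeruns{\tuple \smv}{\Tdef{\envmv_1}}$.

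I expect the genuine obstacle to be \myrule{TLoop_2}, where $\Tdef{\envmv_1}\Ttrans{\tau}\envmv_1';\Tdef{\envmv_1}$. Here a target run is $r=r_1r_2$ (by \myrule{RTSeq}) with $r_2\in\typeruns{\tuple \smv}{\Tdef{\envmv_1}}$ the trailing iterations, and the induction hypothesis lands $r_1$ (up to a leading event) inside the single unfolding $\typeruns{\tuple \smv}{\envmv_1}$, so the committed iteration matches. The difficulty is that reattaching the continuation $r_2$ to the source $\Tdef{\envmv_1}$ must go through \myrule{RTIt_2}, which marks the continuation as \emph{optional}, namely $[r_2]$, whereas \myrule{RTSeq} on the target kept it mandatory; aligning $r_1r_2$ with $r_1[r_2]$ is the crux. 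I plan to resolve it by strengthening the induction hypothesis so that it threads this optional annotation through the continuation, exploiting that the trailing iterations of $\Tdef{\envmv_1}$ are exactly the events that the trace preorder $\myequal$ is allowed to drop; this keeps the statement compatible with the downstream use in \cref{lem:runs-types-covers-runs-global} and \cref{thm:impl-covers-local}, where runs are compared up to $\myequal$.
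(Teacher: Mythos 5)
Your proposal is correct and follows essentially the same route as the paper: induction on the derivation of the $\tau$-transition, with $\myrule{TCom_1}$/$\myrule{TCom_2}$ discharged by one application of $\myrule{RTCom_1}$/$\myrule{RTCom_2}$, $\myrule{TInit}$ by weakening of runs under disjoint sessions (\cref{lem:weakening-env-in-runs}), and $\myrule{TSeq}$, $\myrule{TLoop_1}$, $\myrule{TLoop_2}$ by re-composing the inductive hypothesis through $\myrule{RTSeq}$, $\myrule{RTIt_1}$, $\myrule{RTIt_2}$. The crux you isolate in $\myrule{TLoop_2}$ --- that $\myrule{RTIt_2}$ yields $r_1[r_2]$ where the target run was $r_1r_2$ --- is genuine, and the strengthening you propose is precisely what the paper's own proof tacitly relies on: its $\myrule{TSeq}$ case invokes a four-way inductive hypothesis whose extra disjunct is $r_1=r_1'r_2'r_3'$ with $r_1'[r_2']r_3'\in\typeruns{\tuple \smv}{\envmv_1}$, and its $\myrule{TLoop_2}$ case concludes $r_1[r_2]\in\typeruns{\tuple \smv}{\envmv}$ rather than one of the three alternatives literally stated in the lemma.
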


\begin{restatable}{lem}{specbacklem}%
  \label{lem:corr-semantics-local-types-back}
  Let $\envmv$ a specification such that for all $\psnames \smv
  p\in\dom\envmv$, $\envmv \psnames \smv p$ is in normal form. If
  $r\in\typeruns{\tuple \smv}{\envmv}$ and $r\neq \epsilon$ then
  $\envmv\Ttrans{\tau}\envmv'$ and either
  \begin{itemize}
  \item $r\in\typeruns{\tuple \smv}{\envmv'}$, or
  \item $r=\state[\ptp p][\MAsend{\smv}{\sort}] r'$ and
    $r'\in\typeruns{\tuple \smv}{\envmv'}$, or
  \item $r= \state[\ptp q][{\smv}{\sort}] r'$, or else and
    $r'\in\typeruns{\tuple \smv}{\envmv'}$.
  \end{itemize}
\end{restatable}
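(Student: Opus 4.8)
The plan is to prove the statement by induction on the derivation of $r\in\typeruns{\tuple\smv}{\envmv}$ built from the rules of \cref{fig:def-runs-local-types}. Since $r\neq\epsilon$, the last rule applied cannot be \myrule{RTEnd}, so it is one of \myrule{RTCom_1}, \myrule{RTCom_2}, \myrule{RTSeq}, \myrule{RTIt_1} or \myrule{RTIt_2}. In each case I would exhibit a silent transition $\envmv\Ttrans{\tau}\envmv'$ taken from \cref{fig:LTStypes} and then show that the run obtained from $r$ by stripping its consumed first action (if any) lies in $\typeruns{\tuple\smv}{\envmv'}$, thereby landing in one of the three alternatives. Throughout I use the hypothesis that every $\envmv\psnames\smv p$ is in normal form, which (as remarked after \cref{fig:def-runs-local-types}) ensures that no inconsistent branch is ever fired.

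The two communication cases are the clean ones and yield the send and receive alternatives. In \myrule{RTCom_1} we have $r=\state[\ptp p][\lsend{\smv_j}{\sort_j}]\,r_0$ with $\envmv=\envmv_0,\psnames\smv p:\Ipst[\C_i]{i\in I}{\smv_i}{\sort_i}{\pst_i},\smv_j:\queue{}{\tuple\sort}$ and $r_0\in\typeruns{\tuple\smv}{\envmv_0,\psnames\smv p:\pst_j,\smv_j:\queue{}{\tuple\sort\cdot\sort_j}}$. Rule \myrule{TCom_1} provides precisely the step to $\envmv'=\envmv_0,\psnames\smv p:\pst_j,\smv_j:\queue{}{\tuple\sort\cdot\sort_j}$, and $r_0$ is literally a run of $\envmv'$ by the premise, so the send alternative holds. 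The case \myrule{RTCom_2} is dual, using \myrule{TCom_2} and yielding the receive alternative.

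For the structural cases I would invoke the induction hypothesis and lift the transition. In \myrule{RTSeq}, $\envmv=\envmv_1;\envmv_2$ and $r=r_1r_2$; normal form forces $r_1\neq\epsilon$, since a $\Tend$-only $\envmv_1$ would have been absorbed into $\envmv_2$ by clause~(4) of \cref{fig:normalization-pseudo}. Applying the hypothesis to $r_1\in\typeruns{\tuple\smv}{\envmv_1}$ together with \myrule{TSeq} lifts the step to $\envmv_1';\envmv_2$, and the continuation reassembles by \myrule{RTSeq}. The case \myrule{RTIt_1}, where $\envmv=\Tdef{\envmv_1}$ and $r\in\typeruns{\tuple\smv}{\envmv_1}$, is analogous: I apply the hypothesis to $r$ and lift the resulting step through \myrule{TLoop_1}, which preserves the chosen alternative verbatim.

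The hard part is the genuine iteration case \myrule{RTIt_2}, where $r=r_1[r_2]$ with $\envmv=\Tdef{\envmv_1}$, $r_1\in\typeruns{\tuple\smv}{\envmv_1}$ and $r_2\in\typeruns{\tuple\smv}{\Tdef{\envmv_1}}$. First I would note that $r_1\neq\epsilon$: otherwise $\envmv_1$ is $\Tend$-only and clause~(9) of \cref{fig:normalization-pseudo} collapses $\Tdef{\envmv_1}$ to a $\Tend$-only type, contradicting $r\neq\epsilon$. Applying the hypothesis to $r_1$ and unfolding the loop via \myrule{TLoop_2} gives $\Tdef{\envmv_1}\Ttrans{\tau}\envmv_1';\Tdef{\envmv_1}$. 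The obstacle is that the denotational unfolding marks every further iteration as optional (the bracketed $[r_2]$), whereas a run of the operational target $\envmv_1';\Tdef{\envmv_1}$ built with \myrule{RTSeq} and \myrule{RTIt_1}/\myrule{RTIt_2} carries its next iteration as mandatory, so the continuation $r_1'[r_2]$ is not literally a member of $\typeruns{\tuple\smv}{\envmv_1';\Tdef{\envmv_1}}$. I expect to discharge this by reconciling the optional annotations through the trace preorder: from the axioms $[r_2]\myequal\epsilon\myequal r_2$ we get $r_1'[r_2]\myequal r_1'r_2$, and $r_1'r_2\in\typeruns{\tuple\smv}{\envmv_1';\Tdef{\envmv_1}}$ by \myrule{RTSeq}, so the continuation is covered by a genuine run of $\envmv'$ (equivalently, choosing \myrule{TLoop_1} drops the optional tail, since $r_1'[r_2]\myequal r_1'$, again a $\myequal$-match). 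This bracket bookkeeping, and its propagation through the surrounding \myrule{RTSeq} and \myrule{RTIt_1} steps whose first components are themselves loops, is where essentially all the effort lies; the remaining steps are routine.
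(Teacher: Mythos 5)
Your proposal follows the same route as the paper's proof: induction on the derivation of $r\in\typeruns{\tuple\smv}{\envmv}$, with \myrule{RTCom_1}/\myrule{RTCom_2} discharged directly by \myrule{TCom_1}/\myrule{TCom_2}, \myrule{RTSeq} by the inductive hypothesis on $r_1$ lifted through \myrule{TSeq}, and the iteration cases through \myrule{TLoop_1}/\myrule{TLoop_2}. In several places you are in fact more careful than the paper. The paper's \myrule{RTSeq} case silently assumes $r_1\neq\epsilon$ (your normal-form argument is exactly the justification the paper only gestures at in its remark on \myrule{TSeq}); and for \myrule{RTIt_1} and \myrule{RTIt_2} the paper writes "take $\envmv'=\envmv_1$ and use \myrule{TLoop_1}" (resp.\ "$\envmv'=\envmv_1;\Tdef{\envmv_1}$ and use \myrule{TLoop_2}"), which as stated would need a self-loop $\envmv_1\Ttrans{\tau}\envmv_1$ as premiss; your version, which first applies the inductive hypothesis to obtain $\envmv_1\Ttrans{\tau}\envmv_1'$ and only then lifts through the loop rules, is the derivation that actually typechecks.

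The one substantive divergence is \myrule{RTIt_2}, and you have put your finger on a real wrinkle rather than created one. The paper's one-line treatment simply asserts that the proof "is completed" with $\envmv'=\envmv_1;\Tdef{\envmv_1}$, i.e.\ it implicitly claims $r_1[r_2]\in\typeruns{\tuple\smv}{\envmv_1;\Tdef{\envmv_1}}$; as you observe, this fails literally because $[r_2]$ is not itself a run of $\Tdef{\envmv_1}$ (that would require $\epsilon\in\typeruns{\tuple\smv}{\envmv_1}$, hence $\envmv_1$ $\Tend$-only). Your patch via the trace preorder ($r_1'[r_2]\myequal r_1'r_2$ with $r_1'r_2\in\typeruns{\tuple\smv}{\envmv_1';\Tdef{\envmv_1}}$) is sound, but be explicit that it proves a weakening of the lemma as stated: membership is replaced by coverage up to $\myequal$ in this one case. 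That weakening is harmless for the lemma's downstream uses (the coverage theorems only ever compare runs modulo $\myequal$), and the paper's own proof needs the same repair — or else a convention identifying $r[s]$ with $rs$ in $\typeruns{\tuple\smv}{\cdot}$ — so you should either restate the conclusion of the lemma up to $\myequal$ or record the identification explicitly; as written, neither your argument nor the paper's establishes literal membership in the \myrule{RTIt_2} case.
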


\noindent
\cref{lem:runs-types-covers-runs-global} below ensures that
well-formed global types are covered by their projections
(\cref{def:runGTT} on page~\pageref{def:runGTT}).
\cref{thm:impl-covers-local} ensures that a local specification can be
covered by a set of implementations where a role $\ptp p_i$ that is
played by the same well-typed process $P$, as is WSI of the role $\ptp
p$ played by $P$ in the global type. As a corollary we have that a
well-typed process $P$ is a WSI of the role it plays.

\begin{restatable}[Coverage \& projections]{thm}{coveragethm}%
  \label{lem:runs-types-covers-runs-global}
  Let $\Geq$ be a global type with $\participants{\GT} = \{\ptp
  {p_0},\ldots, \ptp {p_n}\}$ and $\envmv = \psnames \smv {p_0} :
  \pst_0, \ldots, \psnames \smv {p_n} : \pst_n, \tuple\smv :
  \queue{}{}$ such that $\rmg{\pst_i}= \nfp{\GT\proj{\ptp p_i}}$ for
  all $0 \leq i \leq n$.
  Then $\imruns{\GT} \ \mysubseteq\ \typeruns{\tuple \smv}{\envmv}$.
\end{restatable}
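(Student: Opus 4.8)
The plan is to establish the covering by induction on the structure of the global type term $\GT$, producing for every $r \in \imruns{\GT}$ a specification run $r' \in \typeruns{\tuple\smv}{\envmv}$ with $r \myequal r'$. A preliminary simplification is worth recording: the rules of \cref{fig:def-runs-local-types} fire for each index $j \in I$ regardless of the guards $\C_i$, so $\typeruns{\tuple\smv}{\envmv}$ depends on the endpoints only through the guard-free shapes $\rmg{\pst_i} = \nfp{\GT\proj{\ptp p_i}}$; and since the projection of a well-formed global type carries only the trivial guard $\truek$, its normal form preserves its branch structure. Hence we may argue directly with the projections $\GT\proj{\ptp p_i}$, and the queues $\tuple\smv : \queue{}{}$ will be kept empty between matched communications by always scheduling a send immediately before its receive, which also explains why the single interleaving produced by the specification suffices once we recall that traces are taken up to permutation of causally independent events.

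The easy cases come first. For $\GT = \Gend$ every projection is $\Tend$, so $\envmv$ is $\Tend$-only, and $\epsilon \in \imruns{\Gend}$ (\myrule{RGEnd}) is matched by $\epsilon \in \typeruns{\tuple\smv}{\envmv}$ (\myrule{RTEnd}). For $\GT = \GT_1 ; \GT_2$ a run splits as $r = r_1 r_2$ with $r_k \in \imruns{\GT_k}$ (\myrule{RGSeq}); since projection distributes over $\_;\_$, the specification decomposes as $\envmv_1;\envmv_2$ with $\envmv_k$ built from the projections of $\GT_k$, so the induction hypothesis applied to each factor, \myrule{RTSeq}, and the composition rule of $\myequal$ deliver the cover $r_1' r_2'$.

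The communication case $\GT = \Gchoice$ is the technical heart. A run reads $\state[\ptp p][\lsend{\smv_j}{\sort_j}]\state[\ptp q_j][\lreceive{\smv_j}{\sort_j}]\,s$ with $s \in \imruns{\GT_j}$ (\myrule{RGCom}). I would realise the leading send by \myrule{RTCom_1} on the endpoint $\psnames\smv p$, whose projection is an internal choice, depositing $\sort_j$ on the queue of $\smv_j$, and then the matching receive by \myrule{RTCom_2} on $\psnames{\smv}{q_j}$, emptying that queue again. Now $\psnames\smv p$ and $\psnames{\smv}{q_j}$ advance into their $\GT_j$-continuations, but every other role $\ptp r$ still carries the \emph{merged} projection $\GT\proj{\ptp r}$, which combines the leading receive-actions of $\ptp r$ across all branches. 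One therefore cannot invoke the induction hypothesis against the clean specification of $\GT_j$; the gap is bridged by a separate lemma stating that the runs of a merged external choice include the runs of each summand's continuation. This is exactly where well-formedness enters: knowledge of choice guarantees that in each branch $\ptp r$ is the receiver of a first interaction, so the merge glues only coinciding prefixes and the branch corresponding to $j$ leads, after $\ptp r$'s distinguishing input (an event of $s$ itself), precisely to $\GT_j\proj{\ptp r}$. Proving this soundness-of-merge step is the obstacle I expect to dominate the argument.

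Finally, for the iteration $\GT = \Gdef{\f}{\GT_1}$ the controller $\ready{\GT_1} = \{\ptp p\}$ has projection $\Tdef{(\GT_1\proj{\ptp p})}$ followed by the exit sends, while every other role $\ptp r$ has $\Tdef{(\GT_1\proj{\ptp r})}$ followed by one exit receive, so $\envmv$ factors as $\Tdef{\envmv_1};\envmv_{\text{exit}}$ with $\envmv_1$ the body projections. A run of $\GT$ (\myrule{RGIter}) is $r\,t$, where $r = r_1[r_2[\cdots]] \in \imrunsaux{\Gdef{\f}{\GT_1}}$ collects one mandatory body run $r_1 \in \imruns{\GT_1}$ and arbitrarily many optional ones (\myrule{RG^*_1}, \myrule{RG^*_2}), and $t$ is the termination handshake in which $\ptp p$ emits each $\f(\ptp p_i) = \smv_i\,\sort_i$ to the corresponding role. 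I would cover this by a specification run performing exactly one iteration via \myrule{RTIt_1}, whose body is covered by the induction hypothesis on $\GT_1$, followed (through \myrule{RTSeq}) by the same handshake produced with \myrule{RTCom_1}/\myrule{RTCom_2}; the surplus optional blocks $[r_2[\cdots]]$ of $r$ are simply discarded by the drop axiom $[\,\cdot\,] \myequal \epsilon$, and condition~\eqref{it:wfi} ($\range{\f}\cap\chn{\GT_1}=\emptyset$) ensures the handshake events cannot be confused with body events, keeping the interleaving argument sound.
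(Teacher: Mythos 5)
Your proposal is correct and follows essentially the same route as the paper's proof: induction over $r \in \imruns{\GT}$ with case analysis on the last rule, the communication case resolved exactly as you anticipate by a lemma stating that adding branches to an external choice preserves specification runs (the paper's \cref{lem:runs-preserved-by-extchoice}, applied repeatedly to the receivers whose projection of the whole choice is a merge), and the termination handshake of an iteration rebuilt by chained \myrule{RTCom_1}/\myrule{RTCom_2} steps. The one point where you genuinely diverge is the iteration case: the paper proves the stronger literal inclusion $\imruns{\GT} \subseteq \typeruns{\tuple \smv}{\envmv}$, showing via an auxiliary lemma (\cref{lem:aux-conv-recursive}) that $\imrunsaux{\Gdef{\f}{\GT_1}} \subseteq \typeruns{\tuple \smv}{\Tdef{\envmv_1}}$ because \myrule{RTIt_2} reproduces the annotated unfolding $r_1[r_2]$ of \myrule{RG^*_2} verbatim, so no optional block is ever discarded; you instead keep only the weaker covering invariant and erase the surplus unfoldings with the drop axiom $[r]\myequal\epsilon$. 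Both are sound, and the two choices are linked to the induction measure: the paper's stronger invariant keeps all $\myequal$-bookkeeping out of the argument (every case is a plain membership proof) at the cost of a nested induction on the unfolding derivation, while your single-iteration cover is what lets a structural induction on $\GT$ go through but forces you to thread the preorder through every case via \myrule{\myequal cmp}. Finally, your worry that the merge-soundness step dominates the proof is unfounded: in the paper it is a one-line structural induction, since the run rules for an external choice simply never fire on a branch whose channel does not head a queue.
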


\begin{restatable}[Typeability \& coverage]{thm}{coveragetwothm}%
  \label{thm:impl-covers-local}
  Let $\Geq$ be a global type with $\participants{\GT} = \{\ptp
  {p_0},\ldots, \ptp {p_n}\}$ and $\envmv = \psnames \smv {p_0} :
  \pst_0, \ldots, \psnames \smv {p_n} : \pst_n, \tuple\smv :
  \queue{}{}$ such that $\rmg{\pst_i}= \nfp{\GT\proj{\ptp p_i}}$ for
  all $0 \leq i \leq n$.
  If $\pjdg[\C][{\envmv', \psnames \smv {\ptp p_i} :
      \pst_i}][P][@][@]$ then the set $\mathbb{I} = \{\Gimp\ |
  \ \pjdg[\C][{\envmv, \envmv''}][\Gimp][@][@] \ \wedge\ \iota(\ptp
  p_i) = P \}$ covers $\typeruns{\tuple \smv}{\envmv}$, i.e.,
  $\typeruns{\tuple \smv}{\envmv} \mysubseteq \imrunsu{\mathbb{I}}$.
\end{restatable}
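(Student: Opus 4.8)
The plan is to prove $\typeruns{\tuple \smv}{\envmv} \mysubseteq \imrunsu{\mathbb I}$ pointwise: for each run $r \in \typeruns{\tuple \smv}{\envmv}$ I would exhibit one implementation $I_r \in \mathbb I$ and a run $r' \in \imrunsu{I_r}$ with $r \myequal r'$. The key point is that, although $\typeruns{\tuple \smv}{\envmv}$ is generated non-deterministically by the rules of \cref{fig:def-runs-local-types}, each $r$ fixes one concrete finite interleaving, so it suffices to synthesise a single deterministic system reproducing the mandatory skeleton of $r$ and to let distinct runs be covered by distinct members of $\mathbb I$ (a closed system of deterministic processes yields essentially one trace). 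I would argue by induction on the derivation of $r \in \typeruns{\tuple \smv}{\envmv}$, carrying as an invariant that the system built so far is an $\iota$-implementation in the sense of \cref{def:realisation} with $\iota(\ptp{p_i})=P$ that is typable as $\envmv,\envmv''$ for some $\envmv''$, so that it stays in $\mathbb I$; the correspondences \cref{lem:corr-semantics-local-types,lem:corr-semantics-local-types-back} would keep the denotational run $r$ synchronised with the operational $\Ttrans{\tau}$-transitions of $\envmv$ that guide the construction.

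In the inductive step I keep $\iota(\ptp{p_i})=P$ fixed and freely synthesise the processes for the remaining roles $\ptp{p_j}$, $j\neq i$. Since $r$ fixes, at every branching point, the channel used and the sort exchanged, a passive role is realised by a deterministic process performing exactly that action — a direct output for an internal choice of $\ptp{p_j}$, an input prefix for an external one, and a for-loop or a repeat-until loop with the matching number of unfoldings for an iteration — which is typable against $\GT\proj{\ptp{p_j}}$ by the rules of \cref{fig:typingproc}. The delicate roles are the internal choices that $r$ attributes to $\ptp{p_i}$: I cannot alter $P$, and $P$ resolves such a choice via a conditional whose branch depends on its store. To drive $P$ down the branch prescribed by $r$ I would use that runs constrain exchanged values only up to their sort, so the synthesised partners may send any sort-correct value; tracking the residual pseudo-type of $P$ with \cref{thm:sr} and the store consistency of \cref{def:consistent}, I would then invoke satisfiability of the guard accumulated along the chosen branch of $\pst_i$. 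This is exactly where typeability of $P$ is used: \myrule{VIf} applies only when both $\C\land\C'$ and $\C\land\neg\C'$ are consistent, and normalisation (\cref{fig:normalization-pseudo}) deletes branches with unsatisfiable guards, so every branch surviving in $\nfp{\pst_i}$ has a satisfiable guard and is reachable through a suitable choice of incoming values.

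For iterations I would let the annotated-trace preorder absorb the gap between the unbounded behaviour admitted by $\envmv$ and the bounded behaviour of the deterministic $P$. A run of an iterative session has the shape $r_1[r_2[\cdots[r_m]]]$ with only $r_1$ mandatory (rules \myrule{RTIt_1} and \myrule{RTIt_2}); by the axiom $[r]\myequal\epsilon$ together with the congruence rule for $\myequal$, such a run is covered by any plain trace performing $k\geq 1$ bodies, so I let the synthesised controller iterate exactly as the mandatory part of $r$ requires (a list of matching length in its for-loop), with the passive repeat-until loops following suit. Having assembled $I_r$ and a matching execution $I_r \Wtrans{\alpha_1}\cdots\Wtrans{\alpha_n}$, I would read off $r'\in\imrunsu{I_r}$ through the correspondence of \cref{prop:run-vs-red} (used from executions to the runs of \cref{fig:runs-realizations}) and conclude $r\myequal r'$ directly from the preorder, discharging the optional segments of $r$ against the empty trace and the mandatory events against their counterparts in $r'$.

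The main obstacle is the treatment of the fixed $P$ at the internal choices it controls, i.e.\ securing the defining promise of WSI (\cref{def:wsr}): for every branch the specification allows there must be a context in which $P$ actually takes it. Everywhere else the context is mine to choose and the construction is routine, but here the crux is to turn the syntactic consistency side-conditions of the typing rules into a semantic existence statement about a satisfying store, and to verify that such stores are produced by sort-correct messages of the synthesised partners without contradicting \cref{thm:sr}. A secondary, bookkeeping difficulty is to check throughout the induction that the assembled system stays a well-formed $\iota$-implementation (the freshness and disjointness conditions of \cref{def:realisation}) and remains typable, so that each $I_r$ genuinely belongs to $\mathbb I$; conformance (\cref{thm:conformance}) is available to guarantee that the constructed execution never deviates from $\envmv$.
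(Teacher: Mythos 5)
Your overall architecture matches the paper's: the theorem is reduced to an auxiliary lemma (\cref{lem-impl-covers-local}) proved by induction on the derivation of $r \in \typeruns{\tuple\smv}{\envmv}$, keeping $\iota(\ptp{p_i})=P$ fixed, synthesising deterministic partners for the remaining roles, letting the typability of $P$ force it down the branch that $r$ prescribes (the paper packages your ``satisfiable accumulated guard'' argument into the inverse-subject-reduction \cref{lem:inv-subj-red}, quantified over consistent stores), and absorbing the optional iteration segments via $[r]\myequal\epsilon$ together with the congruence rule of the preorder.

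There is, however, one concrete step that fails as you state it. You claim that an internal choice attributed to a partner $\ptp{p_j}$ ($j\neq i$) is realised by ``a direct output \ldots\ which is typable against $\GT\proj{\ptp{p_j}}$''. It is not: \myrule{VSend} types $\Psend{\smv_j}{e}$ only against a single-branch selection of the form $\guard{\Tsend{\smv_j}{\sort_j}};\guard\Tend$, whereas membership of the assembled system in $\mathbb{I}$ requires the partner for $\ptp{p_j}$ to carry a pseudo-type whose guard removal is the \emph{whole} projection $\nfp{\GT\proj{\ptp{p_j}}}$, i.e.\ the internal choice with all of its branches. Recovering the discarded branches forces you to wrap the prescribed output in a conditional (so that \myrule{VIf} can merge the branch types), to drive that conditional by a value received on a fresh channel, and to exhibit a well-typed deterministic inhabitant of each unused branch. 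This is exactly why the paper introduces viable pseudo-types and the inhabitation result \cref{lem:inhabitant-types}, and constructs the partner as $\Preceive{s}{x}{\Pif{x}{\Psend{\smv_j}{\val v};\cdots}{Q}}$ in the case $\ptp q\neq\ptp p$ of \cref{lem-impl-covers-local}; you defer this to ``secondary bookkeeping'', but it is the main constructive content of the partner synthesis and your plan as written does not contain it. A smaller inaccuracy: \cref{prop:run-vs-red} is stated only in the direction from runs to executions, so it cannot be invoked to read $r'$ off the constructed execution; the paper instead derives $r'$ directly with the rules of \cref{fig:runs-realizations}.
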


The result above relies on an auxiliary result that shows that each
run $r$ of an specification $\envmv$ can be covered by a well-typed
implementation $\Gimp$ of $\G(\tuple \smv)$ in which $P$ plays role
$\ptp p$ (details are provided in \cref{app:imp-cover-spec},
\cref{lem-impl-covers-local}).  Since $\mysubseteq$ is transitive, we
conclude that any well-typed process is a WSI of a role in a
choreography.

\begin{cor}[WSI of well-typed processes]\label{cor:impl-covers-local}
 Let $\G(\tuple \smv)$ be a global type, $\ptp p \in \participants
 \G$. If $\pjdg[\C][\envmv][P][@][@]$ and $\psnames \smv p \in \dom
 \envmv$, $P$ is a whole-spectrum implementation of $\ptp p$.
\end{cor}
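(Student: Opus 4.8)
The plan is to obtain the corollary as a pure transitivity argument, feeding a single canonical specification of the session $\tuple \smv$ into the two coverage results \cref{lem:runs-types-covers-runs-global} and \cref{thm:impl-covers-local}, and then reading off whole-spectrum implementation straight from \cref{def:wsr}. First I would unpack the hypotheses: since $\psnames \smv p \in \dom \envmv$, write $\envmv = \envmv_1, \psnames \smv p : \pst$ with $\pst = \envmv(\psnames \smv p)$, and, working up to normal forms as the type system does, take $\pst$ in normal form. The fact I need to extract is that $\pst$ is a refinement of the projection of $\G$ on $\ptp p$, i.e. $\rmg{\pst} = \nfp{\GT \proj {\ptp p}}$. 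This holds because the session $\tuple \smv$ is the one governed by $\G(\tuple \smv)$ and the rules $\myrule{VReq}$/$\myrule{VAcc}$ introduce a participant's session only with a pseudo-type whose guard-erased normal form matches the corresponding projection, a property carried along typing derivations.

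Next I would assemble the ``full'' specification of the session. Let $\participants \GT = \{\ptp {p_0}, \ldots, \ptp {p_n}\}$ with $\ptp p = \ptp {p_i}$, put $\pst_i := \pst$, and for each $j \neq i$ choose a normal-form pseudo-type $\pst_j$ with $\rmg{\pst_j} = \nfp{\GT \proj {\ptp {p_j}}}$ (for instance the projection $\GT \proj {\ptp {p_j}}$ with all guards $\truek$, normalised; such a $\pst_j$ always exists). Set $\envmv_0 = \psnames \smv {p_0} : \pst_0, \ldots, \psnames \smv {p_n} : \pst_n, \tuple \smv : \queue{}{}$. By construction $\envmv_0$ meets the common premiss $\rmg{\pst_j} = \nfp{\GT \proj {\ptp {p_j}}}$ required by both \cref{lem:runs-types-covers-runs-global} and \cref{thm:impl-covers-local}. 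Applying \cref{lem:runs-types-covers-runs-global} to $\envmv_0$ gives $\imruns{\GT} \mysubseteq \typeruns{\tuple \smv}{\envmv_0}$. Applying \cref{thm:impl-covers-local} to $\envmv_0$ — whose typing hypothesis $\pjdg[\C][{\envmv_1, \psnames \smv p : \pst_i}][P][@][@]$ is exactly the corollary's assumption with $\pst_i = \pst$ — yields a set $\mathbb{I} = \{\Gimp \mid \pjdg[\C][{\envmv_0, \envmv''}][\Gimp][@][@] \ \wedge\ \iota(\ptp p) = P\}$ of $\iota$-implementations of $\G(\tuple \smv)$ at $\shname$ with $\typeruns{\tuple \smv}{\envmv_0} \mysubseteq \imrunsu{\mathbb{I}}$.

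Finally, since $\mysubseteq$ is transitive, I would chain the two inclusions into $\imruns{\GT} \mysubseteq \imrunsu{\mathbb{I}}$, which is precisely the statement that $\mathbb{I}$ covers $\GT$. As every $\Gimp \in \mathbb{I}$ satisfies $\iota(\ptp p) = P$, \cref{def:wsr} then certifies that $P$ is a whole-spectrum implementation of $\ptp p$, as required. I expect the only genuine obstacle to be the linking fact $\rmg{\pst} = \nfp{\GT \proj {\ptp p}}$: it is what lets the single environment $\envmv_0$ play the role of the ``big'' specification in both theorems while simultaneously furnishing the exact component $\pst_i$ against which $P$ is typed; everything else is the bookkeeping of the decomposition $\envmv = \envmv_1, \psnames \smv p : \pst$ and the two-step transitivity.
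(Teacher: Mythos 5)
Your proposal is correct and follows essentially the same route as the paper: the corollary is obtained by chaining \cref{lem:runs-types-covers-runs-global} and \cref{thm:impl-covers-local} through the transitivity of $\mysubseteq$ and then invoking \cref{def:wsr}. The only extra work you do is to make explicit the tacit linking fact $\rmg{\pst}=\nfp{\GT\proj{\ptp p}}$ and the construction of the intermediate full specification $\envmv_0$, both of which the paper leaves implicit but which are exactly the intended instantiation.
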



\section{Conclusion and Related Work}%
\label{sec:related}

WSI rejects implementations of a role that persistently avoids the
execution of some branches in a choreography.
Although WSI is defined as a relation between the traces of a global
type and those of its candidate implementations, it can be checked by
using multiparty session types.
As standard, the soundness of our type system --guaranteed by the
conformance of the typing (\cref{thm:conformance})-- ensures that the
behaviour of well-typed implementations follows the protocol described
by the global type (i.e., global types are interpreted as
constraints).
Moreover, we show that $(i)$ the sets of the projections of a global
type $\G$ preserves all the traces in $\G$
(\cref{lem:runs-types-covers-runs-global}); and $(ii)$ a well-typed
process can be used to obtain any trace of the projections of a global
type when interacting in a proper context
(\cref{thm:impl-covers-local}).
These two results and the fact that the covering relation is
transitive allow us to conclude that any well-typed process is a WSI
of a role in a choreography (\cref{cor:impl-covers-local}), i.e.,
global types are interpreted as obligations.

%
%

\subsection{Behavioural types}
In this paper we have followed the rich line of research fostering the
application of behavioural types to concurrent programs.
Examples of similar approaches are those to guarantee properties of
complex concurrent systems such as in the seminal work of Kobayashi on
deadlock freedom for the $\pi$-calculus~\cite{koba04,koba06} or
progress analysis for choreographies~\cite{cdyp16}, information flow
analysis~\cite{koba05,cd16}, design-by-contract for message-passing
systems~\cite{bhty10}, or self-adaptation~\cite{cdv15,cdp16}. Our type
system is more restrictive
than~\cite{HondaYC08,BettiniCDLDY08,DBLP:journals/mscs/BravettiZ09a,CP09,CairesV09}
as it rules out sound but not exhaustive implementations, which
previous type systems considered well-typed.
To the best of our knowledge, the only proposal dealing with complete
(i.e., exhaustive) realisations in a behavioural context
is~\cite{DBLP:journals/corr/abs-1203-0780} but this approach focuses
on non-deterministic implementation languages.
WSI coincides with projection
realisability~\cite{lanese2008bridging,salaun2009realizability,DBLP:journals/corr/abs-1203-0780}
when implementation languages feature non-deterministic internal
choices.
On the contrary, WSI provides a finer criterion to distinguish
deterministic implementations, as illustrated by the motivating
example in the introduction.

\cref{asec:communicating} illustrates that WSI can be recast into
other computational models or settings, e.g., in the context of
guarded automata. In such context, WSI admits a more succinct
characterisation, and hence, would appear as a more amenable
definition for WSI (when compared against the definition given in
Section 6). However, we remark that several technicalities in our
proposal arise when developing a static and modular technique for
ensuring WSI\@. MSTs are perhaps the most widely accepted technique for
developing static and modular nal verification techniques for
multiparty interactions, and to link their formal specification to
programming languages, even at the expenses of dense technical
definition (e.g., three different languages, static and run-time
semantics, well-formed conditions, etc).
It may be the case that similar techniques could be developed for
ensuring WSI implementations in other contexts, e.g., guarded
automata, but the starting point is not that obvious. It could be the
case that some model-checking solution could be developed but the
existence of a feasible and simpler approach is an interesting
question that requires future work (note that the definitions in
\cref{asec:communicating} are based on an existential quantification
over possible contexts, which could be problematic for obtaining an
effective procedure).

\subsection{WSI and subtyping}
The standard subtyping relation~\cite{GH} is not suitable for WSI
because the liberal elimination of internal choices prevents WSI\@.  For
example, standard subtyping for output prefixes allows process $
\Preceive{\smv}{v}{} \Psend{\smv_1}{e}$ to have local type
$\Treceive{\smv}{\sort};(\Tsend{\smv_1}{\sort}+\Tsend{\smv_2}{\sort})$,
and this clearly violates WSI\@.  The problematic aspect is that
subtyping allows for a liberal implementation of internal choices,
whereas WSI requires a precise implementation of all branches in a
choice.  This means that a WSI of a type $\TT$, may not be a WSI of a
subtype of $\TT$.
The investigation of suitable forms of subtyping for WSIs is scope for
future work.
To some extent our proposal is related to the fair subtyping approach
in~\cite{DBLP:conf/coordination/Padovani11}, where refinement is
studied under the fairness assumption: fair subtyping differs from
usual subtyping when considering infinite computations but WSI differs
from partial implementation also when considering finite computations.

\subsection{Languages for types and processes} Our
  notation for session types combines interaction and branching in one
  single operation. This choice was made for the sake of conciseness.
  The combination of interaction and branching in one single operation
  has been used in several works, albeit in different flavours
  (e.g.,~\cite{DBLP:conf/tgc/ChenBDHY11},~\cite{lt12},~\cite{DBLP:conf/tgc/BocchiDY12}
  and~\cite{DBLP:conf/esop/DenielouY12}).
In~\cite{lt12} the notation is more flexible than ours, allowing
(syntactically) global types of the form
\begin{equation}\label{ex_rel}
\Gint[p][q][\smv_1][d_1];\GT_1 + \Gint[r][s][\smv_2][d_2];\GT_2
\end{equation}
which our syntax does not allows (as it requires $\ptp p=\ptp
r$). However, the work in~\cite{lt12} rules out types as the one in
(\ref{ex_rel}) when $\ptp p\not =\ptp r$ by the well-formedness rules
(i.e., knowledge of choice).
Our notation simplifies well-formedness by having this requirement in
the syntax. Another line of
work~\cite{DBLP:conf/tgc/ChenBDHY11,DBLP:conf/tgc/BocchiDY12} uses a
syntax more restrictive than ours: in types of the form~\eqref{ex_rel},~\cite{DBLP:conf/tgc/ChenBDHY11,DBLP:conf/tgc/BocchiDY12} require
$\ptp p=\ptp r$ and $\ptp q=\ptp s$. Such restriction prevents types
of the form
\[
\Gint[p][q][\smv_1][d_1];\Gint[r][s][\smv_2][d_2]; \Gend +
\Gint[r][s][\smv_2][d_2]; \Gint[p][q][\smv_1][d_1] ; \Gend
\]
which are instead well-formed in our theory.  The notation introduced
in this paper differs from usual syntaxes for session types in two
other significant ways, which we highlight below.  Firstly, we use
iteration instead of recursion. In order to verify WSI, our proof
system needs to statically determine that the body of each iteration:
(1) happens at least once (if interactions in the body are not
executed the implementation is not WSI), and (2) terminates (to ensure
that interactions in the continuation of an iteration will be
executed). For this reason, we need \emph{finitary interactions} in
the language for processes. However, usual interpretation of recursive
types (where maximal fixpoints are assumed) is not finitary.
Therefore, the use of standard recursive session types would introduce
a mismatch between the semantics of programs (where we need finitary
iterations) with the usual interpretation of recursive types (where
maximal fixpoints are assumed).

Therefore, the static verification of WSI requires a form of recursion
more restrictive than the one in previous work on session
types~\cite{HondaYC08,BettiniCDLDY08}, where the number of iterations
is limited. This restriction is on the lines
of~\cite{DBLP:journals/corr/abs-1203-0780} that also considers finite
traces, and is close to for-loops of programming languages. The
extension of our theory with a more general form of iteration is scope
for future work.

Because of the chosen notation for interaction-choice, iteration
requires some care to ensure that all participants agree on whether
another iteration should be executed. We do this by requiring that
each iteration has exactly one controller-participant as
in~\cite{DBLP:journals/corr/abs-1203-0780}.

Secondly, we use sequential composition to allow clear delimitation of
structured constructs (loops and conditional statements) hence ease
static verification and its modularity.

We disregarded delegation in our framework because its addition would
greatly increase the complexity of our framework. A type system
guaranteeing WSI under delegation in our context should not be
problematic to device at the cost of a higher technical complexity.

\subsection{Choreography languages}
In the literature, the term `choreography' is used in two different ways.
The first one, which we follow, considers choreographies as abstract
specifications~\cite{w3c:cho} or formal models (e.g., Petri
Nets~\cite{DBLP:conf/icsoc/LohmannW11} or session
types~\cite{HondaYC08,BettiniCDLDY08}) where choices are
non-deterministic, in the sense that branches are not associated to
conditions (e.g., as in if-then-else statements).  The W3C's Web
Services Choreography Description Language (WS-CDL)~\cite{w3c:cho} is
an XML-based specification language where choices are enumerated
without expressing conditions. In the context of multiparty session
types~\cite{HondaYC08,BettiniCDLDY08}, choreographies are modelled as
global types which, by projection, produce abstractions of processes
(with non-deterministic choices in the sense specified above) that can
be used for static and dynamic verification. On this thread, we also
mention works that study partial vs complete
realisations~\cite{DBLP:conf/icsoc/LohmannW11,DBLP:journals/corr/abs-1203-0780}.%
Remarkably, WSI coincides with projection
realisability~\cite{lanese2008bridging,salaun2009realizability,DBLP:journals/corr/abs-1203-0780}
when the language adopted to implement choreographies features
non-deterministic internal choices.
On the contrary, WSI provides a finer criterion to distinguish
deterministic implementations, as illustrated by the motivating
example in the introduction.  The second usage is in the context of
choreographic programming (e.g.,~\cite{DBLP:conf/popl/CarboneM13},~\cite{DBLP:conf/birthday/Lluch-LafuenteN15},~\cite{preda})
where a choreography is a built in construct for programs (hence
deterministic) used to produce correct-by-design code. In such
contexts there is no need to resolve non-deterministic choices since
the language is deterministic.

\subsection{POP 2 specification}
We captured (to the best of our understanding) the most salient
aspects of the communication of the POP2 protocol according to its
official informal specification in the state-machines in pages 16--18
of RFC937~\cite{pop2}. We have focussed on the interaction structure
and left out non-functional aspects such as timeouts that are beyond
the scope of this paper. We have adopted a simplification on the
interaction structure for the sake of a simpler presentation: POP2 has
a branch `quit' from state `SIZE' (see~\cite{pop2}, page 16) whereas
our model $\exG \GT {SIZE}$ in \cref{sec:runGTp} does not have a
`quit' option. To quit the protocol from state $\exG \GT {SIZE}$ one
needs to go back to state $\exG \GT {FOLD}$ (which offers the option
`quit' as also the corresponding state in the RFC). In our formalism,
the end of a loop has to be signalled by exactly one message (in our
case message `fold'). We could have encoded this extra `quit' option
from state `SIZE' by either:
\begin{itemize}
\item extending our choreography language to allow a set of possible
  messages to signal termination of a loop, which would not add
  considerable challenge but would increase the technicalities in the
  presentation;
\item encoding the exact POP2 pattern by introducing intermediary
  states with additional messages which would have made the
  presentation of the protocol itself less clear.
\end{itemize}
Our goal was to demonstrate that our typing language can model
realistic communications, not to provide an analysis of POP2 on its
own sake (which we leave as future work), hence our simplification of
the protocol.


\section*{Acknowledgement}
We thank the anonymous reviewers for reading the paper carefully and providing thoughtful comments.

\bibliographystyle{abbrv}
\bibliography{bib}

\begin{thebibliography}{10}

\bibitem{BettiniCDLDY08}
L.~Bettini, M.~Coppo, L.~{D'Antoni}, M.~{De Luca}, M.~{Dezani-Ciancaglini}, and
  N.~Yoshida.
\newblock Global progress in dynamically interleaved multiparty sessions.
\newblock In {\em CONCUR}, volume 5201 of {\em LNCS}, pages 418--433. Springer,
  2008.

\bibitem{DBLP:conf/tgc/BocchiDY12}
L.~Bocchi, R.~Demangeon, and N.~Yoshida.
\newblock A multiparty multi-session logic.
\newblock In C.~Palamidessi and M.~D. Ryan, editors, {\em TGC 2012}, volume
  8191 of {\em LNCS}, pages 97--111. Springer, 2012.

\bibitem{bhty10}
L.~Bocchi, K.~Honda, E.~Tuosto, and N.~Yoshida.
\newblock A theory of design-by-contract for distributed multiparty
  interactions.
\newblock In {\em CONCUR}, volume 6269 of {\em LNCS}, pages 162--176, 2010.

\bibitem{ESOP14}
L.~Bocchi, H.~Melgratti, and E.~Tuosto.
\newblock Resolving non-determinism in choreographies.
\newblock In Z.~Shao, editor, {\em Programming Languages and Systems}, pages
  493--512, Berlin, Heidelberg, 2014. Springer Berlin Heidelberg.

\bibitem{DBLP:journals/mscs/BravettiZ09a}
M.~Bravetti and G.~Zavattaro.
\newblock A theory of contracts for strong service compliance.
\newblock {\em Mathematical Structures in Computer Science}, 19(3):601--638,
  2009.

\bibitem{pop2}
M.~Butler, J.~Postel, D.~Chase, J.~Goldberger, and J.~Reynoldsa.
\newblock Post office protocol - version 2.
\newblock RFC 918, available at \url{http://tools.ietf.org/html/rfc937},
  February 1985.

\bibitem{CairesV09}
L.~Caires and H.~T. Vieira.
\newblock Conversation types.
\newblock In {\em ESOP}, volume 5502 of {\em LNCS}, pages 285--300. Springer,
  2009.

\bibitem{cd16}
S.~Capecchi, I.~Castellani, and M.~Dezani{-}Ciancaglini.
\newblock Information flow safety in multiparty sessions.
\newblock {\em Mathematical Structures in Computer Science}, 26(8):1352--1394,
  2016.

\bibitem{DBLP:conf/popl/CarboneM13}
M.~Carbone and F.~Montesi.
\newblock Deadlock-freedom-by-design: multiparty asynchronous global
  programming.
\newblock In {\em {POPL} '13}, pages 263--274. {ACM}, 2013.

\bibitem{DBLP:journals/corr/abs-1203-0780}
G.~Castagna, M.~Dezani-Ciancaglini, and L.~Padovani.
\newblock On global types and multi-party session.
\newblock {\em Logical Methods in Computer Science}, 8(1), 2012.

\bibitem{CP09}
G.~Castagna and L.~Padovani.
\newblock Contracts for mobile processes.
\newblock In {\em {CONCUR} 2009}, number 5710 in LNCS, pages 211--228, 2009.

\bibitem{cdp16}
I.~Castellani, M.~Dezani{-}Ciancaglini, and J.~A. P{\'{e}}rez.
\newblock Self-adaptation and secure information flow in multiparty
  communications.
\newblock {\em Formal Asp. Comput.}, 28(4):669--696, 2016.

\bibitem{DBLP:conf/tgc/ChenBDHY11}
T.-C. Chen, L.~Bocchi, P.-M. Deni{\'e}lou, K.~Honda, and N.~Yoshida.
\newblock Asynchronous distributed monitoring for multiparty session
  enforcement.
\newblock In R.~Bruni and V.~Sassone, editors, {\em TGC}, volume 7173 of {\em
  Lecture Notes in Computer Science}, pages 25--45. Springer, 2011.

\bibitem{ChenH12}
T.-C. Chen and K.~Honda.
\newblock Specifying stateful asynchronous properties for distributed programs.
\newblock In {\em CONCUR}, 2012.

\bibitem{cdv15}
M.~Coppo, M.~Dezani{-}Ciancaglini, and B.~Venneri.
\newblock Self-adaptive multiparty sessions.
\newblock {\em Service Oriented Computing and Applications}, 9(3-4):249--268,
  2015.

\bibitem{cdyp16}
M.~Coppo, M.~Dezani{-}Ciancaglini, N.~Yoshida, and L.~Padovani.
\newblock Global progress for dynamically interleaved multiparty sessions.
\newblock {\em Mathematical Structures in Computer Science}, 26(2):238--302,
  2016.

\bibitem{rfc0822}
D.~Crocker.
\newblock Standard for the format of arpa internet text messages.
\newblock RFC 822, available at \url{www.ietf.org/rfc/rfc0822.txt}, February
  1982.

\bibitem{preda}
M.~Dalla~Preda, M.~Gabbrielli, S.~Giallorenzo, I.~Lanese, and J.~Mauro.
\newblock {Dynamic Choreographies: Theory And Implementation}.
\newblock {\em {Logical Methods in Computer Science}}, 13:1 -- 57, May 2017.

\bibitem{DBLP:conf/esop/DenielouY12}
P.~Deni{\'{e}}lou and N.~Yoshida.
\newblock Multiparty session types meet communicating automata.
\newblock In H.~Seidl, editor, {\em Programming Languages and Systems - 21st
  European Symposium on Programming, {ESOP} 2012, Held as Part of the European
  Joint Conferences on Theory and Practice of Software, {ETAPS} 2012, Tallinn,
  Estonia, March 24 - April 1, 2012. Proceedings}, volume 7211 of {\em Lecture
  Notes in Computer Science}, pages 194--213. Springer, 2012.

\bibitem{DBLP:conf/wsfm/Dezani-Ciancaglinid09}
M.~Dezani-Ciancaglini and U.~de'Liguoro.
\newblock Sessions and session types: An overview.
\newblock In C.~Laneve and J.~Su, editors, {\em WS-FM}, volume 6194 of {\em
  Lecture Notes in Computer Science}, pages 1--28. Springer, 2009.

\bibitem{fbs05}
X.~Fu, T.~Bultan, and J.~Su.
\newblock Realizability of conversation protocols with message contents.
\newblock {\em Int. J. Web Service Res.}, 2(4):68--93, 2005.

\bibitem{GH}
S.~Gay and M.~Hole.
\newblock {Types and Subtypes for Client-Server Interactions}.
\newblock In {\em Proc.~of ESOP'99}, volume 1576 of {\em {LNCS}}, pages 74--90.
  Springer-Verlag, 1999.

\bibitem{GH05}
S.~Gay and M.~Hole.
\newblock {Subtyping for Session Types in the Pi-Calculus}.
\newblock {\em Acta Informatica}, 42(2/3):191--225, 2005.

\bibitem{gt16}
R.~Guanciale and E.~Tuosto.
\newblock An abstract semantics of the global view of choreographies.
\newblock In {\em Proceedings 9th Interaction and Concurrency Experience, {ICE}
  2016, Heraklion, Greece, 8-9 June 2016.}, pages 67--82, 2016.

\bibitem{gt17}
R.~Guanciale and E.~Tuosto.
\newblock Semantics of global views of choreographies.
\newblock {\em Journal of Logic and Algebraic Methods in Programming}, 2017.
\newblock Revised and extended version of \cite{gt16}. Accepted for
  publication. To appear; version with proof available at
  \url{http://www.cs.le.ac.uk/people/et52/jlamp-with-proofs.pdf}.

\bibitem{HondaYC08}
K.~Honda, N.~Yoshida, and M.~Carbone.
\newblock Multiparty asynchronous session types.
\newblock In G.~C. Necula and P.~Wadler, editors, {\em POPL}, pages 273--284.
  ACM, 2008.

\bibitem{event}
R.~Hu, D.~Kouzapas, O.~Pernet, N.~Yoshida, and K.~Honda.
\newblock Type-safe eventful sessions in {J}ava.
\newblock In {\em ECOOP 2010}, volume 6183 of {\em LNCS}, pages 329--353.
  Springer-Verlag, 2010.

\bibitem{HuY16}
R.~Hu and N.~Yoshida.
\newblock Hybrid session verification through endpoint api generation.
\newblock In {\em Fundamental Approaches to Software Engineering}, pages
  401--418, Berlin, Heidelberg, 2016. Springer.

\bibitem{koba04}
A.~Igarashi and N.~Kobayashi.
\newblock A generic type system for the pi-calculus.
\newblock {\em Theor. Comput. Sci.}, 311(1-3):121--163, 2004.

\bibitem{w3c:cho}
N.~Kavantzas, D.~Burdett, G.~Ritzinger, T.~Fletcher, and Y.~Lafon.
\newblock \url{http://www.w3.org/TR/2004/WD-ws-cdl-10-20041217}, 2004.

\bibitem{koba05}
N.~Kobayashi.
\newblock Type-based information flow analysis for the pi-calculus.
\newblock {\em Acta Inf.}, 42(4-5):291--347, 2005.

\bibitem{koba06}
N.~Kobayashi.
\newblock A new type system for deadlock-free processes.
\newblock In {\em {CONCUR} 2006 - Concurrency Theory, 17th International
  Conference, {CONCUR} 2006, Bonn, Germany, August 27-30, 2006, Proceedings},
  volume 4137 of {\em Lecture Notes in Computer Science}, pages 233--247.
  Springer, 2006.

\bibitem{lanese2008bridging}
I.~Lanese, C.~Guidi, F.~Montesi, and G.~Zavattaro.
\newblock Bridging the gap between interaction-and process-oriented
  choreographies.
\newblock In {\em SEFM}, 2008.

\bibitem{lt12}
J.~Lange and E.~Tuosto.
\newblock Synthesising choreographies from local session types.
\newblock In M.~Koutny and I.~Ulidowski, editors, {\em CONCUR}, volume 7454 of
  {\em LNCS}, pages 225--239, 2012.

\bibitem{lty15}
J.~Lange, E.~Tuosto, and N.~Yoshida.
\newblock {From Communicating Machines to Graphical Choreographies}.
\newblock In {\em POPL}, pages 221--232, 2015.

\bibitem{DBLP:conf/birthday/Lluch-LafuenteN15}
A.~Lluch{-}Lafuente, F.~Nielson, and H.~R. Nielson.
\newblock Discretionary information flow control for interaction-oriented
  specifications.
\newblock In {\em Logic, Rewriting, and Concurrency - Essays dedicated to
  Jos{\'{e}} Meseguer on the Occasion of His 65th Birthday}, volume 9200 of
  {\em LNCS}, pages 427--450. Springer, 2015.

\bibitem{DBLP:conf/icsoc/LohmannW11}
N.~Lohmann and K.~Wolf.
\newblock Decidability results for choreography realization.
\newblock In G.~Kappel, Z.~Maamar, and H.~R.~M. Nezhad, editors, {\em ICSOC},
  volume 7084 of {\em Lecture Notes in Computer Science}, pages 92--107.
  Springer, 2011.

\bibitem{mil89}
R.~Milner.
\newblock {\em {Communication and Concurrency}}.
\newblock Prentice Hall, 1989.

\bibitem{NeubauerThiemann04}
M.~Neubauer and P.~Thiemann.
\newblock {An Implementation of Session Types}.
\newblock In {\em Practical Aspects of Declarative Languages (PADL)}, volume
  3057 of {\em LNCS}, pages 56--70. Springer, 2004.

\bibitem{NeykovaHYA18}
R.~Neykova, R.~Hu, N.~Yoshida, and F.~Abdeljallal.
\newblock A session type provider: compile-time {API} generation of distributed
  protocols with refinements in f{\#}.
\newblock In {\em {CC} 2018}, pages 128--138. {ACM}, 2018.

\bibitem{DBLP:conf/coordination/Padovani11}
L.~Padovani.
\newblock Fair subtyping for multi-party session types.
\newblock In {\em COORDINATION}, volume 6721 of {\em LNCS}, pages 127--141,
  2011.

\bibitem{salaun2009realizability}
G.~Sala{\"u}n and T.~Bultan.
\newblock Realizability of choreographies using process algebra encodings.
\newblock In {\em Integrated Formal Methods}, 2009.

\bibitem{DBLP:conf/wsfm/SuBFZ07}
J.~Su, T.~Bultan, X.~Fu, and X.~Zhao.
\newblock Towards a theory of web service choreographies.
\newblock In M.~Dumas and R.~Heckel, editors, {\em WS-FM}, volume 4937 of {\em
  Lecture Notes in Computer Science}, pages 1--16. Springer, 2007.

\end{thebibliography}

\newpage
\appendix
\section{Auxiliary Properties of Typing}\label{sec:aux}

In this section we provide technical details and auxiliary properties
of the type system, which are used in the proof of the main results of
the paper.

\subsection{Normal form \texorpdfstring{$\nform[\_]{\_}$}{}}%
\label{app:normal-form}

\newcommand{\weight}[1]{\omega(#1)}

Below we state useful results about the normal form of
pseudo-types. We start by introducing a well-founded relation on
pseudo-types, which will be used for inductive proofs. The relation
$<$ on pseudo-types is defined in terms of the following function
$\omega :\pst \to \nat$:

 \[
 \begin{array}{r@{\ =\ }l}
 \weight{\guard[\C] \Tend} & 1
 \\
 \weight{\Ipst[e_i]{i \in I}{\smv_i}{\sort_i}{\pst_i}} &1 + \max\ {\{\weight{\pst_i}\}}_{i\in I}
 \\
 \weight{\Epst[e_i]{i \in I}{\smv_i}{\sort_i}{\pst_i}} &1 + \max\ {\{\weight{\pst_i}\}}_{i\in I}
 \\
 \weight{\pst_1;\pst_2} &2*\weight{\pst_1} + \weight{\pst_2}
 \\
 \weight{\Tdef\pst} & 1 + \weight{\pst}
 \end{array}
 \]

 We say $\pst_1<\pst_2$ iff $\weight{\pst_1} < \weight{\pst_2}$. It is
 straightforward to check that $\weight\pst > 0$ for all
 $\pst$. Consequently, $(\pst,<)$ is well-founded.

\begin{lem}\label{lem:nf-termination}
  $\nfp{\pst}$ is defined for any $\pst$ (i.e., it terminates).
\end{lem}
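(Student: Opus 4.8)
The plan is to prove termination by exhibiting a well-founded measure that strictly decreases along every recursive call of the procedure $\nform[\_]{\_}$, and then to invoke well-founded induction. The obvious candidate is the weight $\weight{\cdot}$ just introduced: since it takes values in $\nat$ and $(\pst,<)$ is well-founded, it suffices to check that, for each of the nine defining equations in \cref{fig:normalization-pseudo}, every occurrence of $\nform[\_]{\_}$ on the right-hand side is applied to a pseudo-type of strictly smaller weight than the argument on the left. A preliminary remark makes this clean: $\weight{\cdot}$ depends only on the pseudo-type, not on the guard parameter $\C$ of $\nform[\C]{\_}$, so changing the accumulated guard never affects the measure and only the structural argument matters.

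First I would dispatch the routine clauses. Clause $(1)$ has no recursive call, so $\nform[\C]{\guard[\C']{\Tend}}$ is immediately defined. In clauses $(2)$ and $(3)$ the recursive calls are on the immediate branches $\pst_i$ (for $i\in I\setminus J$); writing $\weight{\psi}=1+\max_{i\in I}\weight{\pst_i}$ for the weight of such a choice $\psi$, each call decreases the measure because $\weight{\pst_i}\le\max_{i\in I}\weight{\pst_i}<\weight{\psi}$. In clause $(4)$ the call is on $\pst$, and $\weight{\guard[\C']{\Tend};\pst}=2\cdot 1+\weight{\pst}>\weight{\pst}$. Clause $(9)$ makes a single call $\nform{\pst}$ with $\weight{\pst}<1+\weight{\pst}=\weight{\Tdef{\pst}}$, and then merely wraps the already-computed normal form in $\Tdef{(\cdot)}$, introducing no further call.

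The interesting clauses are those that reshape sequential compositions, where the asymmetry $\weight{\pst_1;\pst_2}=2\weight{\pst_1}+\weight{\pst_2}$ is precisely what is needed. For clause $(7)$ I would compute $\weight{(\pst;\pst');\pst''}=4\weight{\pst}+2\weight{\pst'}+\weight{\pst''}$ against $\weight{\pst;(\pst';\pst'')}=2\weight{\pst}+2\weight{\pst'}+\weight{\pst''}$, so reassociation strictly lowers the measure because $\weight{\pst}>0$. For clauses $(5)$ and $(6)$, distributing the trailing $\pst$ through a choice with branches $\pst_i$ turns a weight of $2\bigl(1+\max_i\weight{\pst_i}\bigr)+\weight{\pst}=2+2\max_i\weight{\pst_i}+\weight{\pst}$ into $1+\max_i(2\weight{\pst_i}+\weight{\pst})=1+2\max_i\weight{\pst_i}+\weight{\pst}$, a strict decrease by one. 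Finally clause $(8)$ fires three calls, on $\pst$ (to test the guard), on $\Tdef{\pst}$, and on $\pst'$; since $\weight{\Tdef{\pst};\pst'}=2+2\weight{\pst}+\weight{\pst'}$ strictly exceeds each of $\weight{\pst}$, $\weight{\Tdef{\pst}}=1+\weight{\pst}$, and $\weight{\pst'}$, all three calls decrease the measure.

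Having checked the strict decrease in every clause, termination would follow by well-founded induction on $(\pst,<)$: assuming $\nform[\_]{\pst'}$ is defined for all $\pst'<\pst$, the evaluation of $\nform[\C]{\pst}$ selects one clause whose right-hand side invokes $\nform$ only on strictly smaller arguments, each defined by the induction hypothesis; hence $\nform[\C]{\pst}$ is defined. In particular $\nfp{\pst}=\nform[\truek]{\pst}$ is defined for every $\pst$, as claimed. I expect no genuine obstacle; the one point deserving care — and the reason the weight of a sequence doubles its first component — is clauses $(5)$–$(7)$, where a naive structural size would fail to decrease under reassociation and branch distribution, so the argument turns on choosing $\weight{\cdot}$ with exactly this asymmetry.
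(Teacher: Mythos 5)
Your proof is correct and follows essentially the same route as the paper's: both use the weight $\omega$ as a variant function and verify, clause by clause, that every recursive call of $\nform[\_]{\_}$ is on a strictly smaller argument, with the asymmetric weight $2\weight{\pst_1}+\weight{\pst_2}$ doing the work in clauses $(5)$--$(7)$. Your write-up is in fact slightly more complete, since the paper only displays the interesting cases.
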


\begin{proof} The proof follows by showing that the function
  $f(\C,\pst) = \weight\pst$ is a variant function for the definition
  of $\nform[\C]\pst$ in~\cref{fig:normalization-pseudo}. We proceed
  by case analysis on the equations in~\cref{fig:normalization-pseudo}
  (we illustrate the interesting cases below).
  \begin{itemize}[align=left]
  \item[(2)]
    $f(\C,{\Ipst[\C_i]{i \in I}{\smv_i}{\sort_i}{\pst_i}}) =
    1 + \max\ {\{\weight{\pst_i}\}}_{i\in I} > \weight{\pst_i} =
    f(\C_i\wedge\C,\pst_i)$
    for all $i \in I\setminus J$.
  \item[(3)] Analogous to $(2)$.
  \item[(5)]
    \[
    \begin{array}{lcl}
      f(\C,{\big(\Ipst[\C_i]{i \in I}{\smv_i}{\sort_i}{\pst_i}\big);\pst})
      & = &
      \weight{\big(\Ipst[\C_i]{i \in I}{\smv_i}{\sort_i}{\pst_i}\big);\pst}
      \\
      & = &
      2*(1 + \max\ {\{\weight{\pst_i}\}}_{i\in I}) + \weight\pst
      \\
      & = &
      2 +  \max\ {\{2* \weight{\pst_i} + \weight\pst\}}_{i\in I}
      \\
      & > &
      1 + \max\ {\{2*\weight{\pst_i}+ \weight\pst\}}_{i\in I}
      \\
      & = &
      \weight{\big(\Ipst[\C_i]{i \in I}{\smv_i}{\sort_i}{\pst_i;\pst}\big)}
      \\
      & = & f(\C,{\big(\Ipst[\C_i]{i \in I}{\smv_i}{\sort_i}{\pst_i;\pst}\big)})
      \\
    \end{array}
    \]
  \item[(6)]  Analogous to $(5)$.

  \item[(7)]
    \begin{align*}
    \begin{array}{lcl}
      f(\C,(\pst_1;\pst_2);\pst_3)
      &= &\weight{(\pst_1;\pst_2);\pst_3} \\
      &=& 2*(2*\weight{\pst_1}+\weight{\pst_2})+\weight{\pst_3} \\
      &=& 4*\weight{\pst_1}+ 2* \weight{\pst_2}+\weight{\pst_3} \\
      &>& 2*\weight{\pst_1}+ 2* \weight{\pst_2}+\weight{\pst_3} \\
      &=& \weight{\pst_1;(\pst_2;\pst_3)}\\
      &=& f(\C,{\pst_1;(\pst_2;\pst_3)}) \\
      \end{array}
      \\[\dimexpr-1.0\baselineskip+\dp\strutbox]&\qedhere
    \end{align*}
  \end{itemize}
\end{proof}

\begin{lem}\label{lem:nf-stronger-condition}
  For all $\C,\C'$, $\pst$, if $\nform[\C]{\pst} = \guard[\C'']\Tend$,
  then $\nform[\C\wedge\C']{\pst} = \guard[\C''\wedge\C']\Tend$.
\end{lem}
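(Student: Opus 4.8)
The plan is to prove the statement by well-founded induction on the pseudo-type $\pst$ with respect to the measure $\weight{\cdot}$ introduced for \cref{lem:nf-termination}, the induction hypothesis being quantified over \emph{all} guards $\C,\C'$ for every pseudo-type of strictly smaller weight. This is legitimate because, as established in the proof of \cref{lem:nf-termination}, every recursive invocation of $\nform[\_]{\_}$ in \cref{fig:normalization-pseudo} is on a pseudo-type of strictly smaller weight; hence, whenever I unfold the definition of $\nform[\C]{\pst}$ and of $\nform[\C\land\C']{\pst}$ along the same defining clause, the recursive calls that appear differ only in the accumulated guard and fall under the inductive hypothesis. Throughout, I treat guards up to logical equivalence, so that $\land$ is commutative, associative, idempotent and absorbing for $\falsek$; this is consistent with the use of $\iff$ in the definition of the index set $J$ in clauses $(2)$ and $(3)$.

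The case analysis follows the clauses of \cref{fig:normalization-pseudo}, and the first observation is that several of them can never produce an end-form output $\guard[\C'']\Tend$; for those the premiss $\nform[\C]{\pst}=\guard[\C'']\Tend$ is false and there is nothing to prove. The genuine base cases are the following. In clause $(1)$, $\nform[\C]{\guard[\C_0]\Tend}=\guard[\C\land\C_0]\Tend$, so $\C''=\C\land\C_0$, and $\nform[\C\land\C']{\guard[\C_0]\Tend}=\guard[\C\land\C'\land\C_0]\Tend=\guard[\C''\land\C']\Tend$ directly. In clauses $(2)$ and $(3)$ the output is an end form only in the sub-case $I=J$, where it equals $\guard[\falsek]\Tend$; since every branch is already falsified by $\C$ (i.e. $\C\land\C_i\iff\falsek$ for all $i\in I$), strengthening the context to $\C\land\C'$ can only enlarge the falsified set $J$, so $I=J$ is preserved and the output is again $\guard[\falsek]\Tend=\guard[\falsek\land\C']\Tend=\guard[\C''\land\C']\Tend$, as required.

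For the recursive clauses I invoke the inductive hypothesis. In clause $(4)$, $\nform[\C]{\guard[\C_0]\Tend;\pst_2}=\nform[\C\land\C_0]{\pst_2}=\guard[\C'']\Tend$; applying the hypothesis to $\pst_2$ (of smaller weight) with accumulated guard $\C\land\C_0$ and extra guard $\C'$ gives $\nform[\C\land\C_0\land\C']{\pst_2}=\guard[\C''\land\C']\Tend$, which by commutativity equals $\nform[\C\land\C']{\guard[\C_0]\Tend;\pst_2}$. The restructuring clauses $(5)$, $(6)$, $(7)$ leave the accumulated guard $\C$ untouched and merely rewrite $\pst$ into a strictly smaller pseudo-type $\mathsf{q}$ with $\nform[\C]{\pst}=\nform[\C]{\mathsf{q}}$; the \emph{same} rewriting under the stronger guard yields $\nform[\C\land\C']{\pst}=\nform[\C\land\C']{\mathsf{q}}$, and the hypothesis applied to $\mathsf{q}$ closes these cases. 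Finally, in clauses $(8)$ and $(9)$ an end form is produced only when $\nform[\C]{\pst_1}=\guard[\alpha]\Tend$ for the loop body $\pst_1$, in which case $\C''=\alpha$; the hypothesis on $\pst_1$ gives $\nform[\C\land\C']{\pst_1}=\guard[\alpha\land\C']\Tend$, which is again an end form, so the same branch of the case definition fires for $\nform[\C\land\C']{\pst}$ and returns $\guard[\alpha\land\C']\Tend=\guard[\C''\land\C']\Tend$. When instead $\nform[\C]{\pst_1}$ is not an end form, the outputs of $(8)$ and $(9)$ are a sequential composition headed by a loop and a loop, respectively, hence not of the shape $\guard[\C'']\Tend$, so the premiss fails.

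The only genuinely delicate point is the bookkeeping: I must verify, case by case, that the branch of the case definition selected for $\nform[\C\land\C']{\pst}$ coincides with the one selected for $\nform[\C]{\pst}$. This is immediate for the syntax-directed clauses $(1)$ and $(4)$--$(7)$, but for the choice clauses $(2)$, $(3)$ and the loop clauses $(8)$, $(9)$ it rests on the two monotonicity facts singled out above, namely that strengthening the accumulated guard can only enlarge the falsified-branch set $J$ (so $I=J$ is preserved) and, via the inductive hypothesis, that an end-form normal form remains an end form under a stronger guard. I therefore expect this stability-of-branch argument to be the main obstacle, whereas the algebraic manipulations of guards are routine.
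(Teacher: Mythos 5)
Your proof is correct and follows exactly the route the paper takes (well-founded induction on the weight measure $\weight{\cdot}$ with a case analysis along the clauses of \cref{fig:normalization-pseudo}); the paper merely states this strategy in one line, whereas you supply the details, including the two points the paper leaves implicit — that strengthening the accumulated guard can only enlarge the falsified index set $J$, and that guards must be compared up to the AC-laws of $\land$ for the claimed equality to hold.
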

\begin{proof}
  By well-founded induction on $(\pst,<)$. The proof follows by case
  analysis on the structure of $\pst$.
\end{proof}

\begin{lem}\label{lem:nf-aux}
 For all $\C,\C'$, $\pst$, $\nform[\C]{\nform[\C']{\pst}} =
 \nform[\C\wedge\C']{\pst}$.
\end{lem}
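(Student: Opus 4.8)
The plan is to prove the identity by well-founded induction on $(\pst,<)$, using the weight function $\weight\cdot$ and the well-foundedness of $<$ just established, together with \cref{lem:nf-termination} so that every expression $\nform[\C]{\cdot}$ below is defined. Throughout I treat the guards appearing on the two sides up to logical equivalence (equivalently, up to associativity and commutativity of $\land$), which is the same convention under which the discarded index sets $J$ in clauses (2) and (3) of \cref{fig:normalization-pseudo} are defined; so, for instance, $\guard[\C\land(\C'\land\C'')]\Tend$ and $\guard[(\C\land\C')\land\C'']\Tend$ count as equal. The proof proceeds by case analysis on the shape of $\pst$, with the induction hypothesis taken universally in the guard parameters. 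The leaf case $\pst=\guard[\C'']\Tend$ is immediate from clause (1): both sides reduce to $\guard[\C\land\C'\land\C'']\Tend$.

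For an internal choice $\pst=\Ipst[\C_i]{i \in I}{\smv_i}{\sort_i}{\pst_i}$ (the external case (3) being identical) I would first normalise with $\C'$, discarding the branches $J'=\{i : \C'\land\C_i \iff \falsek\}$, and then normalise with $\C$, discarding $K=\{i \in I\setminus J' : \C\land\C'\land\C_i \iff \falsek\}$; on the other side, normalising directly with $\C\land\C'$ discards $J''=\{i : \C\land\C'\land\C_i \iff \falsek\}$. The key observation is that $J'\subseteq J''$ and $J''=J'\cup K$, so the surviving index set $I\setminus J''$ coincides with $(I\setminus J')\setminus K$; in particular $I=J''$ exactly when both sides collapse to $\guard[\falsek]\Tend$. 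When branches survive, the $i$-th continuation on the left is $\nform[(\C_i\land\C')\land\C]{\nform[\C_i\land\C']{\pst_i}}$, which by the induction hypothesis (applicable since $\pst_i<\pst$) equals $\nform[\C_i\land\C'\land\C]{\pst_i}$, matching the $i$-th continuation $\nform[\C_i\land(\C\land\C')]{\pst_i}$ on the right, the guards agreeing up to the convention above.

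For a sequential composition $\pst=\pst_1;\pst_2$ I would split on the head $\pst_1$. If $\pst_1=\guard[\C_0]\Tend$, clause (4) turns both sides into a normalisation of $\pst_2$, and the induction hypothesis on $\pst_2$ closes the case. If $\pst_1$ is a choice (clauses (5)/(6)) or itself a sequential composition (clause (7)), the defining equation rewrites $\pst_1;\pst_2$ to a term $\pst^\sharp$ with $\weight{\pst^\sharp}<\weight\pst$ (the very weight inequalities used in the proof of \cref{lem:nf-termination}), so that $\nform[\C']{\pst}=\nform[\C']{\pst^\sharp}$ and $\nform[\C\land\C']{\pst}=\nform[\C\land\C']{\pst^\sharp}$, and the induction hypothesis on $\pst^\sharp$ finishes. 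The delicate sub-case is a loop head $\pst_1=\Tdef{\pst_0}$, governed by clause (8), and symmetrically the top-level loop $\pst=\Tdef{\pst_0}$, governed by clause (9).

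These loop cases are where I expect the real work, because clauses (8) and (9) branch on whether the normalised loop body is of the shape $\guard[\cdot]\Tend$, and a priori the normalisation performed first with $\C'$ and then with $\C$ could take a different branch of this test than the normalisation performed directly with $\C\land\C'$. I would reconcile the two as follows. If $\nform[\C']{\pst_0}=\guard[\C_1]\Tend$, then by \cref{lem:nf-stronger-condition} $\nform[\C\land\C']{\pst_0}=\guard[\C_1\land\C]\Tend$ is again of end-shape, so both orders take the collapsing branch and reduce to $\guard[\C\land\C_1]\Tend$. Otherwise $\nform[\C']{\pst_0}$ is not of end-shape; the induction hypothesis on $\pst_0$ then gives $\nform[\C]{\nform[\C']{\pst_0}}=\nform[\C\land\C']{\pst_0}$, and I denote this common value $Q$. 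Since the left-hand side tests $\nform[\C]{\nform[\C']{\pst_0}}$ and the right-hand side tests $\nform[\C\land\C']{\pst_0}$, and these are the same term $Q$, both sides take the same branch of clause (8) (resp. (9)): if $Q$ is of end-shape both return $Q$, and otherwise both return $\Tdef{Q}$ (clause (9)) or $\Tdef{Q};\nform[\C\land\C']{\pst_2}$ (clause (8), the continuation $\pst_2$ being closed by the induction hypothesis). Thus the identity holds in every case, and the main obstacle is precisely this coordination of the two normalisation orders across the end-shape test, which is exactly what \cref{lem:nf-stronger-condition} is designed to handle alongside the induction hypothesis.
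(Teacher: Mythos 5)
Your proof is correct and takes essentially the same route as the paper's: well-founded induction on the weight order $(\pst,<)$, case analysis on the shape of $\pst$ (with a sub-analysis on the head of a sequential composition, re-associating via clause (7) to a term of strictly smaller weight), and \cref{lem:nf-stronger-condition} to reconcile the end-shape tests in clauses (8) and (9) for iteration. If anything, you are slightly more explicit than the paper about the sub-case where $\nform[\C']{\pst_0}$ is not of end-shape but $\nform[\C]{\nform[\C']{\pst_0}}$ is, a branch the paper's equational chain passes over silently.
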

\begin{proof}
  By well-founded induction on $(\pst,<)$. The proof follows by case
  analysis on the structure of $\pst$.
  \begin{itemize}
  \item $\pst = \guard[\C''] \Tend $: Then,
    \[
    \begin{array}{rll}
      \nform[\C]{\nform[\C']\pst}
      &
      =
      \nform[\C]{\nform[\C']{\guard[\C'']\Tend}}
      &
      \textit{by def.\ of $\pst$}
      \\
      &
      =
      \nform[\C]{\guard[\C'\wedge\C'']\Tend}
      &
      \textit{by \cref{fig:normalization-pseudo}(1)}
      \\
      &
      =
      {\guard[\C\wedge\C'\wedge\C'']\Tend}
      &
      \textit{by \cref{fig:normalization-pseudo}(1)}
      \\
      &
      =
      \nform[\C\wedge\C']{\guard[\C'']\Tend}
      &
      \textit{by \cref{fig:normalization-pseudo}(1)}
      \\
      &
       =
      \nform[\C\wedge\C']{\pst}
      &
      \textit{by def.\ of $\pst$}
    \end{array}
    \]
  \item $\pst = \Ipst[e_i]{i \in I}{\smv_i}{\sort_i}{\pst_i}$: Then,
    \[
    \begin{array}{rll}
      \nform[\C]{\nform[\C']\pst}
      &
      =
      \nform[\C]{\nform[\C'] {\Ipst[e_i]{i \in I}{\smv_i}{\sort_i}{\pst_i}}}
      &
      \textit{ by def.\ of $\pst$}
      \\
      &
      =
      \nform[\C]{\Ipst[e_i\wedge\C']{i \in I\setminus J'}{\smv_i}{\sort_i}
        {\nform[\C_i\wedge\C']{\pst_i}}}
      &
      \textit{by \cref{fig:normalization-pseudo}(2)}
      \\
      \multicolumn 3 r { J' = \{i \in I \sst (\C \land \C_i')
        \iff \falsek\} \neq I}
      \\
      &
      =
      {\Ipst[e_i\wedge\C'\wedge\C]{i \in (I\setminus J')\setminus J}{\smv_i}
        {\sort_i}{\nform[\C]{\nform[\C_i\wedge\C']{\pst_i}}}}
      &
      \textit{by \cref{fig:normalization-pseudo}(2)}
      \\
      \multicolumn 3 r { J = \{i \in I\setminus J' \sst (\C \land \C_i')
        \iff \falsek\} \neq I\setminus J'}
      \\
      &
      =
      {\Ipst[e_i\wedge\C'\wedge\C]{i \in (I\setminus J')\setminus J}{\smv_i}
        {\sort_i}{\nform[\C_i\wedge\C'\wedge\C]{\pst_i}}}
      &
      {\it by\ ind.\ hyp.}
      \\
      &
      =
      \nform[\C\wedge\C']{\Ipst[e_i]{i \in I}{\smv_i}{\sort_i}{\pst_i}}
      &
      \textit{by \cref{fig:normalization-pseudo}(2)}
      \\
      &
      =
      \nform[\C\wedge\C']{\pst}
      &
      \textit{by def.\ of $\pst$}
    \end{array}
    \]
    The cases in which $I =J'$ or $I\setminus J' = J$ follow
    immediately because $\nform[\C]{\nform[\C']\pst} =
    \guard[\falsek]\Tend = \nform[\C\wedge\C']{\pst}$.

  \item
    $\pst = \Epst[e_i]{i \in I}{\smv_i}{\sort_i}{\pst_i}$: It follows
    analogously to the previous one.
  \item
    $\pst = \pst_1 ; \pst_2 $: We proceed by case analysis on the
    structure of $\pst_1$. The cases $\pst_1 = \guard[\C''] \Tend$,
    $\pst_1 = \Ipst[e_i]{i \in I}{\smv_i}{\sort_i}{\pst_i}$ and
    $\pst_1 = \Epst[e_i]{i \in I}{\smv_i}{\sort_i}{\pst_i}$ follow
    analogously to the previous cases. The case for $\pst_1 =
    \pst';\pst''$ is as follows:
    \[
    \begin{array}{rll}
      \nform[\C]{\nform[\C']\pst}
      &
       =
      \nform[\C]{\nform[\C']{(\pst';\pst'');\pst_2}}
      &
      \textit{by def.\ of $\pst$}
      \\
      &
      =
      \nform[\C]{\nform[\C']{\pst';(\pst'';\pst_2)}}
      &
      \textit{by \cref{fig:normalization-pseudo}(7)}
      \\
      &
      =
      \nform[\C\wedge\C']{\pst';(\pst'';\pst_2)}
      &
      \textit{by ind.\ hyp.}
      \\
      &
      =
      \nform[\C\wedge\C']{(\pst';\pst'');\pst_2}
      &
      \textit{by \cref{fig:normalization-pseudo}(7)}
      \\
      &
       =
      \nform[\C\wedge\C']{\pst}
      &
      \textit{by def.\ of $\pst$}
    \end{array}
    \]
    The case $\pst_1 = \Tdef{(\pst')}$, is as follows
    \[
    \nform[\C]{\nform[\C']\pst}
    =
    \nform[\C]{\nform[\C']{\Tdef{(\pst')};\pst_2}}
    \]
    There are two cases, when $\nform[\C']{\pst'} =
    {\guard[\C'']\Tend}$, by \cref{fig:normalization-pseudo}(8) 
    \[
    \nform[\C]{\nform[\C']\pst}
    = \guard[\C\wedge\C'']\Tend
    \]
    Then, the proof is completed by~\cref{lem:nf-stronger-condition}.
    Otherwise ($\nform[\C']{\pst'} \neq \guard[\C'']\Tend$), we
    proceed as follows
    \[
    \begin{array}{rll}
      \nform[\C]{\nform[\C']\pst}
      &
      =
      \nform[\C]{\nform[\C']{\Tdef{(\pst')};\pst_2}}
      &
      \textit{by def.\ of $\pst$}
      \\
      &
      =
      \nform[\C]{\nform[\C']{\Tdef{(\pst')}};\nform[\C']{\pst_2}}
      &
      \textit{by \cref{fig:normalization-pseudo}(9)}
      \\
      &
      =
      \nform[\C]{\Tdef{\nform[\C']{\pst'}};\nform[\C']{\pst_2}}
      &
      \textit{by \cref{fig:normalization-pseudo}(11)}
      \\
      &
      =
      \nform[\C]{\Tdef{\nform[\C']{\pst'}}};\nform[\C]{\nform[\C']{\pst_2}}
      &
      \textit{by \cref{fig:normalization-pseudo}(9)}
      \\
      &
      =
      \Tdef{\nform[\C]{\nform[\C']{\pst'}}};\nform[\C]{\nform[\C']{\pst_2}}
      &
      \textit{by \cref{fig:normalization-pseudo}(11)}
      \\
      &
      =
      \Tdef{\nform[\C\wedge\C']{\pst'}};\nform[\C\wedge\C']{\pst_2}
      &
      \textit{by ind.\ hyp.}
      \\
      &
      =
      \nform[\C\wedge\C']{\Tdef{(\pst')}};\nform[\C\wedge\C']{\pst_2}
      &
      \textit{by \cref{fig:normalization-pseudo}(11)}
      \\
      &
      =
      \nform[\C\wedge\C']{\Tdef{(\pst')};\pst_2}
      &
      \textit{by \cref{fig:normalization-pseudo}(9)}
      \\
      &
      =
      \nform[\C\wedge\C']{\pst}
      &
      \textit{by def.\ of $\pst$}
    \end{array}
    \]

    \item $\pst = \Tdef{\pst_1}$: Then,
     \[
     \nform[\C]{\nform[\C']\pst}
     =
     \nform[\C]{\nform[\C']{\Tdef{\pst_1}}}
     \]

     If $\nform[\C']{\Tdef{\pst_1}} =\guard[\C'']\Tend$, the proof
     follows by~\cref{lem:nf-stronger-condition}, otherwise:
    \begin{align*}
     \begin{array}{rll}
       \nform[\C]{\nform[\C']\pst}
       &
       =
       \nform[\C]{\nform[\C']{\Tdef{\pst_1}}}
       &
       \textit{by def.\ of $\pst$}
       \\
       &
       =
       \nform[\C]{\nform[\C']{\Tdef{\pst_1}}}
       &
       \textit{by \cref{fig:normalization-pseudo}(9)}
       \\
       &
       =
       \nform[\C]{\Tdef{\nform[\C']{\pst_1}}}
       &
       \textit{by \cref{fig:normalization-pseudo}(11)}
       \\
       &
       =
       \Tdef{\nform[\C]{\nform[\C']{\pst_1}}}
       &
       \textit{by \cref{fig:normalization-pseudo}(13)}
       \\
       &
       =
       \Tdef{\nform[\C\wedge\C']{\pst_1}}
       &
       {\it by\ ind.\ hyp.}
       \\
       &
       =
       \nform[\C\wedge\C']{\Tdef{\pst_1}}
       &
       \textit{by \cref{fig:normalization-pseudo}(13)}
       \\
       &
       =
       \nform[\C\wedge\C']{\pst}
       &
       \textit{by def.\ of $\pst$}
     \end{array}
     \\[\dimexpr-1.0\baselineskip+\dp\strutbox]&\qedhere
     \end{align*}
  \end{itemize}
\end{proof}

\begin{lem}\label{lem:normalisation-with-false-cond}
  If $\C \iff \falsek$ then $\nform\pst = \guard[\C']\Tend$ and $\C'
  \iff \falsek$.
\end{lem}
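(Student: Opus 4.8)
The plan is to prove the statement by well-founded induction on $(\pst,<)$, using the measure $\weight\cdot$ introduced above (exactly as in the proofs of \cref{lem:nf-stronger-condition} and \cref{lem:nf-aux}), with a case analysis on the syntactic shape of $\pst$ and appeal to the defining equations of $\nform[\_]{\_}$ in \cref{fig:normalization-pseudo}. I fix an arbitrary $\C$ with $\C \iff \falsek$. The single observation that drives every case is that $\C \iff \falsek$ forces $\C \land \C'' \iff \falsek$ for any $\C''$, so inconsistency is inherited by every guard the procedure produces.

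First I would dispatch the base and choice cases directly from the equations. For $\pst = \guard[\C'']\Tend$, equation~(1) gives $\nform\pst = \guard[\C \land \C'']\Tend$, and $\C \land \C'' \iff \falsek$, as required. For an internal choice $\pst = \Ipst[\C_i]{i \in I}{\smv_i}{\sort_i}{\pst_i}$, since $\C \land \C_i \iff \falsek$ for every $i$, the set $J = \{i \in I \sst \C \land \C_i \iff \falsek\}$ of equation~(2) equals $I$, so that equation returns $\guard[\falsek]\Tend$; the external choice case is identical via equation~(3).

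The substance lies in the sequential and iterative cases, handled by the induction hypothesis. For $\pst = \Tdef{\pst_1}$, since $\weight{\pst_1} < \weight{\Tdef{\pst_1}}$, the hypothesis yields $\nform[\C]{\pst_1} = \guard[\C']\Tend$ with $\C' \iff \falsek$; hence the first clause of equation~(9) fires and $\nform\pst = \guard[\C']\Tend$. The sequential case $\pst = \pst_1;\pst_2$ splits on the head $\pst_1$: when $\pst_1 = \guard[\C'']\Tend$ I use equation~(4) to pass to $\nform[\C \land \C'']{\pst_2}$ (with $\C \land \C'' \iff \falsek$ and $\weight{\pst_2} < \weight\pst$) and close by induction; when $\pst_1$ is a choice I use equation~(5) or~(6) to push the continuation inside, producing a strictly smaller choice term to which the choice argument above (or the hypothesis) applies; when $\pst_1 = \pst';\pst''$ I reassociate via equation~(7) onto a strictly smaller term and invoke the hypothesis; and when $\pst_1 = \Tdef{\pst'}$ the hypothesis gives $\nform[\C]{\pst'} = \guard[\C']\Tend$ with $\C' \iff \falsek$, so the first clause of equation~(8) selects $\guard[\C']\Tend$.

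The delicate point, and the step I expect to require the most care, is exactly this collapse in equations~(8) and~(9): the procedure branches on whether the inner normal form is an end-type, and the correctness of the argument hinges on the induction hypothesis guaranteeing that, under an inconsistent condition, it always is, so the $\Tdef{\nform{\cdot}}$ alternatives never arise. Once this is secured, the well-foundedness of $<$ (already established after the definition of $\weight\cdot$) licenses each recursive appeal, and collecting the cases completes the induction.
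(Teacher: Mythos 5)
Your proof is correct and follows exactly the route the paper takes: the paper's own proof of this lemma is stated only as ``by well-founded induction on $(\pst,<)$ with case analysis on the structure of $\pst$,'' and your proposal fills in precisely those cases using the equations of \cref{fig:normalization-pseudo}. The key observations you isolate --- that inconsistency of $\C$ propagates to every conjoined guard (forcing $J=I$ in the choice cases) and that the induction hypothesis guarantees the end-type branch of equations~(8) and~(9) always fires --- are exactly what makes the induction go through.
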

\begin{proof}
  By well-founded induction on $(\pst,<)$. The proof follows by case
  analysis on the structure of $\pst$.
\end{proof}

\begin{lem}\label{lem:nf-seq-composition}
 For all $\C$, $\pst,\pst'$,
 $\nform[\C]{\pst;\nform[\C]{\pst'}}=\nform[\C]{\pst;\pst'}$.
\end{lem}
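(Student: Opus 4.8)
The plan is to prove the identity by well-founded induction on $(\pst,<)$, doing a case analysis on the top-level shape of the \emph{first} component $\pst$ and reducing both sides with the equations of \cref{fig:normalization-pseudo}. The auxiliary fact I would invoke repeatedly is \cref{lem:nf-aux}, which collapses a nested normalisation $\nform[\C_1]{\nform[\C_2]{\pst'}}$ into $\nform[\C_1\land\C_2]{\pst'}$; I use in particular its idempotence instance $\nform[\C]{\nform[\C]{\pst'}}=\nform[\C]{\pst'}$ and instances where the outer condition is strengthened. Throughout I treat guards up to logical equivalence, so conditions such as $(\C\land\C'')\land\C$ and $\C\land\C''$ are identified.

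For $\pst=\guard[\C'']\Tend$ both sides collapse via \cref{fig:normalization-pseudo}(4) to a single normalisation of $\pst'$: the left side becomes $\nform[\C\land\C'']{\nform[\C]{\pst'}}$, which by \cref{lem:nf-aux} equals $\nform[\C\land\C''\land\C]{\pst'}=\nform[\C\land\C'']{\pst'}$, exactly the right side. For $\pst=\Tdef{\pst_1}$ I split on whether $\nform[\C]{\pst_1}$ is a terminated type (\cref{fig:normalization-pseudo}(8)); in the nontrivial branch the left side is $\nform[\C]{\Tdef{\pst_1}};\nform[\C]{\nform[\C]{\pst'}}$, and idempotence rewrites the tail to $\nform[\C]{\pst'}$, matching the right side.

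The sequential case $\pst=\pst_1;\pst_2$ is handled by first reassociating with \cref{fig:normalization-pseudo}(7) and then using the induction hypothesis on $\pst_1$ and on $\pst_2$, both strictly smaller. The IH at $\pst_1$ rewrites the two sides into $\nform[\C]{\pst_1;\nform[\C]{\pst_2;\nform[\C]{\pst'}}}$ and $\nform[\C]{\pst_1;\nform[\C]{\pst_2;\pst'}}$ respectively, and the IH at $\pst_2$ shows that the two inner arguments $\nform[\C]{\pst_2;\nform[\C]{\pst'}}$ and $\nform[\C]{\pst_2;\pst'}$ coincide; hence the two sides agree.

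The choice cases (internal via \cref{fig:normalization-pseudo}(5)+(2), and dually external via (6)+(3)) are the main obstacle. After distributing the trailing composition into the branches and normalising, the equality reduces, for each surviving branch $i$, to $\nform[\C_i\land\C]{\pst_i;\nform[\C]{\pst'}}=\nform[\C_i\land\C]{\pst_i;\pst'}$; the degenerate subcase where the index set equals $J$ makes both sides $\guard[\falsek]\Tend$. The subtlety is that the inner normalisation on the left uses the weaker condition $\C$, whereas the ambient condition has been strengthened to $\C_i\land\C$. I would bridge this by applying the IH at $\pst_i$ to rewrite the left side as $\nform[\C_i\land\C]{\pst_i;\nform[\C_i\land\C]{\nform[\C]{\pst'}}}$, then using \cref{lem:nf-aux} together with $(\C_i\land\C)\land\C=\C_i\land\C$ to turn the doubly normalised tail into $\nform[\C_i\land\C]{\pst'}$, and finally applying the IH at $\pst_i$ a second time to reach $\nform[\C_i\land\C]{\pst_i;\pst'}$. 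This reconciliation of the inner normalisation condition with the strengthened branch conditions, enabled by \cref{lem:nf-aux}, is precisely where the work lies.
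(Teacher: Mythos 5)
Your proof is correct and follows essentially the same route as the paper's: well-founded induction on $(\pst,<)$ with a case split on the structure of $\pst$, reducing both sides via the equations of \cref{fig:normalization-pseudo} and using \cref{lem:nf-aux} (including its idempotence instance) to reconcile the normalisation conditions. The only cosmetic difference is in the choice cases, where the paper uses \cref{lem:nf-aux} to split $\nform[\C_i\land\C]{\pst_i;\nform[\C]{\pst'}}$ into $\nform[\C_i]{\nform[\C]{\pst_i;\nform[\C]{\pst'}}}$ and applies the induction hypothesis at the original condition $\C$, whereas you apply the induction hypothesis twice at the strengthened condition $\C_i\land\C$; both variants are licensed by the induction hypothesis being universally quantified over the condition and the second component.
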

\begin{proof}
  By well-founded induction on $(\pst,<)$. The proof follows by case
  analysis on the structure of $\pst$.

  \begin{itemize}
  \item $\pst = \guard[\C'] \Tend $: Then,
    \[
    \begin{array}{rll}
      \nform[\C]{\pst;\nform[\C]{\pst'}}
      &
      =
      \nform[\C]{\guard[\C']\Tend;\nform[\C]{\pst'}}
      &
      \textit{by def.\ of $\pst$}
      \\
      &
      =
      \nform[\C\wedge\C']{\nform[\C]{\pst'}}
      &
      \textit{by \cref{fig:normalization-pseudo}(1)}
      \\
      &
      =
      \nform[\C\wedge\C']{\pst'}
      &
      \textit{by \cref{lem:nf-aux}}
      \\
      &
      =
       \nform[\C]{\guard[\C']\Tend;{\pst'}}
      &
      \textit{by \cref{fig:normalization-pseudo}(1)}
      \\
      &
       =
       \nform[\C]{\pst;{\pst'}}
      &
      \textit{by def.\ of $\pst$}
    \end{array}
    \]
  \item
    $\pst = \Ipst[e_i]{i \in I}{\smv_i}{\sort_i}{\pst_i}$: Then,
    \[
    \begin{array}{rll}
      \nform[\C]{\pst;\nform[\C]{\pst'}}
      &
      =
      \nform[\C]{(\Ipst[e_i]{i \in I}{\smv_i}{\sort_i}{\pst_i});
        \nform[\C]{\pst'}}
      &
      \textit{by def.\ of $\pst$}
      \\
      &
      =
      \nform[\C]{\Ipst[e_i]{i \in I}{\smv_i}{\sort_i}{\pst_i};
        \nform[\C]{\pst'}}
      &
      \textit{by \cref{fig:normalization-pseudo}(6)}
      \\
    \end{array}
    \]

   The case in which $J = \{i \in I \sst (\C \land \C_i') \iff
   \falsek\} = I$ follows immediately. Otherwise, we proceed as
   follows
   \[
   \begin{array}{rll}
     &
     =
     {\Ipst[e_i\wedge\C]{i \in I\setminus J}{\smv_i}{\sort_i}
       {\nform[\C_i\wedge\C]{\pst_i;\nform{\pst'}}}}
     &
     \textit{by \cref{fig:normalization-pseudo}(2)}
     \\
     &
     =
     {\Ipst[e_i\wedge\C]{i \in I\setminus J}{\smv_i}{\sort_i}
       {\nform[\C_i]{\nform[\C]{\pst_i;\nform{\pst'}}}}}
     &
     \textit{by \cref{lem:nf-aux}}
     \\
     &
     =
     {\Ipst[e_i\wedge\C]{i \in I\setminus J}{\smv_i}{\sort_i}
       {\nform[\C_i]{\nform[\C]{\pst_i;{\pst'}}}}}
     &
     \textit{by ind.\ hyp.}
     \\
     &
     =
     {\Ipst[e_i\wedge\C]{i \in I\setminus J}{\smv_i}{\sort_i}
       {\nform[\C_i\wedge\C]{{\pst_i;{\pst'}}}}}
     &
     \textit{by \cref{lem:nf-aux}}
     \\
     &
     =
     \nform[\C]{\Ipst[e_i]{i \in I}{\smv_i}{\sort_i}{\pst_i};{\pst'}}
     &
     \textit{by \cref{fig:normalization-pseudo}(2)}
     \\
     &
     =
     \nform[\C]{(\Ipst[e_i]{i \in I}{\smv_i}{\sort_i}{\pst_i});{\pst'}}
     &
     \textit{by \cref{fig:normalization-pseudo}(6)}
     \\
     &
     =
     \nform[\C]{\pst;\pst'}
     &
     \textit{by def.\ of $\pst$}
   \end{array}
   \]

  \item $\pst = \Epst[e_i]{i \in I}{\smv_i}{\sort_i}{\pst_i}$: It
    follows analogously to the previous one.

  \item $\pst = \pst_1; \pst_2 $:
    \[
    \begin{array}{rll}
      \nform[\C]{\pst;\nform[\C]{\pst'}}
      &
      =
      \nform[\C]{(\pst_1 ; \pst_2);\nform[\C]{\pst'}}
      &
      \textit{by def.\ of $\pst$}
      \\
      &
      =
      \nform[\C]{\pst_1; (\pst_2;\nform[\C]{\pst'})}
      &
      \textit{by \cref{fig:normalization-pseudo}(7)}
      \\
      &
      =
      \nform[\C]{\pst_1; \nform{\pst_2;\nform[\C]{\pst'}}}
      &
      \textit{by ind.\ hyp.}
      \\
      &
      =
      \nform[\C]{\pst_1; \nform{\pst_2;{\pst'}}}
      &
      \textit{by\ ind.\ hyp.}
      \\
      &
      =
      \nform[\C]{\pst_1; ({\pst_2;{\pst'}})}
      &
      \textit{by\ ind.\ hyp.}
      \\
      &
      =
      \nform[\C]{(\pst_1; \pst_2);{\pst'}}
      &
      \textit{by \cref{fig:normalization-pseudo}(7)}
      \\
      &
       =
      \nform[\C]{\pst;{\pst'}}
      &
      \textit{by def.\ of $\pst$}
    \end{array}
    \]

  \item $\pst =\Tdef \pst_1$:
    \[
    \begin{array}{rll}
      \nform[\C]{\pst;\nform[\C]{\pst'}}
      &
      =
      \nform[\C]{\Tdef \pst_1;\nform[\C]{\pst'}}
      &
      \textit{by def.\ of $\pst$}
    \end{array}
    \]

    If $\nform{\pst} = \guard[\C']\Tend$, then the proof follows
    straightforwardly
    from~\cref{fig:normalization-pseudo}(8). Otherwise, 
    \begin{align*}
    \begin{array}{rll}
      \\
      &
      =
      \nform[\C]{\Tdef{\pst_1}};\nform[\C]{\nform{\pst'}}
      &
      \textit{by \cref{fig:normalization-pseudo}(9)}
      \\
      &
      =
      \nform[\C]{\Tdef{\pst_1}};\nform[\C]{{\pst'}}
      &
      \textit{by \cref{lem:nf-aux}}
      \\
      &
      =
      \nform[\C]{\Tdef{\pst_1};\pst'}
      &
      \textit{by \cref{fig:normalization-pseudo}(9)}
      \\
      &
      =
      \nform[\C]{\pst;{\pst'}}
      &
      \textit{by def.\ of $\pst$}
    \end{array}
    \\[\dimexpr-1.0\baselineskip+\dp\strutbox]&\qedhere
    \end{align*}
  \end{itemize}
\end{proof}

\begin{lem}\label{lem:nf-neutral-end-seq-right}
  If $\C\implies \C'$, then $\nform {\pst;\guard[\C'] \Tend} =
  \nform{\pst}$.
\end{lem}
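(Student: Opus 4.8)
The plan is to argue by well-founded induction on $(\pst,<)$, the same scheme used in \cref{lem:nf-aux,lem:nf-seq-composition}, with a case analysis on the shape of $\pst$. Throughout I read equality of (pseudo-)types up to logical equivalence of guards, so that the hypothesis $\C\implies\C'$ is used in the form $\C\wedge\C'\iff\C$. First I would dispatch the base case $\pst=\guard[\C'']\Tend$: by \cref{fig:normalization-pseudo}(4) and (1), $\nform{\guard[\C'']\Tend;\guard[\C']\Tend}=\guard[\C\wedge\C''\wedge\C']\Tend$, and since $\C\wedge\C''\implies\C\implies\C'$ this guard is equivalent to $\C\wedge\C''$, giving exactly $\nform{\guard[\C'']\Tend}$.

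For the two choice cases $\pst=\Ipst[\C_i]{i\in I}{\smv_i}{\sort_i}{\pst_i}$ and $\pst=\Epst[\C_i]{i\in I}{\smv_i}{\sort_i}{\pst_i}$ I would push the trailing end into the branches using \cref{fig:normalization-pseudo}(5)/(6), then apply (2)/(3). The set $J=\{i\in I: (\C\wedge\C_i)\iff\falsek\}$ depends only on $\C$ and the branch guards, hence is unchanged when each $\pst_i$ is replaced by $\pst_i;\guard[\C']\Tend$; so the two normal forms share the surviving index set $I\setminus J$. On each surviving branch the induction hypothesis applies to $\pst_i<\pst$ with ambient condition $\C_i\wedge\C$, and since $\C_i\wedge\C\implies\C\implies\C'$ it yields $\nform[\C_i\wedge\C]{\pst_i;\guard[\C']\Tend}=\nform[\C_i\wedge\C]{\pst_i}$, closing the case. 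The sequential case $\pst=\pst_1;\pst_2$ is the smoothest: reassociate by (7) to $\pst_1;(\pst_2;\guard[\C']\Tend)$, use \cref{lem:nf-seq-composition} to expose $\nform{\pst_2;\guard[\C']\Tend}$, apply the induction hypothesis to $\pst_2<\pst$ to rewrite it as $\nform{\pst_2}$, and invoke \cref{lem:nf-seq-composition} once more to fold back to $\nform{\pst_1;\pst_2}=\nform\pst$.

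The main obstacle is the loop case $\pst=\Tdef{\pst_1}$. When $\nform{\pst_1}=\guard[\C'']\Tend$ collapses, clauses (8) and (9) make both sides equal to $\guard[\C'']\Tend$, with \cref{lem:nf-stronger-condition} controlling how the guard is affected; this subcase is routine. The delicate subcase is $\nform{\pst_1}\neq\guard[\C'']\Tend$: here (8) and (9) give $\nform{\Tdef{\pst_1};\guard[\C']\Tend}=\Tdef{\nform{\pst_1}};\guard[\C\wedge\C']\Tend$, whereas $\nform{\Tdef{\pst_1}}=\Tdef{\nform{\pst_1}}$, so the two normal forms differ by a trailing $\guard[\C\wedge\C']\Tend$ that no earlier lemma removes.

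The point where the hypothesis $\C\implies\C'$ is indispensable is precisely the resolution of this trailing segment: $\C\implies\C'$ reduces the guard $\C\wedge\C'$ to the ambient condition $\C$ under which the whole term is normalised, so this trailing end is reached exactly when $\C$ holds, i.e. always within this normalisation context, and therefore behaves as the neutral $\Tend$ for sequential composition. Closing the subcase thus amounts to recording, as part of the equational theory of normal forms, that a trailing guarded end whose guard is implied by the ambient condition is absorbed by $\_;\_$; this is the only step that goes beyond a direct appeal to the earlier normalisation lemmas, and I would make the convention (normal forms compared up to logical equivalence of guards with $\Tend$ neutral for $\_;\_$) explicit before carrying it out.
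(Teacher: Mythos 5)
Your proposal is correct and follows essentially the same route as the paper's proof: well-founded induction on $(\pst,<)$ with the same case analysis, the same use of \cref{lem:nf-seq-composition} in the sequential case, and \cref{lem:nf-stronger-condition} for the collapsed loop subcase. In the loop case the paper resolves the residual $\guard[\C\land\C']\Tend$ exactly as you do, by reducing it to $\nform{\Tend}$ via $\C\implies\C'$ and then silently absorbing it through the neutrality of $\Tend$ for $\_;\_$ (up to structural congruence and logical equivalence of guards) --- the convention you make explicit is indeed the only non-mechanical step.
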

\begin{proof}
  By well-founded induction on $(\pst,<)$. The proof follows by case
  analysis on the structure of $\pst$.
  \begin{itemize}
  \item $\pst = \guard[\C''] \Tend $: Then,
    \[
    \begin{array}{rll}
      \nform {\pst;\guard[\C']\Tend}
      &
      =
      \nform {\guard[\C'']\Tend;\guard[\C']\Tend}
      &
      \textit{by def.\ of $\pst$}
      \\
      &
      =
      \nform[\C\land\C'']{\guard[\C'] \Tend}
      &
      \textit{by \cref{fig:normalization-pseudo}(6)}
      \\
      &
      =
      \guard[\C\land\C''\land \C']\Tend
      &
      \textit{by \cref{fig:normalization-pseudo}(1)}
      \\
      &
      =
      \guard[\C\land\C'']\Tend
      &
      \C\implies \C'
      \\
      &
      =
      \nform[\C]{\guard[\C'']\Tend}
      &
      \textit{by \cref{fig:normalization-pseudo}(6)}
      \\
      &
      =
      \nform {\pst}
      &
      \textit{by def.\ of $\pst$}
    \end{array}
    \]
  \item $\pst = \Ipst[e_i]{i \in I}{\smv_i}{\sort_i}{\pst_i}$: Then,
    \[
    \begin{array}{rll}
      \nform {\pst;\guard[\C']\Tend}
      &
      =
      \nform {(\Ipst[e_i]{i \in I}{\smv_i}{\sort_i}{\pst_i});\guard[\C']\Tend}
      &
      \textit{by def.\ of $\pst$}
      \\
      &
      =
      \nform {\Ipst[e_i]{i \in I}{\smv_i}{\sort_i}{(\pst_i;\guard[\C']\Tend)}}
      &
      \textit{by \cref{fig:normalization-pseudo}(7)}
   \end{array}
   \]
   If $I =J$, then the case follows immediately. Otherwise,
    \[
    \begin{array}{rll}
      &
      = {\Ipst[\C_i\wedge\C]{i \in I\backslash J}{\smv_i}{\sort_i}
        {\nform[\C_i\wedge\C]{\pst_i;\guard[\C']\Tend}}}
      &
      \textit{by \cref{fig:normalization-pseudo}(4)}
      \\
      &
      = {\Ipst[\C_i\wedge\C]{i \in I\backslash J}{\smv_i}{\sort_i}
        {\nform[\C_i\wedge\C]{\pst_i}}}
      &
      \textit{by ind.\ hyp.}
      \\
      &
      =
      \nform {\Ipst[e_i]{i \in I}{\smv_i}{\sort_i}{\pst_i}}
      &
      \textit{by \cref{fig:normalization-pseudo}(4)}
      \\
      &
      =
      \nform[\C]{\pst}
      &
      \textit{by def.\ of $\pst$}
    \end{array}
    \]
  \item $\pst = \Epst[e_i]{i \in I}{\smv_i}{\sort_i}{\pst_i}$: It
    follows analogously to the previous one.
  \item $\pst = \pst_1 ; \pst_2 $: Then,
    \[
    \begin{array}{rll}
      \nform {\pst;\guard[\C']\Tend}
      &
      =
      \nform {(\pst_1;\pst_2);\guard[\C']\Tend}
      &
      \textit{by def.\ of $\pst$}
      \\
      &
      =
      \nform {\pst_1;(\pst_2;\guard[\C']\Tend)}
      &
      \textit{by \cref{fig:normalization-pseudo}(9)}
      \\
      &
      =
      \nform {\pst_1; \nform{\pst_2;\guard[\C']\Tend}}
      &
      \textit{by \cref{lem:nf-seq-composition}}
      \\
      &
      =
      \nform {\pst_1; \nform{\pst_2}}
      &
      \textit{by ind.\ hyp.}
      \\
      &
      =
      \nform {\pst_1; \pst_2}
      &
      \textit{by \cref{fig:normalization-pseudo}(9)}
      \\
      &
      =
      \nform[\C]{\pst}
      &
      \textit{by def.\ of $\pst$}
    \end{array}
    \]
  \item $\pst = \Tdef{\pst_1}$: Then,
    \begin{align*}
    \begin{array}{rll}
      \nform {\pst;\guard[\C']\Tend}
      &
      =
      \nform {\Tdef{\pst_1};\guard[\C']\Tend}
      &
      \textit{by def.\ of $\pst$}
      \\
      &
      =
      \nform {\Tdef{\pst_1}};\nform{\guard[\C']\Tend}
      &
      \textit{by \cref{fig:normalization-pseudo}(11)}
      \\
      &
      =
      \nform {\Tdef{\pst_1}};\nform[\C\land \C']{\Tend}
      &
      \textit{by \cref{fig:normalization-pseudo}(6)}
      \\
      &
      =
      \nform {\Tdef{\pst_1}};\nform{\Tend}
      &
      \C\implies \C'
      \\
      &
      =
      \nform {\Tdef{\pst_1};\Tend}
      &
      \textit{by \cref{fig:normalization-pseudo}(11)}
      \\
      &
      =
      \nform[\C]{\pst; \Tend}
      &
      \textit{by def.\ of $\pst$}
      \\
      &
      =
      \nform[\C]{\pst}
      &
    \end{array}
    \\[\dimexpr-1.0\baselineskip+\dp\strutbox]&\qedhere
    \end{align*}
  \end{itemize}
\end{proof}

\begin{lem}\label{lem:nf-neutral-end-seq}
  Equality $\nform {\pst;(\guard[\C'] \Tend; \pst')} =
  \nform{\pst;\pst'}$ holds for all $\C$ and $\C'$ such that $\neg
  (\C\iff\falsek)$ and $\C \implies \C'$.
\end{lem}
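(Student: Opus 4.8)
The plan is to avoid repeating the full well-founded induction used by the neighbouring results in \cref{app:normal-form} and instead derive the identity in three lines directly from \cref{lem:nf-seq-composition}, using the hypothesis $\C\implies\C'$ only to collapse the leading idle prefix $\guard[\C']\Tend$. The guiding observation is that \cref{lem:nf-seq-composition} lets me normalise the tail of a sequential composition before recombining it with $\pst$; so I can absorb $\guard[\C']\Tend$ inside the tail first, where rule \cref{fig:normalization-pseudo}(4) removes it cleanly, and only then reassemble.

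Concretely, I would proceed through the following chain, for arbitrary $\pst,\pst'$. First, instantiating \cref{lem:nf-seq-composition} with second argument $\guard[\C']\Tend;\pst'$ gives $\nform[\C]{\pst;(\guard[\C']\Tend;\pst')}=\nform[\C]{\pst;\nform[\C]{\guard[\C']\Tend;\pst'}}$. Next I simplify the inner normal form: by \cref{fig:normalization-pseudo}(4), $\nform[\C]{\guard[\C']\Tend;\pst'}=\nform[\C\land\C']{\pst'}$, and since $\C\implies\C'$ we have $\C\land\C'\iff\C$, whence $\nform[\C\land\C']{\pst'}=\nform[\C]{\pst'}$. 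Substituting this equality under the outer context yields $\nform[\C]{\pst;\nform[\C]{\guard[\C']\Tend;\pst'}}=\nform[\C]{\pst;\nform[\C]{\pst'}}$. Finally, applying \cref{lem:nf-seq-composition} once more, now with second argument $\pst'$, gives $\nform[\C]{\pst;\nform[\C]{\pst'}}=\nform[\C]{\pst;\pst'}$. Composing the three equalities is exactly the claim $\nform[\C]{\pst;(\guard[\C']\Tend;\pst')}=\nform[\C]{\pst;\pst'}$.

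The only delicate point — and the step I expect to be the main obstacle — is the passage $\nform[\C\land\C']{\pst'}=\nform[\C]{\pst'}$. It rests on the fact that $\nform[\cdot]{\cdot}$ depends on its guard argument only up to logical equivalence: the procedure of \cref{fig:normalization-pseudo} inspects guards solely through conjunctions and satisfiability tests $(\cdot\iff\falsek)$ and records conjunctions of guards, so replacing $\C\land\C'$ by the equivalent $\C$ alters the output only up to logically equivalent guards, which are already identified throughout this appendix (the same identification is used, e.g., at the step marked ``$\C\implies\C'$'' in the proof of \cref{lem:nf-neutral-end-seq-right}). I would state this identification explicitly before invoking it. Notice that on this route the satisfiability hypothesis $\neg(\C\iff\falsek)$ appears not to be needed — only $\C\implies\C'$ is used — although it is harmless to retain it for the intended application.

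For completeness, the same identity can also be obtained by the uniform method of this appendix, namely well-founded induction on $(\pst,<)$ mirroring \cref{lem:nf-aux,lem:nf-seq-composition,lem:nf-neutral-end-seq-right}: the choice cases apply the induction hypothesis to each continuation $\pst_i$ under the strengthened guard $\C_i\land\C$, which still satisfies $\C_i\land\C\implies\C'$ and, for $i$ outside $J=\{i\in I\sst (\C\land\C_i)\iff\falsek\}$, also $\neg(\C_i\land\C\iff\falsek)$; the sequential case $\pst=\pst_1;\pst_2$ uses associativity \cref{fig:normalization-pseudo}(7) together with \cref{lem:nf-seq-composition} and the induction hypothesis on $\pst_2<\pst$; and the iteration case $\pst=\Tdef{\pst_1}$ splits on whether $\nform[\C]{\pst_1}$ has the form $\guard[\C'']\Tend$, via \cref{fig:normalization-pseudo}(8). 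This route is longer and still relies on the same identification of logically equivalent guards in its base case, so I would prefer the three-line derivation above.
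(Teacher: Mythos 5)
Your derivation is correct, and it takes a genuinely different route from the paper. The paper proves this lemma by yet another well-founded induction on $(\pst,<)$ with the usual five-way case analysis; you instead obtain it in three steps by factoring through \cref{lem:nf-seq-composition}: normalise the tail $\guard[\C']\Tend;\pst'$ first, collapse the idle prefix via clause (4) of \cref{fig:normalization-pseudo} to get $\nform[\C\land\C']{\pst'}$, use $\C\implies\C'$ to replace the guard argument $\C\land\C'$ by the equivalent $\C$, and reassemble with a second application of \cref{lem:nf-seq-composition}. The one point you rightly flag — that $\nform[\C\land\C']{\pst'}$ and $\nform[\C]{\pst'}$ agree only up to logically equivalent guards, since the procedure syntactically records conjunctions — is not an obstacle specific to your route: the paper's own inductive proof performs exactly the same identification (e.g.\ the step justified by ``$\C\land\C''\land\C'\iff\C\land\C''$'' in its base case, and likewise in \cref{lem:nf-neutral-end-seq-right}), so the convention that guards are compared up to logical equivalence is already in force throughout the appendix; you are simply making explicit, as a congruence statement for $\nform[\_]{\_}$, what the paper uses pointwise. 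What each approach buys: yours is shorter, reuses the already-proved \cref{lem:nf-seq-composition} instead of redoing its inductive work, and exposes that the hypothesis $\neg(\C\iff\falsek)$ is in fact unnecessary (a correct observation — it is not visibly used in the paper's proof either); the paper's induction avoids having to state the guard-equivalence principle as a standalone fact, at the cost of repeating the same case analysis a fourth time. Either is acceptable; if you go with the short version, do state the equivalence-congruence of $\nform[\_]{\_}$ explicitly before using it, as you propose.
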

\begin{proof}
  By well-founded induction on $(\pst,<)$. The proof follows by case
  analysis on the structure of $\pst$.
  \begin{itemize}
  \item $\pst = \guard[\C''] \Tend $: Then,
    \[
    \begin{array}{rll}
      \nform {\pst;(\guard[\C']\Tend; \pst')}
      &
      =
      \nform {\guard[\C''] \Tend;(\guard[\C']\Tend; \pst')}
      &
      \textit{by def.\ of $\pst$}
      \\
      &
      =
      \nform[\C\land\C'']{\guard[\C'] \Tend; \pst'}
      &
      \textit{by \cref{fig:normalization-pseudo}(6)}
      \\
      &
      =
      \nform[\C\land\C'' \land \C']{\pst'};
      &
      \textit{by \cref{fig:normalization-pseudo}(6)}
      \\
      &
      =
      \nform[\C\land\C'']{\pst'};
      &
      \C\land\C'' \land \C' \iff \C\land\C''
      \\
      &
      =
      \nform[\C]{\guard[\C'']  \Tend; \pst'}
      &
      \textit{by \cref{fig:normalization-pseudo}(6)}
      \\
      &
      =
      \nform[\C]{\pst; \pst'}
      &
      \textit{by def.\ of $\pst$}
    \end{array}
    \]
  \item $\pst = \Ipst[e_i]{i \in I}{\smv_i}{\sort_i}{\pst_i}$: Then,
    \[
    \begin{array}{rll}
      \nform {\pst;(\guard[\C']\Tend; \pst')}
      &
      =
      \nform {(\Ipst[e_i]{i \in I}{\smv_i}{\sort_i}{\pst_i});
        (\guard[\C'] \Tend; \pst')}
      &
      \textit{by def.\ of $\pst$}
      \\
      &
      =
      \nform {\Ipst[e_i]{i \in I}{\smv_i}{\sort_i}{(\pst_i;
          (\guard[\C'] \Tend; \pst'))}}
      &
      \textit{by \cref{fig:normalization-pseudo}(7)}
    \end{array}
    \]
    If $I =J$, then the case follows immediately. Otherwise,
    \[
    \begin{array}{rll}
      &
      = {\Ipst[\C_i\wedge\C]{i \in I\backslash J}{\smv_i}{\sort_i}
        {\nform[\C_i\wedge\C]{(\pst_i;(\guard[\C'] \Tend; \pst'))}}}
      &
      \textit{by \cref{fig:normalization-pseudo}(4)}
      \\
      &
      = {\Ipst[\C_i\wedge\C]{i \in I\backslash J}{\smv_i}{\sort_i}
        {\nform[\C_i\wedge\C]{\pst_i;\pst'}}}
      &
      {\it by\ ind.\ hyp.}\
      \\
      &
      =
      \nform {\Ipst[e_i]{i \in I}{\smv_i}{\sort_i}{(\pst_i; \pst')}}
      &
      \textit{by \cref{fig:normalization-pseudo}(4)}
      \\
      &
      =
      \nform {(\Ipst[e_i]{i \in I}{\smv_i}{\sort_i}{\pst_i}); \pst'}
      &
      \textit{by \cref{fig:normalization-pseudo}(7)}
      \\
      &
      =
      \nform[\C]{\pst; \pst'}
      &
      \textit{by def.\ of $\pst$}
    \end{array}
    \]
  \item $\pst = \Epst[e_i]{i \in I}{\smv_i}{\sort_i}{\pst_i}$: It
    follows analogously to the previous one.
  \item $\pst = \pst_1 ; \pst_2 $: Then,
    \[
    \begin{array}{rll}
      \nform {\pst;(\guard[\C']\Tend; \pst')}
      &
      =
      \nform {(\pst_1;\pst_2);(\guard[\C']\Tend; \pst')}
      &
      \textit{by def.\ of $\pst$}
      \\
      &
      =
      \nform {\pst_1; ({\pst_2;(\guard[\C']\Tend; \pst')})}
      &
      \textit{by \cref{fig:normalization-pseudo}(9)}
      \\
      &
      =
      \nform {\pst_1; \nform{\pst_2;(\guard[\C']\Tend; \pst')}}
      &
      \textit{by \cref{lem:nf-seq-composition}}
      \\
      &
      =
      \nform {\pst_1;\nform{\pst_2;\pst'}}
      &
      {\it by\ ind.\ hyp.}
      \\
      &
      =
      \nform {\pst_1;({\pst_2;\pst'})}
      &
      \textit{by \cref{lem:nf-seq-composition}}
      \\
      &
      =
      \nform {(\pst_1; \pst_2); \pst'}
      &
      \textit{by \cref{fig:normalization-pseudo}(9)}
      \\
      &
      =
      \nform[\C]{\pst; \pst'}
      &
      \textit{by def.\ of $\pst$}
    \end{array}
    \]
  \item $\pst = \Tdef{\pst_1}$: Then,
    \begin{align*}
    \begin{array}{rll}
      \nform {\pst;(\guard[\C']\Tend; \pst')}
      &
      =
      \nform {\Tdef{\pst_1};(\guard[\C']\Tend; \pst')}
      &
      \textit{by def.\ of $\pst$}
      \\
      &
      =
      \nform {\Tdef{\pst_1}};\nform{\guard[\C']\Tend; \pst'}
      &
      \textit{by \cref{fig:normalization-pseudo}(11)}
      \\
      &
      =
      \nform {\Tdef{\pst_1}};\nform[\C\land \C']{\pst'}
      &
      \textit{by \cref{fig:normalization-pseudo}(6)}
      \\
      &
      =
      \nform {\Tdef{\pst_1}};\nform{\pst'}
      &
      \C\implies \C'
      \\
      &
      =
      \nform {\Tdef{\pst_1};\pst'}
      &
      \textit{by \cref{fig:normalization-pseudo}(11)}
      \\
      &
      =
      \nform[\C]{\pst; \pst'}
      &
      \textit{by def.\ of $\pst$}
    \end{array}
    \\[\dimexpr-1.0\baselineskip+\dp\strutbox]&\qedhere
    \end{align*}
  \end{itemize}
\end{proof}

\begin{lem}\label{lem:sr-merge-normalization}
  $\nform{\pst\peace\pst'} = \nform{\pst}\peace\nform{\pst'}$.
\end{lem}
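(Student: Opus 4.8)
The plan is to prove the equation by well-founded induction on $(\pst,<)$, the ordering introduced before \cref{lem:nf-termination}, with a case analysis that follows the clauses defining $\peace$ in \cref{def:merge}. Since $\pst$ and $\pst'$ are mergeable they share the same top-level structure (\cref{def:mergeability}), so splitting on the shape of $\pst$ fixes that of $\pst'$, and in every recursive clause the components to which the induction hypothesis is applied — the continuations $\pst_i\peace\pst_i'$, the factors $\pst_1'\peace\pst_2'$ and $\pst_1''\peace\pst_2''$, or the loop body — have strictly smaller weight $\omega$ than $\pst$.

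\textbf{Base and structural cases.} In the base case $\pst=\guard[g_1]\Tend$, $\pst'=\guard[g_2]\Tend$ the left-hand side is $\nform{\guard[g_1\lor g_2]\Tend}=\guard[\C\land(g_1\lor g_2)]\Tend$ by \cref{fig:normalization-pseudo}(1), while the right-hand side is $\guard[\C\land g_1]\Tend\peace\guard[\C\land g_2]\Tend=\guard[(\C\land g_1)\lor(\C\land g_2)]\Tend$; the two coincide by distributivity of $\land$ over $\lor$. This is the paradigmatic situation, since normalisation propagates the ambient condition $\C$ \emph{conjunctively} into every guard whereas $\peace$ combines guards \emph{disjunctively}, and the whole argument rests on these two operations commuting, which at the boolean level is exactly distributivity. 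For the choice cases I would unfold $\peace$ (merging branch guards as $g_i\lor g_i'$ in the external case, and taking the disjoint union of branches in the internal case), then unfold \cref{fig:normalization-pseudo}(2)--(3), and close using the induction hypothesis on each continuation together with \cref{lem:nf-aux} and \cref{lem:nf-stronger-condition} for the boolean rewriting of guards. For the sequential case $\pst=\pst_1';\pst_1''$, $\pst'=\pst_2';\pst_2''$ I would push normalisation through $;$ by \cref{lem:nf-seq-composition} and \cref{fig:normalization-pseudo}(4)--(8), and appeal to the induction hypothesis on the two factors; the iteration case uses \cref{fig:normalization-pseudo}(9) and the loop clause of $\peace$, closing by the induction hypothesis on the body.

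\textbf{Main obstacle.} The delicate part is the bookkeeping around branch elimination. Normalisation discards every branch whose guard is inconsistent with $\C$ (the set $J$ in \cref{fig:normalization-pseudo}(2)--(3)) and collapses a choice to $\guard[\falsek]\Tend$ once all branches vanish. I must check that this pruning is performed coherently on both sides, so that a branch $i$ survives in $\nform{\pst\peace\pst'}$ exactly when it survives in $\nform{\pst}\peace\nform{\pst'}$ — for external choices this reduces to the observation that $\C\land(g_i\lor g_i')$ is unsatisfiable iff both $\C\land g_i$ and $\C\land g_i'$ are — and, crucially, that the surviving index sets still agree so that the external/internal clause of $\peace$ remains applicable to the normalised arguments. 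The real content of the lemma therefore lies in an auxiliary invariant, proved by the same induction and using \cref{lem:normalisation-with-false-cond}, stating that normalisation preserves mergeability and that the degenerate outcome $\guard[\falsek]\Tend$ arises on the two sides simultaneously; establishing this invariant (or, equivalently, restricting attention to the pseudo-types whose guards are consistent, which is the only regime in which the merge is invoked by rule \myrule{VIf}) is what keeps $\nform{\pst}\peace\nform{\pst'}$ well-defined along the induction, the routine unfolding then matching both sides clause by clause.
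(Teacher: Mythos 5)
Your proposal is correct and follows essentially the same route as the paper, whose entire proof is the single line ``by straightforward induction on the derivation of $\pst\peace\pst'$''. The two points you flag — that equality must be read up to boolean equivalence of guards (distributivity of $\land$ over $\lor$ in the $\Tend$ case) and that branch pruning must be shown coherent on both sides so that the normalised arguments remain mergeable — are genuine subtleties the paper silently elides, so your more careful treatment only strengthens the argument.
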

\begin{proof}
  By straightforward induction on the derivation of $\pst\peace\pst'$.
\end{proof}

\subsection{Typing}
We write $\varpst{\pst}$ to denote the variables appearing in the
expressions occurring on the pseudo-type $\pst$.  It is
straightforwardly extended to specifications $\varpst{\envmv}$.

\begin{lem}\label{lem:no-new-var-in-types}
  If $\pjdg[@][@][S]$, then $\var \envmv\subseteq \var \C \cup \fX[S]
  \cup \bX[S]$.
\end{lem}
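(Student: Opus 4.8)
The plan is to proceed by structural induction on the derivation of $\pjdg[@][@][S]$, following the typing rules of \cref{fig:typingproc} and \cref{fig:typingsys}. The backbone of every step is a pair of monotonicity observations. First, none of the operations on specifications appearing in the conclusions introduces a guard-expression not already present in its operands: sequential composition $\envmv_1;\envmv_2$ sequences the pseudo-types pointwise, the union $\envmv_1\cup\envmv_2$ juxtaposes disjoint entries, the merge $\envmv_1\peace\envmv_2$ combines branch-guards only through $\lor$ (\cref{def:merge}), the closure $\Tdef\envmv$ merely wraps each entry, and the restriction $\envmv|_{-\tuple\smv}$ deletes entries; hence $\var{\envmv_1;\envmv_2},\ \var{\envmv_1\cup\envmv_2},\ \var{\envmv_1\peace\envmv_2}\subseteq\var\envmv_1\cup\var\envmv_2$, while $\var{\Tdef\envmv}=\var\envmv\supseteq\var{\envmv|_{-\tuple\smv}}$. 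Since we work up-to normal form, I also use $\var{\nfp\pst}\subseteq\var\pst$, which holds because the procedure in \cref{fig:normalization-pseudo} only conjoins and discards existing guards. Second, for each constructor the sets $\fn\_$ and $\bn\_$ contain those of the immediate sub-terms, and their defining equations record exactly which extra variables are added — precisely the ones needed to absorb newly generated guards.

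With these facts the composite cases are routine: for \myrule{VSeq}, \myrule{VPar}, \myrule{VNew}, \myrule{VReq}, \myrule{VAcc} and \myrule{VLoop} one applies the induction hypothesis to each premiss and combines using the observations above, noting that $\fX[\Preq{\shname}{n}{\tuple\smv}{P}]=\fX[P]$ and $\bX[\Preq{\shname}{n}{\tuple\smv}{P}]=\bX[P]$ (the bound session names $\tuple\smv$ are not variables), and that in \myrule{VPar} the condition grows to $\C_1\land\C_2$ so that $\var{\C_1}\cup\var{\C_2}=\var{\C_1\land\C_2}$ matches. The two cases that genuinely move variables between the three summands are \myrule{VIf} and \myrule{VRcv}/\myrule{VFor}. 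In \myrule{VIf} the premisses carry the stronger conditions $\C\land\C'$ and $\C\land\neg\C'$, so the hypotheses bound $\var\envmv_1,\var\envmv_2$ by $\var\C\cup\var{\C'}\cup\fX[P_i]\cup\bX[P_i]$; since $\fn{\Pif{\C'}{P_1}{P_2}}=\fn{P_1}\cup\fn{P_2}\cup\var{\C'}$, the extra variables $\var{\C'}$ of the freshly introduced guard are exactly $\fX$-variables of the conditional, and $\var{\envmv_1\peace\envmv_2}\subseteq\var\envmv_1\cup\var\envmv_2$ closes the case. In \myrule{VRcv} (and symmetrically \myrule{VFor}) the new branch guards equal the ambient condition $\C$, contributing only $\var\C$, while each bound input variable $x_i$ (resp. the loop variable $x$) lies in $\bX$ of the guarded process, which compensates for its removal from $\fX[P_i]$ when forming the free variables of the choice (resp. the for-loop).

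The delicate cases are the axiom-style rules whose specification is required to be \emph{$\Tend$-only}, namely \myrule{VEnd}, \myrule{VSend} and \myrule{VForEnd}; the queue rules \myrule{VQueue} and \myrule{VEmpty} are immediate, since their specifications carry only sort/queue entries and hence $\var\envmv=\emptyset$. For the $\Tend$-only rules there is no typing premiss to feed the induction, and one must argue directly that every guard decorating a terminated session in such an $\envmv$ mentions only variables of $\C$, i.e. $\var\envmv\subseteq\var\C$ (here $\fX$ and $\bX$ add nothing: $\bX[\Pend]=\emptyset$ and $\fX[\Psend{\smv}{e'}]=\var{e'}$). This is the main obstacle. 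The resolution relies on working up-to normal form: a terminated entry appears as $\guard[\C'']\Tend$ only when $\C''$ has been produced by normalisation from the path condition accumulated so far (equations $(1)$ and $(4)$ of \cref{fig:normalization-pseudo} conjoin the ambient condition into the residual $\Tend$), whence $\var{\C''}\subseteq\var\C$. Making this precise — ideally by strengthening the invariant to record that every guard occurring in $\envmv$ is, up-to normal form, a sub-conjunction of $\C$ together with the conditional guards of $S$ — is the only non-routine part; the remaining bookkeeping is mechanical.
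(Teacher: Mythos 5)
Your proof takes exactly the route the paper takes: the paper's entire argument for this lemma is ``by straightforward structural induction on the typing judgment'', and your case analysis fills in the details correctly, with the monotonicity of $\_;\_$, $\_\cup\_$, $\_\peace\_$, $\Tdef{\_}$, $\_|_{-\tuple\smv}$ and $\nfp{\_}$ with respect to $\var{\_}$ doing the work in the composite cases just as one would expect. The one place you flag as non-routine --- the axiom-style rules \myrule{VEnd}, \myrule{VSend} and \myrule{VForEnd}, whose only constraint on $\envmv$ is the $\Tend$-only side condition --- is indeed where all the content of the lemma sits and is genuinely not supplied by any induction hypothesis; however, it closes without your proposed strengthened invariant once one reads the paper's abbreviation $\nfp{\envmv\psnames{\smv'}{q}}=\guard\Tend$ with the default guard of $\guard{\_}$ being the judgement's own condition $\C$, so that each terminated entry carries, up to normalisation (and the paper works up-to normal forms throughout), exactly the guard $\C$, whence $\var\envmv\subseteq\var\C$ is immediate in those cases. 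With that reading your remaining bookkeeping is complete and records precisely what the paper leaves implicit.
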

\begin{proof}
  By straightforward structural induction on the typing judgment.
\end{proof}

\begin{lem}\label{lem:typing-weakening}
  If $\pjdg$ then $\pjdg[@][@][@][@][\Gamma, x : \sort]$.
\end{lem}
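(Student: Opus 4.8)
The plan is to prove the statement by structural induction on the derivation of $\pjdg$, threading the fresh assumption $x : \sort$ through the premises of whichever rule closes the derivation and re-introducing it in the conclusion. The guiding observation is that the context $\Gamma$ enters the rules of \cref{fig:typingproc,fig:typingsys} in exactly two benign ways: either it is passed unchanged from conclusion to premises, or it appears in an expression-typing side condition of the form $\vjdg[\Gamma][e][\sort]$ (equivalently $\fX[e] \subseteq \dom \Gamma$), and every such condition is monotone under enlarging the context. Moreover, writing $\Gamma, x : \sort$ already presupposes $x \not\in \dom \Gamma$, and any derivable judgement satisfies the sanity requirement $\fX[S] \subseteq \dom \Gamma$; hence $x \not\in \fX[S]$, i.e.\ $x$ is automatically fresh for the free variables of the typed term.

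For all the rules that do not bind a variable --- \myrule{VReq}, \myrule{VAcc}, \myrule{VSend}, \myrule{VEnd}, \myrule{VSeq}, \myrule{VIf}, \myrule{VLoop}, \myrule{VForEnd}, and the system rules \myrule{VPar}, \myrule{VNew}, \myrule{VQueue}, \myrule{VEmpty} --- the argument is immediate. I apply the induction hypothesis to each sub-derivation to replace $\Gamma$ by $\Gamma, x : \sort$; I check that the non-inductive side conditions are preserved, which they are, since conditions such as $\vjdg[\Gamma][e'][\sort]$ in \myrule{VSend} or $\vjdg[\Gamma][\ell][[\sort]]$ in \myrule{VForEnd} only demand that a subset of $\dom \Gamma$ be present, and all remaining conditions speak about $\envmv$, $\C$, or the term alone; then I re-apply the same rule to obtain the conclusion under $\Gamma, x : \sort$.

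The only cases that require attention are the two binding rules \myrule{VRcv} and \myrule{VFor}, whose premises type a continuation under a context extended by a bound variable --- the branch variable $x_i$ for an external choice, and the loop variable (call it $y$) for a for-loop. Here I first invoke $\alpha$-conversion to assume the bound variable is distinct from the weakening variable $x$; this is legitimate precisely because $x \not\in \fX[S]$, so renaming the bound variable away from $x$ changes nothing observable. The induction hypothesis then turns the premise typed under $\Gamma, x_i : \sort_i$ (resp.\ $\Gamma, y : \sort'$) into one typed under $\Gamma, x_i : \sort_i, x : \sort$ (resp.\ $\Gamma, y : \sort', x : \sort$); since contexts are partial functions, this is the very same context as $\Gamma, x : \sort, x_i : \sort_i$, so a final application of the rule yields the conclusion under $\Gamma, x : \sort$. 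The side condition $y \not\in \var \envmv$ of \myrule{VFor} pertains to the loop variable and is untouched by the weakening.

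I expect no genuine obstacle: the statement is the usual weakening lemma and the induction is routine. The only bookkeeping worth isolating is the binding cases, where one must keep the fresh variable disjoint from the bound variables; this is discharged uniformly by the standard freshness convention together with the assumption $\fX[S] \subseteq \dom \Gamma$ built into well-formed judgements.
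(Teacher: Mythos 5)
Your proof is correct and follows the same route as the paper, which simply states ``by induction on the structure of the derivation'' and leaves the details implicit. Your elaboration of the monotonicity of the expression-typing side conditions and the $\alpha$-conversion bookkeeping for the binding rules \myrule{VRcv} and \myrule{VFor} is exactly the standard argument the paper intends.
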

\begin{proof}
  By induction on the structure of the proof $\pjdg$.
\end{proof}

\begin{lem}\label{lem:typing:and:fn}
  If $\pjdg$ then
  \begin{itemize}
  \item $\fX[P] \cup \fY[P] \cup \fX[\C] \subseteq \Gamma$
  \item $\forall \smv \in \fY[P] \qst \smv \in \dom \envmv$.
  \end{itemize}
\end{lem}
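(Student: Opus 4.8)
The plan is to prove both clauses simultaneously by structural induction on the derivation of $\pjdg$; since the rules of \cref{fig:typingproc,fig:typingsys} are syntax-directed this is effectively an induction on the shape of $P$. The inclusion $\fX[\C]\subseteq\dom\Gamma$ is already part of the standing well-formedness convention imposed on every judgement of the form~\eqref{eq:jdg}, so it needs no argument; the session-channel clause $\fY[P]\subseteq\dom\envmv$ could likewise be read off the convention $\fn P\subseteq\dom\envmv$, but I would nonetheless carry it through the induction together with the variable clause $\fX[P]\subseteq\dom\Gamma$, which is the genuinely inductive content. Throughout, $\smv\in\dom\envmv$ is understood with the stated abbreviation $\exists\tuple\smv\in\dom\envmv$ such that $\smv\in\tuple\smv$.

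The base cases are \myrule{VEnd}, \myrule{VSend}, \myrule{VQueue} and \myrule{VEmpty}. For $\Pend$ and the queue forms both clauses are vacuous apart from $\fY[\queue\smv\qmv]=\{\smv\}$, which lies in $\dom\envmv$ because the conclusion assigns a queue type to $\smv$; note $\fX[\queue\smv\qmv]=\emptyset$ since the stored items are closed values. For $\Psend\smv{e'}$ the premise $\vjdg[\Gamma][e'][\sort]$ gives $\fX[\Psend\smv{e'}]=\var{e'}\subseteq\dom\Gamma$, while the premise $\smv\in\tuple\smv$ together with $\psnames\smv p\in\dom\envmv$ gives $\fY[\Psend\smv{e'}]=\{\smv\}\subseteq\dom\envmv$. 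In the inductive cases that introduce bound variables, \myrule{VRcv} types each branch under $\Gamma,x_i:\sort_i$ so that $\fX[\Pechoice{i\in I}{\smv_i}{x_i}{P_i}]=\bigcup_{i\in I}(\fX[P_i]\setminus\{x_i\})\subseteq\dom\Gamma$ by the induction hypotheses, while the guard channels $\smv_i$ occur in the external-choice type recorded at $\psnames\smv p$ in the conclusion; \myrule{VFor} is analogous, with the premise typing $\ell$ forcing $\var\ell\subseteq\dom\Gamma$. The session-binding rules \myrule{VReq}, \myrule{VAcc} and \myrule{VNew} add the bound channels $\tuple\smv$ to the environment in their premises and remove them from the free names of the process, so that $\fY[P]\subseteq\dom\envmv$ follows from the induction hypothesis after cancelling the bound $\tuple\smv$. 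The remaining rules (\myrule{VSeq}, \myrule{VIf}, \myrule{VLoop}, \myrule{VPar}, \myrule{VForEnd}) leave $\Gamma$ unchanged across their premises and merely recombine the environments, so both clauses follow by distributing $\fX[\_]$ and $\fY[\_]$ over the subterms; the only point worth checking is \myrule{VIf}, where the new guard $\C'$ satisfies $\fX[\C']\subseteq\dom\Gamma$ because the guard convention applies to the premises carrying the strengthened assumptions $\C\land\C'$ and $\C\land\neg\C'$. \cref{lem:typing-weakening} can be cited wherever a premise is stated over an extended context.

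The main difficulty I anticipate is bookkeeping rather than anything conceptual: in the rules that recombine specifications through the partial operations $\envmv_1;\envmv_2$, $\envmv_1\peace\envmv_2$, $\envmv_1\cup\envmv_2$, $\Tdef\envmv$ and $\envmv|_{-\tuple\smv}$, one must verify that the domain of the conclusion's environment really does contain the session channels contributed by the composite process, and in particular that the channels removed from $\fn P$ in \myrule{VReq}, \myrule{VAcc} and \myrule{VNew} coincide exactly with the entries deleted from $\envmv$. Each of these is routine once the definition of the relevant operation on environments is unfolded (\cf the side condition $\dom{\envmv_2}\subseteq\dom{\envmv_1}$ on sequential composition and the definition of $\envmv|_{-\tuple\smv}$), but the verification must be repeated rule by rule. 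I would therefore foreground the variable inclusion $\fX[P]\subseteq\dom\Gamma$ as the inductive heart of the proof and dispatch the remaining inclusions by the well-formedness conventions, using the induction only to confirm that those conventions are preserved by each rule.
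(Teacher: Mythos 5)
Your proposal is correct and follows exactly the route the paper takes: the paper's proof is the one-line remark that the result follows ``by induction on the derivation of $\pjdg$ and inspecting the rules in \cref{fig:typingproc} and~\ref{fig:typingsys}'', and your case analysis is precisely the elaboration of that induction, including the observation that the clauses for $\fX[\C]$ and $\fn P$ are partly discharged by the standing well-formedness conventions on judgements. Nothing further is needed.
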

\begin{proof}
  By induction of the derivation of $\pjdg$ and inspecting the rules
  in \cref{fig:typingproc} and~\ref{fig:typingsys}.
\end{proof}

\begin{lem}\label{lem:typing-preservation-structural}
  If $\pjdg$ and $P\equiv Q$, then $\pjdg[@][\envmv'][Q]$ and
  $\nfp\envmv = \nfp{\envmv'}$.
\end{lem}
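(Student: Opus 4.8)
The plan is to prove the statement by induction on the derivation of $P\equiv Q$, viewing $\equiv$ as the least congruence generated by the displayed axioms together with $\alpha$-conversion. Since $\equiv$ is symmetric, a naive induction would require transferring typings in the "wrong" direction for the symmetry rule; I therefore prove the symmetric strengthening, namely that for every generating step both $\pjdg[@][@][P]$ yields some $\envmv'$ with $\pjdg[@][\envmv'][Q]$ and $\nfp\envmv=\nfp{\envmv'}$, and the same implication with $P$ and $Q$ exchanged. Because equality of normal forms is itself symmetric and transitive, the closure rules (reflexivity with $\envmv'=\envmv$, transitivity by composing two instances, symmetry for free) preserve this conjunction. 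Throughout I use inversion on the typing rules, which is unambiguous since each syntactic constructor is typed by exactly one rule, and I treat judgements up to logical equivalence of the condition $\C$ (the rules inspect $\C$ only through satisfiability/entailment, \cf the consistency side-conditions), so that e.g.\ $\C_1\land\C_2$ and $\C_2\land\C_1$ are interchangeable.

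For the axioms I group the cases. The $\nu$-axioms are the easiest: for the swap of two restrictions, \myrule{VNew} yields iterated restrictions $\envmv|_{-\tuple\smv'}|_{-\tuple\smv}$ which commute on disjoint names, giving literally equal environments; for $(\nu\tuple\smv\At\shname)\Pend\equiv\Pend$ both sides are typed by an $\Tend$-only environment, $\Pend$ accepting $\envmv|_{-\tuple\smv}$ directly via \myrule{VEnd}; for scope extrusion I invert \myrule{VPar} inside \myrule{VNew}, use \cref{lem:typing:and:fn} to see that the restricted names $\tuple\smv\not\subseteq\fY[S]$ occur only in the component $\envmv_b$ typing $S'$, whence $(\envmv_a\cup\envmv_b)|_{-\tuple\smv}=\envmv_a\cup(\envmv_b|_{-\tuple\smv})$ reconstructs exactly the environment of the extruded form. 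Associativity and commutativity of $\sparop$ follow from the corresponding properties of $\cup$ and of $\land$, independence of specifications being symmetric.

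The technical heart is the sequential and parallel \emph{monoid} structure, which matches the process-level laws only after normalisation. Associativity of $\_;\_$ on processes is matched by associativity of the environment operation $\_;\_$ up to $\nfp{\_}$, which is clause~(7) of \cref{fig:normalization-pseudo} lifted pointwise through \cref{lem:nf-aux} and \cref{lem:nf-seq-composition}. The $\Pend$-identity laws are where care is needed: typing $\Pend$ uses \myrule{VEnd} with an \emph{arbitrary} $\Tend$-only environment, so the environments obtained for $P;\Pend$, $\Pend;P$ and $P$ differ by $\Tend$-only sessions; I reconcile them by (i) the neutrality lemmas \cref{lem:nf-neutral-end-seq-right} and \cref{lem:nf-neutral-end-seq}, which give $\nfp{\pst;\guard[\C']\Tend}=\nfp\pst$ and $\nfp{\pst;(\guard[\C']\Tend;\pst')}=\nfp{\pst;\pst'}$ whenever the ambient condition entails the guard (which holds since $\Tend$-only sessions carry the ambient $\C$), and (ii) a small $\Tend$-only weakening observation, proved by a routine induction on the typing derivation, stating that adjoining a session mapped to $\guard[\C]\Tend$ preserves derivability; this closes the domain mismatch in, e.g., the $\Pend;P\equiv P$ direction.

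Finally, for the congruence (closure) cases I apply the induction hypothesis to the congruent subterm and re-apply the typing rule of the surrounding constructor. This step only goes through because each environment operation respects $\nfp{\_}$: replacing a component by a normal-form-equal one leaves the composite normal-form-equal. For $\_;\_$, $\Tdef{\_}$ and $\cup$ this is immediate from the definitions together with \cref{lem:nf-seq-composition}, clause~(9) of \cref{fig:normalization-pseudo}, and the pointwise nature of $\cup$; for the merge $\_\peace\_$ in \myrule{VIf} it is \cref{lem:sr-merge-normalization}; for restriction it is manifest. The main obstacle is thus not any single case but the systematic bookkeeping that the process-level monoidal laws are realised at the level of specifications only modulo normalisation, so that every reassembly must be justified by the normalisation lemmas of \cref{app:normal-form} rather than by strict equality of environments.
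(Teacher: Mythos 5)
Your proof is correct and takes essentially the same route as the paper's: induction on the derivation of $P\equiv Q$, with associativity of $\_;\_$ discharged by the associativity of sequential composition of pseudo-types up to normal form (\cref{fig:normalization-pseudo}, together with \cref{lem:nf-seq-composition}) and the $\Pend$-unit laws discharged by \cref{lem:nf-neutral-end-seq} and \cref{lem:nf-neutral-end-seq-right}. The paper's own proof records only these two sequential-monoid cases; your additional bookkeeping for symmetry, the $\nu$-axioms, the parallel laws, and the congruence closure fills in exactly the cases the paper leaves implicit, using the same normalisation lemmas.
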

\begin{proof} By induction on the proof of $P\equiv Q$.
  \begin{itemize}
  \item $P;(Q;R) \equiv (P;Q);R$ follows by inductive hypothesis and
    associativity of $;$ over pseudo types 
    (\cref{fig:normalization-pseudo} (9)).
  \item $\Pend; P \equiv P; \Pend \equiv P$ follows by using
    \lemref{lem:nf-neutral-end-seq} and
    \lemref{lem:nf-neutral-end-seq-right}.
    \qedhere
  \end{itemize}
\end{proof}

\begin{lem}\label{lem:typing-with-false-env}
  If $\pjdg[@][\envmv, \psnames \smv p: \pst]$ and $\C\iff\falsek$,
  then $\nfp \pst = \guard[\falsek]\Tend$.
\end{lem}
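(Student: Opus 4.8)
The plan is to argue by induction on the typing derivation of the process $P$ in $\pjdg[\C][{\envmv, \psnames \smv p : \pst}][P]$, proving the slightly more general statement that, under $\C\iff\falsek$, \emph{every} session of the environment collapses: for each $\psnames{\smv'}{q}\in\dom{\envmv}$ one has $\nfp{\envmv\psnames{\smv'}{q}}=\guard[\C']\Tend$ with $\C'\iff\falsek$ (as in \cref{lem:normalisation-with-false-cond}, the stated equality with $\guard[\falsek]\Tend$ is meant up to logical equivalence of guards). Since the statement concerns a process, only the rules of \cref{fig:typingproc} can be the last rule applied; the system rules of \cref{fig:typingsys} never sit below a process, and the for-loop rules \myrule{VFor} and \myrule{VForEnd} are vacuous here because their side-conditions $\consistent[\C][\ell \neq \emptylist]$ and $\consistent[\C][\ell = \emptylist]$ cannot hold once $\C\iff\falsek$.

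First I would dispatch the direct cases. For \myrule{VSend} and \myrule{VRcv} the newly introduced session is a single- or multi-branch choice all of whose guards are the assumption $\C$; since $\C\iff\falsek$, clauses~(2),(3),(5) of \cref{fig:normalization-pseudo} drop every branch and collapse it to $\guard[\falsek]\Tend$, while the remaining sessions satisfy the $\Tend$-only premise and hence are of the form $\guard[\C]\Tend=\guard[\falsek]\Tend$. The case \myrule{VEnd} is immediate from its $\Tend$-only premise, and \myrule{VReq}, \myrule{VAcc} follow by applying the induction hypothesis to the continuation, whose environment contains every session of the conclusion. For the iteration rule \myrule{VLoop}, a session of the conclusion has the shape $\Tdef{\pst_1};\pst_2$ or $\Tdef{\pst_1}$; by the induction hypothesis $\nfp{\pst_1}$ is a false-guarded $\Tend$, so clauses~(8) and~(9) of \cref{fig:normalization-pseudo} fire their first alternative and return a false-guarded $\Tend$ directly. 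Finally \myrule{VIf} is handled with \cref{lem:sr-merge-normalization}: its two premises are typed under $\C\land\C'$ and $\C\land\neg\C'$, both inconsistent, so the induction hypothesis makes each merged component a false-guarded $\Tend$, and the first clause of \cref{def:merge} yields $\guard[\C_1'\lor\C_2']\Tend$ with $\C_1'\lor\C_2'\iff\falsek$.

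The one genuinely delicate case is sequential composition \myrule{VSeq}, where a session of the conclusion is $\pst_1;\pst_2$ and the induction hypothesis gives only that $\nfp{\pst_1}$ and $\nfp{\pst_2}$ are false-guarded $\Tend$'s. The subtlety is that $\nfp{\cdot}=\nform[\truek]{\cdot}$ normalises under the top guard $\truek$, whereas the inconsistency sits in the inner guards of $\pst_1$, so \cref{lem:normalisation-with-false-cond} cannot be invoked at the top level. The hard part will therefore be an auxiliary normalisation fact: for every $\C$, if $\nform[\C]{\pst_1}=\guard[\C']\Tend$ with $\C'\iff\falsek$, then $\nform[\C]{\pst_1;\pst_2}=\guard[\C'']\Tend$ with $\C''\iff\falsek$. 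I would prove this by well-founded induction on the weight $\weight{\pst_1}$ of \cref{app:normal-form}, by cases on the head of $\pst_1$: choice heads collapse through clauses~(2),(3),(5),(6); the loop head returns a false-guarded $\Tend$ through clauses~(8),(9); the terminal head reduces to $\nform[\C']{\pst_2}$ via clause~(4) and then to a false-guarded $\Tend$ by \cref{lem:normalisation-with-false-cond}; and the awkward nested-sequence head $\pst_1=\pst';\pst''$ is reassociated by clause~(7), using that since $\nform[\C]{\pst';\pst''}$ is terminal so is $\nform[\C]{\pst'}$, so one recurses on the strictly smaller $\pst'$ or $\pst''$. Granting this fact, \myrule{VSeq} closes by taking $\C=\truek$ and $\nform[\truek]{\pst_1}=\nfp{\pst_1}=\guard[\C']\Tend$ with $\C'\iff\falsek$ from the induction hypothesis, whence $\nfp{\pst_1;\pst_2}$ is a false-guarded $\Tend$. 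This completes the induction and, in particular, gives $\nfp{\pst}=\guard[\falsek]\Tend$.
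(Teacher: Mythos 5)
Your proof follows the same route as the paper's: induction on the typing derivation, with \myrule{VSeq} and \myrule{VIf} as the only non-trivial cases (the paper's own proof is exactly this, stated in two lines, reading the equality up to logical equivalence of guards just as you do). Your auxiliary normalisation fact for \myrule{VSeq} is a correct and welcome elaboration of the step the paper dispatches with ``by the definition of $;$ and rule (4) in \cref{fig:normalization-pseudo}'', which indeed cannot be invoked literally since the inductive hypothesis only constrains $\nfp{\pst_1}$ while $\pst_1$ itself need not syntactically be a guarded $\Tend$.
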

\begin{proof}
  Follows by induction on the structure of $P$. The interesting cases
  are
  \begin{itemize}
  \item \myrule{VSeq}: Then, $P = P_1;P_2$, $\pjdg[@][\envmv_1][P_1]$,
    and $\pjdg[@][\envmv_2][P_2]$ and $\envmv_1;\envmv_2$ defined. By
    inductive hypothesis, if $\psnames \smv p \in\dom{\envmv_i}$ then
    $\nfp {\envmv_i \psnames \smv p} = \guard[\falsek]\Tend$ for
    $i=1,2$. Then, the result follows by the definition of $;$ and 
    rule (4) in~\cref{fig:normalization-pseudo}.

  \item \myrule{VIf}: Then, $P = \Pif{\C'}{P_1}{P_2}$,
    $\pjdg[\C\wedge\C'][\envmv_1][P_1]$, and
    $\pjdg[\C\wedge\neg\C'][\envmv_2][P_2]$ and
    $\envmv_1\peace\envmv_2$.  By inductive hypothesis, if $\psnames
    \smv p \in\dom{\envmv_i}$ then $\nfp {\envmv_i \psnames \smv p} =
    \guard[\falsek]\Tend$ for $i=1,2$. Then, the results follows by
    definition of $\peace$.
    \qedhere
  \end{itemize}
\end{proof}

\begin{lem}\label{thm:distribution-end-pseudo-types}
  If $\pjdg[@][\envmv]$, then for all $\C'$ there is $\envmv'$ such
  that:
  \begin{itemize}
  \item $\dom\envmv = \dom{\envmv'}$, and
  \item $\pjdg[\C\land\C'][\envmv']$, and
  \item for all $\psnames \smv p\in\dom\envmv$, $\envmv' \psnames \smv p
  = \guard[\C\land \C']\Tend;\envmv \psnames \smv p$.
  \end{itemize}
\end{lem}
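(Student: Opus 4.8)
The plan is to take for $\envmv'$ the specification that coincides with $\envmv$ on shared names and queues and sends each participant session to $\envmv'(\psnames\smv p)=\guard[\C\land\C']\Tend;\envmv(\psnames\smv p)$. With this choice the first and third items hold by construction ($\dom{\envmv'}=\dom\envmv$, since only the values are changed), so the whole content of the statement is the derivability of $\pjdg[\C\land\C'][\envmv']$ in the second item. I would establish it by induction on the derivation of $\pjdg[@][\envmv]$, keeping $\Gamma$ fixed throughout.

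The identity that drives every case is clause~(4) of \cref{fig:normalization-pseudo}, $\nform{\guard[\C'']\Tend;\pst}=\nform[\C\land\C'']{\pst}$: prefixing a guarded $\Tend$ is, after normalisation, exactly normalisation under a strengthened condition. Hence $\nfp{\envmv'(\psnames\smv p)}=\nform[\C\land\C']{\envmv(\psnames\smv p)}$, i.e.\ the prepend records "$\C$ strengthened by $\C'$" at the level of normal forms. For the leaf rules this makes the verification local. For \myrule{VEnd} (and \myrule{VForEnd}) $\envmv$ is $\Tend$-only; by \cref{lem:nf-stronger-condition} each $\nform[\C\land\C']{\envmv(\psnames\smv p)}$ is again a guarded $\Tend$, so $\envmv'$ is $\Tend$-only and the rule, whose premiss does not mention the condition, re-derives the judgement under $\C\land\C'$. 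For \myrule{VSend}, \myrule{VRcv}, \myrule{VReq} and \myrule{VAcc} the same rule applied under $\C\land\C'$ produces a participant type that is $\envmv(\psnames\smv p)$ with its outermost guard replaced by $\C\land\C'$; by the identity above this has the same normal form as $\envmv'(\psnames\smv p)$.

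For the compositional rules I would invoke the induction hypothesis on the sub-derivations and reassemble. In \myrule{VSeq}, from the prepends $\envmv_1',\envmv_2'$ of $\envmv_1,\envmv_2$ the rule yields $\envmv_1';\envmv_2'$, which differs from the desired $\guard[\C\land\C']\Tend;(\envmv_1;\envmv_2)(\psnames\smv p)$ only by a spurious interior $\guard[\C\land\C']\Tend$; this is absorbed again by clause~(4) and \cref{lem:nf-aux}, so the normal forms agree. In \myrule{VIf}, typing $\Pif{b}{P_1}{P_2}$, the branch premisses carry conditions $\C\land b$ and $\C\land\neg b$; applying the induction hypothesis with the extra condition $\C'$ produces prepends of $\envmv_1,\envmv_2$ guarded by $(\C\land\C')\land b$ and $(\C\land\C')\land\neg b$, and \cref{lem:sr-merge-normalization} lets these two differently-guarded prefixes recombine under $\peace$ into the uniform prefix $\guard[\C\land\C']\Tend$ up to normalisation. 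The cases \myrule{VFor} and \myrule{VLoop} follow the same scheme, now using that the prepend commutes, up to normalisation, with the loop constructor $\Tdef{\pst}$ (the loop clauses of \cref{fig:normalization-pseudo}); the parallel, restriction and queue rules for systems are immediate, the queue rules because they carry no participant sessions. In each case the constructed specification and $\envmv'$ have equal normal forms, which — under the convention that specifications are handled up to normal form, the property underlying \cref{lem:typing-preservation-structural} — yields $\pjdg[\C\land\C'][\envmv']$ for the literal prepend.

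The main obstacle is precisely this mismatch of interior guards: the typing rules always tag a choice with the current condition, so re-typing under $\C\land\C'$ yields interior guards $\C\land\C'$, whereas $\envmv'$ retains the original interior guards $\C$ inherited from $\envmv$. The reconciliation is never syntactic but is always mediated by the normalisation calculus of \cref{app:normal-form}, which is exactly why the auxiliary identities (\cref{lem:nf-aux}, \cref{lem:nf-stronger-condition}, \cref{lem:nf-neutral-end-seq}) are needed. A second delicate point is the degenerate case $\C\land\C'\iff\falsek$, where \myrule{VIf} is no longer applicable under the strengthened condition: here \cref{lem:normalisation-with-false-cond} forces $\nfp{\envmv'(\psnames\smv p)}=\guard[\falsek]\Tend$ and \cref{lem:typing-with-false-env} shows any re-typed specification collapses to the same false-guarded ends, so both sides again coincide up to normal form.
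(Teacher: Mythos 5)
Your proposal is correct and follows essentially the same route as the paper: the same induction on the typing derivation, the same pivotal identity $\nform{\guard[\C'']\Tend;\pst}=\nform[\C\land\C'']{\pst}$ to reduce everything to equality of normal forms, the same use of \cref{lem:nf-aux}, \cref{lem:sr-merge-normalization} and the loop clauses for the compositional cases, and the same separate treatment of the degenerate case $\C\land\C'\iff\falsek$ via \cref{lem:normalisation-with-false-cond} and \cref{lem:typing-with-false-env}. The only cosmetic difference is that for \myrule{VSeq} the paper absorbs the spurious interior $\guard[\C\land\C']\Tend$ with \cref{lem:nf-neutral-end-seq} rather than with clause~(4) plus \cref{lem:nf-aux}, which amounts to the same computation.
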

\begin{proof}
  Let $\envmv\psnames \smv p = \pst$ and $\envmv'\psnames \smv p=
  \pst'$. We show that $\nfp{\pst'}= \nfp{\guard[\C\land
      \C']\Tend;\pst}$ by induction on the structure of the proof for
  the judgment $\pjdg[@][\envmv, \psnames \smv p: \pst]$. We first
  assume that $\neg (\C\land \C' \iff \falsek)$, and show
  \begin{itemize}
  \item \myrule{VReq}, \myrule{VAcc}: Follow by inductive hypothesis.
  \item \myrule{VRcv}: Then, $\pst = \displaystyle{\sum_{i\in I}
    \Tbra{\guard\smv_i}{\sort_{\mbox{\scriptsize$i$}}}{\pst_i}}$, and
    $\forall i \in I \qst \pjdg[@][\envmv, \psnames \smv p :
    {\pst_i}][P_i][@][\Gamma, x_i : \sort_i]$.
    Similarly,
    $\pst' = \displaystyle{\sum_{i\in I}
      \Tbra{\guard[\C\land\C']{\smv_i}}{\sort_{\mbox{\scriptsize$i$}}}{\pst_i'}}$
    and
    $\forall i \in I \qst
    \pjdg[\C\land\C'][\envmv', \psnames \smv p : {\pst_i'}][P_i][@]
         [\Gamma, x_i : \sort_i]$
    \[
    \begin{array}{rll}
      \nform[\truek] {\pst'}
      &
      =
      \displaystyle{\sum_{i\in I}
        \Tbra{\guard[\C\land\C']{\smv_i}}{\sort_{\mbox{\scriptsize$i$}}}{\pst_i'}}
      &
      \textit{by def.\ of $\pst'$}
      \\
      &
      =
      \displaystyle{\sum_{i\in I}
        \Tbra{\guard[\C\land\C']\smv_i}{\sort_{\mbox{\scriptsize$i$}}}
             {\nform[\C\land\C']{\pst_i'}}}
      &
      \textit{by \cref{fig:normalization-pseudo}(5)}
      \\
      &
      =
      \displaystyle{\sum_{i\in I}
        \Tbra{\guard[\C\land\C']\smv_i}{\sort_{\mbox{\scriptsize$i$}}}
             {\nform[\C\land\C']{\guard[\C\land\C']\Tend;\pst_i}}}
      &
      {\it by\ ind.\ hyp.}\
      \\
      &
      =
      \displaystyle{\sum_{i\in I}
        \Tbra{\guard[\C\land\C']\smv_i}{\sort_{\mbox{\scriptsize$i$}}}
             {\nform[\C\land\C']{\pst_i}}}
      &
      \textit{by \cref{fig:normalization-pseudo}(6)}
      \\
      &
      =
      \nform[\C\land\C']{\displaystyle{\sum_{i\in I}
          \Tbra{\guard\smv_i}{\sort_{\mbox{\scriptsize$i$}}}{{\pst_i}}}}
      &
      \textit{by \cref{fig:normalization-pseudo}(5)}
      \\
      &
      =
      \nform[\C\land\C']\pst
      &
      \textit{by def.\ of $\pst$}
      \\
      &
      =
      \nform[\truek] {\guard[\C\land\C']\Tend;\pst}
      &
      \textit{by \cref{fig:normalization-pseudo}(6)}
    \end{array}
    \]
  \item \myrule{VSend}: Then,
    $\pst = \guard{\Tsend \smv \sort};\guard\Tend$ and
    $\pst' = \guard[\C\land\C']{\Tsend \smv
      \sort};\guard[\C\land\C']\Tend$.
    \[
    \begin{array}{rll}
      \nform[\truek]{\pst'}
      & =
      \nform[\truek]{\guard[\C\land\C']{\Tsend \smv \sort};
        \guard[\C\land\C']{\Tend}}
      &
      \textit{by def.\ of $\pst'$}
      \\
      &
      =
      \nform[\truek]{\guard[\C\land\C']{\Tsend \smv \sort};
        \nform[\C\land\C']{\guard\Tend}}
      &
      \textit{by \cref{fig:normalization-pseudo}(6)}
      \\
      &
      =
      \nform[\C\land\C']{\guard[\C]{\Tsend \smv \sort};{\guard\Tend}}
      &
      \textit{by \cref{fig:normalization-pseudo}(4)}
      \\
      &
      =
      \nform[\C\land\C']{\pst}
      &
      \textit{by def.\ of $\pst$}
      \\
      &
      =
      \nform[\truek]{\guard[\C\land\C']\Tend;\pst}
      &
      \textit{by \cref{fig:normalization-pseudo}(6)}
    \end{array}
    \]
  \item \myrule{VEnd}: Then, $\pst = \guard\Tend$ and
    $\pst' = \guard[\C\land\C']\Tend$. By normalising,
    \[
    \begin{array}{rll}
      \nform[\truek]{\pst'}
      & =
      \nform[\truek]{\guard[\C\land\C']\Tend}
      &
      \textit{by def.\ of $\pst'$}
      \\
      &
      =
      \nform[\C\land\C']{\guard\Tend};
      &
      \textit{by \cref{fig:normalization-pseudo}(1)}
      \\
      &
      =
      \nform[\C\land\C']{\pst};
      &
      \textit{by def.\ of $\pst'$}
      \\
      &
      =
      \nform[\truek]{\guard[\C\land\C']\Tend;\pst};
      &
      \textit{by \cref{fig:normalization-pseudo}(6)}
    \end{array}
    \]
  \item \myrule{VSeq} There are two cases:
    \begin{itemize}
    \item $\psnames \smv i\in (\dom{\envmv_1}\cup\dom{\envmv_2})$:
      Then $\pst= \pst_1;\pst_2$ and $\pst'= \pst_1';\pst_2'$. By
      inductive hypothesis,
      $\pst'_1= \guard[\C\land \C']\Tend;{\pst_1}$ and
      $\pst'_2= \guard[\C\land \C']\Tend;{\pst_2}$.
      \[
      \begin{array}{rll}
        \nform[\truek]{\pst'}& =
        \nform[\truek]{\pst'_1;\pst'_2}
        &
        \textit{by def.\ of $\pst'$}
        \\
        &
        =
        \nform[\truek]{(\guard[\C\land \C']\Tend;{\pst_1});
          (\guard[\C\land \C']\Tend;{\pst_2})}
        &
        {\it by\ ind.\ hyp.}
        \\
        &
        =
        \nform[\truek]{\guard[\C\land \C']\Tend;
          ({\pst_1};(\guard[\C\land \C']\Tend;{\pst_2}))}
        &
        \textit{by \cref{fig:normalization-pseudo}(6)}
        \\
        &
        =
        \nform[\C\land \C']{{\pst_1};(\guard[\C\land \C']\Tend;{\pst_2})}
        &
        \textit{by \cref{fig:normalization-pseudo}(9)}
        \\
        &
        =
        \nform[\C\land \C']{{\pst_1};{\pst_2}}
        &
        \textit{by \lemref{lem:nf-neutral-end-seq}}
        \\
        &
        =
        \nform[\C\land \C']{{\pst}}
        &
        \textit{by def.\ of $\pst$}
        \\
        &
        =
        \nform[\truek]{\guard[\C\land\C']\Tend;\pst}
        &
        \textit{by \cref{fig:normalization-pseudo}(6)}
      \end{array}
      \]
    \item $\psnames \smv i \not\in \dom{\envmv_2}$: Then
      $\pst= \pst_1$ and $\pst'= \pst_1'$. By inductive hypothesis,
      $\pst'_1= \guard[\C\land \C']\Tend;{\pst_1}$. Hence,
      $\pst' = \guard[\C\land \C']\Tend;{\pst}$.
    \end{itemize}
  \item \myrule{VIf}: Then $\pst = \pst_1\peace\pst_2$ and
    $\pst' = \pst_1'\peace\pst_2'$.
    \[
    \begin{array}{rll}
      \nform[\truek]{\pst'}& =
      \nform[\truek]{\pst_1'\peace\pst_2'}
      &
      \textit{by def.\ of $\pst'$}
      \\
      &
      =
      \nform[\truek]{\pst_1'}\peace\nform[\truek]{\pst_2'}
      &
      \textit{by \cref{lem:sr-merge-normalization}}
      \\
      &
      =
      \nform[\truek]{(\guard[\C\land \C']\Tend;
        {\pst_1})}\peace\nform[\truek]{{(\guard[\C\land \C']\Tend;{\pst_2})}}
      &
      {\it by\ ind.\ hyp.}
      \\
      &
      =
      \nform[\C\land \C']{\pst_1}\peace\nform[\C\land \C']{\pst_2}
      &
      \textit{by \cref{fig:normalization-pseudo}(6)}
      \\
      &
      =
      \nform[\C\land \C']{\pst_1'\peace\pst_2'}
      &
      \textit{by \cref{lem:sr-merge-normalization}}
      \\
      &
      =
      \nform[\C\land \C']{{\pst}}
      &
      \textit{by def.\ of $\pst$}
      \\
      &
      =
      \nform[\truek]{\guard[\C\land\C']\Tend;\pst}
      &
      \textit{by \cref{fig:normalization-pseudo}(6)}
    \end{array}
    \]
  \item \myrule{VFor}: Then $\pst = \Tdef{(\pst_1)}$,
    $\pst' = \Tdef{(\pst_1')}$. Then,
    \[
    \begin{array}{rll}
      \nform[\truek]{\pst'}& =
      \nform[\truek]{ \Tdef{(\pst_1')}}
      &
      \textit{by def.\ of $\pst'$}
      \\
      &
      =
      \Tdef{\nform[\truek]{\pst_1'}}
      &
      \textit{by \cref{fig:normalization-pseudo}(13)}
      \\
      &
      =
      \Tdef{\nform[\truek]{\guard[\C\land \C']\Tend;{\pst_1}}}
      &
      {\it by\ ind.\ hyp.}
      \\
      &
      =
      \Tdef{\nform[\C\land \C']{\pst_1}}
      &
      \textit{by \cref{fig:normalization-pseudo}(6)}
      \\
      &
      =
      \nform[\C\land \C']{\Tdef{(\pst_1)}}
      &
      \textit{by \cref{fig:normalization-pseudo}(13)}
      \\
      &
      =
      \nform[\C\land \C']{{\pst}}
      &
      \textit{by def.\ of $\pst$}
      \\
      &
      =
      \nform[\truek]{\guard[\C\land\C']\Tend;\pst}
      &
      \textit{by \cref{fig:normalization-pseudo}(6)}
    \end{array}
    \]
  \item \myrule{VForEnd}: It follows analogously to the case
    \myrule{VEnd}.
  \item \myrule{Vloop}: Since ${\envmv_1}$ and ${\envmv_2}$ are
    passively compatible, then $\dom{\envmv_1} = \dom {\envmv_2}$,
    then $\psnames \smv i\in (\dom{\envmv_1}\cap\dom{\envmv_2})$.
    Consequently, $\pst = \Tdef{(\pst_1)};\pst_2$, $\pst' =
    \Tdef{(\pst_1')};\pst_2$.  By inductive hypothesis, $\pst'_1=
    \guard[\C\land \C']\Tend;{\pst_1}$ and $\pst'_2= \guard[\C\land
      \C']\Tend;{\pst_2}$.
    \[
    \begin{array}{rll}
      \nform[\truek]{\pst'}& =
      \nform[\truek]{\Tdef{(\pst'_1)};\pst'_2}
      &
      \textit{by def.\ of $\pst'$}
      \\
      &
      =
      \nform[\truek]{\Tdef{(\guard[\C\land \C']\Tend;\pst_1)};
        (\guard[\C\land \C']\Tend;{\pst_2})}
      &
      {\it by\ ind.\ hyp.}
      \\
      &
      =
      \nform[\truek]{\Tdef{(\guard[\C\land \C']\Tend;\pst_1)}};
      \nform[\truek]{\guard[\C\land \C']\Tend;{\pst_2}}
      &
      \textit{by \cref{fig:normalization-pseudo}(11)}
      \\
      &
      =
      \Tdef{\nform[\truek]{(\guard[\C\land \C']\Tend;\pst_1)}};
      \nform[\truek]{\guard[\C\land \C']\Tend;{\pst_2}}
      &
      \textit{by \cref{fig:normalization-pseudo}(13)}
      \\
      &
      =
      \Tdef{\nform[\C\land \C']{\pst_1}};\nform[\C\land \C']{\pst_2}
      &
      \textit{by \cref{fig:normalization-pseudo}(6)}
      \\
      &
      =
      \nform[\C\land \C']{\Tdef{\pst_1}};\nform[\C\land \C']{\pst_2}
      &
      \textit{by \cref{fig:normalization-pseudo}(13)}
      \\
      &
      =
      \nform[\C\land \C']{\Tdef{(\pst_1)};\pst_2}
      &
      \textit{by \cref{fig:normalization-pseudo}(11)}
      \\
      &
      =
      \nform[\C\land \C']{{\pst}}
      &
      \textit{by def.\ of $\pst$}
      \\
      &
      =
      \nform[\truek]{\guard[\C\land\C']\Tend;\pst}
      &
      \textit{by \cref{fig:normalization-pseudo}(6)}
    \end{array}
    \]
  \end{itemize}
  If $\C\land \C' \iff \falsek$, we proceed as follows
  \begin{align*}
  \begin{array}{rll}
    \nform[\truek] {\guard[\C\land\C']\Tend; \pst }
    & =
    \nform[\C\land\C'] \pst
    &
    \textit{by \cref{fig:normalization-pseudo}(6)}
    \\
    &
    =
    \guard[\falsek]\Tend
    &
    \textit{by \lemref{lem:normalisation-with-false-cond}}
    \\
    &
    =
    \pst'
    &
    \textit{by \lemref{lem:typing-with-false-env}}
    \end{array}
    \\[\dimexpr-1.0\baselineskip+\dp\strutbox]&\qedhere
  \end{align*}
\end{proof}

\begin{lem}\label{lem:typing-strenghten-condition}
  If $\pjdg[\C][\envmv][P][@][@]$, then
  \begin{itemize}
  \item
    $\pjdg[\C\land\C'][\envmv'][P][@][@]$; and
  \item for all $\psnames \smv p \in\dom {\envmv'}$, $\envmv' \psnames
    \smv p = \nform[\C']{\envmv \psnames \smv p}$
  \end{itemize}
\end{lem}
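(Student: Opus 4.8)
The plan is to argue by induction on the derivation of $\pjdg[\C][\envmv][P][@][@]$, closely paralleling the proof of \cref{thm:distribution-end-pseudo-types}. Fixing an arbitrary $\C'$, I define the witness $\envmv'$ by $\dom{\envmv'}=\dom\envmv$ and $\envmv'\psnames\smv p=\nform[\C']{\envmv\psnames\smv p}$ for every $\psnames\smv p\in\dom\envmv$ (leaving the shared-name and queue entries of $\envmv$ untouched, since the second clause only constrains participant sessions). The whole content of the lemma is then the first bullet: that re-applying the last rule of the derivation under the strengthened assumption $\C\land\C'$ is legal and produces precisely these pseudo-types, once everything is read up to normal form. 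As in \cref{thm:distribution-end-pseudo-types}, I would dispatch the degenerate case $\C\land\C'\iff\falsek$ once and for all: there \cref{lem:normalisation-with-false-cond} gives $\nform[\C']{\envmv\psnames\smv p}=\guard[\falsek]\Tend$, while \cref{lem:typing-with-false-env} shows every participant entry of any specification typing $P$ under an inconsistent assumption has normal form $\guard[\falsek]\Tend$; so both descriptions collapse to the same thing and the reconstructed derivation is immediate. The remaining work assumes $\neg(\C\land\C'\iff\falsek)$.

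For the leaf and propagation rules the verification is routine equational reasoning with \cref{fig:normalization-pseudo}. In \myrule{VEnd} the entry is $\guard[\C]\Tend$ and clause (1) turns it into $\guard[\C\land\C']\Tend=\nform[\C']{\guard[\C]\Tend}$; \myrule{VSend} is the same computation carried through the singleton internal choice, using $\C\land\C'\land\C=\C\land\C'$. In \myrule{VReq} and \myrule{VAcc} the side conditions are phrased with guard removal $\rmg{\cdot}$, which is insensitive to the actual guards, so they survive verbatim and the result follows from the induction hypothesis on the continuation. \myrule{VRcv} and \myrule{VFor} are handled by pushing the induction hypothesis under each branch respectively under the loop body and reassembling, exactly as in the corresponding cases of \cref{thm:distribution-end-pseudo-types}; the condition $x\notin\var\envmv$ of \myrule{VFor} is unaffected by strengthening, and \myrule{VForEnd} mirrors \myrule{VEnd}.

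The substantive cases are the composite ones. For \myrule{VSeq} I would apply the induction hypothesis to both premisses with the same $\C'$ and recombine the sequential types using \cref{lem:nf-seq-composition} together with \cref{lem:nf-neutral-end-seq} and \cref{lem:nf-neutral-end-seq-right} to absorb the prefixed end-types, mirroring the earlier lemma; \myrule{Vloop} is analogous, additionally using the loop clauses of \cref{fig:normalization-pseudo} and \cref{lem:nf-aux}. The main obstacle is \myrule{VIf}. Here the premisses are typed under $\C\land\C''$ and $\C\land\neg\C''$; applying the induction hypothesis to each with condition $\C'$ yields derivations under $(\C\land\C')\land\C''$ and $(\C\land\C')\land\neg\C''$ with participant entries $\nform[\C']{\envmv_1\psnames\smv p}$ and $\nform[\C']{\envmv_2\psnames\smv p}$, which is exactly what \myrule{VIf} needs under the assumption $\C\land\C'$. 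The delicate point is that the merge must still be defined: one must check that $\C'$-normalisation preserves mergeability (it only tightens guards and prunes branches with inconsistent guards, never altering the channel structure on which \cref{def:mergeability} rests), after which \cref{lem:sr-merge-normalization} gives $\nform[\C']{\envmv_1\psnames\smv p}\peace\nform[\C']{\envmv_2\psnames\smv p}=\nform[\C']{\envmv_1\psnames\smv p\peace\envmv_2\psnames\smv p}=\nform[\C']{\envmv\psnames\smv p}$, closing the case. Because mergeability and the merge operation are defined on normal forms, verifying that this preservation goes through smoothly — rather than any single computation — is where I expect the real care to be needed.
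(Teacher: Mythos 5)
Your proof is correct, and its mathematical content coincides with the paper's: the paper proves this lemma in one line as a direct consequence of \cref{thm:distribution-end-pseudo-types}, whose proof is exactly the induction you describe (the witness there is $\envmv'\psnames \smv p = \guard[\C\land\C']{\Tend};\envmv\psnames \smv p$, which clause (4) of \cref{fig:normalization-pseudo} normalises to your $\nform[\C']{\envmv\psnames \smv p}$ up to the ambient assumption $\C$). So rather than re-running the whole induction --- including the delicate \myrule{VIf}/mergeability step, which is already discharged once via \cref{lem:sr-merge-normalization} in the proof of that earlier lemma --- you could simply instantiate it and normalise.
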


\begin{proof}
  Directly from \lemref{thm:distribution-end-pseudo-types}.
\end{proof}


\subsection{Consistency}

\begin{lem}\label{lem:consistency-preservation-under-store-extension}
  If $\consPst[\pst]$, then $\consPst[@][\sigma']$ for any $\sigma'$
  such that $\sigma(x) = \sigma'(x)$ for all $x\in \varpst{\pst}$.
\end{lem}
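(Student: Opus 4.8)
The plan is to proceed by induction on the derivation of $\consPst[\pst]$, which amounts to structural induction on $\pst$ because the rules of \cref{fig:consistency-store-pst} are syntax-directed on the pseudo-type. The key observation is that each rule inspects the store $\sigma$ only through the evaluation of the guards occurring in $\pst$: the side conditions $\consExp[\C_i]$ in $\myrule{CEnd}$, $\myrule{CSend}$ and $\myrule{CRcv}$ assert that $\eval{\C_i}{\sigma} = \truek$ or $\eval{\C_i}{\sigma}$ is undefined. Every such guard satisfies $\var{\C_i} \subseteq \varpst\pst$, so if $\sigma$ and $\sigma'$ agree on $\varpst\pst$ they agree on $\var{\C_i}$; by the assumption that the evaluation of an expression depends only on its variables we then get $\eval{\C_i}{\sigma} = \eval{\C_i}{\sigma'}$, with the convention that agreement covers the case in which both sides are undefined. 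Consequently $\consExp[\C_i]$ holds exactly when $\consExp[\C_i][\sigma']$ holds.

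For the base case $\myrule{CEnd}$ we have $\pst = \guard\Tend$ and $\varpst\pst = \var\C$, so the premise $\consExp$ transports to $\consExp[\C][\sigma']$ by the observation above, and $\myrule{CEnd}$ re-derives $\consPst[\guard\Tend][\sigma']$. In the inductive cases I would rebuild the given derivation verbatim with $\sigma'$ in place of $\sigma$, using that the variables of every immediate subterm are contained in $\varpst\pst$ so the induction hypothesis applies. For $\myrule{CSeq}$, $\pst = \pst_1;\pst_2$ with $\varpst{\pst_1}, \varpst{\pst_2} \subseteq \varpst\pst$, hence the premises $\consPst[\pst_1]$ and $\consPst[\pst_2]$ give $\consPst[\pst_1][\sigma']$ and $\consPst[\pst_2][\sigma']$ and $\myrule{CSeq}$ concludes; $\myrule{CLoop}$ (with $\pst = \Tdef{\pst_1}$) is the same with a single premise. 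For $\myrule{CSend}$, $\pst = \Ipst[\C_i]{i \in I}{\smv_i}{\sort_i}{\pst_i}$ and the derivation fixes a witness $j \in I$ with $\consExp[\C_j]$ and $\consPst[\pst_j]$; since $\var{\C_j} \cup \varpst{\pst_j} \subseteq \varpst\pst$, the observation yields $\consExp[\C_j][\sigma']$ and the induction hypothesis yields $\consPst[\pst_j][\sigma']$, so $\myrule{CSend}$ re-derives the conclusion. The case $\myrule{CRcv}$ is identical except that the two facts are applied uniformly to every $i \in I$.

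The argument is essentially bookkeeping; the only delicate point, which I would isolate as the single nontrivial step, is the handling of guard evaluation. I must invoke the dependency assumption in the precise form that $\eval\C\sigma$ is determined by the restriction of $\sigma$ to $\var\C$, together with the definedness clause that $\eval\C\sigma$ is undefined exactly when $\var\C \not\subseteq \dom\sigma$, so that agreement of $\sigma$ and $\sigma'$ on $\varpst\pst$ transports both the truth value and the definedness status of each guard. Once this is settled, every rule instance is preserved and the induction closes immediately.
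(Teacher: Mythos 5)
Your proof is correct and follows essentially the same route as the paper's: structural induction on the derivation of the consistency judgement, with the key observation that every guard of $\pst$ evaluates identically under $\sigma$ and $\sigma'$ because the two stores agree on $\varpst{\pst}$. Your explicit treatment of the definedness status of guard evaluation is a slightly more careful spelling-out of what the paper leaves implicit, but it is not a different argument.
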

\begin{proof}
  By structural induction on the structure of $\consPst$ and by
  noticing that $\eval \C {(\sigma \cap \sigma')}= \eval \C \sigma$
  for every expression $\C$ in $\pst$.
\end{proof}

\begin{lem}\label{lem:const-pseudo-extended-true-guard}
  If $x \not\in \var \pst$ and $\consPst$ then
  $\consPst[@][\sigma|_{\dom \sigma \setminus \{x\}}]$.
\end{lem}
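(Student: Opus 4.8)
The plan is to reduce the statement to \cref{lem:consistency-preservation-under-store-extension}, which already isolates the fact that consistency of a store with a pseudo-type depends only on the values the store assigns to the variables occurring in the guards of that pseudo-type. Concretely, I would apply that lemma with the given $\sigma$ and $\pst$, choosing $\sigma' = \sigma|_{\dom \sigma \setminus \{x\}}$.

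The one side condition to discharge is that $\sigma$ and $\sigma'$ agree on every variable in $\var\pst$. This is immediate from the hypothesis $x \notin \var\pst$: any $y \in \var\pst$ satisfies $y \neq x$, so either $y \in \dom\sigma$, in which case $y \in \dom\sigma \setminus \{x\} = \dom{\sigma'}$ and $\sigma'(y) = \sigma(y)$, or $y \notin \dom\sigma$, in which case $y \notin \dom{\sigma'}$ as well and both stores are undefined at $y$. Thus $\sigma$ and $\sigma'$ coincide on $\var\pst$, and since $\consPst$ holds by assumption, \cref{lem:consistency-preservation-under-store-extension} gives $\sigma' \ltimes \pst$, i.e.\ $\consPst[@][\sigma|_{\dom \sigma \setminus \{x\}}]$, as required.

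I do not expect any genuine obstacle: the whole difficulty is already packaged in \cref{lem:consistency-preservation-under-store-extension}, and the present claim is merely its instance where the store is shrunk by deleting a single irrelevant variable. If a self-contained argument were preferred, the same result follows by a routine well-founded induction on $(\pst, <)$ following the inductive definition of $\consPst$ in \cref{fig:consistency-store-pst}; the only point used is that for each guard $\C$ appearing in $\pst$ one has $x \notin \var\C$, whence $\eval \C {\sigma|_{\dom \sigma \setminus \{x\}}} = \eval \C \sigma$ and so no guard is newly falsified by the restriction.
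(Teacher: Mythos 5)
Your proof is correct and matches the paper's: the paper also derives this lemma as an immediate instance of \cref{lem:consistency-preservation-under-store-extension}, using exactly the observation that $\sigma$ and $\sigma|_{\dom \sigma \setminus \{x\}}$ agree on $\var\pst$ because $x \notin \var\pst$. The self-contained induction you sketch as an alternative is not needed but would also go through.
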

\begin{proof}
  It follows
  from~\cref{lem:consistency-preservation-under-store-extension}.
\end{proof}

\begin{lem}\label{lem:const-merge-if}
  If\ $\pst \peace \pst'$ is defined and either $\consPst$ or
  $\consPst[\pst']$, then $\consPst[(\pst\peace\pst')]$.
\end{lem}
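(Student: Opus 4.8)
The plan is to argue by induction on the derivation of $\pst \peace \pst'$, following the clause-by-clause case analysis of \cref{def:merge}, which runs exactly parallel to the clauses of mergeability in \cref{def:mergeability}. Because $\_ \peace \_$ is only defined when $\pst$ and $\pst'$ are mergeable (hence in normal form and of matching shape), in each case I can invert the consistency derivation of whichever hypothesis holds---$\consPst$ or $\consPst[\pst']$---using the unique applicable rule of \cref{fig:consistency-store-pst}, and then rebuild a derivation of $\consPst[(\pst\peace\pst')]$ with the same rule. The ``either \dots\ or'' disjunction in the hypothesis is carried unchanged into each recursive call, so the induction hypothesis is always applied to the disjunct that is actually available.

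Before the case analysis I would isolate one small semantic fact about guards: if $\consExp$ holds then so do $\consExp[{\C \lor \C'}]$ and $\consExp[{\C' \lor \C}]$. Recall $\consExp$ abbreviates $\eval \C \sigma = \truek$ or $\eval \C \sigma$ undefined, i.e.\ that $\sigma$ does not falsify $\C$; since a disjunction evaluates to $\falsek$ only when both disjuncts do, $\sigma$ cannot falsify $\C \lor \C'$ as soon as it fails to falsify one of them. This is exactly what is needed wherever $\peace$ introduces a disjoined guard.

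For the two base-like clauses the argument is then immediate. If $\pst = \guard\Tend$ and $\pst' = \guard[\C']\Tend$, the merge is $\guard[{\C\lor\C'}]\Tend$; inverting \myrule{CEnd} on the consistent side yields $\consExp$ (resp.\ $\consExp[\C']$), the guard fact upgrades it to $\consExp[{\C\lor\C'}]$, and \myrule{CEnd} closes the goal. In the internal-choice clause the merge is the union $\Ipst[\C_i]{i \in I \cup J}{\smv_i}{\sort_i}{\pst_i}$, with guards and continuations carried over unchanged; \myrule{CSend} supplies only \emph{some} witness branch $i$ of the consistent side with $\consExp[\C_i]$ and $\consPst[\pst_i]$, and since $i$ reappears unchanged in $I\cup J$ the very same data discharge the premise of \myrule{CSend} for the merge, needing neither the guard fact nor the induction hypothesis. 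The external-choice clause is the genuinely recursive one: here the merge is $\Epst[{\C_i\lor\C_i'}]{i \in I}{\smv_i}{\sort_i}{\pst_i \peace \pst_i'}$ and \myrule{CRcv} forces \emph{all} branches of the consistent side to be consistent, so for each $i \in I$ I combine the guard fact (for $\C_i \lor \C_i'$) with the induction hypothesis on the mergeable pair $\pst_i,\pst_i'$ (for $\pst_i \peace \pst_i'$) and reassemble with \myrule{CRcv}. The sequential and iteration clauses are routine: invert with \myrule{CSeq} (resp.\ \myrule{CLoop}), apply the induction hypothesis componentwise to $\pst_1'\peace\pst_2'$ (and $\pst_1''\peace\pst_2''$), and reassemble.

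The main obstacle is bookkeeping rather than mathematical depth: one must track the asymmetry between the existential premise of \myrule{CSend} (internal choice) and the universal premise of \myrule{CRcv} (external choice) so that the witness/branch data are threaded correctly, and one must propagate the ``either \dots\ or'' hypothesis consistently so that each use of the induction hypothesis refers to the disjunct that is actually assumed. The only truly semantic point is the treatment of $\lor$ on possibly-undefined guards, which is exactly why I would dispatch the monotonicity fact about $\consExp$ at the outset; everything else is mechanical inversion and reconstruction dictated by the shape of \cref{def:merge}.
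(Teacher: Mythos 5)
Your proof is correct and takes essentially the same route as the paper, which disposes of this lemma in one line (\lq\lq by induction on the structure of $\pst$\rq\rq); your clause-by-clause inversion along \cref{def:merge}, the existential/universal asymmetry between \myrule{CSend} and \myrule{CRcv}, and the preliminary observation that $\consExp$ is preserved under disjunction of guards are exactly the details that induction requires.
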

\begin{proof}
  By induction on the structure of $\pst$.
\end{proof}

\begin{lem}\label{lem:const-pseudo-typing-and-store-correspondence}
  Let $\sigma$, $\envmv$, $\C$, and $\Gamma$ be such that conditions~\eqref{cons2:eq}--\eqref{cons4:eq} in \defref{def:consistent} hold.
  Then
  \[\pjdg \quad\implies\quad \forall \psnames \smv p \in \dom{\envmv}
  \qst \consPst[\envmv \psnames \smv p]\]
\end{lem}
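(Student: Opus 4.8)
The plan is to prove the statement by induction on the derivation of $\pjdg[\C][\envmv][S][\envmv'][\Gamma]$ (equivalently, on the structure of $S$), showing for every participant's session $\psnames \smv p \in \dom \envmv$ that $\consPst[\envmv \psnames \smv p]$ holds by applying the matching rule of \cref{fig:consistency-store-pst}. Two observations drive the whole argument. First, by hypothesis~\eqref{cons4:eq} we have $\eval \C \sigma = \truek$, so any guard that is exactly $\C$ — or a boolean combination forced to $\truek$ by $\C$ — immediately discharges the premiss $\consExp[\cdot]$ of \myrule{CEnd}, \myrule{CSend}, and \myrule{CRcv}; inconsistent guards of the form $\falsek$ cannot occur on the reachable positions, since the typing judgements carry satisfiable assumptions (the \texttt{consistent} side-conditions) and by \cref{lem:typing-with-false-env}. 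Second, guards sitting \emph{behind} an input prefix mention the variables bound by that input; these are fresh and lie outside $\dom \sigma$, so such guards evaluate to \emph{undefined}, which also satisfies $\consExp$ by definition. Moreover, \cref{lem:typing:and:fn} together with hypothesis~\eqref{cons2:eq} gives $\var{\C'} \subseteq \dom \Gamma \subseteq \dom \sigma$ for every conditional guard $\C'$ occurring in $S$, so such guards are always \emph{defined} under $\sigma$.

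I would then proceed rule by rule. The base cases \myrule{VEnd} and \myrule{VForEnd} give a $\Tend$-only $\envmv$, so each session type is built from guarded ends whose guards are implied by $\C$, and consistency closes by \myrule{CEnd} and \myrule{CSeq}. For \myrule{VSend} the type $\guard{\Tsend \smv \sort};\guard\Tend$ closes by \myrule{CSeq} over a singleton \myrule{CSend} and a \myrule{CEnd}, the head guard being $\C$. For \myrule{VSeq}, \myrule{VFor}, and \myrule{VLoop} I would use \myrule{CSeq}, \myrule{CLoop} and the induction hypotheses, the side condition $x \notin \var \envmv$ of \myrule{VFor} keeping the loop variable out of the guards. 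The input rule \myrule{VRcv} produces an external choice all of whose head guards equal $\C$, so \myrule{CRcv} applies at the head; to handle each continuation $\pst_i$, typed in the context $\Gamma, x_i : \sort_i$, I would extend $\sigma$ to $\sigma\upd{\val{v}_i}{x_i}$ with $\val{v}_i$ of sort $\sort_i$ so that \eqref{cons2:eq}--\eqref{cons4:eq} hold for the premiss, apply the induction hypothesis, and transport the consistency back to $\sigma$ by \cref{lem:consistency-preservation-under-store-extension,lem:const-pseudo-extended-true-guard} (noting that any guard of $\pst_i$ that does mention $x_i$ becomes undefined under $\sigma$ and hence trivially satisfies $\consExp$). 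The system rules \myrule{VPar}, \myrule{VNew}, \myrule{VQueue}, and \myrule{VEmpty} follow from the induction hypotheses on the (disjoint) sub-specifications, and \myrule{VReq}/\myrule{VAcc} reduce to the induction hypothesis on a premiss whose environment only adds a participant's session, which introduces no new $\chset$-keyed queue entry, so \eqref{cons2:eq} transfers unchanged.

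The crux is the conditional rule \myrule{VIf}, whose conclusion assigns the merged specification $\envmv_1 \peace \envmv_2$. Since $\var{\C'} \subseteq \dom \Gamma \subseteq \dom \sigma$, the guard $\C'$ is defined under $\sigma$, so exactly one of $\eval{\C'}\sigma$, $\eval{\neg \C'}\sigma$ equals $\truek$. In the first sub-case $\eval{\C\land\C'}\sigma = \truek$, so \eqref{cons2:eq}--\eqref{cons4:eq} hold for the then-premiss $\pjdg[\C \land \C'][\envmv_1][P_1][@][\Gamma]$; the induction hypothesis yields $\consPst[\envmv_1 \psnames \smv p]$ for every session, and I would conclude $\consPst[(\envmv_1 \peace \envmv_2) \psnames \smv p]$ by \cref{lem:const-merge-if}; the $\eval{\C'}\sigma = \falsek$ sub-case is symmetric via the else-premiss. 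This is exactly where \cref{lem:const-merge-if} is indispensable: $\peace$ fuses the two branches, and consistency need only be inherited from the branch that $\sigma$ actually enables.

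The main obstacle is keeping the syntax-directed consistency relation of \cref{fig:consistency-store-pst} in step with two features of the typing rules. One is the introduction of freshly bound term variables in \myrule{VRcv} and \myrule{VFor}: these are absent from $\dom \sigma$, so the preconditions \eqref{cons2:eq}--\eqref{cons4:eq} of the induction hypothesis must be re-established by extending $\sigma$ and then the resulting consistency must be pushed back onto $\sigma$, which is delicate precisely when the bound variable appears in the continuation's guards (requiring the small monotonicity observation that turning a guard undefined only helps $\consExp$, reconciled with \cref{lem:const-pseudo-extended-true-guard}). The other is the guard bookkeeping across the merge $\peace$ in \myrule{VIf}, which forces the definedness argument for $\C'$ and the appeal to \cref{lem:const-merge-if}. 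Verifying that the preconditions of Definition~\ref{def:consistent} transfer to each premiss — rather than any difficulty with raw versus normalised pseudo-types — is the only genuinely technical point; every other case is a routine match against the rules of \cref{fig:consistency-store-pst}.
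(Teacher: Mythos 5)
Your proposal is correct and follows essentially the same route as the paper: induction on the typing derivation, extending the store for the bound variables introduced by \myrule{VRcv} and \myrule{VFor} and transporting consistency back via \cref{lem:const-pseudo-extended-true-guard}, and resolving \myrule{VIf} by observing that one of $\C\land\C'$, $\C\land\neg\C'$ holds under $\sigma$ so the inductive hypothesis applies to that branch and \cref{lem:const-merge-if} closes the merged specification. The only difference is presentational: you make explicit the role of undefined guards in $\consExp$ and cover the system rules, which the paper leaves implicit.
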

\begin{proof}
  By structural induction on the derivation of the typing
  judgement. We proceed by case analysis on the last rule applied in
  the derivation of the judgment $\pjdg$.
  \begin{itemize}
  \item \myrule{VReq} The thesis directly follows from the inductive
    hypothesis.
  \item \myrule{VAcc} The thesis directly follows from the inductive
    hypothesis.
  \item \myrule{VRcv} Then,
    $\envmv = \envmv' , \psnames \smv p : \displaystyle{\sum_{i\in
        I}
      \Tbra{\guard\smv_i}{\sort_{\mbox{\scriptsize$i$}}}{\pst_i}}$.
    The inductive hypothesis applied to the premiss of~\myrule{VRcv}
    implies that $\consPst[\pst_i][\sigma,x_i\mapsto \val {v_i}]$ for
    all $i \in I$ and that
    $\consPst[\envmv' \psnames{\smv'} q][\sigma,x_i\mapsto \val
    {v_i}]$.
    Since $\eval {\C } \sigma = \truek$, then
    $\eval {\C } \sigma, x_i \mapsto \val {v_i} = \truek$ for all
    $i\in I$.  By inductive hypothesis on all the premiss, we
    conclude that $\consPst[\pst_i][\sigma,x_i\mapsto \val {v_i}]$.
    By \lemref{lem:const-pseudo-extended-true-guard},
    $\consPst[\pst_i][\sigma]$ for all $i\in I$, so we can apply
    rule~\myrule{CRcv} and obtain the thesis.
  \item \myrule{VSend} Then,
    $\envmv = \envmv' , \psnames \smv p: \guard{\Tsend \smv
      \sort};\guard\Tend$
    and, by the premiss of~\myrule{VSend},
    $\envmv' \psnames {\smv'} q = \guard\Tend$ and
    $\consPst[\guard \Tend]$ by rule~\myrule{CRcv}.
    We just need to prove that $\consPst[\guard{\Tsend \smv \sort}]$
    so to apply rule~\myrule{CSeq} and obtain the thesis.
    This is indeed the case as $\eval {\C } \sigma = \truek$ by
    assumption~\eqref{cons4:eq} of \defref{def:consistent} which
    implies $\consExp$ and so, by rule~\myrule{CSend} we conclude the
    proof.
  \item\myrule{VEnd} Then, $\psnames \smv p \in \dom \envmv$ and
    $\envmv\psnames \smv p = \guard\Tend$. Moreover, $\eval {\C }
    \sigma = \truek$ by assumption. Therefore, $\consPst[\guard
      \Tend]$ by rule~\myrule{CRcv}.
  \item\myrule{VSeq} We have
    \[ \vseqrule{} \quad \text{with } \envmv = \envmv_1;\envmv_2\]
    For all $\psnames \smv p \notin \dom{\envmv_1} \cap
    \dom{\envmv_2}$ the thesis follows directly by the inductive
    hypothesis on one of the premiss of the rule above.
    If $\psnames \smv p \in \dom{\envmv_1} \cap \dom{\envmv_2}$ then
    $\envmv \psnames \smv p = \envmv_1\psnames \smv p ;
    \envmv_2\psnames \smv p$ and, by inductive hypothesis, both
    $\consPst[\envmv_1\psnames \smv p]$ and $\consPst[\envmv_2\psnames
      \smv p]$ hold. Hence, the thesis follows by rule~\myrule{CSeq}.
  \item\myrule{VIf} We have
    \[\vifrule{} \quad \text{with } \envmv = \envmv_1 \peace \envmv_2\]
    Since $\eval {\C } \sigma = \truek$, we have either
    $\eval {(\C\land\C')} \sigma = \truek$ or
    $\eval {(\C\land\neg \C')} \sigma = \truek$; hence, we can apply
    the inductive hypothesis to one of the premisise of the rule
    above.
    So, we have that $\consPst[\pst_1]$ or $\consPst[\pst_2]$ and
    $\consPst[\pst_1\peace \pst_2]$ hold by
    \lemref{lem:const-merge-if}.
  \item\myrule{VFor} We have that $\envmv = \envmv_1^*$ for some
    specification $\envmv_1$ such that
    $\pjdg[\C \land x \in \ell][\envmv_1][@][@][x : \sort,\Gamma]$.
    By the inductive hypothesis, we have
    $\consPst[\envmv_1 \psnames \smv p][\sigma,x\mapsto \val {v}]$ for
    all $\psnames \smv p \in \dom{\envmv_1} = \dom \envmv$.
    The thesis then follows by
    \lemref{lem:const-pseudo-extended-true-guard} and applying rule~\myrule{CLoop}.
  \item \myrule{VForEnd} Similarly to the case~\myrule{VEnd}.
  \item \myrule{Vloop} We have that $\envmv = \envmv_1^*;\envmv_2$
    for two specifications $\envmv_1$ and $\envmv_2$ such that
    \[ \vlooprule{} \]
    For all
    $\psnames \smv p \in \dom{\envmv_1} \cup \dom{\envmv_2} = \dom
    \envmv$,
    we have $\consPst[\psnames \smv p]$ by the inductive hypothesis.
    The thesis follows by using rules~\myrule{CLoop} and~\myrule{CSeq}.\qedhere
  \end{itemize}
\end{proof}

\begin{lem}\label{lem:red-active-process}
  If $\pjdg[\C][\envmv][@][@][\Gamma]$, $\consSt[@][P]$ and
  $\envmv$ active, then $\state[P]\Ptrans{\alpha}\state[P'][\sigma']$
  implies $\alpha \neq \lreceive{\smv}{\sort}$.
\end{lem}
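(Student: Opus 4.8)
The plan is to obtain a contradiction through Subject Reduction (\cref{thm:sr}). Recall that every process step carries a guarded label $\lcond{\C'}{\beta}$; by the abbreviation convention an unconditional $\alpha$ stands for $\lcond{\truek}{\alpha}$, so it suffices to show that the base action $\beta$ of the given transition is never an input. Suppose, for contradiction, that $\state[P][\sigma]\Ptrans{\lcond{\C'}{\lreceive{\smv}{\val v}}}\state[P'][\sigma']$. The hypotheses $\pjdg[\C][\envmv][P]$ and $\consSt[@][P]$ are exactly the premisses required by \cref{thm:sr}, so its case~\ref{sr:in} applies and yields a sort $\sort$ together with a \emph{single} specification step $\envmv\Ttrans{\lreceive{\smv}{\sort}}\envmv'$.

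I would then show that an active specification typing a process admits no input step at all. First, since $P$ is a process, $\envmv$ has no queue bindings $\smv:\queue{}{\sort}$: inspecting \cref{fig:typingproc}, no process-typing rule introduces such an entry (queues enter specifications only through the system rules \myrule{VQueue} and \myrule{VEmpty}). Second, $\envmv$ active means that for every participant session $\psnames \smv p\in\dom\envmv$ the normal form $\nfp{\envmv \psnames \smv p}$ is an internal guarded choice, i.e.\ a term whose outermost operator is $\bigoplus$.

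The heart of the argument is a syntactic inspection of the specification LTS (\cref{fig:LTStypes}), read --- as stipulated there --- up to normal forms. A step labelled $\lreceive{\smv}{\sort}$ can be derived only by \myrule{TRcv}, possibly propagated through \myrule{TSeq} or \myrule{TLoop_0}--\myrule{TLoop_2}; in each of these the firing session must, in normal form, be an \emph{external} guarded choice (outermost operator $\sum$), while the propagating rules additionally require a top-level sequential composition or iteration. The only other rules are \myrule{TReq}, \myrule{TAcc} (acting on shared-name bindings, with labels $\lreq{\shname}{n}{\tuple\smv}$ or $\lacc{\shname}{p}{\tuple\smv}$), \myrule{TSend} (acting on an internal choice, with an output label), and \myrule{TInit}, \myrule{TCom_1}, \myrule{TCom_2} (all labelled $\tau$, and the last two requiring a queue). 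Since every participant session of $\envmv$ is, in normal form, a top-level $\bigoplus$ --- hence neither an external choice nor a sequential/iterative term --- and $\envmv$ contains no queues, none of these rules can produce the label $\lreceive{\smv}{\sort}$. This contradicts $\envmv\Ttrans{\lreceive{\smv}{\sort}}\envmv'$, and therefore $\alpha\neq\lreceive{\smv}{\sort}$.

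The main obstacle, and the place where care is needed, is the reduction to a \emph{single} specification step together with the bookkeeping of normal forms. It is essential that \cref{thm:sr} returns a one-step transition $\envmv\Ttrans{\lreceive{\smv}{\sort}}\envmv'$ rather than a $\tau$-closed move: a multi-step move could first fire \myrule{TInit} on a shared name and spawn fresh participant sessions (the projections of some $\G$), which need not be internal choices and could then enable an input. Working up to normal form is what lets me read off the applicable rules from the outermost operator: because $\nfp{\envmv \psnames \smv p}=\bigoplus\cdots$, the top symbol is never $\sum$, a sequential composition, or an iteration, so \myrule{TRcv}, \myrule{TSeq} and \myrule{TLoop_0}--\myrule{TLoop_2} are simply inapplicable, leaving only output behaviour on sessions and session-independent actions on shared names.
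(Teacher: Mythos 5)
Your analysis of the specification LTS is essentially sound: up to normal form, an active $\envmv$ exposes only internal choices on its participants' sessions, so \myrule{TRcv} cannot fire and the propagation rules \myrule{TSeq}, \myrule{TLoop_0}--\myrule{TLoop_2} have nothing to propagate, hence no transition of $\envmv$ carries an input label. The fatal problem is that you reach that point by invoking \cref{thm:sr}, and the paper's proof of \cref{thm:sr} itself relies on \cref{lem:red-active-process}: in the case \myrule{SFor}, the body of a for-loop is typed (via \myrule{VFor}) against an active specification $\envmv_1$, and the proof there explicitly appeals to this very lemma to exclude $\alpha = \lreceive{\smv}{\val v}$ before treating the output case. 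So deriving the lemma from subject reduction is circular as the development stands; to make your route legitimate you would have to either reprove the relevant fragment of \cref{thm:sr} without the lemma or set up a simultaneous induction on the two statements, and your proposal does neither.

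The paper avoids the detour entirely: it proves the lemma by a direct structural induction on the typing derivation of $P$. Since $\envmv$ is active, the last rule cannot be \myrule{VRcv}, \myrule{VFor}, \myrule{VForEnd} or \myrule{VLoop} (each of those yields a non-active specification), so $P$ is a request, an accept, a send, $\Pend$, a sequential composition or a conditional; in the base cases the only available labels are $\lreq{\shname}{n}{\tuple\smv}$, $\lacc{\shname}{p}{\tuple\smv}$ or $\lsend{\smv}{\val v}$, and in the \myrule{VSeq} and \myrule{VIf} cases the first action comes from a sub-process typed against an active sub-specification, so the inductive hypothesis applies. If you want to keep your strategy, note that this same elementary case analysis on the typing rules is unavoidable, at which point the appeal to \cref{thm:sr} buys you nothing.
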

\begin{proof} By induction on the structure of the typing judgement.
  \begin{itemize}
  \item \myrule{VReq}: Then, $P = \Preq{\shname}{n}{\tuple
      \smv}{P'}$.
    By inspecting reduction rules, the only possibility is
    $\alpha = \lreq{\shname}{n}{\tuple \smv}$.
  \item \myrule{VAcc}: Then, $P = \Pacc{\shname}{i}{\tuple
      \smv}{P'}$.
    By inspecting reduction rules, the only possibility is
    $\alpha = \lacc \shname i {\tuple \smv}$.
  \item \myrule{VSend}: Then, $P = \Psend{\smv}{e'}$. By inspecting
    reduction rules, the only possibility is
    $\alpha = \lsend \smv {\val{v}}$.
  \item \myrule{VEnd}: Then, $P = \Pend$. Hence, there is no
    reduction.
  \item \myrule{VSeq}: Then, $P = P_1;P_2$,
    $\envmv=\envmv_1;\envmv_2$, and $\pjdg[@][\envmv_1][P_1]$.  Note
    that $\state[P]\Ptrans{\alpha}\state[P'][\sigma']$ implies
    $\state[P_1]\Ptrans{\alpha}\state[P'_1][\sigma']$. Moreover,
    $\envmv$ active implies $\envmv_1$ active. Then, by inductive
    hypothesis, $\alpha\neq \lsend{\smv}{\sort}$.
  \item \myrule{VIf}: Then, $P = \Pif{\C'}{P_1}{P_2}$,
    $\envmv=\envmv_1\peace\envmv_2$. Moreover,
    $\state[P]\Ptrans{\alpha}\state[P'][\sigma']$ implies
    $\state[P_1]\Ptrans{\alpha}\state[P'_1][\sigma']$ or
    $\state[P_1]\Ptrans{\alpha}\state[P'_1][\sigma']$. Moreover,
    $\envmv$ active implies $\envmv_1$ and $\envmv_2$ active. Then, by
    inductive hypothesis, $\alpha\neq \lsend{\smv}{\sort}$.
  \item \myrule{VRcv},~\myrule{VEnd},~\myrule{VFor},~\myrule{VForEnd}
    and~\myrule{Vloop} implies $\envmv$ not active, hence the cases
    trivially hold.
    \qedhere
  \end{itemize}
\end{proof}

\begin{lem}\label{lem:inv-subj-red}
  Let $\pjdg[\C][@][P][\envmv]$ and $\tuple \smv \in \dom \envmv$. If
  $\envmv\Ttrans{\alpha}\envmv'$ with $\alpha \neq \tau$ and
  $\names[\alpha] \subseteq \tuple \smv$, then
  for all stores $\sigma$ such that $\consSt[@][P]$ the following conditions hold
  \begin{enumerate}
  \item
    $\state[P][\sigma]
    \Wtrans{\lcond{\CC}{\seq\beta}}\Ptrans{\lcond{\C'}{\alpha}}
    \state[P'][\sigma']$
    with
    $\names[\seq{\beta}]\cap\tuple\smv =\emptyset$

  \item
    $\pjdg[\C\land\CC\land\C'][\envmv''][P'][@][\Gamma']$ with
    $\Gamma \subseteq \Gamma'$

  \item $\envmv'\psnames\smv p = \envmv''\psnames\smv p$ for all
    $\psnames \smv p \in \dom \envmv$

  \item
    $\consSt[\sigma'][P'][\Gamma'][\envmv''][\C\land\CC\land\C']$.
  \end{enumerate}
\end{lem}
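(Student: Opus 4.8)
The plan is to prove the statement by induction on the derivation of the typing judgement $\pjdg[\C][@][P][\envmv]$, following the shape of $P$, and to read off the matching computation of $\state[P][\sigma]$ from the way the spec transition $\envmv \Ttrans{\alpha} \envmv'$ is derived in \cref{fig:LTStypes}. Since $\alpha \neq \tau$ and $\names[\alpha] \subseteq \tuple\smv$, the only spec rules that can justify $\alpha$ are \myrule{TRcv}, \myrule{TSend}, and the structural rules \myrule{TSeq}, \myrule{TLoop_0}, \myrule{TLoop_1}, \myrule{TLoop_2}; each of these forces the shape of $\envmv$ on the session $\tuple\smv$, which in turn constrains the last typing rule concluding $\pjdg[\C][@][P][\envmv]$. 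In the two leaf cases the preamble $\seq\beta$ is empty: under \myrule{VSend} the pseudo-type on $\tuple\smv$ is a single-branch internal choice, so \myrule{TSend} forces $\alpha=\lsend\smv\sort$ and $\state[P][\sigma]$ answers with \myrule{SSend}; under \myrule{VRcv} the type is an external choice and \myrule{TRcv} selects some $j\in I$, matched by the guarded process with \myrule{SRcv}, which binds $x_j$ and extends the context to $\Gamma'=\Gamma,x_j:\sort_j$ — this is the source of the $x$ and $\Gamma'\supseteq\Gamma$ appearing in conditions~(1)--(4).

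First I would dispose of the prefix cases \myrule{VReq} and \myrule{VAcc}, where $P$ must open a session on $\shname$ before exposing the $\tuple\smv$-behaviour already recorded in $\envmv$; here $\state[P][\sigma]$ performs the opening action, which lies outside $\tuple\smv$ and hence enters $\seq\beta$, and I recurse on the continuation typed in the extended specification. The remaining structural cases are handled by tracking which sub-derivation of the spec transition carries the $\tuple\smv$-action: for \myrule{VSeq}, rule \myrule{TSeq} locates $\alpha$ in the left component $\envmv_1$ typing $P_1$, so the inductive hypothesis on $(P_1,\envmv_1)$ yields a computation $\seq\beta\,\alpha$ that I prepend via \myrule{SSeq} to $P_2$; for \myrule{VFor}, \myrule{VForEnd} and \myrule{VLoop}, rules \myrule{TLoop_1} and \myrule{TLoop_2} route $\alpha$ through the loop body while \myrule{TLoop_0} routes it to the exit continuation, matched respectively by unfolding the loop with \myrule{SLoop}/\myrule{SFor} or exiting with \myrule{SLoopEnd}. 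Throughout the recursive cases the consistency clauses of \cref{def:consistent} and the guard bookkeeping of \lemref{thm:distribution-end-pseudo-types} and \lemref{lem:typing-strenghten-condition} are used to thread condition~(4): the ambient condition grows from $\C$ to $\C\land\CC\land\C'$, and I must check that the newly conjoined guards stay satisfied by the possibly-updated store $\sigma'$.

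The hard part will be the conditional case \myrule{VIf}, where $P=\Pif{\C'}{P_1}{P_2}$ is typed in the merge $\envmv=\envmv_1\peace\envmv_2$. Because the process is deterministic, under a fixed $\sigma$ it commits to one branch, whereas a merged internal choice collects the branches of both $\envmv_1$ and $\envmv_2$; the decisive observation is that the guards carried by these branches — which, by \lemref{thm:distribution-end-pseudo-types}, record the test $\C'$ on the $P_1$-branches and $\neg\C'$ on the $P_2$-branches — are mutually exclusive, so only the transition whose branch guard holds under $\sigma$ can satisfy the consistency clause~(4) of the conclusion, and that is exactly the branch the deterministic process performs. Concretely, if $\eval{\C'}\sigma=\truek$ then the $\sigma$-compatible $\alpha$ originates from $\envmv_1$, I apply the inductive hypothesis to the premise $\pjdg[\C\land\C'][\envmv_1][P_1]$ (for which $\sigma$ is consistent, since $\eval{\C'}\sigma=\truek$ selects the $\envmv_1$-branches as the live ones), and rule \myrule{SThen} lifts the resulting computation of $P_1$ to one of $\Pif{\C'}{P_1}{P_2}$ while conjoining $\C'$ into the recorded condition; the case $\eval{\C'}\sigma=\falsek$ is symmetric via \myrule{SElse}. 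Making this alignment precise — that the unique $\tuple\smv$-transition of $\envmv$ with a $\sigma$-consistent outcome is the one the process fires, and that normal forms commute with the merge (\lemref{lem:sr-merge-normalization}) so that condition~(3) on $\envmv''$ follows from the inductive hypothesis — is the delicate technical heart of the argument; the residual management of stores, contexts, and $\Gamma'\supseteq\Gamma$ is then routine.
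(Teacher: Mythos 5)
Your proposal follows essentially the same route as the paper's proof: induction on the derivation of $\pjdg[\C][@][P][\envmv]$, splitting on whether $\alpha$ is the action exposed by the head of $P$ (empty $\seq\beta$, matched directly by the corresponding process rule) or an action of the continuation (prefix/structural step pushed into $\seq\beta$, then the inductive hypothesis). The paper only writes out \myrule{VReq} and \myrule{VRcv} as representative cases, so your additional elaboration of \myrule{VSeq}, the loops, and especially the guard-exclusivity argument for \myrule{VIf} is consistent with, and somewhat more detailed than, what the paper records.
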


\begin{proof}
  The proof follows by induction on the structure of the derivation of
  $\pjdg[\C][@][P][\envmv]$. We report the representative cases.

  \begin{itemize}
  \item\myrule{VReq}Then, $P=\Preq{\shname}{n}{\tuple \smv}{P''}$,
    $\envmv(\shname) \equiv \G(\tuple \smv)$ and
    $\pjdg[@][\envmv, \psnames \smv 0: \pst][P'']$ with
    $\rmg{ \pst} = \rmg{\G(\tuple\smv)\proj{\ptp 0}}$.
    There are two cases:
    \begin{enumerate}
    \item
      $\alpha = \lreq\shname n {\tuple \smv}$. Then,
      $\envmv' = \envmv,\psnames  \smv 0 : \pst$.
      Take $P'=P''$, $\seq\beta=\epsilon$, $\CC = \C' = \truek$,
      $\sigma' = \sigma\upd\shname{\tuple\smv}$, $\Gamma'= \Gamma$
      and $\envmv''=\envmv'$.
      Note that
      $\state[P][\sigma] \Ptrans{\lcond{\C'}{\alpha}}\state[P'][\sigma']$
      by $\myrule{SReq}$.
      Condition $\consSt[\sigma'][P''][\Gamma][\envmv'][\C]$ follows
      straightforwardly from $\consSt[\sigma][P][\Gamma][\envmv][\C]$.

    \item
      $\alpha \neq \lreq\shname n {\tuple\smv}$.
      Hence, $\names[\alpha] \cap \tuple{\smv} = \emptyset$.
      By~\lemref{lem:env-weakening}, $\envmv\Ttrans{\alpha}\envmv'$
      implies $\envmv,\psnames \smv 0:
      \pst\Ttrans{\alpha}\envmv',\psnames \smv 0: \pst$.
      By inductive hypothesis on
      $\pjdg[@][\envmv, \psnames\smv 0:\pst][P'']$ and
      $\envmv,\psnames\smv 0: \pst\Ttrans{\alpha}\envmv',\psnames\smv 0: \pst$,
      we conclude that
      $\state[P''][\sigma\upd{\shname}{\tuple \smv}]
      \Wtrans{\lcond{\CC'}{\seq\beta'}}
      \Ptrans{\lcond{\C'}{\alpha}}\state[P'][\sigma']$.
      Then, take
      $\seq\beta = \lreq\shname n{\tuple\smv}\seq{\beta'}$
      and $\CC=\CC'$.
      Conditions (2)--(4) follow straightforwardly from inductive 
      hypothesis.
    \end{enumerate}

  \item\myrule{TRcv}
    Then, $P = \Pechoice{i\in I}{\smv_i}{x_i}{P_i}$, and
    $\envmv = \psnames\smv p : \displaystyle{\sum_{i\in I} \Tbra{\guard\smv_i}{\sort_{\mbox{\scriptsize$i$}}}{\pst_i}}, \envmv_0$
    and
    $\forall i \in I \qst \smv_i \in \tuple\smv$  and
    $\pjdg[@][\envmv,\psnames \smv p : {\pst_i}][P_i][@][\Gamma,x_i : \sort_i]$.

    There are two cases,
	\begin{enumerate}
	   \item $\alpha = \lreceive{\smv_j}{\sort_j}$ with  $j\in I$. Then,
	    $\envmv' = \envmv, \psnames \smv p : \pst_j$.
	   The proof is completed by taking $P'=P_j$, and $\seq\beta=\epsilon$, and $\CC = \C' = \truek$,  and
	   $\sigma' = \sigma\upd{\val v}{x_j}$, and
	    $\Gamma'= \Gamma,x_i : \sort_i$, and $\envmv''=\envmv'$. Note that $\state[P][\sigma] \Ptrans{\lcond{\C'}{\alpha}}\state[P'][\sigma']$
	   by  $\myrule{SRec}$.  Condition $\consSt[\sigma'][P'][\Gamma'][\envmv''][\C]$ follows straightforwardly
	   from  $\consSt[\sigma][P][\Gamma][\envmv][\C]$.

	   \item $\alpha \neq \lreq{\shname}{n}{\tuple \smv}$. Hence, $\names[\alpha] \cap \tuple{\smv} = \emptyset$. The case follows analogously
	   to rule~\myrule{VReq}.
       \qedhere
	\end{enumerate}
\end{itemize}
\end{proof}

\begin{lem}\label{lem:consistent-reduction-with-store}
  If $\state[P] \Ptrans{\lcond {\C} \alpha} \state[P'][\sigma']$, then
  $\eval \C \sigma = \truek$.
\end{lem}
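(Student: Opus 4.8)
The plan is to argue by structural induction on the derivation of $\state[P]\Ptrans{\lcond{\C}{\alpha}}\state[P'][\sigma']$, inspecting each rule of \cref{fig:LTSproc}. The axioms \myrule{SReq}, \myrule{SAcc}, \myrule{SSend}, \myrule{SRcv}, together with \myrule{SForEnd}, emit labels with no genuine condition: under the convention that writes $\alpha$ for $\lcond{\truek}{\alpha}$, the condition is $\truek$ and $\eval{\truek}{\sigma}=\truek$ holds trivially, so these base cases need no hypotheses. The rules that merely relay an action of an immediate subterm --- \myrule{SSeq}, \myrule{SLoop} and \myrule{SLoopEnd} --- fire their premiss transition from the \emph{same} store $\sigma$ and copy the condition $\C$ unchanged into the conclusion, so the induction hypothesis on the premiss yields $\eval{\C}{\sigma}=\truek$ immediately.

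The conditional rules \myrule{SThen} and \myrule{SElse} are where the condition genuinely grows, and they are settled by their own premisses. In \myrule{SThen} the conclusion carries $\lcond{\C\land\C'}{\alpha}$: the left premiss asserts $\eval{\C}{\sigma}=\truek$ for the guard of the conditional, while the induction hypothesis applied to the sub-transition $\state[P]\Ptrans{\lcond{\C'}{\alpha}}\state[P'][\sigma']$ --- which departs from the same store $\sigma$ --- gives $\eval{\C'}{\sigma}=\truek$; since the evaluation of a conjunction is the conjunction of the evaluations of its conjuncts, $\eval{\C\land\C'}{\sigma}=\truek$ follows. Dually, in \myrule{SElse} the premiss gives $\eval{\C}{\sigma}=\falsek$, hence $\eval{\neg\C}{\sigma}=\truek$, and combined with the induction hypothesis this yields $\eval{\neg\C\land\C'}{\sigma}=\truek$.

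The delicate case, and the one I expect to be the main obstacle, is \myrule{SFor}. Here the premiss transition is not taken from $\sigma$ but from the extended store $\sigma\upd{\headL{\eval \ell \sigma}}{x}$, whereas the very same condition $\C$ is copied into the label of a for-loop step whose source store is $\sigma$. The induction hypothesis therefore delivers only $\eval{\C}{\sigma\upd{\headL{\eval \ell \sigma}}{x}}=\truek$, and the work that remains is to transport this judgement back to $\eval{\C}{\sigma}$. As the two stores differ solely in the binding of the iteration variable $x$, the transfer rests on $x$ not occurring in $\C$: under that proviso the standing assumption that an expression depends only on its variables (i.e.\ $\sigma|_{\varset}=\sigma'|_{\varset}\implies\eval{e}{\sigma}=\eval{e}{\sigma'}$) equates the two evaluations in one step. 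I would thus close this case by showing that the guards accumulated along a for-loop step are independent of the loop variable --- the property mirrored, on the typing side, by the side condition $x\notin\var\envmv$ of rule \myrule{VFor} --- and then invoke the evaluation assumption; the remaining rules are routine.
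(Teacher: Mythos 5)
Your overall strategy --- induction on the derivation, with the axioms carrying condition $\truek$, the congruence rules \myrule{SSeq}, \myrule{SLoop}, \myrule{SLoopEnd} immediate because they fire their premiss from the same store and copy the condition, and \myrule{SThen}/\myrule{SElse} discharged by combining the rule's own premiss with the induction hypothesis --- is exactly the paper's proof, which records nothing beyond ``straightforward induction''; all of those cases are handled correctly. You have also correctly isolated the one case that is \emph{not} straightforward, namely \myrule{SFor}, where the premiss transition departs from $\sigma\upd{\headL{\eval \ell \sigma}}{x}$ while the conclusion's condition must be evaluated in $\sigma$.

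The gap is that your proposed closure of that case cannot be carried out from the lemma's hypotheses. The property you need, $x\notin\var{\C}$ for the condition accumulated by the loop body, is not a fact about the operational semantics: it is only guaranteed by the typing rule \myrule{VFor} (via its side condition $x\notin\var\envmv$), and the lemma carries no typing assumption. Concretely, for the (untypeable but syntactically legal) process $\Pfor{x}{[\val 1]}{\Pif{x>0}{\Psend{\smv}{\val 1}}{\Pend}}$ and a store with $\sigma(x)=\val{-5}$ (or with $x\notin\dom\sigma$, which makes the evaluation undefined), rule \myrule{SFor} yields a transition labelled $\lcond{(x>0)\land\truek}{\lsend{\smv}{\val 1}}$ whose condition does not evaluate to $\truek$ in $\sigma$. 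So the \myrule{SFor} case --- and with it the lemma as literally stated --- fails for arbitrary processes; to close it you must either add a hypothesis (well-typedness, or an explicit independence condition on the guards occurring in the loop body) or restate the conclusion. To be fair, this is a defect you have uncovered rather than introduced: the paper's one-line proof silently elides exactly this case, and under the typing assumption in force wherever the lemma is applied, your $x\notin\var\C$ argument does go through via the standing convention that expressions depend only on their variables.
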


\begin{proof}
  By straightforward induction on the structure of the proof
  $\state[P] \Ptrans{\lcond {\C'} \alpha} \state[P'][\sigma']$.
\end{proof}


\subsection{Subject reduction}

We first state an auxiliary property about the semantics of systems.
\begin{lem}\label{monotone:lem}
  If $\state[S] \Ptrans{\lcond e \alpha} \state[S'][\sigma']$ then
  $\dom \sigma \subseteq \dom{\sigma'}$.
\end{lem}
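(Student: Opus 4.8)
The plan is to establish this monotonicity invariant by straightforward rule induction on the derivation of the transition $\state[S] \Ptrans{\lcond e \alpha} \state[S'][\sigma']$, using the operational semantics collected in \cref{fig:LTSproc} and \cref{fig:LTSsys}. The single observation that drives every case is that the store-update operation never shrinks the domain: for any store $\sigma$, any name (variable, shared name, or session channel) $z$, and any value $\val v$, one has $\dom\sigma \subseteq \dom{\sigma\upd{\val v}{z}}$, since $\upd{\val v}{z}$ either adds a fresh binding for $z$ or overwrites an existing one, and in both situations no pre-existing entry of $\sigma$ is removed. I would record this elementary fact first, as it is invoked in almost every base case.

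First I would dispatch the leaf (axiom) rules. Rules \myrule{SSend} and \myrule{SForEnd} leave the store untouched, so $\sigma' = \sigma$ and the inclusion is trivial. Rules \myrule{SReq}, \myrule{SAcc} and \myrule{SInit} produce $\sigma' = \sigma\upd{\shname}{\tuple \smv}$, while \myrule{SRcv} produces $\sigma' = \sigma\upd{\val v}{x_j}$; in each of these the claim is immediate from the update observation above. Next I would treat the rules whose conclusion is justified by a premiss transition that \emph{departs from the same store} $\sigma$, namely \myrule{SSeq}, \myrule{SThen}, \myrule{SElse}, \myrule{SLoop}, \myrule{SLoopEnd}, \myrule{SCom_1}, \myrule{SCom_2}, \myrule{SPar}, \myrule{SNew} and \myrule{SStr}. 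For all of these the premiss has the form $\state[\,\cdot\,]\Ptrans{\,\cdot\,}\state[\,\cdot\,][\sigma']$ with the same initial $\sigma$ as the conclusion, so the induction hypothesis gives $\dom\sigma \subseteq \dom{\sigma'}$ directly; for \myrule{SStr} I would additionally note that structural congruence does not alter the store, so the left configuration and the premiss share $\sigma$.

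The only rule needing a small extra step is the for-loop unfolding rule \myrule{SFor}, whose premiss is a transition \emph{from an updated store}, $\state[P][\sigma\upd{\headL{\eval \ell \sigma}}{x}]\Ptrans{\lcond \C \alpha}\state[P'][\sigma']$. Here I would chain the update inclusion $\dom\sigma \subseteq \dom{(\sigma\upd{\headL{\eval \ell \sigma}}{x})}$ with the induction hypothesis $\dom{(\sigma\upd{\headL{\eval \ell \sigma}}{x})} \subseteq \dom{\sigma'}$ and conclude by transitivity of $\subseteq$. There is no genuine obstacle: the statement is a routine invariant of the labelled transition system, and the only point deserving care is to enumerate every rule and, in the \myrule{SFor} case, to compose the update inclusion with the induction hypothesis rather than tacitly assuming that premiss and conclusion share the same initial store.
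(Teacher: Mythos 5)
Your proof is correct and follows exactly the route the paper takes: the paper's own proof is the one-line remark ``by induction on the derivation inspecting the rules in Figures~\ref{fig:LTSproc} and~\ref{fig:LTSsys}'', and your case analysis (updates never shrink the domain; all rules either leave the store unchanged, update it, or pass it through a premiss, with \myrule{SFor} needing the extra transitivity step) is precisely the elaboration that remark elides.
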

\begin{proof}
  By induction on the derivation of $\state[S] \Ptrans{\lcond e
    \alpha} \state[S'][\sigma']$ inspecting the rules in
  \cref{fig:LTSproc} and~\ref{fig:LTSsys}.
\end{proof}

\subjred*
\begin{proof}
  The proof is by induction on the derivation of $\state[S][\sigma]
  \Ptrans{\lcond{\C'}{\alpha}}\state[S'][\sigma']$ which may end with
  the application of one of the rules in \cref{fig:LTSproc} (on
  page~\pageref{fig:LTSproc}) or in \cref{fig:LTSsys} (on
  page~\pageref{fig:LTSsys}).
  It is a simple observation that if $\alpha \neq \tau$ then $S$
  must be a process, so the proof ends with one of the rules in
  \cref{fig:LTSproc}.
  \subsection*{Base cases}
  \begin{itemize}
  \item{\myrule{SReq}} In this case
    $\alpha = \lreq \shname n {\tuple \smv}$,
    $\C' = \truek$  and
    \[
    \state[\underbrace{\Preq \shname n {\tuple \smv} P}_{=S}]
    \Ptrans{\lreq{\shname}{n}{\tuple \smv}}
    \state[P][\underbrace{\sigma \upd{\shname}{\tuple \smv}}_{=\sigma'}]
    \qquad
    \text{ with }
    S' = P
    \]
    under the hypothesis that
    $\tuple \smv \cap \dom \sigma = \emptyset$.
    By inspecting the typing rules in \cref{fig:typingproc}
    (page~\pageref{fig:typingproc}), $S$ can be typed only by
    using rule \myrule{VReq}, whose hypothesis yields
    \begin{gather}\label{eq:SR:sreq}
      \envmv(\shname) \equiv \G(\tuple \smv)
      \qquad
      \pjdg[@][\envmv, \psnames \smv 0:  \pst]
      \qquad
      \rmg{ \nfp{\pst}} = \rmg{\nfp{\G(\tuple \smv)\proj{\ptp 0}}}
    \end{gather}
    We show~\eqref{sr:other} taking
    $\envmv' = \envmv, \psnames \smv 0: \pst$.
    We have that
    \[
    \spec \Gamma \Delta \Ttrans{\lreq \shname n {\tuple \smv}}
    \spec \Gamma {\Delta'}
    \qquad\text{and}\qquad
    \pjdg[\C\land\C'][\envmv'][S']
    \]
    respectively by rule \myrule{TReq} and by the judgment
    in~\eqref{eq:SR:sreq} observing that $\C' = \truek$ and
    $S' = P$.
    It remains to prove that
    $\consSt[\sigma'][S'][\envmv'][\C\land\C']$:

    \begin{enumerate}[(1)]
    \item We have $ \dom \Gamma \subseteq \dom
      \sigma
      \subseteq \dom \sigma \cup\tuple\smv = \dom{\sigma'}$
      where the first inclusion is implied by $\consSt$.
      By definition,
      $\dom {\envmv'} = \dom \envmv \cup\tuple\smv$, hence
      $\dom{\envmv'|_{\chset}} = (\dom{\envmv|_{\chset}})
      \cup \tuple \smv$.
      Moreover, $\consSt$ implies
      $(\dom{\envmv|_{\chset}}) \subseteq \dom \sigma$.
      Consequently,
      \[
      (\dom {\envmv'|_{\chset}}) = (\dom {\envmv|_{\chset}})
      \cup \tuple\smv \subseteq \dom \sigma \cup\tuple\smv =
      \dom {\sigma'}\]
    \item By definition of $\sigma'$, $\sigma'|_{\varset} =
      \sigma|_{\varset}$, hence $\vjdg[][\sigma'(x)][\Gamma(x)]$ for all
      $x \in \dom \Gamma$ since $\consSt$ implies
      $\vjdg[][\sigma(x)][\Gamma(x)]$ for all $x \in \dom{\Gamma}$.
    \item From $\sigma'|_{\varset} = \sigma|_{\varset}$ and $\consSt$,
      which implies $\eval \C \sigma = \truek$ by~\eqref{cons4:eq} in
      \defref{def:consistent}; we have $\eval {\C \land \C'} \sigma' =
      \truek$.
    \item Note that $\consPst$ implies $\consPst[\pst][\sigma']$ because
      $\sigma'|_{\varset} = \sigma|_{\varset}$. Therefore, for all
      $\psnames{\smv}{p}\in \dom{\envmv}$, we have that $\consPst[\envmv
      \psnames{\smv}{p}][\sigma']$ because $\consPst[\envmv
      \psnames{\smv}{p}]$ (which is ensured by $\consSt$).  Finally,
      $\consPst[\envmv'\psnames \smv 0][\sigma']$ follows from
      \lemref{lem:const-pseudo-typing-and-store-correspondence}.
    \end{enumerate}

  \item{\myrule{SAcc}} Analogous to the previous case.

  \item{\myrule{SRcv}} We have
    \[
    \state[\underbrace{\Pechoice{i\in I}{\smv_i}{x_i}{P_i}}_{=S}]
    \Ptrans{\lreceive \smv_j {\val v}}
    \state[P_j][\underbrace{\sigma\upd{\val v}{x_j}}_{=\sigma'}]
    \qquad\text{where}\qquad S' = P_j
    \]
    with $\alpha = \lreceive \smv_j {\val v}$ for some $j\in I$.
    As above $\C' = \truek$ and, by typing rule \myrule{VRcv}, we have
    \begin{gather}\label{eq:SR:srcv}
      \envmv = \envmv'', \psnames \smv p : \displaystyle{\sum_{i\in I}
        \Tbra{\guard\smv_i}{\sort_{\mbox{\scriptsize$i$}}}{\pst_i}}
      \qquad
      \forall i \in I \qst \pjdg[@][\envmv'_i][P_i][@][\Gamma,x_i : \sort_i]
    \end{gather}
    where $\envmv'_i = \envmv'', \psnames \smv p : {\pst_i}$
    for each $i \in I$.
    We show~\eqref{sr:in} taking $\envmv' = \envmv'_j$.
    From rule \myrule{TRcv}, we have
    $\envmv \Ttrans{\lreceive{\smv_j}{\sort_j}} \envmv'$.
    In addition, if $\vjdg[][\val v][\sort_j]$ then take
    $x = x_j$ and $\sort = \sort_j$.
    Judgment $\pjdg[\C \land \C'][\envmv'][S']$ is derived from the
    $j^{th}$ judgment in~\eqref{eq:SR:srcv} by noticing that $\C' =
    \truek$.
    Then, observing that $\dom {\sigma'} = \dom \sigma \cup\{x\}$, we
    prove that $\consSt [\sigma'][S'][\envmv'][\C \land \C']$ as
    follows:
    \begin{enumerate}[(1)]
    \item We have $\dom {\Gamma, x:\sort} = \dom \Gamma \cup \{x\}
      \subseteq \dom \sigma \cup \{x\} = \dom {\sigma'}$ since $\consSt$
      implies $\dom \Gamma \subseteq \dom \sigma$.
      Noting that $\dom{\envmv'} = \dom \envmv$, we have
      $\dom{\envmv'|_{\chset}} = \dom{\envmv|_{\chset}} \subseteq \dom
      \sigma$ because $\consSt$.
    \item Let $\Gamma' = \Gamma \cup \{x\}$.  If $\vjdg[][\val
      v][\sort]$, then $\sigma'(x) = \val v: \sort = \Gamma'(x)$.
      We conclude that $\sigma(x') = \sigma(x') : \Gamma'(x')$ for all
      $x' \neq x \in \dom{\Gamma'}$ by $\consSt$.
      If it is not the case that $\vjdg[][\val v][\sort]$, then there is
      nothing to prove.
    \item Since $\C' = \truek$, $\eval {\C \land \C'} {\sigma'} =
      \truek$ trivially holds and $\consSt$ implies $\eval \C \sigma =
      \truek$.
      In addition, $\sigma'(x)=\sigma(x)$ for all $x\in\dom \sigma$.
      Hence, $\eval \C {\sigma'} = \truek$, which implies $\eval
      {(\C\land \C')} {\sigma'} = \truek$.
    \item By \lemref{lem:const-pseudo-typing-and-store-correspondence}
      we have the thesis.
    \end{enumerate}

  \item{\myrule{SSend}} In this case $\alpha = \lsend \smv {\val v}$
    with $S = \Psend \smv \C_1 {}$, $\eval {\C_1} \sigma = \val v$,
    $\C' = \truek$, $S' = 0$ and $\sigma' = \sigma$.
    Hence, $S$ can be typed only by applying rule \myrule{VSend},
    which yields
    \begin{gather*}
      \vjdg[\Gamma][e_1][\sort]
      \qquad
      \smv \in \tuple \smv
      \qquad
      \envmv  =  \envmv'', \psnames \smv p: \guard{\Tsend \smv \sort};
      \guard\Tend
    \end{gather*}
    where $\envmv'' \psnames {\smv'} p = \guard\Tend$ for all
    $\psnames {\smv'} p \in \dom{\envmv''}$.
    We show~\eqref{sr:out} taking $\envmv' = \envmv'', \psnames \smv p
    : \Tend$.
    Note that, from the hypothesis $\consSt$, we have
    $\vjdg[][\sigma(x)][\Gamma(x)]$ for all $x \in \dom{\Gamma}$,
    hence $\vjdg[][\val v][\sort]$.
    From rule \myrule{TSend}, $\envmv \Ttrans{\lsend{\smv}{\sort}}
    \envmv'$ and, by rule \myrule{VEnd},
    $\pjdg[\C\land\C'][\envmv'][0]$.

    It is straightforward to conclude that
    $\consSt[\sigma'][0][\envmv'][\C\land\C']$ because $\consSt$,
    $\dom{\envmv'} = \dom \envmv$, $\sigma' = \sigma$, and $\C' =
    \truek$.

  \item{\myrule{SInit}} We have
    \[
    \state[\underbrace{\Preq \shname n {\tuple \smv}{P_0} \sparop
        \Pacc \shname 1 {\tuple \smv}{P_1} \sparop \ldots \sparop \Pacc
        \shname n {\tuple \smv}{P_n}}_{=S}] \Ptrans{\tau}
    \state[\underbrace{(\nu \AT{\tuple \smv} \shname ) (P_0 \sparop
        \ldots \sparop P_n \sparop \tuple \smv : \emptyset)}_{=
      S'}][\underbrace{\sigma\upd{\shname}{\tuple \smv}}_{=\sigma'}]
    \]
    The judgment $\pjdg[@][@][S]$ is obtained by repeated
    application of rule \myrule{VPar} on
    \[
    \qquad
    \irule{ \envmv_0(\shname) \equiv \G(\tuple \smv) \qquad
      \pjdg[@][\envmv_0, \psnames \smv 0: \pst_0][P_0] \qquad
      \rmg{\pst_0} = \rmg{\G(\tuple \smv)\proj{\ptp 0}} }{
      \pjdg[@][\envmv_0][\Preq \shname n {\tuple
          \smv}{P_0}][\envmv_0'][\Gamma] }\qquad\myrule{VReq}
    \]
    and
    \[
    \qquad
    \irule{
      \envmv_p(\shname) \equiv \G(\tuple \smv) \qquad
      \pjdg[@][\envmv_p, \psnames \smv p : \pst_p][P_p] \qquad
      \rmg{\pst_p} = \rmg{\G(\tuple \smv) \proj {\ptp p}}
    }{
      \pjdg[@][\envmv_p][\Pacc \shname p {\tuple \smv}{P_p}][\envmv_p'][\Gamma]
    }\qquad\myrule{VAcc}
    \]
    with $p \in \{1,\ldots,n\}$, $\envmv = \envmv_0 \cup \envmv_1 \cup
    \cdots \cup \envmv_n$, under the assumption that $\envmv_1,\ldots,
    \envmv_n$ are pairwise independent.
    We show~\eqref{sr:other} by taking $\envmv' = \envmv$ and
    observing that, by rule \myrule{VNew}, we have
    \[
    \irule{
      \pjdg[@][\envmv,
        \big\{\psnames \smv p \mapsto \pst_p \sst p \in \{0,\ldots,n\} \big\}]
           [P_0 \ \sparop\  \ldots \ \sparop\  P_n \ \sparop\  \tuple \smv :
             \queue{}{}][@][\Gamma]
    }{
      \pjdg[@][@][S'][\envmv' \setminus
        \big\{\psnames \smv p \mapsto \G\proj{\ptp p} \sst
                       {\ptp p}\in\participants{G} \big\}][\Gamma]
    }
    \]
    with the premiss of the above derivation obtained by rule
    \myrule{VEmpty} and repeated applications of rule \myrule{VPar}.
    To show that $\consSt[\sigma'][S'][\envmv]$ we note that
    $\dom{\sigma'} = \dom \sigma \cup \tuple \smv$ by construction and
    that
    \[
    \fY[S'] = \bigcup_{p=0}^n \fY[P_p] \setminus
    \tuple\smv = \bigcup_{p=0}^n(\fX[P_p] \setminus \tuple \smv) = \fY[S]
    \]
    therefore conditions~\eqref{cons2:eq}, and~\eqref{cons5:eq} of
    \defref{def:consistent} hold by inductive hypothesis while
    conditions~\eqref{cons3:eq} and~\eqref{cons4:eq} hold because
    $\sigma'|_{\varset} = \sigma|_{\varset}$.

  \item{\myrule{SForEnd}} We have $\isemptyL{\eval \ell \sigma}$ and
    \[
    \state[\underbrace{\Pfor x \ell  P}_{=S}] \Ptrans{\tau} \state[\Pend]
    \qquad \text{with }
    S' = \Pend, \ \sigma' = \sigma, \text{ and } \C' = \truek
    \]
    Since ${\eval \C \sigma} = \truek$ (because $\consSt$) we have
    $\consistent[\C][\ell = \emptylist]$ and, consequently,
    \myrule{VForEnd} is the only applicable rule to type $S$.
    Therefore we have $\envmv = \Tdef{\envmv'}$ for an $\Tend$-only
    $\envmv'$ (by the premiss of \myrule{VForEnd}).
    Finally, we have that~\eqref{sr:other} holds by rule
    \myrule{TLoop_1} and $\consSt[@][\Pend][\envmv']$ directly follows
    by $\consSt$.
  \end{itemize}

\subsection*{Inductive step}
\begin{itemize}
\item {\myrule{SThen}} In this case
  \[
  \mathrule{
    \eval {\C'} \sigma = \truek
    \qquad
    \state[P_1] \Ptrans{\lcond{\C_1}{\alpha}} \state[P_1'][\sigma']
  }{
    \state[\underbrace{\Pif{\C'}{P_1}{P_2}}_{= S}]
    \Ptrans{\lcond{\C' \land \C_1}\alpha}
    \state[P_1'][\sigma']
  }{}
  \]
  with $S' = P_1'$.
  To type $S$ we must apply rule \myrule{VIf}
  \[
  \irule{
    \begin{array}{c}
      \pjdg[\C \land
        \C'][\envmv_1][P_1][@][\Gamma] \qquad \pjdg[\C\land\neg
        \C'][\envmv_2][P_2][@][\Gamma]
    \end{array}
  }{
    \pjdg[@][\underbrace{\envmv_1 \peace \envmv_2}_{=\envmv}]
         [\Pif{\C'}{P_1}{P_2}][\envmv_1'\restriction_{\dom \envmv}]
  }\myrule{Vif}
  \]
  recall that $\envmv_1 \peace \envmv_2$ is defined when $\envmv_1$
  and $\envmv_2$ are mergeable (i.e.,
  $\dom{\envmv_1} = \dom{\envmv_2}$ and the pseudo-types
  $\envmv_1 \psnames \smv p$ and $\envmv_2 \psnames \smv p$ are
  mergeable for all $\psnames \smv p \in \dom{\envmv_1}$).

  By the inductive hypothesis, there exist $\Gamma'$ and $\envmv_1'$
  such that
  \begin{enumerate}
  \item[\eqref{sr:in}] if $\alpha = \lreceive{\smv}{\val v}$ then
    $\envmv_1 \Strans{\lreceive{\smv}{\sort}}\envmv_1'$; moreover, if
    $\vjdg[][\val v][\sort]$ then
    \begin{gather*}
      \pjdg[\C \land \C_1 \land \C'][\envmv_1'][P_1'][@][\Gamma', x :
        \sort]
    \end{gather*}
    and
    $\consSt[\sigma'][P_1'][\envmv_1'][\C \land \C_1 \land
      \C'][\Gamma', x : \sort]$.
    By \lemref{lem:merge-reduction},
    $\envmv = \envmv_1 \peace \envmv_2 \Strans{\lreceive{\smv}{\sort}}
    \envmv_1'$.
    Then, take $\envmv' = \envmv_1'$ and note that
    \[\pjdg[\C \land (\C_1 \land
      \C')][\envmv'][P_1'][@][\Gamma', x : \sort]\]
    and
    $\consSt[\sigma'][P_1'][\envmv'][\C \land (\C'\land
      \C')][\Gamma,x:\sort]$ hold by associativity of $\land$.
  \item[\eqref{sr:out}] and~\eqref{sr:other} analogous to the previous
    case.
  \end{enumerate}

\item{\myrule{SElse}} Analogous to \myrule{SThen}.

\item{\myrule{SSeq}} We have
  \[
  \mathrule{
    \state[P_1] \Ptrans{\lcond{\C'} \alpha} \state[P'_1][\sigma']
  }{
    \state[\underbrace{P_1;P_2}_{=S}]
    \Ptrans{\lcond{\C'} \alpha}
    \state[\underbrace{P'_1;P_2}_{=S'}][\sigma']
  }{}
  \]
  As in the previous case, we can type $S$ only applying
  \myrule{VSeq}.
  This implies that $\envmv = \envmv_1 ; \envmv_2$ for some
  $\envmv_1$ and $\envmv_2$ such that
  \begin{itemize}
  \item $\dom{\envmv_2} \subseteq \dom{\envmv_1}$ and
    $\envmv_1 |_{\shset \cup \chset} = \envmv_2 |_{\shset \cup
    \chset} = \envmv|_{\shset \cup \chset}$
  \item $\envmv : \psnames \smv p \mapsto
    \begin{cases}
      \envmv_1 \psnames \smv p ; \envmv_2 \psnames \smv p
      & \psnames \smv p \in \dom {\envmv_2}
      \\
      \envmv_1 \psnames \smv p
      & \psnames \smv p \in \dom{\envmv_1} \setminus \dom {\envmv_2}
      \\
      \text{undef} & \text{otherwise}
    \end{cases}$
  \item
    $\pjdg[@][\envmv_1][P_1]$ and $\pjdg[@][\envmv_2][P_2]$.
  \end{itemize}
  By the inductive hypothesis, there are $\Gamma'$ and $\envmv_1'$
  such that
  \begin{enumerate}
  \item[\eqref{sr:in}] if $\alpha = \lreceive \smv {\val v}$ then
    $\envmv_1 \Ttrans{\lreceive \smv \sort} \envmv_1'$ for a sort
    $\sort$; also, if $\vjdg[][\val v][\sort]$ then there is
    $x \in \varset$ such that $\sigma'(x) = \val v$,
    $\pjdg[\C \land \C'][\envmv_1'][P_1'][@][\Gamma', x : \sort]$ and
    $\consSt[\sigma'][P_1'][\envmv_1'][\C\land\C'][\Gamma', x : \sort]$
  \item[\eqref{sr:out}] if $\alpha = \lsend \smv {\val v}$ then
    $\envmv_1 \Ttrans{\lsend \smv \sort} \envmv_1'$ with
    $\vjdg[][\val v][\sort]$,
    $\consSt[\sigma'][P_1'][\envmv_1'][\C \land \C'][\Gamma']$, and
    $\pjdg[\C \land \C'][\envmv_1'][P_1'][@][\Gamma']$
  \item[\eqref{sr:other}] otherwise
    $\envmv_1 \Ttrans \alpha \envmv_1'$,
    $\consSt[\sigma'][P_1'][\envmv_1'][\C \land \C'][\Gamma']$, and
    $\pjdg[\C \land \C'][\envmv_1'][P_1'][@][\Gamma']$
  \end{enumerate}
  In any case, by rule \myrule{TSeq} we have that
  \[
    \envmv = \envmv_1;\envmv_2
    \Ttrans \alpha
    \envmv_1';\envmv_2 = \envmv'
  \]
  and the proof concludes with the application of rule \myrule{VSeq}
  to the judgement typing $P_1'$ and
  $\pjdg[@][\envmv_2][P_2][][\Gamma']$ observing that
  $\dom{\Gamma'} \subseteq \dom{\Gamma}$ and
  $\dom{\envmv_2} \subseteq \dom{\envmv_1} \subseteq \dom{\envmv_1'}$
  where the latter inclusion holds because the domain of a
  specification may only grow after transitions (by inspection of the
  rules in \cref{fig:LTSruntimetypes}) and, for the same reason,
  $\envmv_2|_{\shset \cup \chset} = \envmv_1|_{\shset \cup \chset} =
  \envmv_1'|_{\shset \cup \chset}$ since $\envmv_1'$ and $\envmv_1$
  differ at one participant's session only.

\item{\myrule{SFor}}
  \newcommand{\gupd}[1][\tuple \sort]{}
  We have
  \begin{gather}\label{eq:SR:sfor2-red-a}
    \mathrule{
      \eval \ell \sigma \neq \emptylist\
      \qquad
      \state[P][\sigma\upd{\headL{\eval \ell \sigma}}{x}]
      \Ptrans{\lcond{\C'} \alpha}
      \state[P'][\sigma']
    }{
      \state[\underbrace{\Pfor{x}{\ell}{P}}_{=S}]
      \Ptrans{\lcond{\C'} \alpha}
      \state[\underbrace{P';\Pfor{x}{\tailL{\ell}}{P}}_{S'}][\sigma']
    }
    {}
  \end{gather}
  The first premiss of~\eqref{eq:SR:sfor2-red-a} implies
  $\consistent[\C][\ell \not = \emptylist]$, hence the only applicable
  rule to type $S$ is $\myrule{VFor}$ which we instantiate as:
  \begin{gather}\label{eq:SR:sfor2-typing-a}
    \mathrule{
      \begin{array}{c}
        \vjdg[\Gamma][\ell][[\sort]]
        \qquad
        \pjdg[\C][\envmv_1][@][@][\Gamma\gupd, x : \sort]
        \qquad \envmv_1 \mbox{ active}
        \qquad x \not\in \var{\envmv_1}
      \end{array}
    }{
      \pjdg[@][\underbrace{\envmv_1^*}_{=\envmv}][\Pfor x \ell P]
    }{}
  \end{gather}
  for some $\envmv_1$.
  In order to use the inductive hypothesis, we check that
  \[
    \consSt[\sigma\upd{\headL{\eval \ell
        \sigma}}{x}][P][\envmv_1][\C][\Gamma\gupd, x : \sort]
  \]
  In fact, from $\consSt$ we have that
  \begin{enumerate}[(1)]
  \item
    $\dom{\Gamma\gupd, x : \sort} = \dom
    \Gamma \cup \{x\} \subseteq \dom \sigma \cup \{x\} =
    \dom{\sigma\upd{\headL{\eval \ell \sigma}}{x}}$; also,
    $\dom{\envmv_1|_{\chset}} \subseteq \dom \sigma \subseteq
    \dom{\sigma\upd{\headL{\eval \ell \sigma}} x}$
  \item for all $z \in \dom{\Gamma\gupd}$ we have
    $\vjdg[][\sigma\upd{\headL{\eval \ell
        \sigma}}{x}][\Gamma\gupd(z)]$ since
    $\sigma\upd{\headL{\eval \ell \sigma}}{x}(z) = \sigma(z)$;
    moreover, $\vjdg[\Gamma][\ell][[\sort]]$ implies that
    $\vjdg[\Gamma][\headL{\eval \ell \sigma}][\sort]$ and therefore
    $\headL{\eval \ell
      \sigma} = \sigma\upd{\headL{\eval \ell \sigma}}{x}(x)$ has sort $\sort$
  \item since $\eval {\C } \sigma =\truek$, $x\not\in\var \C$ and
    therefore
    $\eval {\C } {\sigma\upd{\headL{\eval \ell \sigma}}{x}} =\truek$
  \item $\forall \psnames{\smv}{p}\in \dom{\envmv_1},
    \consPst[\envmv_1 \psnames{\smv}{p}][\sigma\upd{\headL{\eval
        \ell \sigma}}{x}]$ follows from
    \lemref{lem:const-pseudo-typing-and-store-correspondence}.
  \end{enumerate}

  We proceed by case analysis on $\alpha$.
  First note that By \lemref{lem:red-active-process}, we can exclude
  the case $\alpha = \lreceive{\smv}{\val v}$ because $\envmv_1$ is
  active.
  We consider the case $\alpha = \lsend{\smv}{\val v}$ (the proof in
  the other cases is analogous and simpler and therefore omitted).
  Assume that $\vjdg[][\val v][\sort']$.
  By the inductive hypothesis on~\eqref{eq:SR:sfor2-red-a} and~\eqref{eq:SR:sfor2-typing-a}, there exists
  $\envmv_1 \Ttrans{\lsend \smv {\sort'}} \envmv_1'$ such that
  \begin{gather}\label{eq:SR:sfor2-continuation-P}
    \pjdg[\C \land \C'][\envmv_1'][P'][@][\Gamma\gupd, x:\sort]
    \quad
    \text{with } \consSt[\sigma'][P'][\envmv_1'][\C \land
    \C'][\Gamma\gupd,x:\sort]
  \end{gather}
  We show how to find an environment $\envmv'$ required in~\eqref{sr:out} depending on the possible typings of
  $\Pfor{x}{\tailL{\ell}}{P}$.

  If $\consistent[\C \land \C'][\tailL\ell = \emptylist]$ then, using rule
  \myrule{VForEnd}, we have
  \begin{gather}\label{eq:SR:sfor2-continuation-for}
    \pjdg[\C \land \C'][\envmv_1'|_{\shset \cup \chset}][\Pfor{x}{\tailL
      \ell}{P}][@][{\Gamma\gupd, x : \sort}]
  \end{gather}
  since
  $\forall \psnames \smv p \in\dom{\envmv_1'|_{\shset \cup \chset}} \qst
  (\envmv_1'|_{\shset \cup \chset}) \psnames \smv p = \Tend$ vacuously holds.
  Then, from~\eqref{eq:SR:sfor2-continuation-P} and~\eqref{eq:SR:sfor2-continuation-for}, we derive
  $\pjdg[\C \land
  \C'][\envmv_1';(\envmv_1'|_{\shset \cup \chset})][S'][@][\Gamma\gupd, x : \sort]$
  by using rule \myrule{VSeq}.
  Finally,
  $\envmv = \Tdef{\envmv_1} \Ttrans{\lsend{\smv}{\sort'}}
  \envmv_1'$ by \myrule{TLoop_1} and the thesis follows by
  observing that
  $\envmv_1' = \envmv_1';(\envmv_1'|_{\shset \cup \chset})$
  since
  $\dom{\envmv_1'} \cap \dom{\envmv_1';(\envmv_1'|_{\shset \cup
      \chset})} \subseteq \shset \cup \chset$ by definition of
  $\_ ; \_$ on specifications (\cf\
  page~\pageref{pag:envseq}).

  If $\consistent[\C \land \C'][\tailL\ell \not = \emptylist]$ then,
  using rule $\myrule{VFor}$,
  \begin{gather}\label{eq:sfor2-non-empty-iteration-retyping-a}
    \mathrule{
      \begin{array}{c}
        \vjdg[\Gamma][\tailL \ell][[\sort]]
        \qquad
        \pjdg[@][\envmv_1][@][@][\Gamma\gupd,x : \sort]
        \qquad
        \envmv_1 \text{ active}
        \qquad
        x \not\in \var{\envmv_1}
      \end{array}
    }{
      \pjdg[@][\envmv_1^*][\Pfor{x}{\tailL \ell}{P}][@][\Gamma\gupd, x : \sort]
    }{}
  \end{gather}
  where $\vjdg[\Gamma][\tailL\ell][[\sort]]$ holds because
  $\vjdg[\Gamma][\ell][[\sort]]$ and the conditions in the rest of the
  hypothesis of~\eqref{eq:sfor2-non-empty-iteration-retyping-a} hold
  by the typing~\eqref{eq:SR:sfor2-typing-a}.
  By \lemref{thm:distribution-end-pseudo-types} and
  \lemref{lem:typing-weakening} on the conclusion, from~\eqref{eq:sfor2-non-empty-iteration-retyping-a}
  \begin{gather}\label{eq:sfor2-non-empty-iteration-retyping-e}
    \pjdg[\C\land\C'][\envmv''][\Pfor{x}{\tailL\ell}{P}][@][\Gamma\gupd,x:\sort]
  \end{gather}
  where
  $\envmv'' = \Tdef{\envmv_1}\upd{\guard[\C']\Tend; \Tdef{\envmv_1}
    \psnames \smv p}{\psnames \smv p}_{\psnames \smv p
    \in\dom{\Tdef{\envmv_1}}}$.

  We now show that $\envmv_1';\envmv''$ and $\envmv_1';\envmv$ are
  defined and equivalent.
  Note that $\dom{\envmv''} = \dom\envmv = \dom{\envmv_1}
  \subseteq\dom {\envmv_1'}$ because $\envmv_1
  \Ttrans{\lsend{\smv}{\sort}} \envmv_1'$.
  For the equivalence, we just need to consider session names; let
  $\psnames \smv p \in\dom{\envmv_1';\envmv'' }$, we have two cases:
  if $\psnames \smv p\not \in\dom{\envmv'' }$ then
  $(\envmv_1';\envmv'' ) \psnames \smv p = {\envmv_1' } \psnames
  \smv p = (\envmv_1';\envmv ) \psnames \smv p$ by the definition of
  $\_ ; \_$ and if $\psnames \smv p \in\dom{\envmv'' }$ then
  \[
    \begin{array}{rll}
      (\envmv_1';\envmv'' ) \psnames \smv p
      =
      &
      {\envmv_1' } \psnames \smv p; {\envmv''} \psnames \smv p;
      &
      \text{by def.\ of } \_;\_
      \\
      =
      &
      {\envmv_1' } \psnames \smv p; (\guard[\C']\Tend;
      {\envmv} \psnames \smv p);
      &
      \text{by def.\ of } \envmv''
      \\
      =
      &
      {\envmv_1' } \psnames \smv p; {\envmv} \psnames \smv p;
      &
      \text{by \lemref{lem:nf-neutral-end-seq}}
    \end{array}
  \]
  By rule \myrule{VSeq}, with the judgments in~\eqref{eq:SR:sfor2-continuation-P} and~\eqref{eq:sfor2-non-empty-iteration-retyping-e}, we have
  \begin{gather}\label{eq:sfor2-non-empty-iteration-retyping-d}
    \pjdg[\C \land \C'][\envmv_1';\envmv][S'][@][\Gamma\gupd,x:\sort]
  \end{gather}
  and
  $\envmv = \Tdef{\envmv_1} \Ttrans{\lreceive \smv \sort}
  \envmv_1';\envmv = \envmv'$ by rule \myrule{TLoop_2}.

  It remains to show that
  $\consSt[\sigma'][S'][\envmv'][\C
  \land \C'][\Gamma\gupd,x:\sort]$ holds:
  \begin{enumerate} [(1)]
  \item It follows immediately from~\eqref{eq:SR:sfor2-continuation-P} and the fact that
    $\dom{\envmv'} = \dom{\envmv_1'}$.
  \item Trivially from~\eqref{eq:SR:sfor2-continuation-P}.
  \item Immediate from~\eqref{eq:SR:sfor2-continuation-P};
  \item By \lemref{lem:const-pseudo-typing-and-store-correspondence}.
  \end{enumerate}

  \item{\myrule{SLoopEnd}} We have
  \[
    \mathrule{
      \state[M] \Ptrans{\lcond{\C'} \alpha}\state[P][\sigma']
    }{
      \state[\underbrace{\PloopU N M}_{=S}]
      \Ptrans{\lcond{\C'} \alpha}
      \state[P][\sigma']
    }{}
    \qquad
    \text{with } S' = P
  \]
  and the only applicable rule to type $S$ is \myrule{VLoop}:
  \[
    \vlooprule{}
  \]
  with $\envmv = \Tdef{\envmv_1};\envmv_2$.
  The inductive hypothesis requires to have $\consSt[@][M][\envmv_2]$,
  which easily follows because~\eqref{cons2:eq}-\eqref{cons4:eq} are
  immediate from $\consSt$ and~\eqref{cons5:eq} follows from
  \lemref{lem:const-pseudo-typing-and-store-correspondence}.
  Note that $\alpha$ has to be an input label
  $\lreceive{\smv}{\val v}$ because $M$ is a choice process.
  Hence, there exist $\Gamma'$ and
  $\envmv_2 \Ttrans{\lreceive \smv \sort} \envmv_2'$ and if
  $\vjdg[][\val v][\sort]$ then
  \[
    \pjdg[\C \land \C'][\envmv_2'][P][@][\Gamma', x : \sort]
    \qquad\text{and}\qquad
    \consSt[\sigma'][P][\envmv_2'][\C \land  \C'][\Gamma', x : \sort]
  \]
  Then, take $\envmv' = \envmv_2'$ and note that by \myrule{TLoop_0}
  $\envmv = \Tdef{\envmv_1};\envmv_2 \Ttrans{\lreceive{\smv}{\sort}}
  \envmv'$.

  \item{\myrule{SLoop}} We have
  \begin{gather}\label{eq:SR:sloop2-red-a}
    \mathrule{
      \state[N] \Ptrans{\lcond{\C'} \alpha} \state[P][\sigma']
      \quad
      M = \Pechoice{i \in I}{\smv_i}{x_i}{P_i}
      \quad
      \forall i\in I. \smv_i \not\in \fY[\alpha]
    }{
      \state[\underbrace{\PloopU NM}_{=S}]
      \Ptrans{\lcond{\C'} \alpha}
      \state[\underbrace{P;\PloopU N M}_{=S'}][\sigma']
    }{}
  \end{gather}
  and, as in the case \myrule{SLoopEnd},
  the only applicable rule to type $S$ is \myrule{VLoop}:
  \[
    \vlooprule{}
  \]
  with $\envmv = \Tdef{\envmv_1};\envmv_2$.
  The inductive hypothesis requires to show
  $\consSt[@][N][\envmv_1]$, which holds because~\eqref{cons2:eq}-\eqref{cons4:eq} are immediate from $\consSt$ and~\eqref{cons5:eq} follows from
  \lemref{lem:const-pseudo-typing-and-store-correspondence}.
  Note that $\alpha$ has to be an input label
  $\lreceive{\smv}{\val v}$ because $N$ is a choice process.
  Hence, there exist $\Gamma'$ and
  $\envmv_1 \Ttrans{\lreceive \smv \sort} \envmv_1'$ and if
  $\vjdg[][\val v][\sort]$ then
  \[
    \pjdg[\C \land \C'][\envmv_1'][P][@][\Gamma', x : \sort]
    \qquad\text{and}\qquad
    \consSt[\sigma'][P][\envmv_1'][\C \land  \C'][\Gamma', x : \sort]
  \]
  Then, take $\envmv' = \envmv_1';\Tdef{\envmv_1}$ and note that by
  \myrule{TLoop_2}
  $\envmv = \Tdef{\envmv_1};\envmv_2 \Ttrans{\lreceive{\smv}{\sort}}
  \envmv'$.
  Finally, the thesis follows by applying \myrule{VSeq}.

\item{\myrule{SCom_1}} In this case:
    \[
    \irule{
      \state[P] \Ptrans{\lcond {e'} {\lsend{\smv}{\val v}}}\state[P'][\sigma']
    }{
      \state[\underbrace{P \sparop \queue \smv \qmv}_{=S}]
      \Ptrans{\lcond {e'} \tau}
      \state[\underbrace{P' \sparop \queue \smv {\qmv \cdot \val v}}_{= S'}]
            [\sigma']
    }
    \]
    and, by rule \myrule{VPar}, we have the following typing for $S$:
    \[\qquad
    \irule{
      \begin{array}{c}
        \pjdg[@][\overbrace{\envmv \setminus
            \{\smv : \tuple \sort\}}^{=\envmv_1}][@][\envmv_1'][\Gamma]
        \quad
        \pjdg[@][\overbrace{\envmv|_{\shset},\smv : \tuple \sort}^{=\envmv_2}]
             [\queue \smv \qmv][\envmv_2']
             [\Gamma]
        \quad
        \envmv_1 \text{ and } \envmv_2 \text{ independent}
      \end{array}
    }{
      \pjdg[e][@][S][@][\Gamma]
    }
  \]
  where the typing of the queue $\queue \smv \qmv$ is obtained by
  repeated applications of the rule \myrule{VQueue} ending with an
  application of \myrule{VEmpty} and \cref{lem:typing-weakening};
  hence, $\vjdg[][\tuple \qmv][\tuple \sort]$.
  Note that $\envmv = \envmv_1 \cup \envmv_2$.
  We now observe that $\consSt[@][P][\envmv_1][@]$ and
  $\consSt[@][P][\envmv_2][@]$ directly follow from $\consSt$ (since
  $\dom{\envmv_1} \cup \dom{\envmv_2} \subseteq \dom{\envmv}$) so, by
  inductive hypothesis,
  \[\qquad
    \envmv_1 \Ttrans {\lsend{\smv}{\sort}} \envmv_1'
    \quad \text{with}\quad
    \vjdg[][\val v][\sort]
    \quad
    \pjdg[\C \land \C'][\envmv'_1][P']
    \quad\text{and}\quad
    \consSt[\sigma'][P'][\envmv'_1][\C\land\C'][\Gamma']
  \]
  for some environment $\Gamma'$.
  We now show that we can type $S'$ with
  $\envmv' = \envmv'_1 \cup \{\smv : \tuple \sort \cdot \sort \}$:
  \[\qquad
    \irule{
      \begin{array}{c}
        \pjdg[\C \land \C'][\envmv_1'][P'][@][\Gamma']
        \qquad
        \irule{
        \vjdg[{}][\val v][\sort]
        \qquad
        \pjdg[\C \land \C'][\smv : \queue{}{\tuple \sort}]
             [\queue{\smv}{\tuple{\val v}}][@][\Gamma']
        }{
          \pjdg[\C \land \C'][\smv : \tuple \sort \cdot \sort]
               [\queue \smv {\qmv \cdot \val v}][][\Gamma']
        } \ \myrule{VQueue}
      \end{array}
    }{
      \pjdg[\C \land \C'][\envmv'][S'][@][\Gamma']
    }\ \myrule{VPar}
  \]
  where the judgement in the premiss of \myrule{VQueue} holds by
  \cref{lem:typing-weakening} and the application of \myrule{VPar} is
  possible because $\envmv'_1$ and $\smv : \tuple \sort \cdot \sort$
  are independent.
  The proof of this case ends by showing that
  $\consSt[\sigma'][P'][\envmv'][\C\land\C'][\Gamma']$: we have that~\eqref{cons2:eq} of \defref{def:consistent} holds because
  $\fY[S'] = \fY[P'] \cup \{ \smv \}$ while
  $\dom{\envmv'|_{\chset}} = \dom{\envmv'_1|_{\chset}} \cup \{\smv\}$
  by definition and, by \lemref{monotone:lem},
  $\dom{\sigma} \subseteq \dom{\sigma'}$; also, conditions~\eqref{cons3:eq} and~\eqref{cons4:eq} hold because
  $\sigma'|_{\varset} = \sigma|_{\varset}$ while~\eqref{cons5:eq}
  holds because for each $\psnames \smv p$ we have
  $\envmv' \psnames \smv p = \envmv \psnames \smv p$ by construction.

  \item{\myrule{SCom_2}} In this case:
  \[\irule{
    \state[P]
    \Ptrans{\lcond {e'} {\lreceive \smv {\val v}}}
    \state[P'][\sigma']
  }{
    \state[\underbrace{P \sparop \queue \smv {\val v \cdot \qmv}}_{=S}]
    \Ptrans{\lcond {e'} \tau}
    \state[\underbrace{P' \sparop \queue \smv {\qmv}}_{= S'}][\sigma']
  }
  \]
  and, assuming
  $\envmv_1 = \envmv \setminus \{\smv : \sort \cdot \tuple \sort\}$,
  we observe that $\consSt[@][P][\envmv_1][@]$ directly follows from
  $\consSt$ (since $\dom{\envmv_1} \subseteq \dom{\envmv}$) so, by
  inductive hypothesis.
  We can type $S$ as follows
  \[
    \irule{
      \pjdg[@][\envmv_1][@][\envmv_1'][\Gamma]
      \qquad
      \pjdg[@][\envmv|_{\shset}, \smv : \sort \cdot \tuple \sort]
           [\queue \smv {\val v \cdot \qmv}][\envmv_2'][\Gamma]
    }{
      \pjdg[e][@][S][@][\Gamma]
    }
  \]
  by using rule \myrule{VPar} since $\envmv_1$ and
  $\envmv|_{\shset}, \smv : \sort \cdot \tuple \sort$ are independent
  and the typing of the queue $\queue \smv {\val v \cdot \qmv}$ is
  obtained by repeated applications of the rule \myrule{VQueue} ending
  with an application of \myrule{VEmpty}, which yields
  $\vjdg[][\qmv][\sort]$ and $\vjdg[][\tuple \qmv][\tuple \sort]$.
  By the inductive hypothesis
  $\envmv_1 \Ttrans {\lreceive \smv \sort} \envmv_1'$ and there are
  $\Gamma'$ and $x \in \varset$ such that
  $\pjdg[@][\envmv'_1][P'][][\Gamma', x : \sort]$ and
  $\consSt[\sigma'][P'][\envmv'_1][@][\Gamma',x : \sort]$.

  We now show that we can type $S'$ with
  $\envmv' = \envmv'_1 \cup \{\smv : \tuple \sort \}$:
  \[\qquad
    \irule{
      \begin{array}{c}
        \pjdg[\C \land \C'][\envmv_1'][P'][@][\Gamma', x : \sort]
        \qquad
        \pjdg[\C \land \C'][\smv : \tuple \sort][\queue \smv \qmv][@]
             [\Gamma',x : \sort]
      \end{array}
    }{
      \pjdg[\C \land \C'][\envmv'][S'][@][\Gamma',x : \sort]
    }\quad \myrule{VPar}
  \]
  where the second judgment in the premiss holds by
  \cref{lem:typing-weakening} and the application of \myrule{VPar} is
  possible because $\envmv'_1$ and $\smv : \tuple \sort$ are
  independent since $\smv \not\in \dom{\envmv'_1}$ otherwise
  $\envmv_1$ would not be independent of $\envmv|_{\shset}, \smv :
  \tuple \sort$.
  The proof of this case ends by showing that
  $\consSt[\sigma'][P'][\envmv'][\C\land\C'][\Gamma', x : \sort]$.
  Since $\fY[S'] = \fY[P'] \cup \{ \smv \}$, we have that~\eqref{cons2:eq} of \defref{def:consistent} holds; also,
  $\dom{\envmv'|_{\chset}} = \dom{\envmv'_1|_{\chset}} \cup \{\smv\}$
  by definition and, by \lemref{monotone:lem}, $\dom{\sigma} \subseteq
  \dom{\sigma'}$; also, conditions~\eqref{cons3:eq}
  and~\eqref{cons4:eq} hold because $\sigma'|_{\varset} =
  \sigma|_{\varset}$ and $\sigma'(x) = \val v$ has sort $\sort =
  \Gamma'(x)$; finally,~\eqref{cons5:eq} holds because for each
  $\psnames \smv p$ we have $\envmv' \psnames \smv p = \envmv \psnames
  \smv p$ by construction.

  \item{\myrule{SPar}} In this case we have:
  \[
    \irule{
      \state[S_1]
      \Ptrans{\lcond{\C'} \alpha}
      \state[S_1'][\sigma']
      \quad
      \bn \alpha \cap \dom{\sigma} = \emptyset
      \quad
      \fX[S_2] \cap (\dom{\sigma'} \setminus \dom{\sigma}) = \emptyset
    }{
      \state[\underbrace{S_1 \sparop S_2}_{=S}]
      \Ptrans{\lcond{\C'} \alpha}
      \state[\underbrace{S'_1 \sparop S_2}_{=S'}][\sigma']
    }
  \]
  and the only applicable rule to type $S$ is \myrule{VPar}:
  \[\mathrule{
      \pjdg[\C_1][\envmv_1][S_1][\envmv_1'][\Gamma]
      \qquad
      \pjdg[\C_2][\envmv_2][S_2][\envmv_2'][\Gamma]
      \qquad
      \envmv_1\text{ and }\envmv_2\text{ independent}
    }{
    \pjdg[\underbrace{\C_1 \land \C_2}_{=\C}]
         [\underbrace{\envmv_1 \cup \envmv_2}_{=\envmv}][S_1 \sparop S_2]
         [@][\Gamma]
    }{}
  \]
  Noting that $\consSt$ trivially implies
  $\consSt[@][S_1][\envmv_1][\C_1][@]$, by the inductive hypothesis
  there $\Gamma'$ and $\envmv_1'$ such that
  \begin{enumerate}
  \item[\eqref{sr:in}] if $\alpha = \lreceive \smv {\val v}$ then
    $\envmv_1 \Ttrans{\lreceive \smv \sort} \envmv_1'$ for a sort
    $\sort$; also, if $\vjdg[][\val v][\sort]$ then there is
    $x \in \varset$ such that $\sigma'(x) = \val v$,
    $\pjdg[\C_1 \land \C'][\envmv_1'][S_1'][][\Gamma', x : \sort]$ and
    $\consSt[\sigma'][S_1'][\envmv_1'][\C_1 \land
    \C'][\Gamma', x : \sort]$
  \item[\eqref{sr:out}] if $\alpha = \lsend \smv {\val v}$ then
    $\envmv_1 \Ttrans{\lsend \smv \sort} \envmv_1'$ with
    $\vjdg[][\val v][\sort]$,
    $\consSt[\sigma'][S_1'][\envmv_1'][\C_1 \land \C']$, and
    $\pjdg[\C_1 \land \C'][\envmv_1'][S_1']$
  \item[\eqref{sr:other}] otherwise
    $\envmv_1 \Ttrans \alpha \envmv_1'$,
    $\consSt[\sigma'][S_1'][\envmv_1'][\C_1 \land \C'][\Gamma']$, and
    $\pjdg[\C_1 \land \C'][\envmv_1'][S_1'][][\Gamma']$
  \end{enumerate}
  In the first two cases $\envmv_1'$ and $\envmv_2$ are independent
  because $\dom{\envmv_1'} = \dom{\envmv_1}$ and $\envmv_1$ and
  $\envmv_2$ are independent.
  In the latter case, we observe that
  $\consSt[\sigma'][S_1'][\envmv_1'][\C_1 \land \C'][\Gamma']$ implies
  $\dom{\envmv_1'} \subseteq \dom{\sigma'}$; moreover, $\sigma'$ can
  contain a new session, say $\tuple \smv$ only if the rule
  \myrule{SInit} had been used to derive the transition.
  In this case, the side condition of \myrule{SInit} guarantees that
  $\tuple \smv \not\in \dom \sigma$ and therefore they do not occur in
  $S_2$ and therefore they are not in $\dom{\envmv_2}$ since
  $\consSt$.
  The judgement $\pjdg[\C_2][\envmv_2][S_2][\envmv_2'][\Gamma']$ holds
  because condition
  $\fX[S_2] \cap (\dom{\sigma'} \setminus \dom{\sigma}) = \emptyset$
  makes $\fX[S_2] \cap \dom{\Gamma'} = \emptyset$, hence we apply
  \cref{lem:weakening-env-in-runs}.

  \item{\myrule{SNew}} In this case we have:
  \[\qquad\qquad
  \irule{
      \state[S_1]
      \Ptrans{\lcond {e'} \alpha}
      \state[S'_1][\sigma']
      \qquad
      \tuple \smv \cap \fY[\alpha] = \emptyset
    }{
      \state[\underbrace{(\nu \tuple \smv \At \shname) S_1}_{=S}]
      \Ptrans{\lcond {e'} \alpha}
      \state[\underbrace{(\nu \tuple \smv \At \shname) S'_1}_{=S'}][\sigma']
    }
    \qquad\qquad
    \mathrule{
      \pjdg[@][\envmv'][S][][\Gamma]
      \qquad
      \envmv = \envmv'|_{-\tuple \smv}
    }{
      \pjdg[@][\envmv][(\nu \tuple \smv@\shname)S][@][\Gamma]
    }{}
  \]
  where on the right we have the typing of $S$ obtained by applying
  rule \myrule{VNew}.
  The thesis immediately follows from the inductive hypothesis since
  $\fY[S] = \fY[S_1] \setminus \tuple \smv$ and
  $\fY[S'] = \fY[S'_1] \setminus \tuple \smv$.

  \item{\myrule{SStr}}
  The thesis is immediate by using the inductive hypothesis.
  \qedhere
  \end{itemize}
\end{proof}


\section{WSI by typing}\label{sec:proof-wsr-typing}
\subsection{Correspondence between semantics of specifications}%
\label{sec:corrSpec}
The next results show that the denotational semantics of
specifications coincides with the operational rules given in
\cref{fig:LTStypes}.

\begin{lem}\label{lem:env-weakening}
  If $\envmv\Ttrans{\alpha}\envmv'$ then
  $\envmv,\envmv''\Ttrans{\alpha}\envmv',\envmv''$.
\end{lem}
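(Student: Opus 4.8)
The plan is to proceed by induction on the derivation of $\envmv \Ttrans{\alpha} \envmv'$, with a case analysis on the last rule of \cref{fig:LTStypes} applied. Throughout I would assume, as the comma notation tacitly requires, that $\dom{\envmv} \cap \dom{\envmv''} = \emptyset$ and, for the conclusion to be a well-formed specification, that $\dom{\envmv'} \cap \dom{\envmv''} = \emptyset$ as well; since transitions only ever introduce fresh session channels, the latter is guaranteed up to $\alpha$-renaming of the newly created names.

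First I would dispatch the \emph{action rules} $\myrule{TReq}$, $\myrule{TAcc}$, $\myrule{TSend}$, $\myrule{TRcv}$, $\myrule{TInit}$, $\myrule{TCom_1}$ and $\myrule{TCom_2}$. In each of these the side conditions constrain only a shared-name lookup $\envmv(\shname)$ or a single participant session $\psnames \smv p$ that is rewritten, while the remainder of the environment is carried verbatim from premiss to conclusion. Because the domains are disjoint we have $(\envmv,\envmv'')(\shname)=\envmv(\shname)$, so every side condition survives the extension and $\envmv''$ simply joins the passive context that the rule leaves untouched. Hence the same rule instance fires with $\envmv,\envmv''$ in place of $\envmv$, producing $\envmv',\envmv''$ (up to the implicit reordering of environment entries).

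The structural rule $\myrule{TSeq}$ is handled by invoking the induction hypothesis on its premiss and re-applying the rule, which rests on one algebraic identity relating disjoint union to sequential composition: $(\envmv_1;\envmv_2),\envmv'' = (\envmv_1,\envmv'');\envmv_2$. This holds because $\dom{\envmv_2}\subseteq\dom{\envmv_1}$ forces every entry of $\envmv''$ into the ``first-component-only'' branch of the definition of $\_;\_$ (page~\pageref{pag:envseq}). With it, from the induction hypothesis $\envmv_1,\envmv''\Ttrans{\alpha}\envmv_1',\envmv''$ rule $\myrule{TSeq}$ yields $(\envmv_1,\envmv'');\envmv_2\Ttrans{\alpha}(\envmv_1',\envmv'');\envmv_2$, and the same identity rewrites both sides into the required $(\envmv_1;\envmv_2),\envmv''\Ttrans{\alpha}(\envmv_1';\envmv_2),\envmv''$.

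I expect the genuine difficulty to lie in the iteration rules $\myrule{TLoop_0}$, $\myrule{TLoop_1}$ and $\myrule{TLoop_2}$: since $\Tdef{\_}$ is defined session-wise and an arbitrary disjoint $\envmv''$ need not consist of looping sessions, the environment $\Tdef{\envmv},\envmv''$ is not literally of the form $\Tdef{(\cdot)}$, so the closure rules cannot be re-applied by plain pattern matching. The plan is to isolate the single active session on which $\alpha$ fires and to treat $\envmv''$ as part of the frame carried by the underlying action rule, appealing to the reading of \cref{fig:LTStypes} (up to normal forms) in which the closure and sequential-composition rules propagate a transition through the structure of the active session while all the remaining entries --- including those of $\envmv''$ --- are left unchanged. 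Making this framing precise, namely proving once and for all the commutation of $\Tdef{\_}$ and $\_;\_$ with independent union, is the step that carries the technical weight; the iteration cases then close by the induction hypothesis exactly as $\myrule{TSeq}$ does.
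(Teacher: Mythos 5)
Your proposal takes the same route as the paper, whose entire proof of this lemma reads \lq\lq by straightforward induction on the structure of the proof $\envmv\Ttrans{\alpha}\envmv'$\rq\rq\ and records no case analysis at all. The details you supply — the disjoint-domain bookkeeping for the action rules, the identity $(\envmv_1;\envmv_2),\envmv'' = (\envmv_1,\envmv'');\envmv_2$ for $\myrule{TSeq}$, and especially the observation that $\Tdef{\envmv},\envmv''$ is not literally of the form $\Tdef{(\cdot)}$ so the loop cases need $\envmv''$ to be treated as an untouched frame rather than re-fed through the closure rule — are exactly the steps the paper leaves implicit, and your plan for discharging them is sound.
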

\begin{proof}
  By straightforward induction on the structure of the proof
  $\envmv\Ttrans{\alpha}\envmv'$.
\end{proof}

\begin{lem}\label{lem:merge-reduction}
  $\envmv_1 \Ttrans{\alpha}\envmv_1'$, then
  ${\envmv_1}{\peace}{\envmv_2} \Ttrans{\alpha} \envmv_1'$.
\end{lem}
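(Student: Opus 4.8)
The plan is to argue by induction on the derivation of $\envmv_1 \Ttrans{\alpha} \envmv_1'$, exploiting that the hypothesis is only meaningful when $\envmv_1 \peace \envmv_2$ is defined, i.e.\ when $\envmv_1$ and $\envmv_2$ are mergeable. By \cref{def:mergeability} this forces $\envmv_1$ and $\envmv_2$ to agree on shared names and queues and to carry pseudo-types of the \emph{same} top-level shape at every participant session, so that the last rule used to derive $\envmv_1 \Ttrans{\alpha}\envmv_1'$ determines which clause of the merge (\cref{def:merge}) is active for the transitioning component.

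First I would dispatch the rules that do not inspect the shape of a choice. For \myrule{TReq}, \myrule{TAcc} and \myrule{TInit} the premiss only refers to $\envmv_1(\shname)$ and to projections of the associated global type; since merging leaves the shared-name and channel part untouched (that is, $(\envmv_1\peace\envmv_2)|_{\shset\cup\chset}=\envmv_1|_{\shset\cup\chset}$), the same premiss holds for $\envmv_1 \peace \envmv_2$ and the identical transition fires. The inductive rules \myrule{TSeq}, \myrule{TLoop_0}, \myrule{TLoop_1} and \myrule{TLoop_2} are then handled uniformly by using that merge commutes with the structural operators, namely $(\pst_1;\pst_2)\peace(\pst_1';\pst_2') = (\pst_1\peace\pst_1');(\pst_2\peace\pst_2')$ and $\Tdef{\pst}\peace\Tdef{\pst'}=\Tdef{(\pst\peace\pst')}$: one peels off the corresponding constructor, applies the induction hypothesis to the sub-derivation, and rebuilds the transition of $\envmv_1\peace\envmv_2$.

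The crux lies in the two communication rules. For \myrule{TSend} the transitioning component is an internal choice, and the internal-choice clause of \cref{def:merge} merges two internal choices by \emph{union} of their (disjoint-channel) branches; hence the fired branch on $\smv_j$ survives in $\envmv_1\peace\envmv_2$ with its continuation unchanged, so the same output transition is enabled. For \myrule{TRcv} the component is an external choice, and here the continuation reached after firing $\smv_j$ is $\pst_j \peace \pst_j'$ rather than $\pst_j$; I expect this to be the main obstacle, since one must show that the merged transition lands in a specification equivalent to $\envmv_1'$. The reconciliation uses that we work \emph{up-to normal form} together with the mutual exclusivity of guards guaranteed by mergeability ($\C_i \land \C_i' \iff \falsek$): since $\nfp{\pst_j\peace\pst_j'}=\nfp{\pst_j}\peace\nfp{\pst_j'}$ by \cref{lem:sr-merge-normalization}, the $\envmv_2$-contributions whose guards are incompatible with the branch actually taken collapse to $\guard[\falsek]{\Tend}$ and are discarded, leaving $\envmv_1'$. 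The same argument absorbs the merged components of the otherwise unchanged context, and the queue-driven rules \myrule{TCom_1} and \myrule{TCom_2} follow by combining this reasoning with the fact that merge preserves queue entries.
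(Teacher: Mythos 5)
Your proposal follows exactly the paper's proof, which is the one-line ``by induction on the structure of the derivation of $\envmv_1 \Ttrans{\alpha} \envmv_1'$''; your case analysis simply fills in the details that the paper omits. In particular, you correctly isolate the only delicate case --- \myrule{TRcv}, where the merged external choice continues with $\pst_j \peace \pst_j'$ rather than $\pst_j$, so that landing at $\envmv_1'$ requires appealing to normalisation (\cref{lem:sr-merge-normalization}) and the mutual exclusivity of the guards --- a point the paper's terse proof does not even mention.
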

\begin{proof}
  By induction on the structure of the derivation $\envmv_1
  \Ttrans{\alpha} \envmv_1'$.
\end{proof}

\begin{lem}\label{lem:weakening-env-in-runs}
  $\typeruns{\tuple \smv}{\envmv_1, \envmv_2} = \typeruns{\tuple
    \smv}{\envmv_1}$ if $\dom{\envmv_2} \cap \tuple\smv = \emptyset$.
\end{lem}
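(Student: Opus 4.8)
The plan is to prove the two set inclusions separately, each by induction on the derivation that witnesses membership of a run (i.e.\ on the height of the derivation tree built from the rules of \cref{fig:def-runs-local-types}). The guiding observation is that among those rules only the event-emitting ones, \myrule{RTCom_1} and \myrule{RTCom_2}, read and rewrite an actual binding: in each case a participant session $\psnames \smv p$ together with a queue $\smv_j$ whose channel lies in $\tuple\smv$. Likewise the only axiom \myrule{RTEnd} singles out the queue binding $\tuple \smv : \queue{}{}$. Since $\dom{\envmv_2}\cap\tuple\smv=\emptyset$, none of these ``active'' bindings can belong to $\envmv_2$, so along any derivation the whole of $\envmv_2$ is carried over unchanged; it is a passenger, and the $\tuple\smv$-bindings seen by the rules are literally the same in $\envmv_1$ and in $\envmv_1,\envmv_2$.

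\textbf{Key steps.} First I would prove $\typeruns{\tuple\smv}{\envmv_1} \subseteq \typeruns{\tuple\smv}{\envmv_1,\envmv_2}$ by induction on the derivation of $r\in\typeruns{\tuple\smv}{\envmv_1}$, matching the last rule applied. For \myrule{RTCom_1} and \myrule{RTCom_2} we have $\envmv_1 = \Xi, \psnames \smv p:\pst, \smv_j:\queue{}{\cdots}$ with $\smv_j\in\tuple\smv$; by disjointness $\smv_j\notin\dom{\envmv_2}$, so the same rule fires on $\Xi,\envmv_2, \psnames \smv p:\pst, \smv_j:\queue{}{\cdots}$, its premise being supplied by the induction hypothesis. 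The remaining rules are the ones whose applicability is a condition on the \emph{whole} specification rather than on a single binding, and here I use that $\_;\_$ (page~\pageref{pag:envseq}), $\Tdef{\_}$ and the predicate ``$\Tend$-only'' are all defined binding-wise, while $\envmv_2$ is untouched. For \myrule{RTSeq}, a splitting $\envmv_1 = \Xi_1;\Xi_2$ extends to $(\Xi_1,\envmv_2);\Xi_2 = \envmv_1,\envmv_2$, keeping $\envmv_2$ on the first component; the two premises then follow by induction. For \myrule{RTIt_1}, \myrule{RTIt_2} and \myrule{RTEnd} one observes that $\Tdef{(\envmv_1,\envmv_2)} = \Tdef{\envmv_1},\Tdef{\envmv_2}$ and that the ``shape''/``$\Tend$-only'' side conditions evaluate binding-wise, so $\envmv_2$ contributes the same fixed verdict on both sides. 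The reverse inclusion $\typeruns{\tuple\smv}{\envmv_1,\envmv_2} \subseteq \typeruns{\tuple\smv}{\envmv_1}$ runs the same induction backwards: any derivation for $\envmv_1,\envmv_2$ never rewrites an $\envmv_2$-binding, so erasing $\envmv_2$ from every node yields a valid derivation for $\envmv_1$, the communication rules still firing on the same $\tuple\smv$-bindings and the binding-wise structural conditions restricting back to $\envmv_1$.

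\textbf{Main obstacle.} The only place where more than bookkeeping is required is the base case \myrule{RTEnd}, because its ``$\Tend$-only'' premise is global rather than local to the $\tuple\smv$-part: closing the forward direction requires that the carried passenger $\envmv_2$ does not itself obstruct termination of the session. This is exactly the point controlled by $\dom{\envmv_2}\cap\tuple\smv = \emptyset$ together with the binding-wise reading of ``$\Tend$-only'' and of $\Tdef{\_}$, and it is what the induction must verify at each leaf; the communication and sequencing cases, by contrast, are purely mechanical consequences of the passenger observation above.
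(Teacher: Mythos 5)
Your approach coincides with the paper's: the paper proves this lemma ``by straightforward induction on the structure of the proof $r\in\typeruns{\tuple \smv}{\envmv_1,\envmv_2}$'', which is exactly your induction on derivations, and your treatment of $\myrule{RTCom_1}$, $\myrule{RTCom_2}$, $\myrule{RTSeq}$ and the iteration rules as pure bookkeeping on a ``passenger'' $\envmv_2$ is precisely the content that one-line argument leaves implicit.

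The one point where I would push back is the base case, which you correctly single out as the only non-mechanical step but then discharge too quickly. The premise of $\myrule{RTEnd}$ is that the \emph{entire} residual specification is $\Tend$-only, i.e.\ $\nfp{\envmv\psnames{\smv'}{q}}=\guard\Tend$ for \emph{every} participant session in the domain, not only those on $\tuple\smv$. The hypothesis $\dom{\envmv_2}\cap\tuple\smv=\emptyset$ says nothing about whether the pseudo-types in $\envmv_2$ are terminated: take $\envmv_1=\psnames \smv p:\guard\Tend,\ \tuple\smv:\queue{}{}$ and $\envmv_2=\psnames{\smv'}{q}:\pst$ with $\tuple{\smv'}$ disjoint from $\tuple\smv$ and $\nfp\pst$ not of the form $\guard[\C']\Tend$; then $\epsilon\in\typeruns{\tuple\smv}{\envmv_1}$ by $\myrule{RTEnd}$, but that rule is blocked on $\envmv_1,\envmv_2$ and no other rule of \cref{fig:def-runs-local-types} produces $\epsilon$ there, so the forward inclusion fails as stated. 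Closing this leaf requires either an extra hypothesis (e.g.\ $\envmv_2$ $\Tend$-only) or a reading of $\myrule{RTEnd}$ whose side condition is restricted to the $\tuple\smv$-sessions; your claim that the domain-disjointness hypothesis together with a ``binding-wise'' reading of $\Tend$-only controls this is not something the stated hypothesis delivers. To be fair, the paper's own one-line proof elides exactly the same point, and the issue is live where the lemma is invoked (the $\myrule{TInit}$ case of \cref{lem:corr-semantics-local-types} adds a freshly initiated, hence non-terminated, session), so this is a weakness you have inherited from the statement rather than introduced.
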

\begin{proof} By straightforward induction on the structure of the
  proof $r\in\typeruns{\tuple \smv}{\envmv_1,\envmv_2}$.
\end{proof}

\begin{lem}\label{lem:preservation-of-disjoint-endpoint-types}
  Let $\envmv$ be a specification and $\tuple \smv \in \dom\envmv$.  If
  $\envmv \Ttrans{\alpha} \envmv'$ and
  $\names[\alpha]\cap\tuple\smv=\emptyset$, then
  $\envmv\psnames\smv p = \envmv'\psnames\smv p$ for all
  $\psnames \smv p\in\envmv$.
\end{lem}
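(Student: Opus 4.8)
The plan is to argue by induction on the derivation of $\envmv \Ttrans{\alpha} \envmv'$, proceeding by case analysis on the last rule applied among those in \cref{fig:LTStypes}. The guiding invariant is that every rule modifies only those participants' sessions $\psnames{\smvo s}{q}$ whose underlying session channels occur in $\names[\alpha]$, or else creates an entirely fresh session; hence, once $\names[\alpha]\cap\tuple\smv=\emptyset$, none of the endpoints $\psnames\smv p$ of the tracked session $\tuple\smv$ can be touched. The key background fact I will exploit throughout is the sanity condition~\eqref{eq:sanity}, which guarantees that a session channel is used in at most one session.

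First I would dispatch the rules acting on a single endpoint. For \myrule{TSend} (resp.\ \myrule{TRcv}) the label is $\lsend{\smv_j}{\sort_j}$ (resp.\ $\lreceive{\smv_j}{\sort_j}$) and the only pseudo-type that changes is the one assigned to $\psnames{\smvo s}{q}$ with $\smv_j\in\tuple{\smvo s}$, so $\smv_j\in\names[\alpha]$. From $\names[\alpha]\cap\tuple\smv=\emptyset$ we obtain $\smv_j\notin\tuple\smv$, and~\eqref{eq:sanity} then forces $\tuple{\smvo s}\neq\tuple\smv$; thus every $\psnames\smv p\in\dom\envmv$ is left unchanged. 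Rules \myrule{TReq} and \myrule{TAcc} only extend $\envmv$ with a new endpoint on the freshly opened session named in $\alpha$, which is disjoint from $\tuple\smv$, so the pre-existing endpoints—in particular those of $\tuple\smv$—are untouched. The structural rules \myrule{TSeq}, \myrule{TLoop_0}, \myrule{TLoop_1} and \myrule{TLoop_2} follow directly from the induction hypothesis applied to the premiss, since these constructors act uniformly and do not move a channel from one session to another.

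The delicate cases are the three silent rules, and this is where I expect the main obstacle, because the label $\tau$ carries no channel explicitly, so one must track the affected session through the premiss of the rule rather than reading it off $\alpha$. For \myrule{TInit} a brand new session is initiated: by the freshness convention on specifications (writing $\envmv,\psnames\smv{p_0}:\pst_0,\ldots$ requires the session channels to be absent from $\dom\envmv$), the new session differs from the already-present tracked session $\tuple\smv\in\dom\envmv$, so all endpoints of $\tuple\smv$ are preserved. For \myrule{TCom_1} and \myrule{TCom_2} the reduction consumes or produces a message on a queue channel $\smv_j$ and updates the endpoint $\psnames{\smvo s}{q}$ with $\smv_j\in\tuple{\smvo s}$; here I would use that $\smv_j$ is precisely the channel on which the step operates, so the name-disjointness hypothesis together with~\eqref{eq:sanity} again yields $\tuple{\smvo s}\neq\tuple\smv$, leaving the tracked endpoints unaffected. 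Carefully invoking the uniqueness-of-session invariant~\eqref{eq:sanity} in exactly these communication cases—to conclude that acting on the queue channel $\smv_j$ cannot disturb the pseudo-types recorded for $\tuple\smv$—is the crux of the argument, the remaining cases being routine.
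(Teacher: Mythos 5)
Your overall strategy coincides with the paper's: the paper proves this lemma ``by straightforward induction on the structure of the proof $\envmv\Ttrans{\alpha}\envmv'$'' and gives no further detail, and your case analysis over the rules of \cref{fig:LTStypes} is exactly that induction spelled out. The cases for \myrule{TReq}, \myrule{TAcc}, \myrule{TSend}, \myrule{TRcv} and the structural/iteration rules are handled correctly, and invoking the sanity condition~\eqref{eq:sanity} to separate the session acted upon from the tracked session $\tuple\smv$ is the right move for the labelled communication rules.

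There is, however, a concrete problem in the step you yourself identify as the crux. For \myrule{TCom_1} and \myrule{TCom_2} the label is $\tau$, and the paper defines $\fY[\tau]=\emptyset$ and $\bn\tau=\emptyset$, hence $\names[\tau]=\emptyset$; the hypothesis $\names[\alpha]\cap\tuple\smv=\emptyset$ is therefore vacuously true and tells you \emph{nothing} about the queue channel $\smv_j$ on which the step operates. Your claim that ``the name-disjointness hypothesis together with~\eqref{eq:sanity} again yields $\tuple{\smvo s}\neq\tuple\smv$'' does not follow: if $\smv_j\in\tuple\smv$, rule \myrule{TCom_1} changes $\envmv\psnames\smv p$ from an internal choice to $\pst_j$ while the hypothesis on $\names[\tau]$ still holds, so the conclusion fails for that instance. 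To close this case you would need either to read the lemma as restricted to non-silent labels, or to argue that in every context where the lemma is applied (e.g.\ in the proof of \cref{lem-impl-covers-local}, where the specification transitions come from \cref{thm:sr} applied to a single process without queues on $\tuple\smv$) the \myrule{TCom_1}/\myrule{TCom_2} cases cannot arise for the tracked session; as written, the step is asserted but not justified, and this is the one genuine gap in an otherwise faithful reconstruction of the paper's (unstated) argument.
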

\begin{proof}
  By straightforward induction on the structure of the proof
  $\envmv\Ttrans{\alpha}\envmv'$.
\end{proof}

\speclem*
\begin{proof}
  By induction on the structure of the proof
  $\envmv\Ttrans{\tau}\envmv'$.
  \begin{itemize}
  \item \myrule{TCom_1}:
    $\envmv = \envmv_1, \psnames \smv p : \Ipst[\C_i]{i \in
    I}{\smv_i}{\sort_i}{\pst_i}, \smv_j : \queue{}{\tuple \sort}$
    and
    $\envmv' = \envmv_1, \psnames \smv p : \pst_j, \smv_j :
    \queue{}{\tuple \sort \cdot \sort_j}$.  Then, by
    \myrule{RTCom_1}, $r\in\typeruns{\tuple \smv}{\envmv'}$ implies
    $\state[\ptp p][\MAsend{\smv}{\sort}]r\in\typeruns{\tuple
      \smv}{\envmv}$.

  \item \myrule{TCom_2}: $\envmv = \envmv_1, \psnames \smv p :
    \Epst[\C_i]{i \in I}{\smv_i}{\sort_i}{\pst_i}, \smv_j :
    \queue{}{\sort_j \cdot \tuple \sort}$ and $\envmv' = \envmv_1,
    \psnames \smv p : \pst_j, \smv_j : \queue{}{\tuple \sort}$.  Then,
    by \myrule{RTCom_2}, $r\in\typeruns{\tuple \smv}{\envmv'}$ implies
    $\state[\ptp p][\lreceive {\smv_j} {\sort_j} ]r\in\typeruns{\tuple
      \smv}{\envmv}$.

  \item \myrule{TInit}: Then, $\envmv'=\envmv, \psnames {\smv'} {p_0}
    : \pst_0, \ldots, \psnames {\smv'} {p_n} : \pst_n, \tuple {\smv'}
    : \queue{}{}$.  By well-formedness conditions on specifications,
    $\tuple \smv \subseteq \dom\envmv $ implies $\tuple \smv \cap
    \tuple \smv'$. The case follows by using
    \lemref{lem:weakening-env-in-runs} to conclude that
    $\typeruns{\tuple \smv}{\envmv'}=\typeruns{\tuple \smv}{\envmv}$.

  \item \myrule{TSeq}: Then, $\envmv =\envmv_1;\envmv_2$, and
    $\envmv'=\envmv_1';\envmv_2$ with
    ${\envmv_1}\Ttrans{\tau}{\envmv_1'}$.  By rule \myrule{RTSeq},
    $r=r_1r_2$ with $r_1\in\typeruns{\tuple \smv}{\envmv_1'}$ and
    $r_2\in\typeruns{\tuple \smv}{\envmv_2}$.  By inductive
    hypothesis, $r_1\in\typeruns{\tuple \smv}{\envmv_1'}$ implies
    $r_1\in\typeruns{\tuple \smv}{\envmv_1}$, $\state[\ptp
      p][\MAsend{\smv}{\sort}] r_1\in \typeruns{\tuple
      \smv}{\envmv_1}$, $\state[\ptp q][\Treceive{\smv}{\sort}] r_1\in
    \typeruns{\tuple \smv}{\envmv_1}$, or $r_1 = r_1'r_2'r_3'$ and
    $r_1'[r_2']r_3'\in \typeruns{\tuple \smv}{\envmv_1}$. Then, the
    proof is completed by using rule \myrule{RTSeq}.

  \item \myrule{TLoop_0}: Then, $\envmv' = \envmv \proj\shset$.  By
    inspection of rules in~\cref{fig:def-runs-local-types}, we
    conclude that $r\in\typeruns{\tuple \smv}{\envmv'}$ implies $r =
    \epsilon$. Then, $r\in\typeruns{\tuple \smv}{\envmv}$ by
    \myrule{RTEnd}.

  \item \myrule{TLoop_1}: Then, $\envmv = \Tdef{\envmv_1}$ and
    $\envmv' = \envmv_1$.  By \myrule{RTIt1}, $r\in\typeruns{\tuple
    \smv}{\envmv'}$ implies $r\in\typeruns{\tuple \smv}{\envmv}$.

  \item \myrule{TLoop_2}: Then, $\envmv = \Tdef{\envmv_1}$ and
    $\envmv' = \envmv_1;\Tdef{\envmv_1}$.  By inspection of rules
    in~\cref{fig:def-runs-local-types}, we conclude that
    $r\in\typeruns{\tuple \smv}{\envmv'}$ implies $r = r_1r_2$ with
    $r_1\in \typeruns{\tuple \smv}{\envmv_1}$ and $r_2\in
    \typeruns{\tuple \smv}{\Tdef{\envmv_1}}$.  By \myrule{RTIt2},
    $r_1[r_2]\in\typeruns{\tuple \smv}{\envmv}$.
    \qedhere
  \end{itemize}
\end{proof}

\specbacklem*

\begin{proof}
  The proof follows by induction on the structure of proof
  $r\in\typeruns{\tuple \smv}{\envmv}$.

  \begin{itemize}
  \item \myrule{RTCom_1} follows immediately by taking $r= \state[\ptp
    p][\lsend {\smv} {\sort} ]r'$ with $r' \in\typeruns{\tuple
    \smv}{\envmv'}$ and using \myrule{TCom_1}.

  \item \myrule{RTCom_2} follows immediately by taking $r= \state[\ptp
    p][\lreceive {\smv} {\sort} ]r'$ with $r' \in\typeruns{\tuple
    \smv}{\envmv'}$ and using \myrule{TCom_2}.

  \item \myrule{RTSeq} Then $\envmv = \envmv_1;\envmv_2$, $r=r_1r_2$
    with $r_1\in\typeruns{\tuple \smv}{\envmv_1}$ and
    $r_2\in\typeruns{\tuple \smv}{\envmv_2}$. The proof is completed
    by using inductive hypothesis on $r_1\in\typeruns{\tuple
      \smv}{\envmv_1}$ to conclude that
    $\envmv_1\Ttrans{\tau}\envmv'_1$. Then, by \myrule{TSeq}, $\envmv
    = \envmv_1;\envmv_2 \Ttrans{\tau} \envmv_1';\envmv_2$. Finally,
    take $\envmv = \envmv_1';\envmv_2$ and use rule \myrule{RTSeq} to
    conclude that either $r\in\typeruns{\tuple \smv}{\envmv'}$ or
    $r'\in\typeruns{\tuple \smv}{\envmv'}$.

  \item \myrule{RTIt1}. Then $\envmv = \Tdef{\envmv_1}$ and
    $r\in\typeruns{\tuple \smv}{\envmv_1}$.  The proof is completed by
    taking $\envmv' = \envmv_1$ and using rule \myrule{TLoop_1} to
    derive $\envmv\Ttrans{\tau}\envmv'$.

  \item \myrule{RTIt2}. Then $\envmv = \Tdef{\envmv_1}$ and $r =
    r_1[r_2]$ with $r_1\in\typeruns{\tuple \smv}{\envmv_1}$ and
    $r_1\in\typeruns{\tuple \smv}{\Tdef{\envmv_1}}$.  The proof is
    completed by taking $\envmv' = \envmv_1;\Tdef{\envmv_1}$ and using
    rule \myrule{TLoop_2} to derive $\envmv\Ttrans{\tau}\envmv'$.
    \qedhere

  \end{itemize}

\end{proof}



\subsection{Runs of Specifications}

\begin{lem}%
\label{lem:runs-preserved-by-extchoice}
Let
$r \in \typeruns{\tuple \smv}{\envmv, \psnames \smv p:
  \Tbbra{\smv_i}{\sort_i}{\pst_i}}$
and
$\smv\in\tuple \smv$ and $\smv \neq \smv_i$ for all $i\in I$, then
$r \in \typeruns{\tuple \smv}{\envmv, \psnames \smv p:
\Treceive\smv\sort;\pst +\Tbbra{\smv_i}{\sort_i}{\pst_i}}$.
\end{lem}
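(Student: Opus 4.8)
The plan is to argue by induction on the derivation of $r \in \typeruns{\tuple \smv}{\envmv, \psnames \smv p : \Tbbra{\smv_i}{\sort_i}{\pst_i}}$, performing a case analysis on the last rule of \cref{fig:def-runs-local-types} used to conclude it. The guiding intuition is that the extra branch $\Treceive\smv\sort;\pst$ is simply never taken along $r$: since its guard channel $\smv$ is fresh with respect to the $\smv_i$, its presence can neither enable nor disable any step that $r$ already performs, so no existing run is lost.

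The decisive case is when the last rule is \myrule{RTCom_2} acting on the distinguished endpoint $\psnames \smv p$. Then $r = \state[\ptp p][\lreceive{\smv_j}{\sort_j}]\, r'$ for some $j \in I$, the queue on $\smv_j$ in $\envmv$ carries $\sort_j$ at its head, and $r'$ is a run of the continuation obtained by replacing $p$'s type with $\pst_j$ and popping that queue. Because $\smv \neq \smv_i$ for all $i \in I$, the enlarged external choice $\Treceive\smv\sort;\pst + \Tbbra{\smv_i}{\sort_i}{\pst_i}$ is still a well-formed external choice (its guard channels remain pairwise distinct), it still has $j$ among its branches, and selecting $j$ yields exactly the same continuation $\pst_j$. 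Hence \myrule{RTCom_2} applies verbatim to the enlarged type over the index set $I \cup \{\ast\}$, reusing the very premiss $r'$ we already have; this reconstructs $r$ as a run of $\envmv, \psnames \smv p : \Treceive\smv\sort;\pst + \Tbbra{\smv_i}{\sort_i}{\pst_i}$ without invoking the induction hypothesis. Note that \myrule{RTCom_1} cannot act on $\psnames \smv p$, since its type is an input choice rather than an internal one.

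All remaining cases are uniform. If the last rule fires an action of a different endpoint $\psnames{\smv}{q}$ (rule \myrule{RTCom_1} or \myrule{RTCom_2}), or it is one of the structural rules \myrule{RTSeq}, \myrule{RTIt_1}, \myrule{RTIt_2}, then the assignment $\psnames \smv p : \Tbbra{\smv_i}{\sort_i}{\pst_i}$ is carried unchanged into (one of) the premisses; I would apply the induction hypothesis to that premiss to replace $p$'s type by the enlarged one, and then re-apply the same rule, whose side conditions never mention $p$'s type. The rule \myrule{RTEnd} cannot be the last step, because an external choice over a non-empty $I$ is not $\Tend$-only.

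The main care-point --- and the only place where something could go wrong --- is the bookkeeping in the structural cases. There the whole specification is decomposed as $\envmv_1 ; \envmv_2$ or $\Tdef{\envmv_1}$, and one must check that $\psnames \smv p : \Tbbra{\smv_i}{\sort_i}{\pst_i}$ lives entirely inside a single component so that the induction hypothesis applies to it cleanly; this is guaranteed by the fact that $\_;\_$ and $\Tdef{\_}$ act pointwise on endpoints. Since we work up to normal forms, I would also observe that enlarging the choice commutes with normalisation: the new branch carries guard $\truek$ and a fresh channel, so the normalisation of \cref{fig:normalization-pseudo} treats it exactly as the other branches and never collapses it, leaving the type a genuine external choice to which \myrule{RTCom_2} remains applicable.
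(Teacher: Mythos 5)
Your proof is correct and follows exactly the route the paper takes: the paper's proof of \cref{lem:runs-preserved-by-extchoice} is a one-line ``straightforward induction on the structure of the proof'' of $r \in \typeruns{\tuple \smv}{\cdot}$, and your case analysis (with \myrule{RTCom_2} on $\psnames \smv p$ as the only non-routine case, reusing the premiss directly, and all other rules handled by IH plus re-application) is the natural expansion of that induction. Your observations that \myrule{RTEnd} is impossible and that the enlarged choice remains well-formed because $\smv \neq \smv_i$ are exactly the points that make the induction go through.
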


\begin{proof} By straightforward induction on the structure of the proof.
\end{proof}

\begin{lem}%
\label{lem:runs-preserved-by-intchoice}
Let
$r \in \typeruns{\tuple \smv}{\envmv, \psnames \smv p:
  \Tssel{\smv_i}{\sort_i}{\pst_i}}$
and
$\smv\in\tuple \smv$ and $\smv \neq \smv_i$ for all $i\in I$, then
$r \in \typeruns{\tuple \smv}{\envmv, \psnames \smv p:
  \Treceive\smv\sort;\pst \oplus\Tssel{\smv_i}{\sort_i}{\pst_i}}$.
\end{lem}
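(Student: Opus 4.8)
The plan is to mirror the proof of \cref{lem:runs-preserved-by-extchoice} and argue by induction on the derivation of $r \in \typeruns{\tuple\smv}{\envmv, \psnames\smv p: \Tssel{\smv_i}{\sort_i}{\pst_i}}$ using the rules of \cref{fig:def-runs-local-types}. The guiding observation is that the augmented endpoint $\Treceive\smv\sort;\pst\oplus\Tssel{\smv_i}{\sort_i}{\pst_i}$ is again an internal choice whose branches include all of the original ones $\overline{\smv_i}\ \sort_i.\pst_i$ (for $i\in I$) together with one extra branch on the fresh channel $\smv$ (recall $\smv\neq\smv_i$ for every $i\in I$, so the guarding channels remain pairwise distinct). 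Since every original branch survives and all guards are $\truek$, the augmented type is still in normal form and the runs rules apply to it; moreover any move of $\psnames\smv p$ that $r$ performs stays available after the augmentation, while the additional branch is never needed to reproduce $r$.

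First I would note that $r\neq\epsilon$: the only rule yielding the empty run is \myrule{RTEnd}, which requires the underlying specification to be $\Tend$-only, and this fails because $\psnames\smv p$ carries a non-empty internal choice. Hence the derivation ends with one of \myrule{RTCom_1}, \myrule{RTCom_2}, \myrule{RTSeq}, \myrule{RTIt_1}, \myrule{RTIt_2}, and I proceed by cases on this last rule.

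The central case is \myrule{RTCom_1} acting on the endpoint $\psnames\smv p$ itself. Here $r=\state[\ptp p][\lsend{\smv_j}{\sort_j}]r'$ for some $j\in I$, the queue part of $\envmv$ contains $\smv_j:\queue{}{\tuple\sort}$, and $r'\in\typeruns{\tuple\smv}{\envmv_0,\psnames\smv p:\pst_j,\smv_j:\queue{}{\tuple\sort\cdot\sort_j}}$ where $\envmv_0$ is $\envmv$ with that queue removed. Since $j$ still indexes a branch of the augmented choice and committing to it discards the extra branch, the continuation specification is literally unchanged; therefore re-applying \myrule{RTCom_1} with index $j$ to the augmented endpoint gives $r\in\typeruns{\tuple\smv}{\envmv,\psnames\smv p:\Treceive\smv\sort;\pst\oplus\Tssel{\smv_i}{\sort_i}{\pst_i}}$ directly, with no appeal to the induction hypothesis. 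For the remaining cases — a communication fired by some other endpoint $\psnames{\smv'}{q}$ (rule \myrule{RTCom_1} or \myrule{RTCom_2}), or a structural step \myrule{RTSeq}/\myrule{RTIt_1}/\myrule{RTIt_2} — the assignment to $\psnames\smv p$ is carried unchanged into the premise(s). I would then push the augmentation of $\psnames\smv p$ into the relevant sub-specification, invoke the induction hypothesis on each sub-run, and re-apply the same rule to the augmented specification.

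The main obstacle is bookkeeping rather than conceptual. I must verify that the augmented endpoint is genuinely an internal choice in normal form so that \myrule{RTCom_1} applies to its original branches (this is where $\smv\neq\smv_i$ and the $\truek$ guards are used), and that in the structural cases the decomposition used by \myrule{RTSeq}, \myrule{RTIt_1}, \myrule{RTIt_2} leaves the component containing $\psnames\smv p$ in exactly the shape required by the induction hypothesis. The queue updates performed by \myrule{RTCom_1}/\myrule{RTCom_2} play no role for the preserved endpoint, since they touch only the channel of the firing action, which is distinct from $\smv$.
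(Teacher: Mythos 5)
Your proposal is correct and matches the paper's proof, which is simply a straightforward induction on the derivation of $r \in \typeruns{\tuple \smv}{\envmv, \psnames \smv p: \Tssel{\smv_i}{\sort_i}{\pst_i}}$; your write-up just spells out the case analysis (the key \myrule{RTCom_1} case on $\psnames\smv p$ needing no induction hypothesis, the remaining rules passing the unchanged endpoint type to their premisses) that the paper leaves implicit. No gaps.
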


\begin{proof} By straightforward induction on the structure of the proof.
\end{proof}

\begin{lem}
  Let $\envmv$ and $\envmv'$ be two specifications such that
  $\rmg{\envmv \psnames \smv p} =\rmg{\envmv \psnames \smv p}$ for all
  $\psnames \smv p \in\dom {\envmv'}$.  Then, $\typeruns{\tuple
    \smv}{\envmv}=\typeruns{\tuple \smv}{\envmv'}$.
\end{lem}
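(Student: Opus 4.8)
The plan is to read the hypothesis in its evidently intended form, namely $\rmg{\envmv\psnames\smv p} = \rmg{\envmv'\psnames\smv p}$ for every $\psnames\smv p\in\dom{\envmv}=\dom{\envmv'}$, with $\envmv$ and $\envmv'$ agreeing on the session queues and shared-name assignments (the hypothesis constrains only the participant pseudo-types) and both in normal form, so that $\typeruns{\tuple\smv}{\cdot}$ is defined. The observation that drives the whole argument is that the rules generating $\typeruns{\tuple\smv}{\cdot}$ in \cref{fig:def-runs-local-types} are \emph{guard-blind}: each rule dispatches on the top shape of the pseudo-type assigned to the acting participant (internal/external choice, iteration, sequential composition, $\Tend$), on its index set, channels and sorts, and on the queue contents, but never on the boolean guards $\C_i$. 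Guard removal $\rmg\cdot$ performs exactly two actions: it deletes every guard, and—through the quotient $I/_\sim$—it groups the branches of a choice that share a channel, merging their continuations with $\peace$. I would isolate these two effects and show that neither changes the run set.

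First I would record that, under the standing convention that the channels occurring in a single internal or external choice are pairwise distinct (as assumed for local types in \cref{sec:types}), the relation $\sim$ in the definition of $\rmg\cdot$ is the identity; hence each class is a singleton, $\pst_i\peace\cdots$ over a singleton is just $\pst_i$, and $\rmg\cdot$ reduces to ``delete all guards and recurse structurally''. Consequently $\rmg\cdot$ preserves the operator at the root of a pseudo-type, its index set, and all channels and sorts, and commutes with sequential composition and iteration while sending $\Tend$ to $\Tend$. In particular the pointwise image $\rmg{\envmv}$ (guard removal on participant pseudo-types, queues and shared names left untouched) is again a normal-form specification, since all its guards are $\truek$, so none is inconsistent and nothing remains to be propagated.

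The core step is then the equality $\typeruns{\tuple\smv}{\envmv} = \typeruns{\tuple\smv}{\rmg{\envmv}}$ for an arbitrary normal-form $\envmv$, proved by induction on the derivation of membership in the run set (equivalently, on the run length), matching each rule of \cref{fig:def-runs-local-types} in $\envmv$ with the corresponding one in $\rmg{\envmv}$ and conversely. In the communication cases \myrule{RTCom_1} and \myrule{RTCom_2} the chosen index $j$ and its channel–sort pair $\smv_j\ \sort_j$ are preserved verbatim by $\rmg\cdot$, so the same rule fires producing the same leading event, and the two continuation specifications differ only by a further application of $\rmg\cdot$—which is exactly the inductive hypothesis. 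The cases \myrule{RTSeq}, \myrule{RTIt_1}, \myrule{RTIt_2} and \myrule{RTEnd} are immediate, using that $\Tend$-only-ness and emptiness of queues are untouched by guard removal. Chaining this equality for $\envmv$ and for $\envmv'$ with the pointwise identity $\rmg{\envmv}=\rmg{\envmv'}$ (immediate from the hypothesis, since queues and shared names coincide) yields $\typeruns{\tuple\smv}{\envmv} = \typeruns{\tuple\smv}{\rmg{\envmv}} = \typeruns{\tuple\smv}{\rmg{\envmv'}} = \typeruns{\tuple\smv}{\envmv'}$.

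The delicate point is the branch grouping, i.e.\ the only place where $\rmg\cdot$ does more than strip guards; with pairwise-distinct channels this grouping is vacuous, as noted. To obtain the statement without that convention one would need a run-invariance property for the merge, of the form $\typeruns{\tuple\smv}{\envmv,\psnames\smv p:\pst_1\peace\pst_2} = \typeruns{\tuple\smv}{\envmv,\psnames\smv p:\pst_1}\cup\typeruns{\tuple\smv}{\envmv,\psnames\smv p:\pst_2}$, established by induction on the derivation of $\pst_1\peace\pst_2$ (\cref{def:merge}). This is where care is genuinely required: for sequential pseudo-types the naive union form can over-generate (through spurious cross-concatenations), so one must exploit that the continuations being merged all descend from a single common choice point. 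I expect this to be the only real obstacle; the remainder is routine bookkeeping on normal forms, for which the associativity and $\guard\Tend$-absorption lemmas of \cref{app:normal-form} suffice.
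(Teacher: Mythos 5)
Your proof is correct and follows essentially the same route as the paper, whose entire argument is the one-line remark that the result ``follows by straightforward induction on the derivation of $r\in\typeruns{\tuple \smv}{\envmv}$ after noticing that guards are irrelevant for deriving runs.'' Your additional care — reading the (evidently typo'd) hypothesis as $\rmg{\envmv\psnames\smv p}=\rmg{\envmv'\psnames\smv p}$, factoring through $\rmg{\envmv}$, and observing that the $I/_\sim$ grouping in $\rmg\cdot$ is vacuous under the standing pairwise-distinct-channels convention so that guard removal is purely structural — is exactly the bookkeeping the paper leaves implicit, and it does not change the approach.
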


\begin{proof}
  It follows by straightforward induction on the derivation of
  $r\in\typeruns{\tuple \smv}{\envmv}$ after noticing that guards are
  irrelevant for deriving runs.
\end{proof}


\subsection{Specifications cover Global types}

\begin{lem}\label{lem:aux-conv-recursive}
  Let  $\Geq$ be a global type with
  $\participants{\GT} = \{\ptp {p_0},\ldots, \ptp {p_n}\}$
  and
  $\envmv = \psnames \smv {p_0} : \GT\proj{\ptp p_0}, \ldots,
  \psnames \smv {p_n} : \GT\proj{\ptp p_n}, \tuple\smv : \queue{}{}$.
  If  $\imruns{\GT} \ \subseteq\ \typeruns{\tuple \smv}{\envmv}$, then
  $r\in\imrunsaux{\Gdef \f \GT}\subseteq\ \typeruns{\tuple \smv}{\Tdef{\envmv}}$.
\end{lem}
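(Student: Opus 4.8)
The plan is to prove the inclusion $\imrunsaux{\Gdef \f \GT} \subseteq \typeruns{\tuple \smv}{\Tdef{\envmv}}$ by a structural induction on the derivation of $r \in \imrunsaux{\Gdef \f \GT}$, exploiting the fact that the two unfolding rule systems are perfectly parallel: the rules $\myrule{RG^*_1}$ and $\myrule{RG^*_2}$ of \cref{fig:def-runs-global-type} that generate $\imrunsaux{\Gdef \f \GT}$ match, step by step, the rules $\myrule{RTIt_1}$ and $\myrule{RTIt_2}$ of \cref{fig:def-runs-local-types} that generate $\typeruns{\tuple \smv}{\Tdef{\envmv}}$.

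First I would treat the base case, where the derivation of $r$ ends with $\myrule{RG^*_1}$, so that $r \in \imruns{\GT}$. Here the hypothesis $\imruns{\GT} \subseteq \typeruns{\tuple \smv}{\envmv}$ applies directly, giving $r \in \typeruns{\tuple \smv}{\envmv}$, and a single application of $\myrule{RTIt_1}$ then yields $r \in \typeruns{\tuple \smv}{\Tdef{\envmv}}$, as required.

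Next I would handle the inductive step, where the derivation ends with $\myrule{RG^*_2}$, so that $r = r_1[r_2]$ with $r_1 \in \imruns{\GT}$ and $r_2 \in \imrunsaux{\Gdef \f \GT}$. The hypothesis gives $r_1 \in \typeruns{\tuple \smv}{\envmv}$, while the induction hypothesis applied to the subderivation for $r_2$ gives $r_2 \in \typeruns{\tuple \smv}{\Tdef{\envmv}}$. Since the bracketed (optional) concatenation used in $\myrule{RG^*_2}$ is exactly the one used in $\myrule{RTIt_2}$, applying $\myrule{RTIt_2}$ to these two facts produces $r_1[r_2] \in \typeruns{\tuple \smv}{\Tdef{\envmv}}$, closing the induction.

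I do not anticipate a genuine obstacle: the statement is essentially a congruence between two inductively defined sets whose generating rules are in bijective correspondence. The only points deserving care are that the hypothesis is ordinary set inclusion ($\subseteq$) rather than the covering preorder $\mysubseteq$, so no trace-preorder reasoning or bracket-dropping is needed, and that the specification $\envmv$ assigns the plain projections $\GT \proj \ptp{p_i}$ (pseudo-types with trivial guards), so the $\typeruns$ rules apply unchanged. This lemma then feeds the iterative case of the enclosing coverage result \cref{lem:runs-types-covers-runs-global}.
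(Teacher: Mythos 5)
Your proof is correct and is essentially identical to the paper's: both proceed by induction on the derivation of $r\in\imrunsaux{\Gdef \f \GT}$, discharging the $\myrule{RG^*_1}$ case with the hypothesis plus $\myrule{RTIt_1}$ and the $\myrule{RG^*_2}$ case with the hypothesis, the induction hypothesis, and $\myrule{RTIt_2}$.
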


\begin{proof}
  By induction on the structure of the derivation
  $r\in\imrunsaux{\Gdef \f \GT}$
  \begin{itemize}
  \item \myrule{RG^*_1}: then $r\in\imruns{\GT}$. By hypothesis, $r\in
    \typeruns{\tuple \smv}{\envmv}$. By \myrule{RTIt1}, $r\in
    \typeruns{\tuple \smv}{\Tdef{\envmv}}$.
  \item \myrule{RG^*_1}: then $r = r_1[r_2]$, $r_1\in\imruns{\GT}$ and
    $r_2\in\imrunsaux{\Gdef \f \GT}$. By hypothesis,
    $r_1\in \typeruns{\tuple \smv}{\envmv}$.
    By inductive hypothesis $r_2\in \typeruns{\tuple \smv}{\Tdef{\envmv}}$.
    By \myrule{RTIt2}, $r_1[r_2]\in\typeruns{\tuple \smv}{\Tdef{\envmv}}$.
    \qedhere
  \end{itemize}
\end{proof}

\coveragethm*
\begin{proof}
  We show a stronger result proving that $ \imruns{\GT} \subseteq
  \typeruns{\tuple \smv}{\envmv}$, which obviously implies
  $\imruns{\GT} \ \mysubseteq\ \typeruns{\tuple \smv}{\envmv}$.
  We proceed by induction on the derivation of $r\in\imruns{\GT}$. We
  proceed by case analysis on the last applied rule in the derivation
  of $r\in\imruns{\GT}$.

  \begin{itemize}

  \item Case \myrule{RGEnd}: Follows straightforwardly from rule
    \myrule{RTEnd} observing that $\GT = \Gend$ and $\envmv$ is
    $\Tend$-only by the definition of projection (\cf\ on
    page~\pageref{page:projection}).

  \item Case \myrule{RGCom}: Then, $\GT = \Gchoice$ and
    $r = \state[\ptp p,\lsend {\smv_h}{\sort_h}] \state[\ptp
    q_h][\lreceive{\smv_h}{\sort_h}] r'$ for an $h \in I$ and an
    $r' \in \imruns{\GT_h}$.

    Then, letting $J_i = \{j \in I \sst \ptp q_j = \ptp q_i\}$ and
    $K_i = I \setminus J_i$ for all $i \in I$
    \[
    \qquad\qquad
    \begin{array}{l@{\ =\ }l@{\hspace{2cm}}l@{\ =\ }l}
      \GT\proj{\ptp p}
      &
        \TSsel{i\in I}{\smv_i}{\sort_i}{\GT_i\proj{\ptp p}}
      &
        \GT\proj{\ptp q_i}
      &
        \TBbra{j}{J_i}{\smv_j}{\sort_j}{\GT_j\proj{\ptp r}}
        + \displaystyle{\sum_{k\in K_i}{\GT_k\proj{\ptp r}} }
    \end{array}
    \]
    By inductive hypothesis, we know that
    \begin{equation}\label{eq-th4-local-covers-global-ch}
      r'
      \in\typeruns{
        \smv
      }{
        \psnames \smv {p_0} : \GT_h\proj{\ptp p_0}, \ldots,
	\psnames \smv {p_n} : \GT_h\proj{\ptp p_n}, \tuple\smv : \queue{}{}
      }
    \end{equation}

    By repeatedly applying~\lemref{lem:runs-preserved-by-extchoice}
    over~\eqref{eq-th4-local-covers-global-ch}, for each $\ptp
    p_i\neq \ptp p_h$, we conclude that
    \begin{equation*}
      r'\in\typeruns{\smv}{\envmv'}
      \qquad\text{where}\qquad
      \envmv'
      :
      \begin{cases}
        \tuple\smv \mapsto \queue{}{}
        \\
        \psnames \smv p \mapsto \GT_h\proj{\ptp p}
        \\
        \psnames \smv {q_h} \mapsto \GT_h \proj {\ptp q_h}
        \\
        \psnames \smv q \mapsto \GT \proj {\ptp q} &
        \text{if } \ptp q \not \in \{\ptp p, \ptp q_h\}
        \\
        \text{undefined} & \text{otherwise}
      \end{cases}
    \end{equation*}
    The proof is completed by using rules \myrule{RTCom_1} and
    \myrule{RTCom_2} as follows:
    \[
    \irule{
      \irule
      {
        \irule
        {}
        {r'\in\typeruns{\smv}
          {\envmv'}
        }
      }
      {
        \begin{array}{l}
          \state[\ptp q_h][\lreceive {\smv_h} {\sort_h} ]r'\
          \in
          \typeruns{\smv}{
          \envmv'
          \upd{\queue{}{\sort_h}, \ {\GT\proj{\ptp q_h}}}
              {\tuple \smv, \ {\psnames \smv {q_h}}}
          }
        \end{array}
      }
      {\myrule{RTCom_2}}
    }
    {
      \state[\ptp p,\lsend {\smv_h}{\sort_h}]
      \state[\ptp q_h][\lreceive{\smv_h}{\sort_h}]r'
      \in\typeruns{\smv}{\envmv}
    }
    \myrule{RTCom_1}
    \]

  \item
    Case \myrule{RGSeq}: Then $\GT = \GT_1;\GT_2$ and $r = r_1r_2$
    with $r_i \in \imruns{\GT_i}$ for $i \in \{1,2\}$ and
    \begin{equation*}
      r_i \in
      \typeruns{\smv}
      {\psnames \smv {p_0} : \GT_i\proj{\ptp p_0}, \ldots,
        \psnames \smv {p_n} : \GT_i\proj{\ptp p_n}, \tuple\smv : \queue{}{}}
    \end{equation*}
    by inductive hypothesis; hence, by rule \myrule{RTSeq}
    \begin{equation*}
      r_1r_2\in
      \typeruns{\smv}
      {\psnames \smv {p_0} : \GT\proj{\ptp p_0}, \ldots,
        \psnames \smv {p_n} : \GT\proj{\ptp p_n}, \tuple\smv : \queue{}{}}
    \end{equation*}

  \item Case \myrule{RGIter}: Then, $\GT = \Gdef{\f}{\GT_1}$ with
    $\participants{\GT_1} = \{\ptp p_0, \ptp p_1,\ldots,\ptp p_n\}$,
    $\ready{\GT_1} = \ptp p_0$, $f(\ptp p_i) = \smv_i\ \sort_i$ for
    all $1 \leq i \leq n$, and
    $r = r' \state[\ptp p_0][\lsend{\smv_1}{\sort_1}]
    \state[\ptp p_1][\lreceive{\smv_1}{\sort_1}] \ldots
    \state[\ptp p_0][\lsend{\smv_n}{\sort_n}]
    \state[\ptp p_n][\lreceive{\smv_n}{\sort_n}]$
    with $r'\in\imrunsaux{\Gdef{\f}{\GT}}$.  Therefore,
    \[
    \begin{array}{l}
      \qquad\envmv =
      \psnames \smv {p_0} : \Tdef{(\GT_1\proj{\ptp p_0})};
      \Tsend{\smv_1}{\sort_1};\ldots ; \Tsend{\smv_n}{\sort_n},
      \psnames \smv {p_1} : \Tdef{(\GT_1\proj{\ptp {p_1}})};
      \Treceive{\smv_1}{\sort_1}, \ldots, \qquad\qquad
      \\
      \hfill
      \psnames \smv {p_n} : \Tdef{(\GT_1\proj{\ptp {p_1}})};
      \Treceive{\smv_n}{\sort_n},
      \tuple\smv : \queue{}{}
    \end{array}
    \]

    Note that $\envmv$ can be written as the sequential composition
    $\envmv = \Tdef{\envmv_1};\envmv_2$ where
    \[
    \begin{array}{ll}
      \envmv_1 =
      &
      \psnames \smv {p_0} : {(\GT_1\proj{\ptp p_0})},
      \psnames \smv {p_1} : {(\GT_1\proj{\ptp {p_1}})}, \ldots,
      \psnames \smv {p_n} : {(\GT_1\proj{\ptp {p_n}})},
      \tuple\smv : \queue{}{}
      \\
      \envmv_2=
      &
      \psnames \smv {p_0} : \Tsend{\smv_1}{\sort_1};\ldots;
      \Tsend{\smv_n}{\sort_n},
      \psnames \smv {p_1} : \Treceive{\smv_1}{\sort_1}, \ldots,
      \psnames \smv {p_n} : \Treceive{\smv_n}{\sort_n},
      \tuple\smv : \queue{}{}
    \end{array}
    \]

    Then, by repeated used of rules \myrule{RTCom_1} and
    \myrule{RTCom_2} we can build a proof for
    $r''
    =
    \state[\ptp p_0][\lsend{\smv_1}{\sort_1}]
    \state[\ptp p_1][\lreceive{\smv_1}{\sort_1}]
    \ldots
    \state[\ptp p_0][\lsend{\smv_n}{\sort_n}]
    \state[\ptp p_n][\lreceive{\smv_n}{\sort_n}]
    \in
    \typeruns{\tuple \smv}{\envmv_2}$,
    as illustrated by the following sketch
    \[\qquad
    \irule
    {
      \irule{
        \irule{
          \irule{}{
            \begin{array}{c}
              \epsilon\in\typeruns{
              \tuple \smv
              }{
              \psnames \smv {p_0} : \Tend, \ldots,
              \psnames \smv {p_n} : \Tend,
              \tuple\smv : \queue{}{}
              }
              \\[15pt]
              \vdots
              \\[15pt]
            \end{array}
          }
          \myrule{RTEnd}
        }{
          \begin{array}{l}
            \state[\ptp p_0][\lsend{\smv_2}{\sort_2}]
            \state[\ptp p_2][\lreceive{\smv_2}{\sort_2}]
            \ldots
            \state[\ptp p_0][\lsend{\smv_n}{\sort_n}]
            \state[\ptp p_n][\lreceive{\smv_n}{\sort_n}]\in
            \hspace{2.3cm}
            \\
            \hfill
            \typeruns{
              \tuple \smv
            }{
              \envmv_2\upd{\Tend,\
                \Tsend{\smv_2}{\sort_2};\ldots ;
                \Tsend{\smv_n}{\sort_n}}{\psnames \smv {p_1},
                \ \psnames \smv {p_0}}
            }
          \end{array}
        }
        \myrule{RTCom_1}
      }{
        \begin{array}{l}
          \state[\ptp p_1][\lreceive{\smv_1}{\sort_1}]
          \state[\ptp p_0][\lsend{\smv_2}{\sort_2}]
          \state[\ptp p_2][\lreceive{\smv_2}{\sort_2}]
          \ldots
          \state[\ptp p_0][\lsend{\smv_n}{\sort_n}]
          \state[\ptp p_n][\lreceive{\smv_n}{\sort_n}]\in
          \hspace{2.3cm}
          \\
          \hfill
          \in\typeruns{
            \tuple \smv
          }{
            \envmv_2\upd{\queue{}{\sort_1\AT\ptp p_0},\
              \Tsend{\smv_2}{\sort_2};\ldots ; \Tsend{\smv_n}{\sort_n}}
                  {\smv_1,\ \psnames \smv {p_0}}
          }
        \end{array}
      }
      \myrule{RTCom_2}
    }
    {r'' \in\typeruns{\tuple \smv}{\envmv_2}}
    \myrule{RTCom_1}
    \]

    We now show that $r'\in\imrunsaux{\Gdef \f \GT_1}$ implies
    $r'\in\typeruns{\tuple \smv}{\Tdef{\envmv_1}}$.
    By inductive hypothesis, $ \imruns{\GT_1}\subseteq
    \typeruns{\tuple \smv}{\envmv_1}$.
    By \lemref{lem:aux-conv-recursive}, $\imrunsaux{\Gdef \f
      \GT}\subseteq \typeruns{\tuple \smv}{\Tdef{\envmv_1}}$.
    Therefore, $r'\in\imrunsaux{\Gdef \f \GT}$ implies
    $r'\in\typeruns{\tuple \smv}{\Tdef{\envmv_1}}$.
    By \myrule{RTSeq},
    $r = r'r''\in\typeruns{\tuple \smv}{\Tdef{\envmv_1};\envmv_2}$
    since $r'\in\typeruns{\tuple \smv}{\Tdef{\envmv_1}}$ and
    $r'' \in\typeruns{\tuple \smv}{\envmv_2}$.
    \qedhere
  \end{itemize}
\end{proof}

\subsection{Implementations cover specifications}%
\label{app:imp-cover-spec}
\begin{defi}[Viable types]
  A pseudo-type $\pst$ is \emph{viable} if either of the
  following holds
  \begin{itemize}
  \item
    $\nfp{\pst} = \guard \Tend$
  \item
    $\nfp{\pst} = \Ipst[e_i]{i \in I}{\smv_i}{\sort_i}{\pst_i}$ or
    $\nfp{\pst} = \Epst[e_i]{i \in I}{\smv_i}{\sort_i}{\pst_i}$ with
    $\pst_i$ viable for all $i \in I$
  \item $\nfp{\pst} = \Tdef{\pst_1}; \pst_2$ with $\pst_1$ and
    $\pst_2$ viable, and either $\pst_1$ and $\pst_2$ passively
    compatible or $\pst_1 = \Ipst[e_i]{i \in
      I}{\smv_i}{\sort_i}{\pst_i}$.
  \end{itemize}
  A specification $\envmv$ is \emph{viable} when every type in $\envmv$ is
  viable.
\end{defi}

\begin{lem}
  Let $\G(\tuple \smv)$ a global type and $\envmv$ a specification
  such that (i) $\dom \envmv = \{\psnames \smv q \sst \ptp q \in
  \participants \G\}$ and (ii) $\envmv : \psnames \smv q \mapsto
  \G\proj{\ptp q}$ for all $\ptp q \in \participants \G$.
  Then $\envmv$ is viable.
\end{lem}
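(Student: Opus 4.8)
The plan is to unfold the two hypotheses and reduce the statement to a purely structural fact about projections. Since $\dom\envmv = \{\psnames \smv q \sst \ptp q \in \participants \G\}$ and $\envmv$ maps each $\psnames \smv q$ to $\G\proj{\ptp q}=\GT\proj{\ptp q}$, the specification $\envmv$ is viable exactly when every projection $\GT\proj{\ptp q}$ is viable. As projections of global types are ordinary local types, all their guards are $\truek$; consequently normalisation (\cref{fig:normalization-pseudo}) never discharges a branch (the set $J$ is always empty) and merely reassociates sequencing and pushes continuations inside leading choices and iterations, leaving the guard $\truek$ untouched. I would record this observation first, together with the fact that $\nfp{\_}$ is idempotent (so that ``$\pst$ viable'' and ``$\nfp\pst$ viable'' coincide), so that the analysis of $\nfp{\GT\proj\ptp r}$ can be carried out on the syntactic shape of $\GT\proj\ptp r$.

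The core is then a structural induction on the GTT $\GT$, proving that $\nfp{\GT\proj\ptp r}$ is viable for every $\ptp r$ (the case $\ptp r \notin \participants\GT$ gives $\guard[\truek]\Tend$, viable by the first clause). For $\GT = \Gend$ the projection is $\Tend$, again viable by the first clause. For a choice $\Gchoice$ I would split on $\ptp r$: if $\ptp r$ is the selector $\ptp p$, the projection is an internal choice whose continuations are the $\GT_i\proj\ptp r$, viable by the induction hypothesis, so the second clause applies; if $\ptp r \neq \ptp p$, the projection is an external choice, where the branches in which $\ptp r$ is a receiver contribute input prefixes and, by knowledge of choice, each remaining $\GT_i\proj\ptp r$ is itself an external choice, so their sum collapses (by associativity of external choice) into a single well-formed external choice with viable continuations, matching the second clause.

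For the sequential case $\GT_1;\GT_2$ the induction hypothesis gives both $\GT_1\proj\ptp r$ and $\GT_2\proj\ptp r$ viable, and I would establish an auxiliary lemma --- viability is preserved by normalised sequential composition --- by well-founded induction on the weight $\omega$ used in \cref{fig:normalization-pseudo}'s termination argument. Using \cref{lem:nf-aux} and \cref{lem:nf-seq-composition} to reduce to the shape of $\nfp{\GT_1\proj\ptp r}$, I would case-split: a leading $\guard[\truek]\Tend$ collapses the composition to $\nfp{\GT_2\proj\ptp r}$ (clause (4)); a leading choice distributes $\GT_2\proj\ptp r$ into the branches (clauses (5),(6)), with each new continuation $\pst_i;\GT_2\proj\ptp r$ handled by the well-founded recursion; and a leading $\Tdef{\pst};\pst'$ reassociates (clauses (7)--(9)) and recurses on the tail $\pst';\GT_2\proj\ptp r$, where \cref{lem:nf-neutral-end-seq} ensures the witnessing disjunct for the iteration (internal head, or passive compatibility on disjoint channels) is preserved because appending $\GT_2\proj\ptp r$ keeps an external head external with the same leading channels.

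The iteration case $\GT = \Gdef\f{\GT_1}$ is the main obstacle and the place where well-formedness (\cref{def:well-formed}) is used crucially. The projection is $\Tdef{(\GT_1\proj\ptp r)};\pst_2$, so by clause (8) $\nfp{\GT\proj\ptp r} = \Tdef{\nfp{\GT_1\proj\ptp r}};\nfp{\pst_2}$, and I must match the third clause of viability. When $\ptp r$ is the iteration-controller (the unique member of $\ready{\GT_1}$), $\pst_2$ is a sequence of output prefixes and $\nfp{\GT_1\proj\ptp r}$ is an internal choice, since the controller leads the body with a send; I would discharge this via a subsidiary induction showing that $\ptp r\in\ready{\GT}$ forces a send-headed, hence internal, normalised projection whenever $\GT$ is not itself iteration-headed, which is exactly the shape guaranteed for well-formed bodies, and then invoke the internal-head disjunct. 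When $\ptp r$ is not the controller, $\pst_2 = \Treceive b\sort$ with $b \in \range\f$, and knowledge of choice for the loop's continue-versus-terminate decision forces $\ptp r$ to learn the outcome by an input, so $\nfp{\GT_1\proj\ptp r}$ is an external choice; its leading channels lie in $\chn{\GT_1}$, which is disjoint from $\range\f$ by the first well-formedness condition of \cref{def:well-formed}, so $\nfp{\GT_1\proj\ptp r}$ and $\Treceive b\sort$ are passively compatible and the passive-compatibility disjunct applies. The delicate point, which I would isolate as a separate structural lemma, is precisely that for a well-formed body every participant's normalised projection is choice-headed (internal for the controller, external otherwise); pathological shapes, such as a body that begins, for a non-controller, with a nested iteration, are excluded by the requirement that $\ready{\GT_1}$ be a singleton together with knowledge of choice, and it is these exclusions that guarantee the third clause is always witnessed.
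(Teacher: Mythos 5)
The paper states this lemma with no proof at all, so there is no official argument to compare against; your structural induction on $\GT$, tracking the clauses of the projection function, is the natural way to discharge it. The cases for $\Gend$, for choices (selector vs.\ non-selector, using knowledge of choice to collapse the non-receiving branches into a single external choice), and for sequential composition (via an auxiliary closure lemma, with the correct observation that distributing a continuation into a choice preserves its leading channels and hence passive compatibility) are all sound and well organised around the weight function of \cref{app:normal-form}.

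The gap is exactly at the point you yourself flag as delicate, and your proposed resolution does not hold up. You need the subsidiary claim that for a well-formed iteration body every participant's normalised projection is choice-headed (internal for the controller, external otherwise), and you assert that this is forced by $\ready{\GT_1}$ being a singleton together with knowledge of choice. Neither condition delivers it: \cref{def:well-formed} only constrains channel disjointness and the ready set, and it does not forbid a body that is itself an iteration, nor a body in which some participant's first action lies inside a nested iteration. Concretely, take $\Gdef{\f}{\GT_1}$ with $\GT_1 = \Gdef{\f'}{\GT_0}$; all three conditions of \cref{def:well-formed} can be met by choosing pairwise disjoint termination channels, yet for the controller $\ptp r\in\ready{\GT_0}$ the body's projection normalises to $\Tdef{(\cdots)};(\cdots)$, which is neither an internal choice nor an external one, so neither disjunct of the third viability clause is witnessed. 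The same failure occurs for a non-controller $\ptp s$ whose first action in $\GT_1$ sits inside a nested iteration, where passive compatibility with the exit prefix $\Treceive{b}{\sort}$ would require an external-choice head. So either the lemma needs an additional, currently unstated, hypothesis excluding such bodies, or your choice-headedness claim must be derived from assumptions the paper actually makes; as written, your proof asserts it without support, and the assertion is false for some global types that \cref{def:well-formed} admits.
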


In what follows we write
$\state[S][\sigma] \Wtrans{\lcond\CC{\seq\alpha}}\state[S'][\sigma']$
with
$\CC=\C_1\land\ldots\land\C_n$ and $\seq\alpha= \alpha_1\ldots\alpha_n$
for the sequence
$\state[S][\sigma]
\Ptrans{\lcond{\C_1}{\alpha_1}}
\state[S_2][\sigma_2]\Ptrans{\lcond{\C_2}{\alpha_2}}
\ldots \Ptrans{\lcond{\C_n}{\alpha_n}}\state[S'][\sigma]$.
Let $\names[\alpha] = \fY[\alpha] \cup \bn\alpha$; the definitions of
$\names$, $\fY$ and $\bn{\_}$ straightforwardly extend to sequences of
labels.

\begin{lem}\label{lem:inhabitant-types}
  Let $\pst$ be a viable pseudo-type and $\C$ such that $\nform{\pst}
  = {\pst}$. Then, there exist $\Gamma$ and $P$ and $\envmv$ and
  $\pst'$ such that for any $\tuple \smv \supseteq \fY[\pst]$ and
  $\ptp p$ it holds that
  $\pjdg[\C][\envmv,\psnames \smv {p} :\pst'][P][@][@]$ and
  $\rmg\pst = \rmg{\pst'}$.
\end{lem}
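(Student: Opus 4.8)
The plan is to prove \cref{lem:inhabitant-types} by well-founded induction on the pseudo-type $\pst$ using the relation $<$ (the weight ordering introduced in \cref{app:normal-form}), since the normal form, viability, and the process we construct all respect this structural decomposition. The statement asks us to build, for any viable $\pst$ in normal form and consistent $\C$, a witness process $P$, a context $\Gamma$, a specification $\envmv$, and a pseudo-type $\pst'$ such that $\pjdg[\C][\envmv,\psnames \smv {p} :\pst'][P][@][@]$ holds and $\rmg\pst = \rmg{\pst'}$. The key observation is that the definition of \emph{viable} (\cref{def:viable}) mirrors exactly the cases of the typing rules in \cref{fig:typingproc}, so for each syntactic shape of $\nfp\pst$ we can read off which process constructor to use and then invoke the induction hypothesis on the strictly smaller constituent pseudo-types.

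The case analysis on $\nfp\pst$ proceeds as follows. When $\nfp\pst = \guard \Tend$, I would take $P = \Pend$, $\pst' = \guard\Tend$, and apply rule \myrule{VEnd}, noting that the environment is $\Tend$-only by construction. When $\nfp\pst = \Epst[e_i]{i \in I}{\smv_i}{\sort_i}{\pst_i}$ is an external choice, each $\pst_i$ is viable and strictly smaller, so the induction hypothesis yields processes $P_i$ typed against $\pst'_i$ with $\rmg{\pst_i}=\rmg{\pst'_i}$; I assemble $P = \Pechoice{i\in I}{\smv_i}{x_i}{P_i}$ and apply rule \myrule{VRcv}, choosing guards $e_i$ to match those of $\pst$. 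The internal-choice case is dual, using \myrule{VSend} for a single-branch output or, more carefully, constructing a conditional via \myrule{VIf} when the choice has multiple branches so that the guards $e_i$ become the conditions discriminating the branches; here I must ensure that the merge of the branch specifications reconstructs the internal choice (as illustrated by the $B_1$ example and the POP2 derivation in \cref{sec:exproof}). The iteration case $\nfp\pst = \Tdef{\pst_1};\pst_2$ splits according to viability: when $\pst_1$ is an internal choice the controller is implemented by a for-loop (rule \myrule{VFor} followed by \myrule{VSeq}), and when $\pst_1$ and $\pst_2$ are passively compatible the passive role is implemented by a repeat-until loop (rule \myrule{VLoop}).

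The main obstacle I anticipate is the internal-choice case with multiple branches, because the typing of genuine non-deterministic selection in a deterministic process must go through \myrule{VIf}: I must synthesise boolean guard expressions $e_i$ and a context $\Gamma$ under which the conditions are mutually consistent with $\C$, then verify that merging the resulting per-branch specifications via $\_\peace\_$ produces precisely an internal choice whose guard removal equals $\rmg{\nfp\pst}$. This requires invoking \cref{lem:sr-merge-normalization} to commute normalisation with merge, \cref{def:mergeability} to check the branches use disjoint channels (which holds since the $\smv_i$ are pairwise distinct in an internal choice), and care that the enabling conditions $e_i$ are pairwise exclusive so the merge is defined. A secondary subtlety is bookkeeping the environment $\Gamma$ and the free-variable side-conditions ($x \not\in \var\envmv$ in \myrule{VFor}), which I would discharge by choosing guards over fresh variables not occurring in the pseudo-type. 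Throughout, the equation $\rmg\pst = \rmg{\pst'}$ is preserved because guard removal collapses exactly the guard annotations that distinguish $\pst'$ from the guard-free local type underlying $\pst$, so matching the structure suffices.
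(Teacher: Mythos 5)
Your overall architecture coincides with the paper's: induction on $\pst$, a case split mirroring the clauses of viability, $\Pend$ with \myrule{VEnd} for $\guard\Tend$, an input-guarded sum with \myrule{VRcv} for external choice, a cascade of conditionals for internal choice, and \myrule{VFor}/\myrule{VSeq} versus \myrule{VLoop} for the two iteration sub-cases. The one place where your plan, as primarily stated, would fail is the multi-branch internal choice. You propose to let ``the guards $e_i$ become the conditions discriminating the branches'' of the \myrule{VIf} cascade. But the guards of an internal-choice pseudo-type in normal form need not be mutually exclusive (its branches use pairwise distinct channels, so mergeability imposes no exclusivity on them; they may all be $\truek$), and \myrule{VIf} requires both $\C\land\C'$ and $\C\land\neg\C'$ to be consistent. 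With, say, $e_1 = \truek$, the else-branch of the first conditional cannot be typed, the remaining summands of the choice are lost, and $\rmg\pst = \rmg{\pst'}$ breaks. More generally, an $n$-way choice does not canonically decompose into the binary splits $e_1$, $\neg e_1\land e_2$, $\ldots$ that nested applications of \myrule{VIf} produce.

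The paper's proof sidesteps this by \emph{manufacturing} the discriminating conditions rather than reusing the pseudo-type's guards: it picks a fresh session channel $s$ and a fresh variable $x$, sets $P=\Preceive{s}{x}{\Pif{x=0}{P_0}{\Pif{x=1}{P_1}{\ldots}}}$ with each branch of the form $\Psend{\smv_i}{c_i};P_i$ for a constant $c_i$ of sort $\sort_i$, and uses the conditions $x=i$, which are pairwise exclusive and each consistent with $\C$ precisely because $x$ is fresh. It then pushes the accumulated conditions into each branch's pseudo-type via \cref{thm:distribution-end-pseudo-types}, so that the per-branch specifications are mergeable and their merge is the desired internal choice, with guard removal preserved since $\rmg{\cdot}$ ignores the new conditions. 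Your closing remark about ``choosing guards over fresh variables not occurring in the pseudo-type'' is exactly this idea, but it has to be the central mechanism of the internal-choice case rather than a side note about \myrule{VFor}; and in this calculus the fresh variable must be introduced by an input prefix on a fresh channel (or carried as a free variable in $\Gamma$), since no other binder is available. With that repair the rest of your argument goes through as in the paper.
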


\begin{proof}
  The proof follows by induction on the structure of $\pst$.

  \begin{itemize}
  \item
    $\pst = \guard[e_j]\Tend$. It follows immediately by using rule
    $\myrule{VEnd}$.

  \item
    $\pst = \Ipst[\C_i]{i \in I}{\smv_i}{\sort_i}{\pst_i}$. Note that
    $\nform{\pst} = {\pst}$ implies $\nform{\pst_i} = {\pst_i}$ for
    all $i\in I$.
    Then, by inductive hypothesis there exist $\Gamma_i$, $P_i$,
    $\envmv_i$, $\pst_i$ for any $\tuple \smv \supseteq \fY[\pst_i]$
    and $\ptp p$ it holds that
    $\pjdg[\C][\envmv_i,\psnames \smv {p} :\pst_i'][P_i][@][\Gamma_i]$
    and $\rmg{\pst_i} = \rmg{\pst_i'}$.
    Then, take $s$ and $x$ fresh, i.e.,
    $s\not\in\dom{\envmv'_0,\ldots,\envmv'_n}$ and
    $x\not\in\dom{\Gamma'_0,\ldots,\Gamma'_n}$ and note that
    \[
    \pjdg[\C][\envmv_i, \psnames\smv p :\pst_i', \psname s {p'} :
    \guard\Tend][P_i][@][\Gamma_i, x:\sort]
    \]
    for any $\sort$ and ${\ptp{p'}}$.
    By~\lemref{thm:distribution-end-pseudo-types},
    \[
    \pjdg[\C\wedge\C'_i][
    \envmv_i', \psnames\smv{p} :\pst_i', \psname s {p'} : {\guard[\C']\Tend}
    ][P_i][@][\Gamma_i,x:\sort]
    \]
    with $\C'_i = x\neq 0 \wedge\ldots \wedge x\neq i-1 \wedge x = i$
    and
    \[
    \begin{array}{lcl}
      \envmv_i' \psnames \smv p
      &
      =
      &
      \guard[\C \wedge \C_i']\Tend; \envmv_i \psnames \smv p
      \\
      \pst_i'
      &
      =
      &
      \guard[\C \wedge \C_i']\Tend;{\pst_i}
      \\
      \guard[\C']\Tend
      &
      =
      &
      \guard[\C \wedge \C_i']\Tend; \guard\Tend
    \end{array}
    \]
    By definition of $\nform[\_]{\_}$,
    $\pst_i' = \nform[\C \wedge\C_i']{\pst_i}$.
    Since $x\not\in\fn{\pst}$ and $\nform{\pst} = {\pst}$ we have that
    $\rmg{\pst_i'} = \rmg{\pst_i}$.
    Moreover, by typing rules~\myrule{VSend} and~\myrule{VSeq},
    \[
    \pjdg[\C\wedge \C_i'][\envmv_i''][\Psend{\smv_i}{c_i};P_i][@]
         [\Gamma_i,x:\sort]
    \]
    such that $c_i$ is some constant of type $\sort_i$ and
    \[
    \envmv''_i \psnames \smv p
    =
    \guard[\C \wedge \C_i']{\Tsend {\smv_i}{\sort_i};
      \envmv_i' \psnames \smv p}
    \]
    Then, take $\pst' = \Ipstaux{i\in I}{\envmv''_i\psnames\smv p}$.
    Note that $\pst'$ is well-defined (because the guards in
    all types are different) and $\rmg{\pst'} = \rmg{\pst}$ because
    $\rmg{\pst_i'} = \rmg{\pst_i}$ holds for every $i$.

    Finally, define
    $\envmv
    =
    \envmv_0, \ldots, \envmv_n, \psnames\smv p :\pst',
    \psname s {p'} : \guard\Tend$,
    and
    $\Delta
    =
    \Delta_0, \ldots, \Delta_n$ and
    $P
    =
    \Preceive{s}{x}{
      \Pif {x = 0}
           {P_0}
           {\Pif {x = 1} {P_1}{\ldots}}
    }$.
    The proof is concluded by straightforward use of typing rules.

  \item
    $\Epst[e_i]{i \in I}{\smv_i}{\sort_i}{\pst_i}$. It follows
    analogously to the previous case.

  \item
    $\Tdef{\pst_1}; \pst_2$ with $\pst_1$ with
    $\pst_1
    =
    \Ipst[e_i]{i \in I}{\smv_i}{\sort_i}{\pst_i}$.
    It follows by inductive hypothesis on both $\pst_1$ and $\pst_2$
    to conclude that there exist $P_1$ and $P_2$ and by taking $P =
    \Pfor{x}{\ell}{P_1}; P_2$.

  \item
    $\Tdef{\pst_1}; \pst_2$ and $\pst_1$ and $\pst_2$ passively
    compatible.  It follows by inductive hypothesis on both $\pst_1$
    and $\pst_2$ to conclude that there exist $P_1$ and $P_2$ and by
    taking $P =\PloopU{P_1} P_2$.

  \end{itemize}

\end{proof}

\begin{lem}\label{lem:subs}
  If $\pjdg$ and $x \not\in \fX[P]$ then,   for all $x' \in \varset$,
  $\pjdg[@][\envmv'][P\subs{x'}{x}][@][\Gamma|_{-x'}\upd{\Gamma(x')}{x}]$
  where $\envmv'|_{\shset \cup \chset} = \envmv|_{\shset \cup \chset}$ and
  for all $\psnames \smv p \in \dom \envmv$, $\envmv' \psnames \smv p
  = \envmv \psnames \smv p \subs{x'}{x}$.
\end{lem}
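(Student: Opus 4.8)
The argument is a renaming lemma carried out by induction on the derivation of $\pjdg$; the only delicate point is that several typing rules impose side conditions phrased in terms of normal forms and of guard (in)consistency, which must be shown stable under the renaming. Since $x \notin \fX[P]$ we may assume, up to $\alpha$-conversion, that $x$ occurs in $P$ neither free nor bound and differs from $x'$, so that $P\subs{x'}{x}$ is capture-avoiding; correspondingly the renaming turns $x' : \Gamma(x')$ into $x : \Gamma(x')$ in the context and replaces $x'$ by $x$ in every guard of $\envmv$. The ambient assumption $\C$ is carried along unchanged; this is sound because at every instantiation $x'$ is a variable not occurring in $\C$ (nor, in the conditional case, in the branch guards), so the assumptions of the sub-derivations are unaffected. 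Equivalently one may read $\subs{x'}{x}$ as applied uniformly to all conditions, whence the reconstructions below are immediate.

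The plan is to first record three facts about $\subs{x'}{x}$. First, for every expression $e$ we have $\vjdg[\Gamma][e][\sort]$ iff $\vjdg[\Gamma|_{-x'}\upd{\Gamma(x')}{x}][e\subs{x'}{x}][\sort]$, since expressions depend only on their variables and the renaming is a bijection on variable names. Second, because $x$ is fresh the renaming is injective, so a guard $\C''$ satisfies $\C''\iff\falsek$ exactly when $\C''\subs{x'}{x}\iff\falsek$. Third, combining this with a pass through the clauses of \cref{fig:normalization-pseudo} shows that normalisation commutes with the renaming, $\nfp{\pst\subs{x'}{x}} = (\nfp{\pst})\subs{x'}{x}$, because at each step the set $J$ of dropped branches is the same for $\pst$ and for $\pst\subs{x'}{x}$. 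Since $\rmg{\cdot}$ discards all guards, the third fact yields the invariance $\rmg{\nfp{\pst\subs{x'}{x}}} = \rmg{\nfp{\pst}}$, on which the guard-dependent side conditions hinge.

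With these facts in hand I would proceed by case analysis on the last rule. The purely structural rules \myrule{VSeq}, \myrule{VPar}, \myrule{VNew}, \myrule{VQueue} and \myrule{VEmpty} follow from the inductive hypotheses, since the specification operations $\_;\_$, $\cup$ and restriction are defined channel-wise, while independence and mergeability depend only on channels and on guard consistency and hence commute with the renaming. For \myrule{VEnd}, \myrule{VSend} and \myrule{VForEnd} the condition that $\envmv$ is $\Tend$-only is preserved by the third fact ($\nfp{\envmv\psnames\smv p} = \guard[\C'']\Tend$ iff $\nfp{(\envmv\psnames\smv p)\subs{x'}{x}} = \guard[\C''\subs{x'}{x}]\Tend$), and the sort premiss on the emitted expression uses the first fact. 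For \myrule{VReq} and \myrule{VAcc} the only nontrivial premiss is $\rmg{\nfp{\pst}} = \nfp{\G(\tuple\smv)\proj{\ptp p}}$: the right-hand side is guard-free, hence untouched by the renaming, while the left-hand side is invariant by $\rmg{\nfp{\pst\subs{x'}{x}}}=\rmg{\nfp{\pst}}$. For \myrule{VIf} the renaming distributes over the merge $\peace$ (using \cref{lem:sr-merge-normalization} with the third fact) and over the guard extensions $\C\land\C'$ and $\C\land\neg\C'$, whose consistency is preserved by the second fact. Finally, for \myrule{VRcv}, \myrule{VFor} and \myrule{VLoop} the bound variables $x_i$ (respectively the iteration variable) may be taken distinct from $x$ and $x'$, so the inductive hypotheses apply to the continuations without clash; the predicates \emph{active} and \emph{passively compatible}, being stated on normal forms, are preserved as above, and the condition $x \not\in \var\envmv$ of \myrule{VFor} is maintained because the renaming introduces $x$ into a guard only where $x'$ already occurred, which that very condition excludes on the original specification.

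The main obstacle is the third fact, that normalisation commutes with the renaming. It is not purely formal, because the clauses of \cref{fig:normalization-pseudo} branch on the semantic test $\C\land\C_i\iff\falsek$ and thread an accumulated guard through the recursive calls. I expect to discharge it by a well-founded induction on $(\pst,<)$ parallel to \cref{lem:nf-aux}, checking clause by clause that the computed set $J$ and every surviving branch guard transform uniformly under $\subs{x'}{x}$; once this commutation is established, every guard-dependent side condition reduces to the unrenamed case and the remaining bookkeeping is routine.
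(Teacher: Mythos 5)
Your proof is correct and takes essentially the same route as the paper: the paper's own proof is the one-line observation that ``typing does not depend on the identity of the variables used in processes,'' and your induction on the derivation---in particular the commutation of normalisation and guard-consistency checks with the injective renaming---is exactly the bookkeeping that justifies that claim. Nothing in your argument diverges from or conflicts with the paper's (much terser) treatment.
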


\begin{proof}
  Trivial, observing that typing does not depend on the identity of
  the variables used in processes.
\end{proof}

\begin{defi}\label{def:delta-iota-comp}
  Let $\Gimp$ be a $\iota$-implementation of a global type $\G$ at
  $\shname$, $\C$ a guard, $\Gamma$ an environemnt, and $\envmv$ a
  specification.
  We say that $\Gimp$, $\Gamma' \supseteq \Gamma$, and specification
  $\envmv ' = {(\envmv_{\ptp q})}_{\ptp q \in \participants \G}$ are
  \emph{compatible with $\C$, $\Gamma$, and $\envmv$} if
  \begin{itemize}
  \item the queue on each $\smv \in \tuple \smv$ in $\Gimp$ is typed
    as $\envmv(\smv)$ and
  \item for all $\ptp q \in \participants \G$
    \begin{equation}
      \label{lem-cov-cond-2}
      \pjdg[\C][\envmv_{\ptp q}][\iota({\ptp q})][][\Gamma'] \text{ and }
      \rmg{\envmv_{\ptp q} \psnames \smv q} = \rmg{\envmv\psnames
        \smv q}
    \end{equation}
  \end{itemize}
\end{defi}

\begin{lem}\label{lem-impl-covers-local}
  Let $\G(\tuple \smv)$ be a global type, $\envmv$ be a viable
  specification such that $\tuple \smv \subseteq \dom \envmv$ and
  $\psnames \smv q \in \dom \envmv$ for all
  $\ptp q \in \participants \G$, and let $\pjdg[\C][\envmv'][P][@][@]$
  be a judgement where
  $\envmv'{\psnames \smv p} = \envmv{\psnames \smv p}$ for a
  participant $\ptp p \in \participants \G$.
  For all $r \in \typeruns{\tuple \smv}{\envmv}$ there are $\Gimp$, an
  environment $\Gamma'$, and a specification
  $\envmv'$
  compatible with $\C$, $\Gamma$, and $\envmv$ such that
  \begin{enumerate}
  \item\label{lem-cov-cond-1}\label{lem-cov-cond-3}
    $\iota (\ptp p) = P$
  \item\label{lem-cov-cond-4} for all
    $\consSt[\sigma][\Gimp][\Gamma'][\envmv'][\C]$ there is
    $r' \in \imrunsu{\state[\Gimp][\sigma]}$
    such that $r\myequal r'$.
  \end{enumerate}
\end{lem}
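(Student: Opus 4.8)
The plan is to prove \cref{lem-impl-covers-local} by structural induction on the derivation of $r \in \typeruns{\tuple \smv}{\envmv}$, following the inductive rules in \cref{fig:def-runs-local-types}. For each rule, I will construct the implementation $\Gimp$, the extended environment $\Gamma'$, and the compatible specification $\envmv'$, then show that $\Gimp$ can produce a run $r'$ covering $r$ in the sense of the trace preorder $\myequal$. The role $\ptp p$ played by the fixed process $P$ must be preserved throughout (condition~\eqref{lem-cov-cond-3}), while the remaining participants $\ptp q \neq \ptp p$ are implemented by processes that we are free to choose. This freedom is exactly where \lemref{lem:inhabitant-types} becomes essential: given a viable pseudo-type for each remaining role, it supplies a well-typed inhabiting process, so we can always realise the behaviour prescribed by $\envmv$ on those roles.

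First I would handle the base case \myrule{RTEnd}, where $r = \epsilon$ and $\envmv$ is $\Tend$-only on the session: here $\Gimp$ is built by using \lemref{lem:inhabitant-types} to obtain idle-like processes for every role except $\ptp p$, whose process is the given $P$; the empty run $\epsilon$ is produced via rule \myrule{REnd} and $\epsilon \myequal \epsilon$ trivially by $\myequal emp$. For the communication rules \myrule{RTCom_1} and \myrule{RTCom_2}, the run $r$ has the form $\state[\ptp q][\lsend{\smv_j}{\sort_j}] r_0$ (resp. an input prefix), generated by a participant $\ptp q$ performing an internal (resp. external) choice. Here I would apply the induction hypothesis to the continuation $r_0$ against the reduced specification, obtaining a covering implementation, and then show that the corresponding process (either $P$ itself when $\ptp q = \ptp p$, or the inhabitant process when $\ptp q \neq \ptp p$) can fire the matching send or receive transition, so that rules \myrule{RSnd} or \myrule{RRcv} of \cref{fig:runs-realizations} prepend the required event. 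The correspondence between specification transitions and the denotational runs is underwritten by \cref{lem:corr-semantics-local-types-back,lem:corr-semantics-local-types}, which I would invoke to align the operational reduction of $\envmv$ with its run semantics.

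The sequential rule \myrule{RTSeq} decomposes $r = r_1 r_2$; I would apply the induction hypothesis to $r_1$ (against $\envmv_1$) and to $r_2$ (against $\envmv_2$) and compose the resulting implementations, relying on the typing rule \myrule{VSeq} to glue the processes while maintaining compatibility. The iteration rules \myrule{RTIt_1} and \myrule{RTIt_2} are the crux of the argument and the main obstacle. Rule \myrule{RTIt_1} records a mandatory single unfolding, whereas \myrule{RTIt_2} produces $r_1[r_2]$ with the optional tail $[r_2]$. The key observation is that the trace preorder axiom $\myequal drop$ lets an implementation \emph{cover} an optional segment $[r_2]$ either by executing it or by dropping it ($[r_2] \myequal \epsilon$). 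Thus a deterministic process realising the loop controller via a \kword{for} loop (or the passive roles via \kword{repeat}/\kword{until}) need only guarantee that the body runs at least once—matching the single mandatory iteration forced by \myrule{RGIter} and \myrule{RG^*_1}—and may terminate after any finite number of further iterations, covering the optional tail by dropping it. I would construct the for-list $\ell$ of appropriate length (one longer than the number of unfoldings appearing in $r$, using the freshness of the iteration variable granted by the side condition $x \not\in \var \envmv$ of rule \myrule{VFor}) and invoke \lemref{lem:inhabitant-types} for the passive participants.

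The hard part will be managing the bookkeeping so that at each inductive step the constructed $\Gimp$ remains a genuine $\iota$-implementation in the sense of \cref{def:realisation}, the compatibility conditions of \cref{def:delta-iota-comp} are maintained (in particular the guard-removal equalities $\rmg{\envmv_{\ptp q} \psnames \smv q} = \rmg{\envmv \psnames \smv q}$), and the consistency requirement of condition~\eqref{lem-cov-cond-4} holds for every store $\sigma$ satisfying $\consSt[\sigma][\Gimp][\Gamma'][\envmv'][\C]$. Because the guards chosen by \lemref{lem:inhabitant-types} are fresh conditions on fresh variables, I expect consistency of the store with the newly introduced branches to follow from \lemref{lem:const-pseudo-typing-and-store-correspondence} and the weakening lemmas; nonetheless, threading these invariants through the iteration cases—where the environment is duplicated under $\Tdef{\_}$ and resequenced—requires careful use of the normalisation results (especially \lemref{thm:distribution-end-pseudo-types} and \lemref{lem:nf-neutral-end-seq}) to keep the pseudo-types in the expected shape after each unfolding.
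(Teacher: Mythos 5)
Your overall strategy --- induction on the derivation of $r \in \typeruns{\tuple \smv}{\envmv}$, with \lemref{lem:inhabitant-types} supplying processes for the roles other than $\ptp p$, the optional iteration tails absorbed via the axiom $[r] \myequal \epsilon$, and \cref{lem:corr-semantics-local-types-back,lem:corr-semantics-local-types} linking the two semantics of specifications --- is the same as the paper's. However, there is a genuine gap at the crux of the argument, namely the \myrule{RTCom_1}/\myrule{RTCom_2} cases when the acting participant $\ptp q$ is \emph{not} $\ptp p$. You write that the inhabitant process for $\ptp q$ ``can fire the matching send'', but \lemref{lem:inhabitant-types} only guarantees \emph{some} well-typed process whose guard-removed pseudo-type equals $\rmg{\envmv \psnames \smv q}$; it says nothing about which branch that process selects. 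The difficulty is that compatibility (\cref{def:delta-iota-comp}) forces the process for $\ptp q$ to be typed against the \emph{full} internal choice $\Ipst[\C_i]{i \in I}{\smv_i}{\sort_i}{\pst_i}$ --- so it cannot simply be $\Psend{\smv_j}{\val v};\ldots$, which \myrule{VSend} would type with a single output --- and yet, being deterministic, it must be \emph{steered} into the particular branch $j$ demanded by the given run $r$. The paper resolves this by explicitly constructing $\iota(\ptp q) = \Preceive{s}{x}{\Pif{x}{\Psend{\smv_j}{\val v};\iota^{\texttt{ih}}(\ptp q)}{Q}}$ for a fresh channel $s$ and variable $x$, where $Q$ is obtained from \lemref{lem:inhabitant-types} for the residual choice over $I \setminus \{j\}$: the two branches merge under \myrule{VIf} to the required pseudo-type, and the store is then chosen so that $x$ evaluates to $\truek$. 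Without this construction (or an equivalent one), condition~\eqref{lem-cov-cond-4} cannot be established; this is exactly the point where ``there exists a context driving the choice'' is turned into mathematics. A related, smaller omission is the case $\ptp q = \ptp p$: there you must show that the \emph{given} $P$ actually emits the action prescribed by $r$ after finitely many steps outside the session, which the paper obtains from \lemref{lem:inv-subj-red} together with \cref{thm:sr}; the correspondence lemmas for specifications that you invoke speak about $\envmv$, not about $P$.

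Two minor remarks. First, for a viable specification the rules \myrule{RTIt_1} and \myrule{RTIt_2} can never be the last rules of the derivation (viability only permits the shape $\Tdef{\pst_1};\pst_2$, never a bare $\Tdef{\pst_1}$), so the paper handles iteration inside the \myrule{RTSeq} case, further splitting on how $P$ was typed (\myrule{VSeq}, \myrule{VLoop}, or \myrule{VSend}); your separate top-level treatment of the iteration rules is harmless but that sub-case analysis still has to happen somewhere. Second, since $[r_2] \myequal \epsilon$, the implementation needs only the single mandatory unfolding, so choosing the for-list length to match the number of unfoldings appearing in $r$ is unnecessary.
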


\begin{proof} By induction on the structure of the derivation
  $r\in\typeruns{\tuple \smv}{\envmv}$.
  In the proof, $\vjdg[{}][\iota(\tuple \smv)][\envmv(\tuple \smv)]$
  shortens $\vjdg[{}][\queue \smv {\tuple{\val v}}][\envmv(\smv)]$ for all
  $\smv \in \tuple \smv$ where $\queue \smv {\tuple{\val v}}$ is the queue
  on $\smv$ in $\Gimp$.

  \begin{itemize}
  \item \myrule{RTEnd} Then, $r = \epsilon$ and for all
    $\ptp q \in \participants \G$,
    $\envmv \psnames \smv q = \guard[e_{\ptp q}]\Tend$ and
    $\envmv(\tuple \smv) = \queue{}{}$.
    Now we show that each condition holds.
    \begin{enumerate}
    \item Take $\iota (\ptp p) = P$, and $\iota(\ptp q) = \Pend$ for
      all $\ptp q \neq \ptp p$, and
      $\vjdg[{}][\iota(\tuple\smv)][\queue{}{}]$.
    \item Let $\Gamma' = \Gamma$, $\envmv_{\ptp p} = \envmv'$ and
      $\envmv_{\ptp q} : \psnames \smv q \mapsto \guard[\C] \Tend$ for
      all $\ptp q \neq \ptp p$.
      \begin{enumerate}
      \item Then condition~\eqref{lem-cov-cond-2} holds for
        participant $\ptp p$ by the hypothesis
        $\pjdg[\C][\envmv' = \envmv_{\ptp p}][P][@]$, and
        straightforwardly for participants $\ptp q \neq \ptp p$
        observing that
        $\pjdg[@][\envmv_{\ptp q}][\iota({\ptp q})][@][@]$ and guard
        removals coincide since all local types are $\Tend$.
      \item For condition~\eqref{lem-cov-cond-4} it is enough to take
        any store $\sigma$ mapping each free variable $x$ of $P$ in a
        value of type $\Gamma(x)$ so that $e$ evaluates to $\truek$
        in $\sigma$ (the existence of such values is guaranteed by
        the typing of $P$).
        This also entails~\eqref{cons5:eq} of \cref{def:consistent}.
        Finally, note that
        $r' = \epsilon \in \imrunsu{\state[\Gimp][\sigma]}$ from
        \myrule{REnd} and $r \myequal r'$ follows from
        \myrule{\myequal\text{-emp}}.
      \end{enumerate}
    \end{enumerate}

  \item \myrule{RTCom_1} There is $\ptp q \in \participants \G$
    such that
    $\envmv\psnames \smv q = \Ipst[\C_i]{i \in
      I}{\smv_i}{\sort_i}{\pst_i}$
    and $r = \state[\ptp q][\lsend {\smv_j} {\sort_j} ] r_1$ with
    $j \in I$ and $r_1 \in \typeruns{\tuple \smv}{\envmv''}$ where
    $\envmv'' = \envmv \upd{\pst_j,\ \queue{}{\tuple \sort \cdot
        \sort_j}} {\psname \smv q,\
      \tuple \smv}$.
    Note that $\envmv''$ is viable and that
    $\envmv \psnames \smv q \neq \guard[\falsek] \Tend$, which implies
    $\neg (\C_j \iff \falsek)$.

    We distinguish two cases.

    \paragraph{Case $\ptp q = \ptp p$} By \lemref{lem:inv-subj-red},
    we know that for any $\sigma_0$ such that
    $\consSt[\sigma_0][P][\envmv']$ it holds that
    $\state[P][\sigma_0]
    \Wtrans{\lcond{\CC}{\seq\beta}}\state[P''][\sigma_1]
    \Ptrans{\lcond{\C'}{\lsend{\smv_j}{\val
          v}}}\state[P'][\sigma_2]$
    with $\vjdg[][\val v][\sort_j]$, and
    $\names[\seq{\beta}]\cap\tuple\smv =\emptyset$, and
    $\neg (\C\land\CC\land\C' \iff \falsek)$.
    By applying \cref{thm:sr} to the hypothesis
    $\pjdg[\C][\envmv'][P][@][@]$ and $\sigma_0$, we have
    \begin{equation}\label{eq:lem-impl-covers-local-sent-b}
      \envmv' \Wtrans{\seq{\beta'}} \envmv'_1 \Ttrans{\lsend{\smv_j}{\sort_j}}
      \envmv'_2,  \quad
      \pjdg[\C \land\CC\land \C'][\envmv'_2][P'][][\Gamma''],
      \quad\text{and}\quad
      \Gamma \subseteq \Gamma''
    \end{equation}
    (where the sequence $\seq{\beta'}$ is the sequence $\seq \beta$
    where expressions are replaced by their sorts).
    By \lemref{lem:preservation-of-disjoint-endpoint-types},
    $\names[\seq{\beta'}] \cap \tuple \smv = \emptyset$ implies
    $\envmv'_1 \psnames \smv r = \envmv' \psnames \smv r= \envmv
    \psnames \smv r$ for all $\ptp r \in \participants \G$.
    Moreover, $\envmv'_2 \psnames \smv p = \pst_j$ because of the
    reduction $\envmv'_1 \Ttrans{\lsend{\smv_j}{\sort_j}}\envmv'_2$.
    By inductive hypothesis on
    $r_1\in \typeruns{\tuple \smv}{\envmv''}$ and~\eqref{eq:lem-impl-covers-local-sent-b}, we know that there are
    $\Gimp[@][\iota^\texttt{ih}]$,
    $\Gamma^\texttt{ih} \supseteq \Gamma''$, and
    $\envmv^\texttt{ih} = {(\envmv^\texttt{ih}_{\ptp r})}_{\ptp r
      \in \participants \G}$
    compatible with $\C \land E \land \C'$, $\Gamma''$, and
    $\envmv_2'$.
    Namely,
    \begin{enumerate}
    \item\label{it:iota} $\iota^\texttt{ih}(\ptp p) = P'$ and
    \item\label{lem-cov-cond-1-HI-send}\label{lem-cov-cond-3-HI-send}
      $\pjdg[\C \land\CC\land \C']
      [\envmv^\texttt{ih}_{\ptp r}]
      [\iota^\texttt{ih}({\ptp r})][][\Gamma^\texttt{ih}]$
      and
      $\rmg{\envmv^\texttt{ih}_{\ptp r} \psnames \smv r} =
      \rmg{\envmv''\psnames \smv r}$,
      for all $\ptp r \in \participants \G$
    \item\label{lem-cov-cond-4-HI-send}\label{lem-cov-cond-5-HI-send}
      for all
      $\consSt[\sigma^\texttt{ih}][{\Gimp[@][\iota^\texttt{ih}]}]
      [\Gamma^\texttt{ih}][\envmv^\texttt{ih}][\C
        \land \CC \land \C']$
      there is
      $r^\texttt{ih}_1 \in
      \imrunsu{\state[{\Gimp[@][\iota^\texttt{ih}]}][\sigma^\texttt{ih}]}$
      such that $r_1 \myequal r^\texttt{ih}_1$.
    \end{enumerate}
    Then, let
    \begin{itemize}
    \item $\iota(\ptp p) = P$, and $\iota(\ptp q) = \iota^\texttt{ih}(\ptp q)$
      for all $\ptp q \neq \ptp p \in \participants \G$, and
      $\iota(\smv_j) = \queue{}{\tuple{\val v}}$ where
      $\iota^\texttt{ih}(\smv_j) = \queue{}{\tuple{\val v} \cdot \val v}$ (note
      that~\eqref{it:iota} and the definition of $\envmv''$ imply
      that $\iota^\texttt{ih}(\smv_j)$ is a non-empty queue); this implies
      $\vjdg[{}][\iota^\texttt{ih}(\smv_j)][\queue{}{\tuple \sort \cdot \sort}]$
      since
      $\queue{}{\tuple \sort \cdot \sort} = \envmv''(\smv_j)$.
      Observe that for all
      $\smv \in \tuple \smv \setminus \{\smv_j\}$ we have
      $\vjdg[{}][\iota(\smv)][\envmv''(\smv)]$ since
      $\iota^\texttt{ih}(\smv) = \iota(\smv)$ and
      $\envmv(\smv) = \envmv''(\smv)$.
    \item Note that $\Gamma^\texttt{ih} \supseteq \Gamma$ since
      $\Gamma^\texttt{ih} \supseteq \Gamma''$ and
      $\Gamma'' \supseteq \Gamma$ by~\eqref{eq:lem-impl-covers-local-sent-b}.
      Let $\envmv_{\ptp p} = \envmv'$ and
      $\envmv_{\ptp r} = \envmv_{\ptp r}'$ for all
      $\ptp r \neq \ptp p$; then
      \begin{enumerate}
      \item by the validity of $\pjdg[@][\envmv']$ and the fact
        that $\Gamma^\texttt{ih} \supseteq \Gamma$, we have
        $\pjdg[\C][\envmv_{\ptp p}][\iota(\ptp p)][@][\Gamma^\texttt{ih}]$
        applying \cref{lem:typing-weakening}.
        For $\ptp r \neq \ptp p \in \participants \G$, the validity
        of the judgement in~\eqref{lem-cov-cond-1-HI-send} implies
        the validity of $\pjdg[\C][\envmv_{\ptp r}][\iota(\ptp r)]$
        because $\var{\CC \land \C'}$ are not bound in
        $\iota(\ptp r)$.
        Also,
        $\rmg{\envmv_{\ptp r}\psnames \smv r} = \rmg{\envmv''\psnames
          \smv r} = \rmg{\envmv\psnames \smv r}$
        where the first equality holds by the inductive hypothesis
        and the second by~\eqref{lem-cov-cond-1-HI-send}.
      \item Let $\sigma$ be $\sigma^\texttt{ih}$ restricted on
        $\bigcup_{\ptp r \in \participants \G}\dom{\envmv_{\ptp
          r}|_{\chset}} \cup \dom{\Gamma^\texttt{ih}}$.
        Then the first three conditions of \cref{def:consistent} are
        trivially satisfied, $\C$ evaluates to $\truek$ in $\sigma$
        since $\sigma|_{\var\C} = \sigma^\texttt{ih}|_{\var\C}$, and likewise
        for the last condition of consistency.
      \end{enumerate}
    \item Note that
      $r = \state[\ptp p][\lsend {\smv_j} {\sort_j} ]r_1 \myequal
      \state[\ptp p][\lsend {\smv_j} {\sort_j} ]r^\texttt{ih}_1 = r'$
      follows from \myrule{\myequal\text{-cmp}} and~\eqref{lem-cov-cond-4-HI-send}.
      The fact that $r' \in \Gimp$ follows by repeated applications
      of rule \myrule{RExt} (once for any $\beta \in \seq \beta$)
      followed by an application of \myrule{RSnd}.
    \end{itemize}

    \paragraph{Case $\ptp q \neq \ptp p$}
    By inductive hypothesis on
    $r_1\in \typeruns{\tuple \smv}{\envmv''}$ and~\eqref{eq:lem-impl-covers-local-sent-b}, we know that there are
    $\Gimp[@][\iota^\texttt{ih}]$,
    $\Gamma^\texttt{ih} \supseteq \Gamma''$, and
    $\envmv^\texttt{ih} = {(\envmv^\texttt{ih}_{\ptp r})}_{\ptp r
      \in \participants \G}$
    compatible with $\C \land \CC \land \C'$, $\Gamma''$, and
    $\envmv_2'$.
    Namely,
    \begin{enumerate}
    \item $\iota^\texttt{ih}(\ptp p) = P$
    \item\label{lem-cov-cond-3-HI-send:2}
      $\pjdg[@][\envmv^\texttt{ih}_{\ptp r}][\iota^\texttt{ih}({\ptp
        r})][][\Gamma^\texttt{ih}]$
      and
      $\rmg{\envmv^\texttt{ih}_{\ptp r} \psnames \smv r} =
      \rmg{\envmv''\psnames \smv r}$,
      for all $\ptp r \in \participants \G$
    \item\label{lem-cov-cond-4-HI-send:}\label{lem-cov-cond-5-HI-send:}
      for all store
      $\consSt[\sigma^\texttt{ih}][{\Gimp[@][\iota^\texttt{ih}]}]
      [\Gamma^\texttt{ih}][\bigcup_{\ptp
          r \in \participants \G}\envmv^\texttt{ih}_{\ptp r}][\C]$
      there is
      $r^\texttt{ih}_1 \in
      \imrunsu{\state[{\Gimp[@][\iota^\texttt{ih}]}][\sigma^\texttt{ih}]}$
      such that $r_1 \myequal r^\texttt{ih}_1$.
    \end{enumerate}
    Then, let
    \begin{itemize}
    \item $\iota(\smv_j) = \queue{}{\tuple{\val v}}$ where
      $\iota^\texttt{ih}(\smv_j) = \queue{}{\tuple{\val v} \cdot \val
      v}$ (the proof that queues $\smv \in \tuple \smv$ are typed
      by $\envmv''$ is as in the previous case $\ptp p = \ptp q$) and
      \[
      \iota : \begin{cases}
        \ptp p \mapsto P
        \\
        \ptp r \mapsto \iota^\texttt{ih}(\ptp r) & \forall
        \ptp r \in \participants \G \setminus \{\ptp p, \ptp q\}
        \\
        \ptp q \mapsto \Preceive s x \Pif x
             {\Psend{\smv_j}{\val v};\iota^\texttt{ih}(\ptp q)}{Q}
      \end{cases}
      \]
      with $s$ and $x$ fresh (formally,
      $s \not\in \bigcup_{\ptp r \in \participants
        \G}\dom{\envmv^\texttt{ih}_{\ptp r}}$, and
      $x \not\in \dom \Gamma$, and $\vjdg[][\val v][\sort_j]$) and $Q$
      a process such that the following judgement holds:
      \[
      \pjdg[@][\envmv_Q][Q][][\Gamma_Q] \quad\text{where}\quad
      \envmv_Q\psnames \smv q = \Ipst[e_i]{i \in I\setminus
        \{j\}}{\smv_i}{\sort_i}{\pst_i}
      \]
      The above judgment exists for the type
      $\pst = \Ipst[e_i]{i \in I\setminus
        \{j\}}{\smv_i}{\sort_i}{\pst_i}$ by
      \cref{lem:inhabitant-types}, since $\nform{\pst} = \pst$ applying
      \cref{lem:typing-strenghten-condition} to the hypothesis
      $\pjdg[\C][\envmv'][P][@][@]$.
        By \cref{lem:subs} wlog we can assume that $\fX[Q]$ are fresh.
      \item Let $\Gamma' = \Gamma\cup\Gamma_Q$ and
        \[
        \begin{array}{l@{\ = \ }l@{\quad}l}
          \envmv_{\ptp p} & \envmv'
          \\
          \envmv_{\ptp q}& \envmv^\texttt{ih}_{\ptp q}[\envmv_Q]
          \upd{\envmv \psname \smv q}{\psname \smv q}
          \upd{\Treceive{s}{\dt{bool};\Tend}}{\psname s \_}
          \\
          \envmv_{\ptp r}& \envmv^\texttt{ih}_{\ptp r} &
          \textit{if}\ \ptp r\not\in\{\ptp p, \ptp q\}
        \end{array}
        \]
        \begin{enumerate}
        \item by the validity of $\pjdg[@][\envmv'][P]$ and the fact
          that $\Gamma^\texttt{ih} \supseteq \Gamma$, we have
          $\pjdg[\C][\envmv_{\ptp p}][\iota(\ptp
          p)][@][\Gamma^\texttt{ih}]$ applying
          \cref{lem:typing-weakening}.
          For all $\psname \smv r \in \dom {\envmv'}$ with
          $\ptp r \in \participants \G \setminus \{\ptp p, \ptp q\}$ we
          have that $\envmv'' \psname \smv r = \envmv' \psname \smv r$ and
          $\rmg{\envmv_{\ptp r}\psnames \smv r} = \rmg{\envmv''\psnames
            \smv r} = \rmg{\envmv\psnames \smv r}$ by the
          inductive hypothesis.
          The validity of
          $\pjdg[@][\envmv_{\ptp q}][\iota(\ptp q)][@][\Gamma']$ holds
          by weakening (\cref{lem:env-weakening}) and the application of
          rules \myrule{VIf} and \myrule{VRcv} to the judgment
          $\pjdg[@][\envmv_Q][Q][@][\Gamma_Q]$ above.

        \item Let $\sigma$ be an extension of $\sigma^\texttt{ih}$ with
          assignments to the free variables of $Q$ so that no guard in
          $Q$ is falsified.
          Then the first three conditions of \cref{def:consistent}
          trivially follow from the inductive hypothesis, and the last
          condition follows from
          \cref{lem:const-pseudo-typing-and-store-correspondence}.
        \end{enumerate}
      \item From~\eqref{lem-cov-cond-5-HI-send} we know that
        $r^\texttt{ih}_1\in\imrunsu{\state[{\Gimp[@][\iota^\texttt{ih}]}]
        [\sigma^\texttt{ih}]}$.
        Consequently, we conclude that
        $r^\texttt{ih}_1\in\imrunsu{\state[{\Gimp[@][\iota^\texttt{ih}]}]
          [\sigma^\texttt{ih}\upd{\truek}{x}]}$
        holds.
        Moreover,
        \[
        \state[\iota(\ptp q)]
        \quad
        \Ptrans{\lcond{ }{\lreceive{s}{\truek}}}
        \quad
        \state[\Psend{\smv_j}{\val v};\iota^\texttt{ih}(\ptp q)]
              [\sigma\upd{\truek}{x}]
        \quad
        {\Ptrans{\lcond{}{\lsend{\smv_j}{\val v}}}}
        \quad
        \state[\iota^\texttt{ih}(\ptp q)][\sigma\upd{\truek}{x}]
        \]
        Hence,
        $\state[\ptp q][\lsend {\smv_j} {\sort} ]r^\texttt{ih}_1 \in
        \imrunsu{\state[{\Gimp[@][\iota]}][@]}$ which covers
        $\state[\ptp q][\lsend{\smv_j}{\sort}]r_1$ by the inductive hypothesis.
      \end{itemize}
    \item \myrule{RTCom_2} The proof follows analogously to the
      previous case.
    \item \myrule{RTSeq} Then $\envmv = \envmv_1; \envmv_2$ and
      $r = r_1 r_2$ with $r_1 \in \typeruns{\tuple \smv}{\envmv_1}$
      and $r_2 \in \typeruns{\tuple \smv}{\envmv_2}$.
      Note that $\envmv_1$ and $\envmv_2$ are viable, otherwise
      $\envmv$ would not be viable.
      Given the structure of $\envmv$, the typing of $P$ can be
      achieved with an application of  $\myrule{VSeq}$,  $\myrule{VLoop}$, or
      $\myrule{VSend}$.

      \item \myrule{VSeq} We have $P = P_1; P_2$; then there are
      specifications $\envmv_1'$ and $\envmv_2'$ such that
      $\pjdg[@][\envmv_i'][P_i]$ for $i \in \{1,2\}$,
      $\envmv' = \envmv_1' ; \envmv_2'$.
      We consider two cases depending on whether
      $\envmv'_1 = \Tdef{(\envmv'_0)}$ for some $\envmv'_0$.

      When there is no $\envmv'_0$ such that
      $\envmv'_1 \neq \Tdef{(\envmv'_0)}$ we proceed as follows.  For
      $i \in \{1,2\}$, by inductive hypothesis on
      $r_i \in \typeruns{\tuple \smv}{\envmv_i'}$ we know that there
      are $\Gimp[@][\iota_i^\texttt{ih}]$,
      $\Gamma^\texttt{ih}_i \supseteq \Gamma$, and
      $\envmv^\texttt{ih}_i = {(\envmv_{i,{\ptp r}})}_{\ptp r
        \in \participants \G}$
      compatible with $\C$, $\Gamma$, and $\envmv_i'$.
      Namely,
      \begin{enumerate}
      \item $\iota_i^\texttt{ih}(\ptp p) = P_i$ and
      \item
        $\pjdg[\C][\envmv_{i,{\ptp r}}][\iota_i^\texttt{ih}({\ptp
          r})][][\Gamma_i^\texttt{ih}]$ and
        $\rmg{\envmv_{i,{\ptp r}} \psnames \smv r} =
        \rmg{\envmv'_i\psnames \smv r}$, for all
        $\ptp r \in \participants \G$w
      \item for all
        $\consSt[\sigma_i^\texttt{ih}][{\Gimp[@][\iota_i^\texttt{ih}]}]
        [\Gamma_i^\texttt{ih}][\envmv_i^\texttt{ih}][\C]$
        there is
        $r^\texttt{ih}_i \in
        \imrunsu{\state[{\Gimp[@][\iota_i^\texttt{ih}]}]
          [\sigma_i^\texttt{ih}]}$
        such that $r_i \myequal r^\texttt{ih}_i$.
      \end{enumerate}
      Now let
      \begin{eqnarray*}
        \iota(\ptp r) & = & \iota_1^\texttt{ih}(\ptp r) ;
        \iota_2^\texttt{ih}(\ptp r)
                            \qquad \forall \ptp r \in \participants \G
        \\
        \iota(\smv) & = & \iota_1^\texttt{ih}(\smv)
                       \qquad \forall \smv \in \tuple \smv
        \\
        \Gamma' & = & \Gamma_1^\texttt{ih} \cup \Gamma_2^\texttt{ih}
        \\
        \envmv' & = & \envmv'_1 ; \envmv'_2
        \\
        \sigma & = & \sigma_1^\texttt{ih}
      \end{eqnarray*}
      and observe that $\envmv'$ is defined otherwise compatibility
      would be violated, contradicting the inductive hypothesis.

      If $\envmv'_1 = \Tdef{(\envmv'_0)}$ for some $\envmv'_0$, we
      note that $\envmv_1 = \Tdef{(\envmv_0)}$ for some $\envmv_0$
      because they are the same when restricted to participants'
      session of $\ptp p$.
      Then, $r_1$ has been obtained by using either $\myrule{RTIt_1}$
      or $\myrule{RTIt_2}$. Consequently, either
      $r_1 \in \typeruns{\tuple \smv}{\envmv'_0}$ or $r_1 = r_0[r_3]$
      with $r_0 \in \typeruns{\tuple \smv}{\envmv'_0}$.

      We consider $r_1 \in \typeruns{\tuple \smv}{\envmv'_0}$ (the
      other case follows analogously).  Since
      $\envmv_1' = \Tdef{(\envmv'_0)}$ and
      $\pjdg[@][\Tdef{(\envmv'_0)}][P_1]$, $P_1 = \Pfor x \ell P_0$
      and $\pjdg[@][\envmv'_0][P_0][\Gamma]$.
      As before, we apply inductive hypothesis on $\envmv_0$,
      $\envmv'_0$, $P_0$, and $r_o$; we hence obtain a system
      $\Gimp[@][\iota_1^\texttt{ih}]$, an environment
      $\Gamma_1^\texttt{ih}$, and a specification
      $\envmv_1^\texttt{ih}$ such that
      $\iota_1^\texttt{ih}(\ptp p) = P_0$ and for all stores
      $\consSt[\sigma_1^\texttt{ih}][{\Gimp[@][\iota_1^\texttt{ih}]}]
      [\Gamma_1^\texttt{ih}][\envmv_1^\texttt{ih}][\C]$
      there is
      $r_1^\texttt{ih} \in
      \imrunsu{\state[{\Gimp[@][\iota_1^\texttt{ih}]}][\sigma_1^\texttt{ih}]}$
      such that $r_1 \myequal r_1^\texttt{ih}$.
      Likewise, proceeding as in the case of \myrule{VSeq} above, there
      are a system $\Gimp[@][\iota_2^\texttt{ih}]$, an environment
      $\Gamma_2^\texttt{ih}$, and a specification
      $\envmv_2^\texttt{ih}$ such that
      $\iota_2^\texttt{ih}(\ptp p) = P_2$ and for all stores
      $\consSt[\sigma_2^\texttt{ih}][{\Gimp[@][\iota_2^\texttt{ih}]}]
      [\Gamma_2^\texttt{ih}][\envmv_2^\texttt{ih}][\C]$
      there is
      $r_2^\texttt{ih} \in
      \imrunsu{\state[{\Gimp[@][\iota_2^\texttt{ih}]}][\sigma_2^\texttt{ih}]}$
      such that $r_2 \myequal r_2^\texttt{ih}$.
      The proof ends by taking
      \begin{eqnarray*}
        \iota(\ptp p) & = & P
        \\
        \iota(\ptp r) & = & \PloopU{\iota_1^\texttt{ih}(\ptp r)}
        \iota_2^\texttt{ih}(\ptp r)
                  \qquad \forall \ptp r \neq \ptp p \in \participants \G
        \\
        \iota(\smv) & = & \iota_1^\texttt{ih}(\smv)
                       \qquad \forall \smv \in \tuple \smv
        \\
        \Gamma' & = & \Gamma_1^\texttt{ih} \cup \Gamma_2^\texttt{ih}
        \\
        \envmv' & = & \envmv'_1 ; \envmv'_2
        \\
        \sigma & = & \sigma_1^\texttt{ih}
      \end{eqnarray*}

      The case $r_1 = r_0[r_3]$ with
      $r_0 \in \typeruns{\tuple \smv}{\envmv'_0}$ is analogous.

      \item $\myrule{VLoop}$ The thesis follows as in the
      previous case noticing that $P = \PloopU M N$ (that is, $P$
      plays a passive role in the loop) and then using the inductive
      hypothesis on the premisses of the typing of $P$ and observing
      that $\envmv$ is viable, hence there it yields an active role
      deciding when to terminate the loop.

      \item \myrule{VSend} The thesis follows trivially as in
      the cases \myrule{RTCom_1} and \myrule{RTCom_2} by observing
      that $\envmv'$ assigns to the participant session
      $(\tuple \smv, \ptp p)$ an output on a session channel
      $\smv \in \tuple \smv$.
    \end{itemize}
    Note that \myrule{RTIt1} and \myrule{RTIt2} cannot be the last rules
    applied in a derivation of $r \in \typeruns{\tuple \smv}{\envmv}$.
  \end{proof}

\coveragetwothm*
\begin{proof} The proof follows directly from~\cref{lem-impl-covers-local}.
\end{proof}


\section{Whole-spectrum implementations and Guarded Automata}%
\label{asec:communicating}
\newcommand{\sss}[1]{\mathit{#1}}

In this section we briefly discuss how our notion of whole-spectrum
implementation (WSI) can be defined when specifications and
implementations are defined as Guarded Automata.

We first recall some basic definitions from~\cite{fbs05}:
$(P,M)$ is a composition schema where
$P=\{\ptp{p}_1,\ldots,\ptp{p}_n\}$ is a set of participants and $M$
are the messages (i.e., the alphabet),
$R=\langle (P,M),A\rangle $ is a conversation protocol where A is a
guarded automaton,
$W=\langle (P,M), A_1,\ldots,A_n\rangle$ is a web service composition,
$L(R)=L(A)$ is the language of a conversation protocol.
For a web service composition $W=\langle (P,M),A_1,\ldots,A_n\rangle$
we have runs, send sequences and conversations;
\begin{enumerate}
\item a run of $W$ is a sequence of configurations
  $\gamma=c_0,c_1,\ldots,c_n$ where:
\begin{itemize}
\item $c_0$ is an initial configuration
\item $c_i\rightarrow c_{i+1}\quad  (i=0\ldots n-1)$
\item $c_n$ is a final configuration
\end{itemize}
\item a send sequence $\sss{\gamma}$ on a run $\gamma$ is the sequence
  messages, one for each send action in $\gamma$, recorded in the
  order in which they are sent,
\item a conversation is a word $w$ over $M$ for which there is a run
  $\gamma$ of $W$ such that $w= \sss{\gamma}$,
\item the conversations of a web service $W$, written $C(W)$, is the
  set of all the conversations for $W$.
\end{enumerate}

\noindent
We are now ready to introduce a notion of WSI for guarded automata.

\begin{defi}[Whole-spectrum realisation of a guarded automaton]
Let $P$ be a set of participants defined as
$\{\ptp{p}_1,\ldots,\ptp{p}_n\}$. $A_i$ is a whole-spectrum
realisation of $\ptp{p}_i\in P$ in conversation protocol
$R=\langle(P,M),A\rangle$ if for all $w \in L(R)$ there exist
${\{A_j\}}_{j\in \{1,\ldots,n\}\setminus\{i\}}$ such that $w \in
C(<(P,M),A_1,\ldots, A_n>)$.
\end{defi}

\begin{defi}[WSI of guarded automaton]
  $A_i$ is a WSI of $\ptp{p}_i$ in conversation protocol
  $R=\langle(P,M),A\rangle$ if: (1) $A_i$ is a deterministic guarded
  automaton, and (2) $A_i$ is a whole-spectrum realisation of
  $\ptp{p}_i$ in $R=\langle(P,M),A\rangle$.
\end{defi}


\end{document}
